\newif\iffull
  \setlist[itemize]{leftmargin=*}
  \setlist[enumerate]{leftmargin=*}
\crefname{algocf}{Algorithm}{Algorithms}
\Crefname{algocf}{Algorithm}{Algorithms}
\newcommand{\nonl}{\renewcommand{\nl}{\let\oldnl}}
\newenvironment{protocol}[1][htb]
  {
   \begin{algorithm}[#1]%
  }{\end{algorithm}}
\newenvironment{transform}[1][htb]
  {
   \begin{algorithm}[#1]%
  }{\end{algorithm}}
\newtheorem{theorem}{Theorem}[section]
\newtheorem{corollary}{Corollary}[section]
\newtheorem{proposition}[corollary]{Proposition}
\newtheorem{definition}[corollary]{Definition}
\newtheorem{lemma}[corollary]{Lemma}
\newtheorem{claim}[corollary]{Claim}
\newtheorem{fact}[corollary]{Fact}
\newtheorem*{lemma*}{Lemma}
\theoremstyle{definition}
\newtheorem{remark}[corollary]{Remark}
\newcommand{\FormatAuthor}[3]{
\begin{tabular}{c}
#1 \\ {\small\texttt{#2}} \\ {\small #3}
\end{tabular}
}
\newcommand{\doclearpage}{%
\clearpage
}
\newcommand{\eps}{\epsilon}
\newcommand{\abs}[1]{\left|#1\right|}
\newcommand{\dist}{\mathsf{dist}}
\newcommand{\Hastad}{\mathsf{Has}}
\newcommand{\Bits}{\{0,1\}}
\newcommand{\N}{\mathbb{N}}
\newcommand{\Field}{\mathbb{F}}
\newcommand{\F}{\Field}
\renewcommand{\H}{\mathcal{H}}
\renewcommand{\D}{\mathcal{D}}
\newcommand{\fhat}{\hat{f}}
\newclass{\Junta}{J}
\newclass{\Fhat}{\hat{F}}
\newcommand{\VC}{\mathsf{VC}}
\newcommand{\err}{\mathsf{err}}
\newcommand{\UCsamples}{\mathsf{m}}
\newcommand{\opt}{\mathsf{opt}}
\newcommand{\proofof}[1]{\normalfont{\textbf{Proof of #1}}}
\newcommand{\rom}[1]{\uppercase\expandafter{\romannumeral #1\relax}}
\DeclareSymbolFont{extraup}{U}{zavm}{m}{n}
\DeclareMathSymbol{\varheart}{\mathalpha}{extraup}{86}
\DeclareMathSymbol{\vardiamond}{\mathalpha}{extraup}{87}
\newcommand{\ip}[1]{{\langle #1 \rangle}}
\DeclareMathOperator*{\argmax}{arg\,max}
\newcommand{\InteractiveParams}[6]{{\left[
{\small
\begin{array}{r  l}
\textsf{input access type:}&\enspace {#1} \\
\textsf{sample/query complexity:}&\enspace {#2} \\
\textsf{communication complexity:}&\enspace {#3} \\
\textsf{round complexity:}&\enspace {#4} \\
\textsf{verifier's running time:}&\enspace {#5}\\
\textsf{prover's running time:}&\enspace {#6}\\
\end{array}
}
\right]}}
\newcommand{\membership}{\textsf{membership queries}}
\newcommand{\random}{\textsf{random examples}}
\newcommand{\proofonly}{\textsf{proof only}}
\newcommand{\Chernoff}{q_\mathsf{CB}}
\mathchardef\mhyphen="2D
\newcommand{\estimate}[1]{\widetilde{#1}}
\renewcommand{\E}{\mathop{\mathbf{E}}}
\newcommand{\GLstar}{\textsf{GL}^\star}
\newcommand{\cAC}{\mathsf{AC}}
\newcommand{\sC}{\mathcal C}
\newcommand{\sD}{\mathcal D}
\newcommand{\sF}{\mathcal F}
\newcommand{\sU}{\mathcal U}
\newcommand{\sA}{\mathcal A}
\newcommand{\sH}{\mathcal H}
\newcommand{\sG}{\mathcal G}
\newcommand{\sR}{\mathcal R}
\newcommand{\sB}{\mathcal B}
\newcommand{\sQ}{\mathcal Q}
\newcommand{\sE}{\mathcal E}
\begin{document}

\title{On the Power of Interactive Proofs for Learning}
\author{ 
\hspace*{-1.15cm}\begin{tabular}[h!]{cccc}
\FormatAuthor{Tom Gur\thanks{TG is supported by UKRI Future Leaders Fellowship MR/S031545/1, EPSRC New Horizons Grant EP/X018180/1, and EPSRC RoaRQ Grant EP/W032635/1.}}{tom.gur@cl.cam.ac.uk}{University of Cambridge} &
\FormatAuthor{Mohammad Mahdi Jahanara}{mjahanar@sfu.ca}{Simon Fraser University} &
\FormatAuthor{Mohammad Mahdi Khodabandeh\thanks{MMK is supported by NSERC Discovery Grant RGPIN/06236-2019.}}{mmk25@sfu.ca}{Simon Fraser University} & \\ \\
\FormatAuthor{Ninad Rajgopal\thanks{NR is supported by Tom Gur's UKRI Future Leaders Fellowship MR/S031545/1. A part of this work was done when the author was affiliated with the University of Warwick.}}{nr549@cam.ac.uk}{University of Cambridge} &
\FormatAuthor{Bahar Salamatian\thanks{This work was done as part of research assistantship at SFU.}}{salamatianbahar@gmail.com}{Qualcomm} &
\FormatAuthor{Igor Shinkar\thanks{IS is supported by NSERC Discovery Grant RGPIN/06236-2019.}}{ishinkar@sfu.ca}{Simon Fraser University} &
\end{tabular}
} %
\date{}

\maketitle

\begin{abstract}
We continue the study of doubly-efficient proof systems for verifying agnostic PAC learning, for which we obtain the following results. 

\begin{itemize}
  \item We construct an interactive protocol for learning the $t$ largest Fourier characters of a given function $f \colon \Bits^n \to \Bits$ up to an arbitrarily small error, wherein the verifier uses $\poly(t)$ random examples.
  This
  improves upon the \emph{Interactive Goldreich-Levin} protocol of \cite{GoldwasserRSY21},
  whose sample complexity is $\poly(t,n)$.

  \item For agnostically learning the class $\cAC^0[2]$ under the uniform distribution,
  we build on the work of \cite{CIKK17_agnostic} and design an interactive protocol, where given a function $f \colon \Bits^n \to \Bits$, the verifier learns the closest hypothesis up to $\polylog(n)$ multiplicative factor, using quasi-polynomially many \emph{random examples}. In contrast, this class has been notoriously resistant even for constructing realisable learners (without a prover) using random examples.

  \item For agnostically learning $k$-juntas under the uniform distribution, we obtain an interactive protocol, where the verifier uses $O(2^k)$ random examples to a given function $f \colon \Bits^n \to \Bits$. Crucially, the sample complexity of the verifier is independent of $n$.
\end{itemize}
We also show that if we do not insist on doubly-efficient proof systems,
then the model becomes trivial.
Specifically, we show a protocol for an arbitrary class $\sC$ of Boolean functions in the distribution-free setting, where
the verifier uses $O(1)$ labeled examples to learn $f$.
\end{abstract}

\newpage
\tableofcontents
\doclearpage

\section{Introduction}
Can we verify the results of a supervised learner much cheaper than running the learning algorithm from scratch? This presents a fundamental question at a time where training modern machine learning models demands vast amounts of data, meticulously collected and labeled, coupled with substantial computational resources. The infeasibility of repeating this process multiple times, raises the criticality of verifying the results of the learning algorithm. 

Recently, Goldwasser, Rothblum, Shafer and Yehudayoff \cite{GoldwasserRSY21} formalised the question of verifying the results of supervised machine learning tasks  in the context of PAC Learning \cite{Valiant1984}. They initiated the study of \emph{PAC-verification}, with the aim of developing interactive proof systems that enable a verifier to check the results of an untrusted learner (or a prover). The goal is to achieve this verification while conserving computational resources and reducing data access, either quantitatively or qualitatively. This notion of utilising \textit{interactive proofs} to verify learning algorithms, is rooted in the understanding that interaction with a prover can be remarkably helpful for many computational tasks, evidenced by major results like $\mathsf{IP} = \mathsf{PSPACE}$ or $\mathsf{MIP}^*= \mathsf{RE}$. 

In more detail, \cite{GoldwasserRSY21} focus on \emph{agnostic learning}. In the setting of agnostic learning over the \textit{uniform distribution} for a class of Boolean functions $\sC$, for an arbitrary $f : \{0,1\}^n \rightarrow \{0,1\}$, the learning algorithm has access to a \textit{random example oracle} that provides labeled examples of the form $(x, f(x))$, where each $x$ is drawn uniformly at random from $\{0,1\}^n$. A learner $(\alpha,\eps,\delta)$-agnostically learns a class $\sC$ over the uniform distribution using random examples, for some $\alpha \geq 1$, if it outputs a hypothesis $h$ with error at most $\alpha \cdot \mathsf{opt}_\sC(f) + \varepsilon$ with probability at least $1-\delta$, where $\mathsf{opt}_\sC(f) = \min_{c \in \sC} \{\dist(f,c)\}$ is the best possible approximation of $f$ by any function in $\sC$. Another well-studied variant is one where the learner is required to meet the requirements of $(\alpha,\eps,\delta)$-agnostic learning, but is given a stronger form of access to $f$ via \textit{membership queries}, and not just random examples. Agnostic learning goes beyond ``realisable" PAC learning, which assumes that $f$ always comes from $\sC$; instead, it represents a more realistic scenario where there is no ``ground truth" about the labeling function (see \cref{sec:agnostic_learn_prelim} for a formal definition).  

In the $(\alpha, \eps, \delta)$-PAC-verification model over the uniform distribution, a verifier, which accesses $f$ using a random example oracle over the uniform distribution, interacts with an untrusted prover (or learner) that has query access to $f$, to output a hypothesis $h$ with error at most $\alpha \cdot \mathsf{opt}_{\sC}(f) + \varepsilon$, with probability at least $1-\delta$. For the \textit{completeness requirement} of the proof system, the honest prover should be able to convince the verifier to output a ``good" hypothesis with high probability (this could also include scenarios where the prover sends a purported hypothesis first and then tries to convince the verifier about its accuracy). For the \textit{soundness requirement}, a malicious prover, even if computationally unbounded, should not be able to convince the verifier to accept a hypothesis that deviates significantly from $\alpha \cdot \mathsf{opt}_{\sC}(f)$, with high probability. This represents a PAC-verification task with the goal of achieving a \textit{qualitative} difference between the verifier and the prover, as queries are considered to be a more powerful form of access.

Another equally appealing setting of PAC-verification is one where both verifier and prover have the query access (or random example access) to $f$, and the goal is for the verifier to output a good hypothesis with high probability, making much fewer queries (or samples) than the prover, providing a \textit{quantitative} difference. We refer to \cref{sec:agnostic_ip_model_formal} for a formal definition of the model. 

Following the works of \cite{GoldwasserRSY21} and its follow-ups \cite{MS23,CHINS23}, we are mainly interested in proof systems that are \textit{doubly efficient}, where the honest prover is also an efficient algorithm. The notion of doubly efficient proof systems is intimately connected to the notion of \textit{delegating} a computational task to an efficient, yet untrusted prover \cite{GoldwasserKR08}. While such proof systems have found widespread use theoretically and in practice, their study in the setting of delegating learning tasks is nascent.

In this work, we establish a set of general tools for PAC-verification and illustrate their utility by constructing PAC-verification protocols for certain classes of Boolean functions, whose study is foundational in the field of computational learning theory. Our constructions showcase the power of interacting with a prover, by achieving quantitative and/or qualitative improvements for verifying the results of an agnostic learner, than actually performing the learning task without a prover. 

\subsection{Our Results}\label{sec:our-results}

To begin with, we construct a sample-efficient \textit{interactive Goldreich-Levin protocol}, for the problem of learning heavy Fourier coefficients, which improves upon an analogous result by \cite{GoldwasserRSY21}. Following this, we construct the first PAC-verifier for classes of functions computable by \textit{constant-depth circuits} like $\cAC^0[2]$, as well as for functions that are \textit{$k$-juntas}. Finally, we illustrate the power of such proof systems with \textit{unbounded honest provers}, with PAC-verifiers for any arbitrary class of Boolean functions, using very few samples, even in the distribution-free setting. We elaborate on these results next.

\subsubsection{Learning Heavy Fourier Characters}
\label{sec:our-results-heavy}

One way of understanding the structure of a class of Boolean functions $\sC$ is by studying the \textit{Fourier spectrum} of functions in $\sC$ (see Section \ref{sec:boolean-functions} for definitions about Fourier analysis of Boolean functions). This idea was further made concrete in the context of learning theory, through general results that show efficient learnability of classes for which the Fourier spectrum is concentrated on low-degree monomials \cite{LMN93}, or concentrated on a sparse set \cite{KushilevitzM93}.

In particular, \cite{KushilevitzM93} used the Goldreich-Levin algorithm \cite{GoldreichL89} for computing the set of \textit{all heavy Fourier characters} of a Boolean function $f$ (i.e., characters with large Fourier coefficients), given \textit{membership query access} to it.\footnote{We view a character in the Fourier expansion of an $n$-variate Boolean function as a vector in $\{0,1\}^n$. More precisely, for any fixed $a \in \Bits^n$, define the character as an $n$-variate Boolean function $\chi_a(x) = (-1)^{\ip{a,x}}$, where $\ip{a,x}$ is the inner product of $a$ and $x$ over modulus $2$.} Following these results, the ability to learn heavy Fourier characters of a Boolean function is used in multiple works in learning theory (as well as other areas like cryptography and coding theory). 

For any $n$-variate Boolean function, $t \in \N$ and $\eps > 0$, we say that a set of characters $S$ is $\eps$-close to the set of $t$ heaviest Fourier characters $\Lambda_t$ of $f$, if no character from $S$ with the smallest (absolute) coefficient is replaceable by any character outside $S$ whose (absolute) coefficient is larger by a value of $\eps$. Note that, this implies that for each $U \in \Lambda_t$, there exists a distinct $\tilde{U} \in S$, such that $\vert \hat{f}(U) \vert - \vert \hat{f}(\tilde{U}) \vert \leq \eps$, where $\hat{f}(U)$ and $\hat{f}(\tilde{U})$, are the respective Fourier coefficients. For our first result, we show an algorithm that learns the $t$-heaviest Fourier characters of a given Boolean function up to an error $\eps$, where the verifier only uses $\poly(t/\eps)$ random labeled examples. Specifically, we have

\begin{theorem}[Learning heavy Fourier characters (See \cref{thm:fourier-random} for formal statement)]
    \label{thm:our-results-topfourier}
    There exists an interactive proof, such that for any $f : \{0,1\}^n \rightarrow \{0,1\}$, any $\eps > 0$ and any $t \in \N$, the verifier uses at most $\poly(t/\eps)$ many random examples and outputs a set $\Tilde{\Lambda_t}$ that $\eps$-approximates $\Lambda_t$ with probability at least $0.9$. 
    
    Moreover, the prover is efficient in the sense that it is essentially as efficient as the learning algorithm it invokes.
\end{theorem}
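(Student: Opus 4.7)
The plan is to let the prover invoke its membership-query access to $f$ and run the classical Kushilevitz--Mansour / Goldreich--Levin heavy-character learner at magnitude threshold $\tau = \eps/\sqrt{t}$, producing a set $S$ that contains every character of magnitude at least $\tau$ (padded with extra characters so that $|S|\ge t$), together with accurate numerical approximations $c_a \approx \hat f(a)$ for every $a \in S$. Parseval immediately gives $|S| = O(t/\eps^2)$. The prover sends $(S,\{c_a\}_{a\in S})$ to the verifier and designates the top-$t$ subset $\tilde\Lambda_t \subseteq S$ ordered by $|c_a|$. The verifier's remaining burden is to check (i) that each $c_a$ is close to $\hat f(a)$ and (ii) that no character \emph{outside} $S$ has magnitude substantially exceeding $\tau$; together these will imply the required $\eps$-closeness of $\tilde\Lambda_t$ to $\Lambda_t$.

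Step (i) is handled cheaply: the verifier draws a \emph{single} batch of $m = O(\log|S|/\eps^2) = \poly(t/\eps)$ uniform random examples $\{(x_i,f(x_i))\}_{i=1}^m$ and, reusing the same batch, forms empirical estimators $\tilde c_a = \frac{1}{m}\sum_i f(x_i)\chi_a(x_i)$ for every $a\in S$ simultaneously. A Hoeffding bound per coordinate together with a union bound over the small set $S$ yields $|\tilde c_a - \hat f(a)| \le \eps/10$ for every $a\in S$ with probability at least $0.99$, and the verifier rejects if any $|c_a - \tilde c_a|$ exceeds $\eps/5$. Combined with the observation that $\tau \le \eps$, a short case distinction on whether the number of characters above magnitude $\tau$ exceeds $t$ shows that, if step (i) succeeds and the certification $\max_{a\notin S}|\hat f(a)|\le\tau$ holds, then $\tilde\Lambda_t$ satisfies the $\eps$-closeness condition in the definition: any true top-$t$ element either lies in $S$ (and is picked up by the top-$t$ of $|c_a|$) or has magnitude at most $\tau\le\eps$.

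Step (ii) is the technical heart. The approach I would pursue is an interactive sparse-recovery / hashing subprotocol: the verifier uses public-coin randomness to specify $k = \poly(t/\eps)$ structured test functions $\{g_b\}_{b\in[k]}$, the prover returns claimed bucket values $B_b = \E_x[f(x)g_b(x)]$ which by Plancherel are linear combinations of those $\hat f(a)$'s whose character index falls in bucket $b$, and the verifier re-estimates each $B_b$ from the already-drawn batch of random examples. A cheating prover that omits a heavy $a^*\notin S$ forces the missing Fourier mass to leak into some bucket, creating a detectable gap between the prover's claim and the sample-based estimate; the verifier then decodes $\max_{a\notin S}|\hat f(a)|\le\tau$ from the verified buckets. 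Completeness follows because an honest prover's $c_a$'s and $B_b$'s are exact, so every Hoeffding check passes with probability $\ge 0.9$. The main obstacle I anticipate is engineering the family $\{g_b\}$ so that each $g_b$ (a) is evaluable at a \emph{uniformly} random $x$ rather than requiring structured queries that the random-example oracle cannot supply, (b) is information-theoretically sound against an unbounded prover, and (c) keeps the verifier's total sample complexity at $\poly(t/\eps)$ with \emph{no} $n$-dependence---which is exactly where the IGL protocol of \cite{GoldwasserRSY21} pays its extra $\poly(n)$ factor.
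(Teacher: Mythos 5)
There is a genuine gap, and it sits exactly where you flag it: step (ii), the soundness check that no heavy character lies outside the prover's set, is the entire content of the theorem, and the route you sketch for it does not work. Your proposed buckets report first-moment values $B_b=\E_x[f(x)g_b(x)]=\sum_a \hat f(a)\hat g_b(a)$. With only $\poly(t/\eps)$ buckets, each bucket contains exponentially many characters, so a bucket value is a \emph{signed} sum in which an omitted heavy coefficient can be masked or cancelled by the light coefficients sharing its bucket; a small verified $B_b$ therefore certifies nothing about $\max_{a\notin S}|\hat f(a)|$. Worse, if $g_b$ is evaluable at uniformly random $x$, the verifier can estimate $B_b$ herself from her sample batch, so the prover's claimed $B_b$'s add no leverage: a cheating prover whose lie is in $S$ (not in the $B_b$'s) simply reports the true bucket values and no ``detectable gap'' ever arises. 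The obstacle you defer as engineering --- test functions that are sound against an unbounded prover, estimable from random examples alone, and $n$-independent --- is precisely the part \cite{GoldwasserRSY21} could not avoid paying $\poly(n)$ for, and it needs new ideas, not parameter tuning.

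The paper resolves it with two ingredients you are missing. First, instead of first moments it hashes $\F_2^n$ into $2^s$ random affine cosets ($s=O(\log(t/\eps))$, \cref{claim:separate-whp}) so that with high probability each coset contains at most one heavy character, and then estimates the per-coset $L_4$ Fourier norm: since $\sum_{\gamma\in\Gamma}\hat f(\gamma)^4$ is a sum of nonnegative terms, there is no cancellation, and Parseval caps the light-coefficient contribution, so its fourth root pins down the single heavy coefficient in the coset (\cref{clm:fourth-sum-good-estimate-for-max}); a prover omitting a heavy $\alpha$ is caught because $\alpha$ sits alone in a coset disjoint from $\Lambda_t'$ whose $L_4$ estimate exceeds the smallest claimed coefficient (\cref{thm:fourier-membership}). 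Second, the $L_4$ quantity is computable only via the correlated-query identity of \cref{lem:fourth-power-sum} (queries at $x,y,z,x+y+z+w$), i.e., membership queries --- which is why the paper then applies a separate query-to-sample reduction (\cref{thm:membership-to-random-transform}): the verifier embeds one labeled random example at a uniformly random position of each query set and delegates the rest to the prover, soundness following because every query's marginal is uniform and the linear query pattern is embeddable, with $O(Q)$ repetitions to catch a prover who lies on even one query. Your step (i) (estimating the claimed coefficients $\hat f(a)=\E[f\chi_a]$ from random examples) does match the paper's protocol, but without the coset/$L_4$ mechanism and the query-delegation step the protocol has no soundness against an unbounded prover.
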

We emphasize that Theorem \ref{thm:our-results-topfourier} presents a \textit{qualitative, as well as a quantitative improvement} over the prover that runs the Goldreich-Levin algorithm, and thus, the query complexity is $\poly(n,1/\eps)$. As an example, for an application where we just want to find the character that approximates the heaviest Fourier coefficient of $f$ up to a small constant error, then the verifier only needs $O(1)$ random examples to learn it using a prover, whereas without a prover it needs to make $\poly(n)$ queries to $f$!

The interactive Goldreich-Levin theorem is obtained as an application of \cref{thm:our-results-topfourier}. For any $\gamma > 0$, let $f^{\geq \gamma/2}$ be the set of Fourier characters whose coefficients are at least $\gamma/2$. From Parseval's theorem (see \cref{fact:parseval}), we see that $\vert f^{\geq \gamma/2}\vert \leq 4/\gamma^2$. To find $f^{\geq \gamma}$, we set $\Lambda_t = f^{\geq \gamma/2}$ in \cref{thm:our-results-topfourier}, where $t = 4/\gamma^2$ and $\eps = \gamma/2$. The verifier outputs a list $L$ of characters that contain $f^{\geq \gamma}$, with high probability, using only $\poly(1/\gamma)$ many random examples, whereas the prover makes at most $\poly(n/\gamma)$ queries.\footnote{Indeed, by definition, $\Lambda_t$ contains every character whose Fourier coefficient is larger than $\gamma/2$. If there was a $U$ such that it's Fourier coefficient $\vert \hat{f}(U) \vert \geq \gamma$ and $U$ is not in the output $L$, from the fact that \cref{thm:our-results-topfourier} returns exactly $t$ characters, $L$ contains a character $\hat{U}$ such that $\vert \hat{f}(\hat{U}) \vert < \gamma/2$. Then, $\vert \hat{f}(U) \vert - \vert \hat{f}(\hat{U}) \vert > \gamma/2$, which contradicts the fact that $L$ be a $\gamma/2$-close to $\Lambda_t$.} We highlight that our interactive Goldreich-Levin protocol vastly improves the sample complexity over that of \cite{GoldwasserRSY21} (Lemma 2.2), which has a sample complexity of $\poly(n,1/\gamma)$. In particular, the sample complexity of our interactive Goldreich-Levin protocol is \textit{independent of $n$}.

Theorem \ref{thm:our-results-topfourier} builds upon a novel algorithm that (approximately) computes the highest $t$ \textit{Fourier coefficients} (but not the associated characters), by making $\poly(t/\eps)$ membership queries to $f$, that in turn gives a PAC-verification protocol in which the verifier uses much fewer queries ($\poly(t/\eps)$) than the prover. We then show a \textit{framework for query-to-sample reduction} for PAC-verification, based on the query pattern of the verifier, and apply it to this protocol. We find these tools to be of independent interest and potentially applicable for designing other PAC-verification protocols, and provide more intuition for that in \cref{sec:tech-overview-igl-juntas}.

\subsubsection{Learning $\cAC^0[2]$ Circuits}
\label{sec:our-results-ac0[2]}
Let $\cAC^0[2]$ be the class of functions computable by constant-depth circuits of polynomial-size with $\mathsf{AND,OR,NOT,XOR}$ gates of unbounded fan-in. Similarly, for any prime $p > 2$, we define $\cAC^0[p]$ be the class of functions computable by constant-depth circuits of polynomial-size (number of gates) with $\mathsf{AND,OR,NOT}$ and $\mathsf{MOD}_p$ gates of unbounded fan-in. 

Suppose the random example oracle over the uniform distribution is such that $\opt(f,\cAC^0[2])$ is non-negligible (i.e., it is at least $1/n^{\log(n)}$). In the following theorem, we show that for such functions, $\cAC^0[2]$ can be $\polylog(n)$-PAC-verified using quasi-polynomially many random examples.\footnote{For convenience, we drop $\eps$ from $(\alpha, \eps, \delta)$-PAC-verifiability, when $\mathsf{opt}(f,\sC)$ is non-negligible. Similarly, when $\alpha = 1$, we remove it from the notation.} 
\begin{theorem}[PAC-verification for $\cAC^0${$[p]$} (See \cref{thm:agnostic_ip_ac0} for a formal statement)]
    \label{thm:our-results-ac0[2]-pac-verify}
    Let $f : \{0,1\}^n \rightarrow \{0,1\}$ be any function such that $\opt(f,\cAC^0[2]) > 0$ is non-negligible. 
    
    Then the class $\cAC^0[2]$ is $(\polylog(n), 1/10)$-PAC-verifiable over the uniform distribution, where the verifier uses at most quasi-polynomially many random examples. Moreover, the protocol is doubly efficient, where both the verifier and the honest prover (that has query access to $f$) run in quasi-polynomial time.
\end{theorem}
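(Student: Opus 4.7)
The plan is to port the CIKK17 agnostic learner for $\cAC^0[2]$---which uses membership queries---into the PAC-verification model by outsourcing its core subroutines to the prover and using the interactive Goldreich--Levin protocol from \cref{thm:our-results-topfourier} as the verifier's handle on the Fourier spectrum of $f$. The starting point is the Razborov--Smolensky theorem, which implies that any $g \in \cAC^0[2]$ has Fourier weight concentrated on characters of degree at most $D = \polylog(n)$ (where $D$ also depends on $1/\eps$). Combined with a low-degree sparse-approximation argument, this lets \cite{CIKK17_agnostic} produce a hypothesis, supported on a small collection of low-degree Fourier characters of $f$, that agnostically approximates $f$ to within $\polylog(n) \cdot \opt + \eps$ in quasi-polynomial time, given membership queries to $f$.

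The protocol proceeds in four stages. First, the honest prover---which does have membership query access---runs the CIKK17 learner internally and computes the heavy low-degree Fourier characters of $f$, producing a candidate list $\Lambda$. Second, the prover sends $\Lambda$ to the verifier, who then invokes the interactive Goldreich--Levin protocol of \cref{thm:our-results-topfourier} to certify that $\Lambda$ captures (up to an $\eps$-approximation) the top $t = \poly(1/\gamma)$ Fourier characters of $f$, for the appropriate threshold $\gamma = 1/\polylog(n)$. Third, the verifier uses $O(\log(t)/\eps^2)$ additional random examples to estimate $\hat{f}(S)$ for each $S \in \Lambda$ via empirical averaging; by Parseval the list $\Lambda$ contains only $\poly(1/\gamma)$ characters, so this step stays within the quasi-polynomial sample budget. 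Fourth, the verifier builds the hypothesis $h(x) = \mathrm{sign}\bigl(\sum_{S \in \Lambda} \widetilde{f}(S)\, \chi_S(x)\bigr)$ from these estimates and tests $\dist(f, h)$ by comparing $h$ to the labels on a fresh batch of random examples, accepting iff the empirical error is below the target threshold.

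Completeness holds because the CIKK17 analysis guarantees that the hypothesis constructed from the heavy low-degree Fourier characters is a $\polylog(n)$-approximator of $f$, and all the empirical estimates are accurate with high probability. Soundness reduces to the soundness of the interactive Goldreich--Levin subprotocol together with standard Chernoff concentration on the coefficient estimates and the final empirical-error test: if the verifier's constructed $h$ is more than $\polylog(n) \cdot \opt + \eps$ far from $f$, the final test rejects; if a malicious prover tries to hide a character of large coefficient or plant a fake one, \cref{thm:our-results-topfourier} catches it. The sample complexity and verifier running time remain quasi-polynomial because the number of characters to track is $\poly(1/\gamma) = \mathrm{quasipoly}(n)$, and \cref{thm:our-results-topfourier} only needs $\poly(t/\eps)$ random examples per invocation.

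The main obstacle is the parameter choreography: the degree bound $D = \polylog(n)$ from Razborov--Smolensky, the heavy-character threshold $\gamma = 1/\polylog(n)$, the sparsity $t = \mathrm{quasipoly}(n)$, and the target approximation factor $\alpha = \polylog(n)$ must all be tuned consistently so that the hypothesis built from the verified characters achieves the $\polylog(n) \cdot \opt$ error bound while keeping every complexity measure quasi-polynomial. A secondary subtlety is translating the CIKK17 guarantee---originally phrased via boosting or $L_1$-regression---into the concrete ``sign of weighted sum of heavy Fourier characters'' form that \cref{thm:our-results-topfourier} naturally supports; this should go through because Razborov--Smolensky approximators are themselves sparse in the Fourier basis over $\F_2$, but it requires careful accounting of how the sparsity and the error propagate under the transformation.
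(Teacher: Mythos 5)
There is a genuine gap, and it is at the very first step. Razborov--Smolensky gives approximation of $\cAC^0[2]$ circuits by low-degree polynomials over $\F_2$; it does \emph{not} give Fourier concentration on low-degree or sparse character sets, and in fact no such concentration holds for $\cAC^0[2]$: the parity $\chi_{[n]}$ is itself in $\cAC^0[2]$ and its spectrum sits entirely on the degree-$n$ character, and the inner-product function $\mathsf{IP}(x,y)=\sum_i x_iy_i \bmod 2$ is a depth-$2$ $\cAC^0[2]$ function all of whose $2^n$ Fourier coefficients have magnitude $2^{-n/2}$. Take $f$ to be $\mathsf{IP}$ with a non-negligible but tiny fraction of outputs flipped: then $\opt(f,\cAC^0[2])$ is tiny, yet \emph{any} hypothesis of the form $\mathrm{sign}\bigl(\sum_{S\in\Lambda}\widetilde{f}(S)\chi_S\bigr)$ with $|\Lambda|$ quasi-polynomial has error essentially $1/2$, so no tuning of $D$, $\gamma$, $t$ can recover a $\polylog(n)\cdot\opt$ guarantee. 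Relatedly, the \cite{CIKK17_agnostic} learner is not of the ``heavy low-degree Fourier characters'' form you describe; it runs the Nisan--Wigderson reconstruction algorithm with the Razborov--Smolensky-based \emph{tolerant natural property} as distinguisher, followed by local list-decoding of the Hadamard/direct-product encoding, and its hypothesis is an arbitrary circuit, not a sparse Fourier sign expansion.

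Your soundness step also has a hole independent of the above: accepting ``iff the empirical error is below the target threshold'' presupposes the verifier knows the threshold $\polylog(n)\cdot\opt+\eps$, but $\opt(f,\cAC^0[2])$ is exactly the quantity neither party can be trusted to report, and the certified heavy Fourier characters do not determine distance to $\cAC^0[2]$ (unlike the parity case in \cite{GoldwasserRSY21}). The paper's proof avoids both issues: the verifier simulates the CIKK learner itself, delegating its membership queries to the prover via a query-to-sample reduction tailored to the NW reconstruction's subcube query pattern (the marginal-distribution lemma, \cref{lem:nw_query_marginal}, shows a random example can be embedded undetectably), and the unknown $\beta^*=\opt$ is handled by running the protocol over $O(\log^2 n)$ guesses $\beta_r=2^{-r}$ and selecting among the candidate hypotheses by empirical comparison, so no a priori threshold involving $\opt$ is ever needed.
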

Moreover, we can easily extend this to get PAC-verification for $\cAC^0[p]$ (\cref{thm:aip_ac0_any_prime}) with similar complexities, for any prime $p > 2$.

The class $\cAC^0[2]$, or more generally $\cAC^0[p]$, is at the frontiers of Boolean function classes that are learnable. In the \textit{realisable} setting, \cite{CIKK16} obtain a learner for $\cAC^0[p]$ over the uniform distribution using membership queries in quasi-polynomial time, getting a learner for these classes using random examples has been a long-standing open question. Moreover, removing the membership queries in the agnostic learner for $\cAC^0[2]$, would give quasi-polynomial time algorithms for two notoriously difficult problems: learning parities with noise (LPN), and for $\cAC^0[p]$, learning with errors (LWE). We refer to \cite{CIKK17_agnostic} (section 5) for a further discussion on implications to LPN or LWE. 

In contrast, we show the power of PAC-verification protocols for $\cAC^0[2]$, where the verifier uses random examples to agnostically learn $\cAC^0[2]$, upon interaction with a quasi-polynomial time prover. This also generalises another result by \cite{GoldwasserRSY21}, who show that parities are PAC-verifiable, thus implying that the LPN assumption might not be relevant in the setting of PAC-verification.

Our techniques crucially use the structure of the agnostic learner with membership queries by \cite{CIKK17_agnostic}. In particular, they use tools from pseudo-randomness and meta-complexity in the construction of the learner (the Nisan-Wigderson reconstruction algorithm \cite{NW94,IW01}). We perform a careful analysis of the query patterns in this algorithm, and extend the query-to-sample reduction framework to use this. Considering the ubiquity of the Nisan-Wigderson generator in complexity theory, we find this analysis to have other potential applications.

We briefly mention some subtleties about \cref{thm:our-results-ac0[2]-pac-verify}. Our protocol builds over the \cite{CIKK17_agnostic} agnostic \textit{improper} learner (hypothesis is not in $\cAC^0[2]$), which translates into our sample and running time complexities, as well as the hypothesis approximation error of our PAC-verification protocol. Indeed, we also inherit the non-negligibility criterion over $\mathsf{opt}(f,\cAC^0[2])$ from them, which ensures quasi-polynomial sample and running time complexities. 

\paragraph*{General Boolean Circuit Classes.} Let $\sC$ be any well-known Boolean circuit class like $\mathsf{ACC}^0, \mathsf{NC}^1$, or $\Ppoly$. The agnostic learner for $\cAC^0[p]$ over the uniform distribution by \cite{CIKK17_agnostic} is in fact obtained via the meta-algorithmic ``circuit lower bounds to learning algorithms" framework. 

In more detail, a \textit{$\tau(n)$-tolerant natural property} (in the sense of \cite{DBLP:journals/jcss/RazborovR97}) against functions that are $\tau(n)$-close to $\sC$ (i.e., $\mathsf{opt}(f,\sC) \leq \tau(n)$), is an ``efficient" algorithm that distinguishes functions that are $\tau(n)$-close to $\sC$ from a dense subset of all functions (see \cref{sec:tol_nat_prop} for formal definitions). Tolerant natural properties refer to constructive average-case circuit lower bound techniques, which not only show a lower bound for a given function against $\sC$, but also extend the lower bound against $\sC$ to a non-negligible fraction of all functions. \cite{CIKK17_agnostic} show that a $(1/2-\tau(n))$-tolerant natural property against $\sC$ implies agnostic learners for $\sC$ using membership queries over the uniform distribution, in a very general and black-box fashion. 

The generality of our techniques used in the PAC-verification protocol for $\cAC^0[2]$ allows it to be extended to the ``tolerant natural properties to agnostic learners" framework. Thus, we show conditional PAC-verifiers for any typical circuit class $\sC[\poly(n)]$ (functions computable by $\sC$-circuits of polynomial size).\footnote{We consider $\mathsf{ACC}^0, \mathsf{TC}^0, \mathsf{NC}^1$ or $\Ppoly$, as some typical Boolean circuit classes.} In more detail, we show that if there exists a $(1/2-1/\poly(n))$-tolerant natural property against $\sC[\poly(n)]$, then $\sC[\poly(n)]$ is $(n^\gamma,1/10)$-PAC-verifiable over the uniform distribution, for $0 < \gamma < 1$ where the verifier uses at most sub-exponentially many random labeled examples. Moreover, both the verifier and the honest prover run in sub-exponential time. We refer to \cref{thm:tolerant_natural_implies_aip} for a formal statement.

\subsubsection{Learning Juntas}
\label{sec:our-results-juntas}
Let $f : \{0,1\}^n \rightarrow \{0,1\}$ be a Boolean function that depends only on an \textit{unknown} subset of $k \ll n$ variables. The class of such functions are called $k$-\textit{juntas}. The class of $k$-juntas have been widely explored in testing and learning literature, eg., \cite{MosselOS03,DeMN19,CNY23,HS23} and questions about the complexity of learning it still remains unclear, in both realisable and agnostic settings. We construct the following PAC-verification protocol for $k$-juntas.

\begin{theorem}[PAC-verification for $k$-juntas (see \cref{thm:formal-statement-junta-random} for a formal statement)]
    \label{thm:our-results-juntas-informal}
    For any integer $k \leq n$, the class of functions $k$-juntas is $(\eps,1/10)$-PAC-verifiable with respect to the uniform distribution with high probability. 
    
    Moreover, the verifier uses at most $2^k \cdot \poly(k/\eps)$ random examples, and the honest prover is efficient in the sense that it is essentially as efficient as \emph{any} $k$-junta learning algorithm that it invokes.
\end{theorem}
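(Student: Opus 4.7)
The plan is to reduce PAC-verification of $k$-juntas to learning heavy Fourier characters, then invoke Theorem~\ref{thm:our-results-topfourier} as a subroutine. The key Fourier-analytic observation (in the $\{\pm 1\}$ encoding of Booleans) is that any $k$-junta $g\colon\Bits^n\to\Bits$ with support $J^{*}$ has $\hat g$ supported on the $2^k$ characters $\chi_U$ with $\mathrm{supp}(U)\subseteq J^{*}$; by Plancherel applied to $f-g$, if $\dist(f,g)\le\opt$ then $\sum_{\mathrm{supp}(U)\not\subseteq J^{*}}\hat f(U)^2\le 4\opt$. Hence, when $\opt$ is small the top $2^k$ Fourier characters of $f$ are concentrated on $J^{*}$, and their coordinate-supports together identify $J^{*}$.

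The protocol has four steps. (i) The honest prover invokes any $k$-junta learner it wishes in order to determine the optimal support $J^{*}$; this is the only place where the prover's efficiency is tied to that of the learner, matching the theorem's claim. (ii) Verifier and prover execute the interactive Goldreich--Levin protocol of Theorem~\ref{thm:our-results-topfourier} with $t=2^k$ and precision $\eps'=\poly(\eps/2^k)$, producing a list $\tilde\Lambda$ of $2^k$ characters that $\eps'$-approximates the top $2^k$ characters of $f$ using $\poly(2^k/\eps)$ random examples. (iii) The verifier sets $\hat J$ to the union of the coordinate-supports of characters in $\tilde\Lambda$, padded or truncated to exactly $k$ coordinates (by retaining the $k$ coordinates appearing most heavily, in squared-coefficient mass, among characters of $\tilde\Lambda$). (iv) The verifier draws an additional $2^k\cdot\poly(k/\eps)$ random examples and, for each $z\in\Bits^k$, estimates $p_z:=\Pr[f(x)=1\mid x_{\hat J}=z]$ within $\eps/2^{k+1}$, outputting the plurality junta $h(y):=\mathbbm{1}[p_{y_{\hat J}}>1/2]$.

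Completeness follows by the completeness of Theorem~\ref{thm:our-results-topfourier} combined with standard Chernoff bounds over the empirical estimates in step (iv). Soundness leans on the fact that Theorem~\ref{thm:our-results-topfourier} is robust against arbitrary provers: the list $\tilde\Lambda$ is determined by $f$ alone, so either $\opt$ is small enough that the Fourier argument forces $\hat J = J^{*}$ and the plurality junta achieves error $\opt+\eps$, or $\opt$ is close enough to $1/2$ that the $(\eps,1/10)$-PAC-verification guarantee is vacuous and any hypothesis qualifies.

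The main obstacle will be the quantitative interface between the precision $\eps'$ of Theorem~\ref{thm:our-results-topfourier} and the agnostic noise floor $\sqrt{\opt}$. In the intermediate regime where individual heavy coefficients of $f$ on $J^{*}$ are comparable to $\sqrt{\opt}$, the top-$2^k$ list may legitimately contain some characters outside $J^{*}$, so one must show that the $\hat J$ obtained in step (iii) is still a sufficient support, i.e., that the plurality junta on $\hat J$ lies within $\eps$ of optimal. The intended resolution is to choose $\eps'$ polynomially small in $\eps/2^k$ and to exploit the fact that any extra coordinates appearing in $\hat J$ can only decrease the best-fit Boolean junta error, subject to the cardinality constraint $|\hat J|=k$ which is handled by the padding/truncation rule above. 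The prover's efficiency is preserved by the reduction because the Goldreich--Levin prover in Theorem~\ref{thm:our-results-topfourier} is no more expensive than the junta learner invoked in step (i), so the overall prover complexity is essentially that of the chosen $k$-junta learner.
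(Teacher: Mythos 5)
There is a genuine gap, and it lies in the soundness/completeness dichotomy you invoke. Your argument only covers the two extremes ``$\opt$ tiny'' and ``$\opt$ so close to $1/2$ that the guarantee is vacuous,'' but $(\eps,1/10)$-PAC-verification is binding throughout the intermediate regime, e.g.\ $\opt \approx 0.1$ and $\eps$ small. In that regime the structural claim underlying steps (ii)--(iii) fails: the distance bound only gives $\sum_{\mathrm{supp}(U)\not\subseteq J^{*}}\hat f(U)^2 \le 4\opt$, which is a constant amount of Fourier mass that an adversarial $f$ can spread over up to $\Theta(2^k)$ characters on disjoint coordinates, each individually heavier than the (possibly bent-like, evenly spread) coefficients of the best junta on $J^{*}$. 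Then the top-$2^k$ list, and your mass-based truncation rule, can select a support $\hat J$ disjoint from $J^{*}$ whose best junta is $\Omega(1)$ worse than $\opt$ --- and since, as you note, $\tilde\Lambda$ is determined by $f$ alone, this breaks \emph{completeness} as well as soundness: no honest prover can fix it. More fundamentally, your verifier never estimates $\opt(f,\Junta_{n,k})$, so even when $\hat J$ happens to be fine there is no mechanism certifying that the output is within $\eps$ of optimal; correctness of the top-character list is simply not a certificate of near-optimality of the plurality junta. (Secondary quantitative issues: \cref{thm:our-results-topfourier} with $t=2^k$ and $\eps'=\poly(\eps/2^k)$ gives $\poly(2^k/\eps)=2^{O(k)}\poly(1/\eps)$ samples, not the claimed $2^k\cdot\poly(k/\eps)$; and estimating each conditional $p_z$ to accuracy $\eps/2^{k+1}$ costs roughly $2^{3k}/\eps^2$ samples, far more than needed or budgeted.)

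The paper takes a different route that supplies exactly the missing certification device. The prover sends a candidate $k$-junta hypothesis $h$; the verifier estimates $\dist(f,h)$ from random examples and, crucially, estimates $\dist(f,\Junta_{n,k})$ itself using a distance estimator obtained (via binary search, \cref{clm:distance-estimator-from-tolerant-test}) from the tolerant junta tester of \cite{DeMN19}, accepting only if the two estimates are close (\cref{thm:distance-estimator-implies-IP,thm:our-results-junta-membership}). This handles arbitrary $\opt$ with no case analysis. The tester's membership queries are then delegated to the prover through the query-to-sample reduction (\cref{thm:membership-to-random-transform}), after verifying that its query patterns (linear and NAE) are embeddable, which is what yields the $2^k\cdot\poly(k/\eps)$ sample bound of \cref{thm:formal-statement-junta-random}. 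If you want to salvage a Fourier-based route, you would need an additional component that lets the verifier lower-bound $\opt$ (i.e., a tolerant test or distance estimate), at which point you are essentially back to the paper's architecture.
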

While the honest prover runs an agnostic learner for $k$-juntas from \cite{HS23}, and uses at most $k2^k/\eps^2 \cdot \log(n)$ random examples. On the other hand, the verifier in the protocol from \cref{thm:our-results-juntas-informal} uses $2^k \cdot \poly(k/\eps)$ random examples. It is worth stressing the quantitative improvement here, where the verifier sample complexity is \textit{independent} of $n$, unlike the honest prover. For example, a PAC-verifier for an $O(1)$-junta only uses $O(1)$ random examples to output a hypothesis with a constant error. 

The PAC-verifier for $k$-juntas is obtained by another general transformation from query-based tolerant testers for a class $\sC$ to PAC-verifiers for $\sC$, followed by an application of the query-to-sample reduction from \cref{sec:our-results-heavy} for this protocol.\footnote{A tolerant tester for $\sC$ is a sub-linear algorithm that accepts an input $f$, if $\opt(f,\sC) \leq \beta$, and rejects it, if $\opt(f,\sC) > \beta + \eps$, for any input parameters $\beta, \eps > 0$. Clearly a tolerant tester implies a suitable tolerant natural property for $\sC$, since a random function is far from $\sC$ (for a typical circuit class). 

However, it is unclear if a tolerant natural property gives us a tolerant tester, which has a more stringent rejection criterion. In particular, the results of \cite{parnas2006tolerant} only imply a tolerant tester from a proper agnostic learner for $\sC$ (the hypothesis is from $\sC$), and the learner from \cite{CIKK17_agnostic} is improper. This may indicate the reason for the difference in hypothesis error for PAC-verifiers obtained using our techniques - additive error (tolerant tests) vs multiplicative error (tolerant natural properties).}

\subsubsection{The Power of PAC-Verification with Unbounded Provers}
Finally, we study the power of PAC-verification if we allow the honest prover to have unbounded computational power. For this section, we extend PAC-verification model from \cref{def:PAC-ver-formal} to the \textit{distribution-free} agnostic learning setting, where for any unknown, yet fixed distribution $\sD$ over the examples, the verifier is now given access to the random example oracle with respect to $\sD$.
The completeness and soundness requirements are stronger; for every fixed, but unknown underlying distribution $\sD$, the hypothesis output by the interaction should have error at most $\opt_{\sD}(f,\sC) + \eps$, where the distance is now measured over $\sD$.

Let $\Ppoly$ be the class of functions computable by general Boolean circuits of polynomial size, that captures efficient non-uniform computation. We show the following PAC-verification protocol.
\begin{theorem}[PAC-verification for $\Ppoly$ (see \cref{thm:aip-ppoly-erm} for a formal statement)]
    \label{thm:our-results-df-pac-verify-ppoly}
       For any $\varepsilon > 0$, $\Ppoly$ is \textit{distribution-free, proper,} $(\eps,1/10)$-PAC-verifiable, where the verifier uses at most $O(1/\eps)$ many random labeled examples and runs in $\poly(n/\eps)$ time. Moreover, the honest prover is computationally unbounded.
\end{theorem}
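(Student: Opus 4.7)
The plan is to exploit the unbounded power of the prover so that the search for a near-optimal hypothesis $h^\ast \in \Ppoly$ is entirely delegated to it, while the verifier uses its $O(1/\eps)$ random labeled examples only for light-weight consistency checks, never for direct statistical estimation of the error.

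First I would have the prover send both a candidate hypothesis $h \in \Ppoly$ and the value $v$ that it claims equals $\opt_\sD(f,\Ppoly)$; because the prover is unbounded and has sampling access to $\sD$, it can compute the true optimum $h^\ast$ and the true $\opt_\sD(f,\Ppoly)$ exactly. The verifier then draws $m = O(1/\eps)$ labeled samples and performs a Markov-style empirical consistency test: the fraction of samples on which $h$ disagrees with the label should be compatible with the claimed $v$. This test catches any prover that \emph{understates} $v$, since then the empirical disagreement will exceed $v$ by a detectable margin on $\Theta(1/\eps)$ samples.

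To prevent a prover from \emph{inflating} $v$ (and thereby licensing a sub-optimal $h$ that trivially passes the Markov test because $\err_\sD(h) \approx v$), I would add an interactive sub-protocol in which the prover certifies a lower bound on $\opt_\sD(f,\Ppoly)$. The certificate would be a succinct combinatorial witness---for instance, an adversarial distribution (or an LP-duality witness extracted from the minimax characterization of $\opt$) under which every polynomial-size circuit must err with probability at least $v - \eps/2$. The verifier would spot-check this certificate using its $m$ samples as random challenges, rather than as data for concentration. Because the prover is unbounded, producing an honest certificate is always possible; because the certificate has a short description, the verifier's $\poly(n/\eps)$-time check is affordable.

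The main obstacle will be the tight sample budget. Standard Chernoff-type concentration requires $\Omega(1/\eps^2)$ samples to estimate a probability within additive $\eps$, so the verifier cannot hope to estimate $\err_\sD(h)$ or $\opt_\sD(f,\Ppoly)$ directly. The crux of the analysis is therefore to show that the combination of (i) the Markov upper-bound test and (ii) the combinatorial lower-bound certificate is already sufficient to catch, with constant probability, every prover that would push the verifier into outputting an $h$ with $\err_\sD(h) > \opt_\sD(f,\Ppoly) + \eps$. In other words, the verifier's $O(1/\eps)$ samples must be repurposed from a statistical estimation role into the role of random challenges for an interactive proof, which is precisely where the unbounded prover's power trades computation for sample complexity and yields the claimed $(\eps,1/10)$-PAC-verification.
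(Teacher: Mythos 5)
There is a genuine gap, and it sits exactly where you place the ``crux'': neither of your two mechanisms can actually be made to work with the stated resources. First, the lower-bound certificate for $\opt_\sD(f,\Ppoly)$ has no concrete verifiable form. The statement ``every circuit of size $n^c$ errs with probability at least $v-\eps/2$ under $\sD$'' is a universal statement over exponentially many circuits about a distribution the verifier cannot even evaluate; LP/minimax duality does not yield a succinct witness here because $\opt_\sD$ is just a minimum over a finite class under a \emph{fixed} unknown $\sD$, and any ``adversarial distribution'' or hardcore-measure style witness is itself an exponentially large object whose validity again quantifies over all circuits. Since soundness must hold against an unbounded cheating prover, you would need an information-theoretic check of this universally quantified claim using $O(1/\eps)$ samples drawn from $\sD$ (not from the certified distribution), and no such check is described or apparent. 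Second, the Markov-style upper-bound test is statistically too weak: distinguishing $\err_\sD(h)\le v$ from $\err_\sD(h)\ge v+\eps$ for a constant $v$ requires $\Omega(1/\eps^2)$ labeled samples, so with $O(1/\eps)$ samples a prover can understate $v$ (or pair an honest $v$ with a hypothesis that is $\eps$-suboptimal) without detection. Your closing paragraph asks the analysis to show the combination suffices, but the first component is not sound and the second lacks the required resolution, so the combination does not close the soundness case $\err_\sD(h)>\opt_\sD(f,\Ppoly)+\eps$.

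For contrast, the paper never certifies $\opt_\sD$ and never uses the labeled samples for estimation. The verifier draws a large sample of size $\Theta((\VC+\log(1/\delta))/\eps^2)$, of which only $O(\log(1/\delta)/\eps)$ examples are labeled; it hides the labeled ones in a random permutation and has the prover label everything, so any prover that mislabels more than an $O(\eps)$ fraction is caught by the spot checks. The prover then sends $h$, and the claim that $h$ is an (almost) empirical risk minimizer over the prover-labeled sample, among \emph{all} circuits in $\mathsf{SIZE}[n^c]$, is a deterministic circuit-value statement of low depth, which is delegated and verified via the GKR protocol; uniform convergence then transfers empirical near-optimality to true near-optimality. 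This removes both obstacles you face: the ``inflation'' attack is impossible because empirical optimality is verified by delegated computation rather than trusted claims about $\opt_\sD$, and the few labeled examples are used only as consistency challenges, so no additive-$\eps$ estimation with $O(1/\eps)$ samples is ever needed. If you want to salvage your outline, you would essentially have to reintroduce these two ingredients (a large unlabeled sample with label spot-checking, and a verifiable delegation of the ERM computation), at which point it becomes the paper's proof.
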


For contrast, assuming the existence of standard cryptographic primitives (like one-way functions), we know that $\Ppoly$ cannot be learnt in polynomial-time over the uniform distribution using membership queries, even in the realisable setting \cite{GGM86}. Thus, providing access to an unbounded prover, can help a verifier learn $\Ppoly$ even in the significantly stronger distribution-free, proper, agnostic learning setting.

\cref{thm:our-results-df-pac-verify-ppoly} is proved using a more general result that shows that any arbitrary class of functions $\sC$ is distribution-free PAC-verifiable using $O(1/\eps)$ random examples, where the verifier runs in $\poly(\log(\vert \sC \vert)/\eps)$. The interactive proof is constructed by viewing agnostic learning as a suitable \textit{Empirical Risk Minimisation} task (ERM) (which does an exhaustive search) and delegating this computational task to the prover using \cite{GoldwasserKR08}. More details can be found in \cref{sec:aip_erm_unbounded_prover}.

\subsection{Technical Overview}
\label{sec:intro-tech-overview}
In this section, we highlight the proofs of Theorems \ref{thm:our-results-topfourier} and \ref{thm:our-results-ac0[2]-pac-verify}. Our interactive proofs are constructed using a sequence of novel ideas for obtaining general transformations, and below, we highlight some of the key ideas used to obtain them. These interactive proofs are constructed through an interplay of various techniques and tools from interactive proofs, property testing, Fourier analysis, and pseudorandomness. We refer to the individual technical sections for further details.

\subsubsection{Proof Outline of Theorem \ref{thm:our-results-topfourier}}
\label{sec:tech-overview-igl-juntas}
We start by briefly recalling the Goldreich-Levin algorithm. Then, we present our approach for query-based PAC-verification of heavy Fourier characters. Finally, we outline how to derive sample-based PAC-verification of heavy Fourier characters via our query-to-sample reductions.

\paragraph*{Overview of the Goldreich-Levin algorithm (\textsf{GL}).} For the outline, we focus on the goal of learning the set $\Lambda_t$ of \textit{all} the $\gamma$-heavy Fourier characters (i.e., vectors $a \in \Bits^n$ such that the associated Fourier coefficient $\vert \hat{f}(a) \vert \geq \gamma$), given \textit{query access} to the input function $f : \Bits^n \rightarrow \{1,-1\}$ (for the output, we use the equivalent representation of $\Bits$ as $\{-1,1\}$). With additional work, we generalise the ideas highlighted below for PAC-verifying $t$ heavy Fourier characters, for any $t$, thus obtaining \cref{thm:our-results-topfourier}.

\vspace{0.1in}
This algorithm follows a divide-and-conquer strategy over the inputs, that can be viewed as a binary tree (see Section 3.5 in \cite{odonnell_analysis_2014}). For any $1 \leq i \leq n$, and any node at depth $i$ in the tree (that represents a subcube of $\Bits^n$), \textsf{GL} considers both subcubes obtained from this node by restricting the $(i+1)^{\text{th}}$-coordinate to either $0$ or $1$. It then estimates the Fourier weights of both subcubes up to a small error using $O(1/\gamma^2)$ many queries to $f$, and eliminates those with Fourier weight at most $\gamma^2/4$, from future consideration.\footnote{The Fourier weight of a subspace is the sum of squared Fourier coefficients of all vectors in the subspace. This can be estimated using standard techniques from Fourier analysis of Boolean functions \cite{odonnell_analysis_2014}.} \textsf{GL} stops at depth $n$, where each leaf consists of just one vector, and thus the set of leaves contains $\Lambda_t$. Of course, since there are at most $O(1/\gamma^2)$ many subcubes with large Fourier weight (by Parseval's theorem) at any depth of the tree, \textsf{GL} roughly makes $\poly(n/\gamma)$ queries overall. It is crucial that \textsf{GL} needs to go all the way down to depth $n$ in order to identify all the heavy vectors in learn $\Lambda_t$. 

\paragraph*{Query-based PAC-verification of $\Lambda_t$.} Suppose we could stop this process at depth $O(\log(1/\gamma))$ instead of going all the way to $n$, with the hope of getting the query  complexity down to $\poly(1/\gamma)$. We already encounter a significant challenge: since each subcube at any depth is obtained by partitioning over a fixed variable in the layers above, it could contain multiple $\gamma$-heavy vectors. Furthermore, \textsf{GL} only estimates the Fourier weight of each subcube, and does not help us with the identifying the actual heavy vectors within.

To resolve this, we consider a partitioning scheme based on \textit{random linear functions}, that also appears in \cite{gopalan2011testing}. In more detail, we take a set $\{r_1, \dots, r_s \}$ of uniformly random vectors from $\F_2^n$, where $s = O(\log(1/\gamma))$ is suitably large. Define $S$ as $\mathsf{span}(r_1, \dots, r_s)$ of dimension $s$ (with high probability). First, we show that the $2^s$ possible affine shifts of $S^{\bot}$ (the subspace orthogonal to $S$) form a partition of $\F_2^n$ such that each of these subspaces contains \textit{at most one} $\gamma$-heavy vector.\footnote{For comparison, \cite{gopalan2011testing} study testing sparse Boolean functions and show under such a partitioning scheme, at most one non-zero Fourier coefficient of a sparse Boolean function falls in each subspace. In such a case, computing the Fourier $L_2$-norm of each subspace suffices for their testing algorithm. While they work with the decision task of testing, in our learning/Goldreich-Levin setting, we get an arbitrary Boolean function without the nice structure provided by sparsity, that could render any such subspace to contain a large number of non-zero Fourier coefficients. It is unclear how estimating the $L_2$-norm of the entire subspace would help the learner identify the heaviest character (up to some error).}

Following this, we prove a new method of estimating the heaviest vector from each of these affine subspaces based on the following lemma: for any affine subspace $\Gamma$ which is ``rare" for $\gamma$-heavy vectors, the $L_4$-Fourier norm is an additive approximation to the $L_\infty$-Fourier norm, i.e., $\sqrt[4]{\sum_{a \in \Gamma} \hat{f}(a)^4}$ is roughly equal to $\argmax_{a\in \Gamma}|\hat{f}(a)|$ with a small additive error. Using the $L_4$-Fourier norm, as opposed to the $L_2$-Fourier norm utilized in \cite{gopalan2011testing}, seems crucial for estimating the heaviest Fourier coefficient. Now, if we estimate the $L_4$-Fourier norm of each of these affine shifts, we get all the coefficients for the $\gamma$-heavy vectors. In \cref{lem:fourth-power-sum} by proving that the sum $\sum_{a\in\Gamma}\hat{f}(a)^4$ can be expressed as an expectation, we show that this estimate can be made, using $\poly(1/\gamma)$ queries to $f$. Thus, we get an algorithm that makes $\poly(1/\gamma)$ queries to $f$ and \textit{computes the coefficients} for all the $\gamma$-heavy vectors, without identifying the corresponding characters.\footnote{One can view this as a full binary tree, where the nodes in layer $i$ represent all the $2^i$ affine shifts of the subspace orthogonal to $\mathsf{span}(r_1, \dots, r_i)$. It is worth noting that unlike the iterative process used by \textsf{GL}, we only compute the $L_4$-estimates at the end.} 

As a solution to this, we just ask the prover (which runs \textsf{GL}) to send us the $\gamma$-heavy vectors! The honest prover sends the $\gamma$-heavy vectors to the verifier, and completeness is ensured since we can easily estimate the Fourier coefficient of a vector using $O(1/\gamma^2)$ many random queries (up to an error, say $\gamma/4$). For soundness, we observe that if a malicious prover sends a coefficient $a'$ that is not heavy, we can efficiently find the affine shift of $S^{\bot}$ it belongs to, since $S$ has a succinct representation using just $s$ vectors. Using the $L_4$-norm based estimation technique above, we can find the maximum Fourier coefficient $\tilde{w}$ in this affine subspace and reject, if $\Tilde{w}$ is too far from $\vert \hat{f}(a') \vert$. Put together, we obtain a PAC-verification protocol for learning $\Lambda_t$, where the verifier makes $\poly(1/\gamma)$ queries and the honest prover makes $\poly(n/\gamma)$ queries to $f$, obtaining a significant \textit{quantitative improvement in the number of queries}, that is independent of $n$.

For the sake of exposition, we ignore the effect of the additive errors on various estimates made by the verifier as well as the honest prover that runs \textsf{GL}; accounting for this gives us a list of coefficients of which the set of all $\gamma$-heavy vectors is a subset. We refer to Sections \ref{sec:alg-for-learning-fourier-values} and \ref{sec:pac-verification-top-fourier-coefficients} for details.

\paragraph*{Query-to-sample reductions.} In order to obtain a qualitative improvement as well, we need the verifier in the aforementioned PAC-verification protocol to run using only random examples over the uniform distribution, without having a large blow-up on the sample complexity. Our main challenge is that query access is typically a more powerful resource than just random labeled examples to $f$ (formalised by \cite{Fel09} for agnostic learning over uniform distribution). We prove that for a PAC-verification protocol where the verifier has query access to $f$, the verifier can use \textit{only random examples} and a prover to answer them, when the queries have suitable structure.

To understand this framework, we make the following observations about the queries made by the verifier in the aforementioned PAC-verification protocol for $\Lambda_t$. Firstly, we observe that the \textit{marginal} distribution of every query is uniform. In more detail, to estimate the $L_4$-norm of an affine subspace using \cref{lem:fourth-power-sum}, the verifier queries on uniformly random $x,y,z \sim \Bits^n$, and on $w$ which is the sum of these vectors with a uniformly drawn vector from the random subspace $S$. Next, we see that the queries are non-adaptively generated by the verifier, i.e., there exists a query construction algorithm $V_{\mathsf{query}}$ that uses its internal randomness to generate all the queries before they are made. Further, the queries are independent of the proof sent by the prover. 

Suppose that the verifier makes $Q$ queries to $f$. Now, a verifier picks a query index $j \sim [Q]$ uniformly at random and ``hides" a random example $(w,f(w))$ from the uniform distribution as the $j^{\text{th}}$ query. The important thing here is that $V_{\mathsf{query}}$ can generate the rest of $Q-1$ queries, \textit{conditioned} on the $j^{\text{th}}$ query being $w$. Indeed, the non-trivial case here is if this query corresponds to one that uses a random vector from the subspace $S$, that is constructed beforehand by $V_{\mathsf{query}}$ using $s$ random vectors; we prove in \cref{lem:linear-query-pattern} that $V_{\mathsf{query}}$ can still generate the rest of the queries conditioned on this query being $w$.

For the reduction, the verifier sends the entire \textit{embedded query set} to the prover and asks for the query answers back. The honest prover answers correctly and the completeness follows from that of the query-based protocol. On the other hand, since the marginal distribution of each query is uniform, which is the same as the underlying distribution, a malicious prover $P$ can't distinguish between the ``embedded" query $w$ and any other query which is generated by $V_{\mathsf{query}}$. Thus, the best $P$ can do is pick a query at random and answer it incorrectly hoping that it is not $w$, while providing correct answers everywhere else (note that $V$ has $f(w)$ and will catch a prover if it lies on $w$). A malicious prover can get away with cheating on a single query with probability $(1-1/Q)$; repeating this process $O(Q)$ number of times ensures that we catch the prover with high probability. Finally, we get a PAC-verification protocol for learning $\Lambda_t$ where the verifier makes just $O(Q^2)$ samples.

\paragraph{Discussion.} More generally, if the query construction algorithm of the verifier satisfies \textit{embeddability}, i.e., the query set can be generated, conditioned on a random query index being fixed to a given query $w$, as well as the marginal distribution of each query being the same as the underlying distribution, then we get a PAC-verification protocol where the verifier only uses random examples, with sample complexity having a quadratic blow-up. We refer to \cref{sec:query-to-sample-reduction} for more details.

While embeddability seems necessary for the hiding process, \cref{sec:tech-overview-ac0[2]} extends this framework towards handling more general query marginal distributions. Moreover, we use non-adaptivity to embed the random example in the query set, and we leave the study of extending this framework to handle adaptive queries, which may or may not depend on the interaction with the prover, as an interesting direction for future work.

\subsubsection{Proof Outline of \cref{thm:our-results-ac0[2]-pac-verify}}
\label{sec:tech-overview-ac0[2]}

Our interactive proofs in \cref{thm:our-results-ac0[2]-pac-verify} crucially relies on the \cite{CIKK17_agnostic} agnostic learner, which we recall next. Subsequently, we show how to extend our framework of query-to-sample reductions to support the query patterns of the Nisan-Wigderson reconstruction algorithm that underlies the \cite{CIKK17_agnostic} agnostic learner.

\paragraph*{Overview of the \cite{CIKK17_agnostic} agnostic learner.} We start by recalling the \textit{Nisan-Wigderson (NW)} generator. A family of sets $S_1, \dots, S_L$ over a universe of size $m$ is called an NW-set design, if each $S_i$ has size $n$ and for every $i \neq j$, the overlap between $S_i$ and $S_j$ is very small, i.e., is at most $\log(L)$. Given this, for any $f : \Bits^n \rightarrow \Bits$, we define the NW-generator $G^f : \{0,1\}^m \rightarrow \Bits^L$, with seed length $m$ and stretch $L$, as $G^f(z) = f(z \mid_{S_1}), \dots, f(z \mid_{S_L})$, where $z \vert_{S_i}$ is the restriction of the seed $z$ to the indices in $S_i$. 

Next, let $D : \{0,1\}^L \rightarrow \{0,1\}$ be an efficient algorithm that can distinguish between the outputs of $G^f$ and uniformly random strings in $\{0,1\}^L$, i.e, $\big\vert \Pr_{z \sim \Bits^m}[D(G_m(z)) = 1] - \Pr_{y \sim \Bits^L}[D(y) = 1] \big\vert \geq 1/10$. \cite{NW94,IW01} show that for any $f$, given a circuit $D$ that distinguishes between the outputs of $G^f$ and uniformly random strings in $\{0,1\}^L$, there exists a uniform \textit{reconstruction algorithm $\sA$} that makes membership queries to $f$ and outputs a circuit that computes $f$ on a $(1/2+1/L)$-fraction of the inputs from $\Bits^n$. Alternatively, $\sA$ can be viewed as \textit{weak learner} for $f$, given membership \textit{query access} to it and $D$'s description.

Their algorithm is obtained by a ``play-to-lose" argument. Let $f$ be the underlying $n$-variate Boolean function, such that $\opt(f,\cAC^0[2]) = \beta^*(n)$, for a non-negligible $\beta^*$. \cite{CIKK17_agnostic} show an efficient algorithm $\sR$ that takes as input truth tables of length $L$, and rejects every function $g$ over $\ell$ inputs that has distance at most $1/\ell^3$ from $\cAC^0[2]$-circuits of size $2^{\ell^\gamma}$ (for a fixed $\gamma$), while accepting a random truth table with constant probability. This is nothing but a $(1/\ell^3)$-tolerant natural property.

The main idea is that with high probability over its seeds $z$ and stretch $L$ set to $2^{\polylog(n)}$, $G^f$ produces the truth table of a function $g_z : \{0,1\}^{\ell} \rightarrow \{0,1\}$, where $\ell = \log(L) = \polylog(n)$, such that the distance between $g_z$ and $\cAC^0[2]$-circuits of size $2^{\ell^\gamma}$ is at most $4\beta^*(n)$. As long as $\beta^*(n)$ is less than $1/4\ell^3$, the tolerant natural property acts as a distinguisher to $G^f$; indeed, by definition, it rejects almost all every truth table that could be output by $G^f$, whereas it accepts a constant fraction of strings from $\{0,1\}^L$. Thus, we can directly use the natural property as a distinguisher in the NW-reconstruction algorithm to get a circuit that computes $f$ on a $(1/2 + 1/L)$-fraction of the input in $\Bits^n$. This gives a weak agnostic learner for $\cAC^0[2]$ over the uniform distribution using membership queries.

\paragraph*{Query-to-sample reductions for NW-reconstruction algorithms.} Our main technical idea for this section is to show that the verifier can run the NW-reconstruction algorithm $\sA$ using random examples over the uniform distribution, by interacting with a prover to answer the queries. For this, we extend the query-to-sample reduction to the NW-reconstruction algorithm.

To do this, we need to understand the query pattern of $\sA$. Define a random \textit{subcube} $\sE$ as one obtained by setting a fixed set $S$ of coordinates in $[n]$ with a uniformly random string $\rho \in \Bits^{\vert S \vert}$, and considering the set of all Boolean strings in $\Bits^n$ that are consistent with $\rho$. Further, we define a \textit{subcube membership query} over $\sE$ as one which queries $f$ over all the $2^{n-\vert S \vert}$ many strings in $\sE$.

Our first insight is that $\sA$ only makes a set of random (possibly non-disjoint) subcube queries, where each subcube is of size at most $2^{\ell}$ (apart from a few uniformly random queries for making empirical estimates). The reconstruction algorithm is based on the hybrid argument, which informally implies that, if $D$ can distinguish between $f(z \vert_{S_1}), \dots, f(z \vert_{S_L})$ and a uniformly random string $(y_1, \dots, y_L)$, then for a random $i \in [L]$, $D$ also distinguishes between $(f(z \vert_{S_1}), \dots, f(z \vert_{S_{i-1}}), y_i, \dots y_L)$ and $(f(z \vert_{S_1}), \dots, f(z \vert_{S_{i-1}}), f(z \vert_{S_i}),y_{i+1} \dots y_L)$ reasonably well. As such, the subcubes $\sE_1, \dots, \sE_{i-1}$ are generated by projecting the random seed $z$ over the sets $S_1 \setminus S_i, \dots, S_{i-1} \setminus S_i$. Of additional interest to us, is the fact that these queries are non-adaptively generated.

To hide a random example $w$ in the query set, our next idea is that the query construction based on the hybrid argument allows us to embed $w$ in the seed $z$. Indeed, we pick a subcube $\sE_j$ at random and embed $w$ in the seed $z$ at a location that corresponds to $\sE_j$ (i.e., $z$ projected onto $S_j \setminus S_i$). Following this, we prove the NW-marginal query distribution lemma, in which we show that for any query index $q$ that falls in a subcube $\sE_j$, the probability that the $q^{\text{th}}$-query is a fixed string is equal to $1/2^{n-\log(\vert \sE_j \vert)}$. This is the most technical lemma for this section, and the main challenge here is to account for the effect of all the overlapping sets coming from the set design on the $q^{\text{th}}$ marginal distribution. We refer to \cref{lem:nw_query_marginal} for more details. 

The query-to-sample reduction follows a similar strategy as described earlier. While completeness holds for the same reasons, for soundness, for any query $q$ that lies in a subcube $\sE_j$, no malicious prover can distinguish between the distribution of $w \vert_{S_j \setminus S_i}$, where $w$ is the uniformly random example embedded in the query set, and the $q^{\text{th}}$ query marginal distribution, since they are identical. The rest of the analysis follows from ideas similar to \cref{sec:tech-overview-igl-juntas}.

In order to boost the error of the hypothesis to $\polylog(n) \cdot \beta^*$, the \cite{CIKK17_agnostic} learns the amplified function $\mathsf{Amp}^f_k : \Bits^{nk+k} \rightarrow \{0,1\}$, defined as $\mathsf{Amp}^f_k(x_1,\dots,x_k, b_1, \dots, b_k) = \sum_{j=1}^k f(x_j) b_j \pmod 2$, where $k$ is set to $\frac{1}{4\ell^3\beta^*}$. While dealing with the final CIKK-learner, it is worth highlighting a subtlety which becomes a \textit{critical} issue for the PAC-verification model. In order to get a hypothesis with low error, the learner learns $\mathsf{Amp}^f_k$, for which it needs to know the value of $\beta^*$ to set $k =\frac{1}{4\ell^3\beta^*}$. However, the main challenge for PAC-verification is that the verifier does not know $\beta^*$, otherwise the model becomes trivial. We overcome this, by running the protocol over multiple guesses of the unknown $\beta^*$, and use a \textit{finer analysis} of the CIKK-learner to find the right hypothesis. Details on how we extend the ideas highlighted above to the NW-reconstruction algorithm for $G^{\mathsf{Amp}^f_k}$, and the finer analysis of \cite{CIKK17_agnostic} can be found in \cref{sec:proof_aip_ac0}.

\subsection{Related Work}
Following \cite{GoldwasserRSY21}, \cite{MS23} extend PAC-verification to statistical query learning algorithms, and \cite{CHINS23} extend it for classical verification of quantum agnostic learners. A related model is that of covert learning by \cite{CK21}, where the goal is to ensure that no untrusted intermediary who monitors the interaction, gains any information about the function or the learner.

Of particular interest, \cite{MS23} show a verifier sample complexity lower bound of $\Omega(\sqrt{d})$, where $d$ is the \textsf{VC}-dimension of the underlying hypothesis class, in the \textit{distributional} agnostic learning setting, where the learner gets samples from an unknown joint distribution $\sD$ over $\{0,1\}^n \times \{0,1\}$. Our results on PAC-verifying $\Ppoly$ (Theorem \ref{thm:our-results-df-pac-verify-ppoly}) indicate that we can beat this verifier sample complexity lower bound in the setting of \textit{functional} agnostic learning with an unbounded prover, i.e., the labels come from an arbitrary Boolean function (over arbitrary example marginal distributions), or in the case of $k$-juntas (Theorem \ref{thm:our-results-juntas-informal}), by additionally also considering distributions with restricted marginals over the examples like the uniform distribution. 

The study of interactive proofs for testing properties of distributions, initiated by \cite{CG18} (see also \cite{HR22}), is also relevant here. In particular, \cite{GoldwasserRSY21} note that any PAC-verification task can be formulated as a property of distributions that can be tested using an interactive proof. While a direct application of \cite{CG18} shows that any property over distributions supported on $\Bits^n$ can be tested using $O(2^{n/2}/\eps^2)$ many samples given an \textit{unbounded prover}, \cref{thm:our-results-df-pac-verify-ppoly} gives a much stronger sample complexity of $O(1/\eps)$ for the specific case of PAC-verification in this scenario.

Another related model is that of delegating a property testing task to an untrusted prover (interactive proofs of proximity or IPPs), where the verifier has query access to the input \cite{RVW13}. In particular, \cite{GR22_sample} show query-to-sample reductions for IPPs. However, it is unclear whether such techniques for IPPs can be extended to PAC-verification. Moreover, extending our framework to IPPs incurs a quadratic blow-up for the sample complexity, which could render interesting query complexity regimes for IPPs trivial. 

\cite{GK23} show that easiness of certain \textit{meta-complexity problems} implies agnostic learnability of $\Ppoly$ in polynomial time using random examples over polynomially-samplable distributions. It remains open whether this can be extended for typical circuit class restrictions.

\subsection{Future Directions}
\label{sec:open-questions}
Our work establishes PAC-verification protocols for learning some fundamental classes in learning theory like heavy Fourier coefficients, constant-depth circuits, or $k$-juntas, with techniques that could have more general applicability. We state some directions for possible future exploration.
\begin{itemize}
    \item We show a general procedure of embedding a random example into specific query construction algorithms and further, prove that this gives a PAC-verification protocol, for query distributions where each marginal is identical to the underlying distribution (\cref{sec:tech-overview-igl-juntas}), or for a set of uniformly random subcube queries (\cref{sec:tech-overview-ac0[2]}). An interesting direction is to explore the full generality of this framework and obtain a full distributional characterisation of the query-to-sample reduction. Along these lines, it is worth studying whether the property of embeddability can be extended to adaptive query construction algorithms.
    
    \item Another question is whether we can use the algebraic structure of $\cAC^0[2]$ arising from \cite{Raz87,Smo87} to get better PAC-verification protocols that output a hypothesis with just an additive error, i.e., an error of $\opt(f,\cAC^0[2]) + \eps$.

    \item The role of round complexity in PAC-verification is not well-understood. Is there a hypothesis class that requires more than $2$ rounds of interaction? Is there a round-hierarchy for hypothesis classes that have doubly efficient PAC-verification protocols?
    
   \end{itemize}

\subsection{Acknowledgements}
\label{sec:ack}
We thank the anonymous STOC reviewers for their comments and suggestions. NR is grateful to Igor Carboni Oliveira, Marco Carmosino and Shuichi Hirahara for helpful discussions.


\section{Preliminaries}\label{sec:prelims}

In this section we establish some notation and describe some basic background. A Boolean function is a function of the form $f:\Bits^n\to\Bits$. Depending on convenience, we will alternatively switch the domain between $\{0,1\}^n$ and $\Field_2^n$, where $\Field_2$ denotes the finite field of size $2$. Similarly, we also use $\{-1,1\}$ as the codomain when necessary. 

Throughout the paper the notation $\ip{\cdot,\cdot}$ has been used with different meanings; the reader is assured of understanding the meaning from the context. This notation is used for the outcome of interaction between the verifier and the prover. On the other hand, if the arguments of  $\ip{\cdot,\cdot}$ are vectors from $\Field_2^n$, it denotes their dot product. For a distribution $\mathcal D$ we write $x\sim\mathcal{D}$ to mean $x$ is drawn according to $\mathcal D$, and for a finite set $S$ we write $x\sim S$ to mean $x$ is drawn uniformly at random from $S$. We also denote the VC-dimension of a class $\H$ by $\VC(\H)$.

Finally, we use $\Chernoff(\eps,\delta) = \frac{3}{\eps^2}\log\frac{2}{\delta} = O(\log(1/\delta)/\eps^2)$, as the number of independent samples from a Bernoulli random variable that estimates its mean within $\pm\eps$, with probability at least $1-\delta$ by the Chernoff bound. This sample quantity occurs in numerous places in this paper.

\subsection{Agnostic Learners}
\label{sec:agnostic_learn_prelim}

Let $\sF = \{\sF_n\}_{n \in \mathbb{N}}$, where $\sF_n$ is the set of all Boolean functions on $n$ bits. We consider classes of Boolean functions defined as $\sC = \{\sC_n\}$, where each $\sC_n \subseteq \sF_n$. Next, let $\sD = \{\sD_n\}$, be a family of fixed distributions, where each $\sD_n$ is a distribution over $\{0,1\}^n$. In particular, we use $\{\sU_n\}$, where $\sU_n$ is a uniform distribution over $\{0,1\}^n$.

For any $n$-variate Boolean functions $f$ and $g$, we define $\dist(f,g)$ as the relative Hamming distance between $f$ and $g$, or in other words, the distance between $f$ and $g$ over uniform distribution over $\{0,1\}^n$. Analogously, we also define distance $\dist_\sD(f,g)$ over any arbitrary distribution $\sD_n$ over $\{0,1\}^n$ as $\dist_\sD(f,g) = \Pr_{x \sim \sD}[f(x) \neq g(x)]$.

We next define the notion of agnostic learning. Let $f : \{0,1\}^n \rightarrow \{0,1\}$. We define two kinds of input access to $f$, the \textit{random example oracle} $\mathsf{Ex}(\sU_n,f)$ that provides independent and identical samples of the form $(x,f(x))$, where $x \sim \sU_n$, and the \textit{membership query oracle} that can takes an input $x \in \{0,1\}^n$ and outputs $f(x)$. 

For any concept class $\sC$, we define the best possible approximation of $f$ using $\sC$, through the expression $\opt(f,\sC) = \displaystyle \min_{c \in \sC_n} \dist(f,c)$. We analogously define the optimal distance to $\sC$ over any arbitrary distribution $\sD_n$ as $\mathsf{opt}_\sD (f,\sC) = \displaystyle \min_{c \in \sC_n} \dist_\sD(f,c)$. Then, we have

\begin{definition}[Agnostic Learning a class $\sC$ with hypothesis class $\sH$]
    \label{def:agn_learn}
    Let $\sC = \{\sC_n\}$ be a concept class and $\sH = \{\sH_n\}$ a hypothesis class. For $\varepsilon, \delta > 0$, we say that $\sC$ is $(\alpha,\varepsilon, \delta)$-agnostically learnable with $\sH$ using random examples in time $T \coloneqq T(n, 1/\varepsilon, 1/\delta)$, if there exists a learner $\sA$ that for every $n \geq 1$, for any $n$-variate Boolean function $f : \{0,1\}^{n} \rightarrow \{0,1\}$, given access to $\mathsf{Ex}(\sU_n,f)$, runs in time at most $T$ and outputs $h \in \sH_n$ such that with probability at least $1-\delta$,
    \begin{equation*}
        \dist(f,h) \leq \alpha \cdot \mathsf{opt}(f,\sC) + \varepsilon.
    \end{equation*}
    We say the learning algorithm is efficient if the running time is $\poly(n, 1/\varepsilon, 1/\delta)$.

    Similarly, we say that $\sC$ is $(\alpha,\varepsilon, \delta)$-agnostically learnable with $\sH$ using membership queries, if for any $n$-variate Boolean function $f$, the learning algorithm satisfies the guarantees of $(\alpha,\eps,\delta)$-agnostic learnability, given query access to $f$.
\end{definition}

Agnostic learning goes beyond standard PAC learning (also called \textit{realisable} learning) as it doesn't assume that $f$ always comes from $\sC$; instead, it represents a more realistic scenario where there is no ``ground truth" about the labeling function.\footnote{Indeed, agnostic learning when $\opt(f,\sC) = 0$, is the same as realisable learning.} Moreover, when $\sC = \sH$, the learner is said to be \textit{proper}, whereas if the hypothesis is not restricted to be from within $\sC$, then the learner is \textit{improper}. 

We also define \textit{distribution-free} $(\alpha,\epsilon,\delta)$-agnostic learning using membership queries (or just random examples), where for any unknown, yet fixed distribution $\sD$ over the examples, the learner is given access to the random example oracle $\mathsf{Ex}(\sD,f)$ with respect to $\sD$, as well as oracle access to $f$. The task is of the learner is to output a hypothesis with error at most $\alpha \cdot \opt_{\sD}(f,\sC) + \eps$, for every unknown underlying distribution $\sD$.

Note that we focus on the \textit{functional} version of agnostic learning, where the examples are labeled according to an arbitrary function $f : \{0,1\}^n \rightarrow \{0,1\}$, and not \textit{distributional} agnostic learning, where the underlying distribution is over both examples and labels, i.e., $\sD$ is a joint distribution over $\{0,1\}^n \times \{0,1\}$. In the functional setting, for any sample $x \in \{0,1\}^n$, the associated label is fixed as $f(x)$, whereas in the distributional setting, different samples of $x$ may provide different labels.

\subsection{Interactive Proofs for Agnostic Learning}
\label{sec:agnostic_ip_model_formal}
Below we define interactive proofs in the standard way (cf. \cite{AB09}). We use (a special case of) the definition by \cite{GoldwasserRSY21} for PAC-verification. 

\begin{definition}[$(\alpha,\varepsilon,\delta)$-PAC-verifying $\sC$]
\label{def:PAC-ver-formal}
For parameters $\delta, \varepsilon > 0$ and $\alpha \ge 1$, a concept class $\sC$ is $(\alpha, \varepsilon, \delta)$-\textbf{PAC Verifiable} with a hypothesis class $\sH$ over distribution $\sD$, if there exists an interactive protocol between a verifier $V$ and a prover $P$, taking explicit inputs $1^n, \delta, \varepsilon$, and an oracle access to $f$ (either through membership queries to $f$ or random examples $(x,f(x))$ with $x \sim \sD$). In the end of the protocol the verifier outputs either a hypothesis $h \in \sH$ or the value `reject' ($\bot$) and satisfies the following properties:
\begin{enumerate}

\item \textbf{Completeness:}
For every $f : \{0,1\}^n \rightarrow \{0,1\}$, there exists an \emph{honest} prover $P^*$ such that the output of the interaction satisfies
\begin{equation*}
    \Pr[\dist(f,h) \leq \alpha \cdot \dist(f,\sC) + \varepsilon] \geq 1-\delta.
\end{equation*}

\item \textbf{Soundness:}
For every $f : \{0,1\}^n \rightarrow \{0,1\}$ and every (computationally unbounded) prover $P$ the output of the interaction satisfies
\begin{equation*}
    \Pr[(h \neq \bot) \land (\dist(f,h) > \alpha \cdot \dist(f,\sC) + \varepsilon)] \leq \delta.
\end{equation*}
\end{enumerate}
Here, the probability is over the randomness of the samples, as well as the internal randomness of $P$ and $V$.
\medskip

\noindent Furthermore, we say that $P$ and $V$ have the parameters
\begin{equation*}
    \InteractiveParams{\random/\membership}{q}{c}{r}{T_v}{T_p}
    \enspace,
\end{equation*}
if given an oracle access to the input $f$, the verifier makes at most $q$ queries to $f$, the communication complexity of the protocol is $c$, the number of messages is $r$, the running time of the verifier is $T_v$, and the running time of the prover is $T_p$. We may represent the prover's running time as \textsf{learn}, emphasizing that the honest prover merely executes a learning algorithm and sends its result.

Note that, all these complexity parameters are functions of $n, \alpha, 1/\eps$ and $1/\delta$. If the verifier uses membership queries to the input, then the input access will be ``$\membership$", and if it only accesses the random example oracle, the input access will be denoted as ``$\random$". Typically, the prover will use membership queries to access the input $f$.
\end{definition}

\noindent Some remarks are in order here. 
\begin{enumerate}
    \item At times when $\dist(f,\sC) > 0$ is non-negligible, it will be convenient to drop the additive accuracy parameter $\varepsilon > 0$ from the definition to simplify notation and only use the term $(\alpha,\delta)$-PAC-verification. Similarly, if $\alpha = 1$, we just refer to $(\eps,\delta)$-PAC-verifiability. Finally, for the sake of readability, we drop constants and asymptotics from the parameter representation of the interactive proof. 
    
    \item This model trivialises for realisable learning, as the verifier can estimate the error of the purported hypothesis sent by a prover with respect to $f$ using a small enough number of random labeled examples, and accept if this estimate is close to $\eps$ (cf. \cite{GoldwasserRSY21}). On the other hand, estimating the closeness of the purported hypothesis with respect to the best approximation of $f$ with respect to reasonably powerful $\sC$ is quite challenging, since it is unclear how to estimate $\mathsf{opt}_\sC(f)$ using a prover with bounded computational power.
    
    \item In the case where both the verifier and the honest prover have only query access to $f$ (or only have random example access), we stress that the interactive protocol is only interesting if there is quantitative improvement in the verifier query (or sample) complexity. Another appealing setting is one where the prover has query access and the verifier has random example access to $f$, in which case the improvement is qualitative (and possibly quantitative as well).
\end{enumerate}

This model can  be extended in a natural way for \textit{distribution-free} agnostic learning. Another model we consider (in \cref{sec:aip_erm_unbounded_prover}) is PAC-verification where the \textit{honest prover is computationally unbounded}. Of course, in such a case, the prover can actually find the best possible $\sC$-circuit $C_n$ that approximates $f$, i.e., $\opt(f,\sC) = \dist(f,C_n)$.

\subsection{Fourier Analysis of Boolean Functions}\label{sec:boolean-functions}

We recall a few facts about Boolean functions.
Any function $f:\Bits^n\to\mathbb{R}$ can be uniquely written as
\[
f(x)=\sum_{\gamma\in\Bits^n}\hat{f}(\gamma)\chi_\gamma(x),
\]
which is denoted as the Fourier expansion of $f$ where $\chi_\gamma = (-1)^{\sum_{i\in\gamma}x_i}$ and is called a \emph{character}. It is easy to see that $\chi_\gamma(x+y)=\chi_\gamma(x)\chi_\gamma(y)$ when we view the domain as $\F_2^n$. The real number $\hat{f}(\gamma)$ is called the \emph{Fourier coefficient} of $f$ on $\chi_\gamma$ and the set of all coefficients is called the \emph{Fourier spectrum} of $f$.

The set of all functions $f:\Bits\to\mathbb{R}$ forms a vector space over $\mathbb{R}$, and by defining the inner product $\ip{f, g} = \E_{x\sim\Bits^n}[f(x)g(x)]$ the set of all characters $\{\chi_\gamma : \gamma \in\Bits^n\}$ forms an orthonormal basis for this linear space, since $\ip{f,\chi_\gamma} = \hat{f}(\gamma)$. The following well-known fact is used in \cref{sec:learning-fourier-coeffs}.

\begin{fact}[Parseval's identity]\label{fact:parseval}
For any Boolean function $f:\Bits^n\to\mathbb{R}$ we have $\ip{f,f}=\sum_{\gamma\in\Bits^n}\hat{f}(\gamma)^2$. In particular, if $f$ is a function with codomain $\{-1,1\}$, then $\ip{f,f} = 1$.
\end{fact}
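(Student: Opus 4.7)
The plan is to prove Parseval's identity by directly expanding $\langle f, f \rangle$ using the Fourier expansion and then invoking the orthonormality of the character basis $\{\chi_\gamma\}_{\gamma \in \Bits^n}$ under the inner product $\langle f, g \rangle = \E_{x \sim \Bits^n}[f(x)g(x)]$. Since this is a textbook fact, the main task is to confirm that the definitions set up earlier in \Cref{sec:boolean-functions} make orthonormality immediate; the statement then falls out in one short calculation.

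First I would write $f(x) = \sum_{\gamma \in \Bits^n} \hat{f}(\gamma)\chi_\gamma(x)$ using the uniqueness of the Fourier expansion, so that
\[
\langle f, f \rangle = \E_{x \sim \Bits^n}\!\left[\Bigl(\sum_{\gamma} \hat{f}(\gamma)\chi_\gamma(x)\Bigr)\Bigl(\sum_{\beta} \hat{f}(\beta)\chi_\beta(x)\Bigr)\right] = \sum_{\gamma,\beta} \hat{f}(\gamma)\hat{f}(\beta)\,\E_{x}[\chi_\gamma(x)\chi_\beta(x)]
\]
by linearity of expectation (the sums are finite, so there is no convergence issue). Next I would use the identity $\chi_\gamma(x)\chi_\beta(x) = \chi_{\gamma + \beta}(x)$, valid when we view the domain as $\F_2^n$, as noted in the excerpt. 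The remaining step is to show $\E_{x \sim \Bits^n}[\chi_\alpha(x)] = \mathbbm{1}[\alpha = 0]$: for $\alpha = 0$ the character is identically $1$, while for any nonzero $\alpha$ the sum $\sum_{i \in \alpha} x_i$ is a balanced $\F_2$-linear function of $x$, so $\chi_\alpha$ takes the values $+1$ and $-1$ on equally many inputs, giving expectation zero. Plugging this back in collapses the double sum to $\sum_\gamma \hat{f}(\gamma)^2$, which establishes the first claim.

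For the second claim, I would observe that if $f$ takes values in $\{-1, 1\}$, then $f(x)^2 = 1$ for every $x \in \Bits^n$, so $\langle f, f \rangle = \E_{x}[f(x)^2] = \E_x[1] = 1$ directly from the definition of the inner product, with no appeal to the Fourier expansion.

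The only subtlety worth flagging is that the excerpt's notation for characters, $\chi_\gamma(x) = (-1)^{\sum_{i \in \gamma} x_i}$, implicitly identifies the index $\gamma \in \Bits^n$ with a subset of $[n]$; one should verify that under this identification the product rule $\chi_\gamma \chi_\beta = \chi_{\gamma + \beta}$ (symmetric difference, equivalently $\F_2^n$ addition) holds, which is a one-line parity computation. Beyond this bookkeeping, there is no real obstacle: the entire argument is the standard orthonormality-of-characters calculation, and both statements follow in a few lines.
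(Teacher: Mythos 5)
Your proof is correct and is exactly the standard orthonormality-of-characters calculation that the paper relies on: the paper states Parseval's identity as a well-known fact without proof, having already noted in \cref{sec:boolean-functions} that the characters $\{\chi_\gamma\}$ form an orthonormal basis under $\ip{f,g}=\E_x[f(x)g(x)]$. Both your expansion of $\ip{f,f}$ and your direct observation that $\ip{f,f}=\E_x[f(x)^2]=1$ for $\{-1,1\}$-valued $f$ are sound, so there is nothing to add.
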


 In \cref{sec:learning-fourier-coeffs} we also use $\hat{f}^{>\eps}$ to denote the set of characters $\{\gamma\in\Bits^n: |\hat{f}(\gamma)|>\eps\}$.

\subsection{Distance Estimators}
Agnostic PAC-verification in the realizable setting is much easier than in the agnostic setting. Indeed, it is the arbitrary distance of the input function to the hypothesis class that poses a hurdle in agnostic learning. In the agnostic verification setting, even if a prover provides the closest hypothesis, the task of verifying is non-trivial as the verifier does not know the true distance of the input function from the class. Thus, a natural algorithmic device to provide this utility is the notion of distance estimator which we will define. It must be mentioned that \cite{parnas2006tolerant} introduced distance approximation, a notion almost identical to ours. However, for our purposes, we redefine it slightly differently.

\begin{definition}
[$(\eps, \delta)$-distance estimator]
For parameters $\eps,\delta > 0$
and a class $C$ of functions, an \emph{$(\eps,\delta)$-distance estimator for $C$}
is an algorithm which is given oracle access to a function $f \colon \Bits^n \to \Bits$
and outputs $\tilde{d} \in [0,1]$ satisfying the guarantee
\begin{equation*}
    \Pr \left[\abs{\dist(f,C) -\tilde{d}} > \eps \right] < \delta
    \enspace.
\end{equation*}
The \emph{query complexity} of the estimator is the maximal number of queries made by the algorithm to $f$.
\end{definition}

A natural way of obtaining distance estimators is via tolerant testers. As we will show, noise tolerant testers and distance estimators are similar, and in some sense equivalent objects as the existence of one implies the existence of the other.

\begin{definition}[$(c_u, c_\ell)$-noise tolerant tester] \label{def:noise-tolerant-test}
For a class of Boolean functions $C$ and parameters $1 > c_u > c_\ell \geq 0$, a \emph{$(c_u, c_\ell)$-noise tolerant tester for $C$}
is an algorithm which given oracle access to a function $f \colon \Bits^n \to \Bits$.

\begin{enumerate}
\item If $\min_{g \in C} \dist(f, g) \leq c_\ell$, then the tester \emph{accepts} with probability at least $\frac{2}{3}$.
\item If $\min_{g \in C} \dist(f, g) \geq c_u$, then the tester \emph{rejects} with probability at least $\frac{2}{3}$.
\end{enumerate}
The \emph{query complexity} of the tester is the maximal number of queries made by the algorithm to $f$.
\end{definition}

Similar to \cite{parnas2006tolerant}, by a simple binary search argument we show that a noise tolerant tester for a class of functions implies a distance estimator.

\begin{claim}\label{clm:distance-estimator-from-tolerant-test}
Let $C$ be a class of functions, and suppose that for every $\eps>0$ and every $d \in [0,1]$
the class $C$ has a $(d+\eps,d-\eps)$-noise tolerant tester with query complexity $q_{\eps}$.
Then $C$ has an $(\eps, \delta)$-distance estimator
with query complexity $O(q_{\frac{\eps}{2}}\cdot\log(\frac{1}{\eps})\cdot (\log(\frac{1}{\delta}) + \log\log(\frac{1}{\eps})))$.

\end{claim}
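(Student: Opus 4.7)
The plan is to implement a standard binary-search reduction, amplifying each noise-tolerant test so that a union bound over all the queries gives the required confidence $\delta$.

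First, I would set up a binary search over the interval $[0,1]$ at resolution $\eps$. Maintain an interval $[\ell,r] \subseteq [0,1]$ which is guaranteed (with high probability) to contain $\dist(f,C)$, initialized to $[\ell,r]=[0,1]$. While $r-\ell > 2\eps$, let $d = (\ell+r)/2$ and call the $(d+\tfrac{\eps}{2},d-\tfrac{\eps}{2})$-noise tolerant tester for $C$ on $f$ (with success probability boosted as described below). If the (boosted) tester accepts, we conclude $\dist(f,C) \leq d+\tfrac{\eps}{2}$ and recurse on $[\ell,d+\tfrac{\eps}{2}]$; if it rejects, we conclude $\dist(f,C) \geq d-\tfrac{\eps}{2}$ and recurse on $[d-\tfrac{\eps}{2},r]$. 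When the loop terminates, output $\tilde d = (\ell+r)/2$, which is then within $\eps$ of every value in $[\ell,r]$, and in particular within $\eps$ of $\dist(f,C)$ assuming all the invariants held.

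The second step is probability amplification. Since we halve the interval length at each iteration (up to the $\eps/2$ slack), the number of iterations is $T = O(\log(1/\eps))$. To achieve overall failure probability at most $\delta$, each of the $T$ tester invocations must fail with probability at most $\delta/T$. By standard Chernoff-style amplification of a tester with constant success probability $2/3$, we can boost the confidence of each invocation to $1-\delta/T$ by taking the majority vote of $O(\log(T/\delta)) = O(\log(1/\delta) + \log\log(1/\eps))$ independent runs. Multiplying the three factors---per-invocation query cost $q_{\eps/2}$, number of iterations $O(\log(1/\eps))$, and amplification overhead $O(\log(1/\delta) + \log\log(1/\eps))$---yields the claimed query complexity.

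The only subtlety, and the part that requires the cleanest argument, is the behavior of the tester in the ``grey zone'' where $\dist(f,C) \in (d-\tfrac{\eps}{2},d+\tfrac{\eps}{2})$. In this regime \cref{def:noise-tolerant-test} gives no guarantee, so the boosted tester may return either answer. However, either answer is still \emph{consistent} with an invariant-preserving update: if it accepts we set $r := d+\tfrac{\eps}{2} > \dist(f,C)$, and if it rejects we set $\ell := d-\tfrac{\eps}{2} < \dist(f,C)$. Thus in this regime no invocation can violate the invariant ``$\dist(f,C)\in[\ell,r]$,'' and the failure event one must union-bound over is only an invocation made \emph{outside} the grey zone returning the wrong answer, which by amplification happens with probability at most $\delta/T$. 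A union bound over the $T$ iterations gives overall failure probability $\leq \delta$, and at termination the output $\tilde d$ satisfies $|\tilde d - \dist(f,C)| \leq (r-\ell)/2 \leq \eps$, as required.
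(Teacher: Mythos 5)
Your proof is correct and follows essentially the same route as the paper's: a binary search over the distance value, with each $(d+\frac{\eps}{2},\,d-\frac{\eps}{2})$-tolerant test amplified by majority vote over $O(\log(1/\delta)+\log\log(1/\eps))$ repetitions and a union bound over the $O(\log(1/\eps))$ iterations. The only difference is bookkeeping: you widen the recursion intervals by $\eps/2$ so that $\dist(f,C)$ provably never leaves the active interval (making the grey-zone case harmless by construction), whereas the paper recurses on the plain halves $[\ell,d_m]$ or $[d_m,r]$ and instead argues that if the true distance is ever lost in the grey zone it remains $\eps/2$-close to the active interval in all subsequent steps; your invariant is slightly cleaner, but the argument is the same in substance and yields the same query complexity.
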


\begin{proof}
    Given $\eps,\delta>0$ and query access to a function $f$, our goal is to find an estimation $\estimate{d}$ of $d^\star = \dist(f, C)$ such that $\Pr[|d^\star-\estimate{d}| \le \eps]\ge 1-\delta$. We run a binary search on the interval $[0,1]$, each time deleting half of the search space until the length of the remaining interval becomes sufficiently small.
    
    Let $d_m$ denote the middle of our active interval $[\ell, r]$ where at the beginning this interval is $[0,1]$. We run the $(d_m+\frac{\eps}{2}, d_m-\frac{\eps}{2})$-noise tolerant test for $O(\log\log(1/\eps)+\log(1/\delta))$ times and take the majority output. If the result is accept we keep $[\ell, d_m]$ as our active interval, and otherwise we keep $[d_m, r]$. We stop when the length of the interval becomes $\le \eps$ in which case we output the midpoint $\estimate{d}=d_m$ as our estimation of $d^\star$.
    In the case that $d^\star$ is in our final interval, clearly our answer is $\eps$-close to it. In the case that $d^\star$ is not in our last interval, with high probability it must be $\frac{\eps}{2}$-close to the interval, and therefore $\eps$-close to it's midpoint.
    
    To see why consider a noise tolerant test which always outputs the correct answer. Suppose $d^\star \in [\ell, r]$. Then, if $d^\star \notin [d_m-\frac{\eps}{2}, d_m+\frac{\eps}{2}]$, we will still have $d^\star$ in our active interval of the next step. If $d^\star \in [d_m-\frac{\eps}{2}, d_m+\frac{\eps}{2}]$, we may lose $d^\star$ from our active interval due to lack of guarantee from the test; however it will be $\frac{\eps}{2}$-close to the active interval. And it will remain $\frac{\eps}{2}$-close to it in the future steps as our tolerant test will always give us the correct half now that $d^\star$ is not in the interval.

    We can easily extend the above argument to the case that the noise tolerant is probabilistic (\cref{def:noise-tolerant-test}). Clearly, our binary search only needs $\log(1/\eps)$ steps. At each step by running our test for $O(\log\log(1/\eps) + \log(1/\delta))$ steps and taking the majority, we make sure that by Chernoff bound the output is correct with probability at least $1-\delta/\log(1/\eps)$. Finally, by union bound over all $\log(1/\eps)$ steps, all the executions of the noise tolerant test are correct with probability at least $1-\delta$.
\end{proof}

\section{General Transformations for PAC-Verification}
\label{sec:general-techniques}
In this section, we present two transformations that can be applied to get PAC-verifiers for various classes in a general fashion. First, we show an interactive proof that uses a tolerant tester for a class $\sC$ (possibly using membership queries) to verify agnostic learners for $\sC$ in \cref{sec:tolerant-to-aip}.  Following this we show our next transformation, where any verifier in a PAC-verification protocol that uses membership queries and satisfies certain conditions, allows us to construct an interactive proof where the verifier only uses \emph{random} examples, by delegating the membership queries to an (untrusted) prover in \cref{sec:query-to-sample-reduction}.  

\subsection{From Tolerant Testing to PAC-Verification}
\label{sec:tolerant-to-aip}

Below we show a transformation that uses a distance estimator for a class $\H$ with query complexity $q$, and uses it to design an interactive protocol for learning $\H$, where the verifier makes $q$ membership queries induced by running the given distance estimator.

\begin{theorem}\label{thm:distance-estimator-implies-IP}
    Let $\H$ be a class of Boolean functions, and suppose that $\H$ has an $(\eps,\delta)$-distance estimator $D$ with query complexity $q = q(\eps,\delta)$. Then $\H$ is $(\eps,\delta)$-PAC-verifiable with verifier $V$ and prover $P$ with the following interactive parameters

    \begin{equation*}
    \InteractiveParams
    {\membership}
    {q(\frac{\eps}{6},\frac{\delta}{2})}
    {\log(|\H|)}
    {\proofonly}
    {T_D(\frac{\eps}{6},\frac{\delta}{2}) + T_{\text{dist}}(\frac{\eps}{6},\frac{\delta}{2})}
    {\textsf{learn}}
    \enspace,
    \end{equation*}
    where $T_D$ is the time it takes to run the distance estimator $D$ with parameters $(\frac{\eps}{6},\frac{\delta}{2})$, and $T_\text{dist}$ is the running time it takes to estimate the distance between the given input and an hypothesis. The interaction $\ip{P, V}$ outputs $h\in\H\cup\{\bot\}$ (reject) such that the following is satisfied.
    \begin{itemize}
        \item \textbf{Completeness:} There exists an honest prover $P^*$ such that the output of interaction $\ip{P^*,V}=h$ satisfies $\Pr[\dist(f,h) \le \dist(f,\H) + \eps/3] \ge 1-\delta$.
        \item \textbf{Soundness:}  For any (possibly unbounded) prover $P$, the output of interaction $\ip{P, V}=h$ satisfies $\Pr[(h\neq\bot)\land(\dist(h, f) > \dist(\H, f) + \eps)] < \delta.$
    \end{itemize}
\end{theorem}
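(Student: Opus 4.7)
The plan is to design a single-message (``proof only'') protocol in which the prover sends a hypothesis $h \in \H$ using $\log|\H|$ bits of communication, and the verifier checks that $h$ is near-optimal by comparing two distance estimates. Concretely, the verifier (i) runs the given distance estimator $D$ with parameters $(\eps/6, \delta/2)$ to obtain an estimate $\tilde{d}$ of $\dist(f,\H)$, and (ii) makes $\Chernoff(\eps/6, \delta/2) = O(\log(1/\delta)/\eps^2)$ additional uniformly random membership queries to empirically estimate $\dist(f,h)$ within precision $\eps/6$ at confidence $\delta/2$, producing $\tilde{d}_h$. The verifier outputs $h$ iff $\tilde{d}_h - \tilde{d} \le \eps/2$, and outputs $\bot$ otherwise; this is a one-round protocol where the prover speaks first.

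For completeness, the honest prover $P^*$ invokes an (exact) agnostic learner for $\H$ --- the ``\textsf{learn}'' term in the prover's running time --- and sends $h^* \in \argmin_{h\in\H}\dist(f,h)$, so that $\dist(f,h^*) = \dist(f,\H)$. A union bound over the two estimator failure events ensures that simultaneously $|\tilde{d} - \dist(f,\H)| \le \eps/6$ and $|\tilde{d}_{h^*} - \dist(f,h^*)| \le \eps/6$ hold with probability at least $1-\delta$. On this good event, $\tilde{d}_{h^*} - \tilde{d} \le 2\cdot\eps/6 = \eps/3 < \eps/2$, so the verifier outputs $h^*$, which trivially satisfies $\dist(f,h^*) = \dist(f,\H) \le \dist(f,\H) + \eps/3$.

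For soundness, fix any (possibly unbounded) prover $P$ and condition on the output being some $h \ne \bot$, which means the verifier observed $\tilde{d}_h - \tilde{d} \le \eps/2$. By the same union bound, with probability at least $1-\delta$ both estimators are within $\eps/6$ of their true values, giving
\[
\dist(f,h) - \dist(f,\H) \le (\tilde{d}_h + \eps/6) - (\tilde{d} - \eps/6) \le \eps/2 + \eps/3 = \tfrac{5\eps}{6} < \eps,
\]
which is the required soundness guarantee. The only delicate step is calibrating the acceptance threshold $\eps/2$ so that the combined $\eps/3$ estimation slack fits comfortably on the completeness side while still excluding any $h$ that is $\eps$-far from optimal on the soundness side; beyond this, the proof is a routine reduction built from the distance estimator, an empirical check, and a single union bound, and I foresee no substantive obstacle.
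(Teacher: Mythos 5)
Your proposal is correct and follows essentially the same route as the paper: the prover sends a single hypothesis $h\in\H$ (proof only, $\log|\H|$ bits), the verifier estimates $\dist(f,h)$ empirically and $\dist(f,\H)$ via the distance estimator, both with parameters $(\eps/6,\delta/2)$, and decides by comparing the two estimates, with completeness and soundness each following from one union bound over the two estimation events. The only deviation is your one-sided acceptance threshold $\tilde{d}_h-\tilde{d}\le\eps/2$ (the paper accepts when $|\tilde{d}_h-\tilde{d}_\H|\le 2\eps/3$), which forces the honest prover to send an (essentially) exactly optimal hypothesis --- an honest prover supplying only an $\eps/3$-close hypothesis, as the paper's protocol tolerates and as its downstream applications use, could be rejected under your calibration; this does not affect the theorem as literally stated, but it tightens the accuracy demanded of the honest learner to roughly $\eps/6$.
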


\begin{proof}
    We start by describing the protocol.

\begin{transform}[H]
\caption{An $(\eps, \delta)$-PAC-Verification for $\H$ via a distance estimator $D$}\label{alg:tolerant-test-to-IP-for-learning}

    \SetKwInOut{FunIn}{Function input}
    \SetKwInOut{ExpIn}{Explicit inputs}
    \FunIn{A function $f:\Bits^n\to\Bits$, where the verifier has membership access}
    \ExpIn{Parameters $\eps,\delta>0$}

    The prover sends a hypothesis $h \in \H$ to the verifier, claiming it is ${\frac{\eps}{3}}$-close to optimal.

    The verifier takes $\Chernoff(\frac{\eps}{6},\frac{\delta}{2}) = O( \log(1/\delta)/\eps^2 )$ random labeled samples and estimates $\dist(h, f)$ up to error $\frac{\eps}{6}$ with probability at least $1 - \frac{\delta}{2}$. Let $\estimate{d}_h$ denote this estimation.
    
    The verifier runs $D$ using $q(\frac{\eps}{6}, \frac{\delta}{2})$ membership queries and estimates $\dist(f,\H)$ up error $\frac{\eps}{6}$ with probability at least  $1 - \frac{\delta}{2}$. Let $\estimate{d}_\H$ denote this estimation.

    If $\abs{\estimate{d}_h - \estimate{d}_\H} \leq \frac{2\eps}{3}$, the verifier outputs $h$. Otherwise, the verifier rejects.
\end{transform}

Let $d_h = \dist(f, h)$ and $d_\H = \dist(f, \H)$. By Chernoff bound, the verifier's estimation of $d_h$, denoted by $\estimate{d}_h$ is $\frac{\eps}{6}$-close to $d_h$ with probability at least $1- \frac{\delta}{2}$. Also, the verifier's estimation $\estimate{d}_\H$ is $\frac{\eps}{6}$-close to the actual distance $d_\H$ with probability at least $1-\frac{\delta}{2}$.
Therefore, by union bound, with probability at least $1-\delta$, both $\estimate{d}_h$ and $\estimate{d}_\H$ are within $\pm\frac{\eps}{6}$ their target value. Therefore, with probability at least $1-\delta$ we have

\begin{equation}\label{eq:estimation-is-close}
\abs{
\abs{\estimate{d}_h-\estimate{d}_\H} - \abs{d_h-d_\H} } \leq 
\abs{\estimate{d}_h-d_h} + \abs{\estimate{d}_\H-d_\H} \le \eps/3.
\end{equation}

\paragraph{Completeness:}
Suppose $\abs{d_h - d_\H} \le \eps/3$. Then, by \cref{eq:estimation-is-close}, with probability at least $1-\delta$ the estimated $\abs{\estimate{d}_h-\estimate{d}_\H} \le 2\eps/3$ and hence the verifier outputs $h$. 

\paragraph{Soundness:}
Suppose the hypothesis $h$ sent by the prover is incorrect, i.e. $\abs{d_h - d_\H} > \eps$. Then, from \cref{eq:estimation-is-close}, with probability at least $1-\delta$ the estimated $\abs{\estimate{d}_h-\estimate{d}_\H} > 2\eps/3$ and the verifier rejects.

\paragraph{Parameters:}
For a natural class of functions $\H$ the prover needs to only communicate $\log(|\H|)$ bits to send a hypothesis. All remaining parameters can be easily verified.
\end{proof}

\begin{remark}
It is worth noting that the above protocol can in fact, directly use a tolerant test $T$ for $\mathcal{H}$, obviating the need of constructing a distance estimator for $\mathcal{H}$ from $T$ (from Claim \ref{clm:distance-estimator-from-tolerant-test}), and thus saving a factor of $\log(1/\eps)$ induced by binary search. Indeed, the verifier runs $T$ with parameters $(\estimate{d}_h -\eps/2, \estimate{d}_h - 5\eps/6)$. If $T$ rejects the input, the verifier outputs $h$, and otherwise, the verifier rejects. For completeness, with high probability $\estimate{d}_h - \eps/2 \le d_\H$, and $T$ rejects the input. In the soundness case, with high probability $d_\H < \estimate{d}_h - 5\eps/6$ and the tolerant test accepts. Note that we still need to boost the confidence of the tolerant test by repeating it.
\end{remark}


\subsection{Transformation From Membership Queries to Random Examples}
\label{sec:query-to-sample-reduction}

In this section we show under certain conditions one can transform an interactive protocol for learning that uses membership queries to one that only uses random examples.

If each query has the same marginal distribution as the random examples of the verifier, we can reduce the PAC-verification protocol where the verifier uses membership queries, into one where the verifier takes random examples and prover has membership query access. The idea is as follows: the verifier selects one of the queries uniformly at random and fixes its value to a labeled random example. Note that the answer to this query is already known to her. Then she has to sample the rest of queries conditioned on the one fixed query; assuming this conditional sampling can be done computationally efficiently the verifier then sends these queries to the prover and asks him to label them. The verifier can finally check the label of the one query that is known to her to verify that prover did not cheat on the queries. They can repeat this interaction several times to increase the soundness as desired.

\begin{theorem}
\label{thm:membership-to-random-transform}
Let $\H=\H_n$ be a hypothesis class, and let $\D$ be a distribution over inputs to functions $h \in \H$. Suppose that $\H$ is $(\eps,\delta)$-PAC-verifiable with respect to $\D$ with the following parameters
    \begin{equation*}
    \InteractiveParams
            {\membership}
            {q = q(\eps, \delta)}
            {C = C(\eps, \delta)}
            {\proofonly}
            {T_v}
            {T_p}
    \enspace.
    \end{equation*}
Furthermore, suppose that (1) the marginal distribution of each (membership) query made by the verifier is $\D$ (2) for any string $x$ it is possible to efficiently ``construct a membership query set\footnote{A formalization of this notion can be found in \cref{sec:app_embed}.}" conditioned on $x$ being a randomly chosen query index.

Then, $\H$ is $(\eps,2\delta)$-PAC-verifiable with respect to $\D$ such that the verifier only takes random examples and has the following parameters
    \begin{equation*}
    \InteractiveParams
            {\random}
            {q\cdot \log(1/\delta)}
            {C + n\cdot q^2\cdot \log(1/\delta)}
            {3}
            {(T_v + T_\mathsf{query})\cdot q\cdot \log(1/\delta)}
            {T_p + n\cdot q^2\cdot \log(1/\delta)}
    \enspace,
    \end{equation*}
    where $T_\mathsf{query}$ is the time it takes to construct a membership query set.
\end{theorem}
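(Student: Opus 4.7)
The plan is to implement the ``hide a fresh labelled example inside each verifier query batch'' reduction from the technical overview (\cref{sec:tech-overview-igl-juntas}) and amplify it across $k := \Theta(q\log(1/\delta))$ rounds. After receiving the prover's initial message, the new verifier will run $k$ independent embedding rounds: in round $t$ it draws $(w_t, f(w_t)) \sim \mathsf{Ex}(\D, f)$, picks $j_t \sim [q]$ uniformly, and invokes the efficient conditional sampler promised by assumption~(2) to produce a query tuple $Q^{(t)}$ distributed like the original verifier's queries but conditioned on $Q^{(t)}_{j_t} = w_t$. It sends this tuple to the prover, collects the answer tuple $a^{(t)}$, and rejects if $a^{(t)}_{j_t} \neq f(w_t)$ for any $t$. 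Assuming no round rejects, it then feeds the $k$ resulting transcripts into the original verifier's decision procedure and aggregates the outputs, for instance by majority vote.

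\textbf{Core distributional claim and analysis.} The central probabilistic ingredient I need to establish is that $(Q^{(t)}, j_t)$ is jointly distributed as (an original query tuple) $\otimes\ \mathrm{Unif}([q])$. This should follow from a short calculation: assumption~(1) says each marginal of the original query distribution is $\D$, so averaging the conditional sampler of assumption~(2) over $w \sim \D$ reproduces the unconditional law, and symmetry places $j_t$ uniformly on $[q]$ given the tuple. The upshot is that the prover's posterior on $j_t$ given everything he sees is uniform on $[q]$, and the $(j_t)_{t\in[k]}$ remain independent across rounds. Completeness is then immediate: an honest prover is never caught, each emulated transcript is statistically identical to an honest execution of the original verifier, and the aggregated output inherits the $(1-\delta)$-completeness guarantee. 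For soundness I will case-split on the prover: if he never deviates from $f$, each transcript is distributed as an honest original execution and the original soundness caps the per-round failure at $\delta$, which the aggregation preserves; if he deviates on a nonempty $L_t$ in round $t$, the distributional claim gives $\Pr[j_t \in L_t] = |L_t|/q \geq 1/q$ independently across rounds, so the probability that enough rounds escape detection to flip the aggregated decision is at most $(1-1/q)^{\Omega(k)} \leq \delta$. A union bound yields the stated total error $2\delta$.

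\textbf{Parameters and main obstacle.} Parameter accounting should be routine: each of the $k = q\log(1/\delta)$ rounds consumes one random example and sends an $nq$-bit query tuple, giving the claimed $q\log(1/\delta)$ sample complexity and $C + nq^2\log(1/\delta)$ total communication on top of the original protocol; the interaction fits into three messages (the prover's initial message, the verifier's batched query tuples, the prover's batched answers); the per-round verifier work is dominated by one invocation of the sampler plus one call into the simulated original verifier, giving $T_v + T_\mathsf{query}$ per round. The main obstacle, where I expect most of the formal writeup to live, is establishing the distributional claim rigorously --- in particular formalising the ``embeddability'' primitive referenced in~\cref{sec:app_embed} so that the conditional sampler truly produces the correct joint law, and verifying that the independence and uniformity of the $j_t$'s across rounds survives the side-channel the prover picks up from earlier embedding checks passing (namely the information ``$j_{t'} \notin L_{t'}$ for $t' < t$''). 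Once that lemma is in hand, the completeness/soundness case analysis and parameter bookkeeping reduce to routine calculations.
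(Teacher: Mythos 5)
Your construction is essentially the paper's: embed one fresh labelled example per repetition at a uniformly random query index via the conditional sampler, batch the $\Theta(q\log(1/\delta))$ repetitions into a three-message protocol (proof, all query sets, all answers), reject if any embedded label is answered incorrectly, simulate the original proof-only verifier on each repetition with the fixed first message, and aggregate; the joint-distribution claim you isolate is exactly the embeddability property the paper formalises in its appendix, and your worry about a sequential side channel dissolves precisely because the checks are only performed after the prover has committed to all answers in one batch.

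The genuine gap is in your soundness case-split. Splitting into ``never deviates'' versus ``deviates'' and charging $(1-1/q)^{\Omega(k)}$ in the second case only covers provers that deviate in $\Omega(k)$ of the $k$ repetitions; it says nothing about the dangerous intermediate regime in which the prover corrupts a small nonzero number $m$ of repetitions (say $m=1$), where the detection probability is only $1-(1-1/q)^{m}\approx m/q$, so he survives all your embedded checks with probability close to $1$ and your exponential bound does not apply. This regime needs the separate argument the paper gives as its second case ($t_{\mathsf{bad}} < q\ln(1/\delta)$): because there are $t=\Theta(q\log(1/\delta))$ repetitions, few corrupted ones means the overwhelming majority are answered faithfully; each faithful repetition is a genuine execution of the original protocol with the fixed proof, so per-repetition soundness ($\le\delta$) plus a Chernoff bound over the $\ge 199\,q\ln(1/\delta)$ faithful repetitions forces more than half of the simulated runs to reject, and the aggregation rule then rejects overall. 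Relatedly, ``majority vote'' over the outputs is not the right aggregator (the hypotheses differ across repetitions); the rule that makes both directions go through is the paper's threshold rule --- reject iff more than half of the simulated runs reject, otherwise output the hypothesis of some non-rejecting run --- and the threshold at one half is also what preserves completeness, since even with an honest prover each simulated run may individually reject with probability up to $\delta$. Once you adopt that rule and redo the case analysis on the number of corrupted repetitions rather than on ``deviates at all,'' your argument matches the paper's proof.
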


\begin{proof}
The transformed protocol is as follows: 

\begin{transform}[H]
    \SetAlgoLined
    \KwIn{A function $f:\Bits^n\to\Bits$, and $\eps,\delta > 0$ and $P_M$ -- the protocol with membership queries.}
    
    Prover sends a proof $\pi$ ($P_M$ is proof only).

    Verifier takes $t = 200\cdot q(\eps, \delta)\cdot \ln(1/\delta)$ random examples $\{(x_i,f(x_i))\}_{i=1}^t$. Then for each $x_i$ ``constructs a membership query set'' $Q_i\subseteq\Bits^n$ for $P_M$. That is, the verifier in $P_M$ makes these queries.
    
   Verifier sends all of the query sets $\{Q_i\}_{i=1}^t$ without the labels to the prover.

   Prover sends all of the purported answers $\tilde{f}(Q_i) = \{(x, \tilde{f}(x)) : x \in Q_i\}$ to the verifier, where $\tilde{f}$ denotes the labeling of the prover.

    \ForAll{$i \in [t]$ executions}{

        Verifer checks $\tilde{f}(x_i) = f(x_i)$. If not, she rejects the query set and outputs $\bot$ (reject).

        Verifier simulates $P_M$ with proof $\pi$ and queries $\tilde{f}(Q_i)$, and outputs $h_i\in \H\cup\{\bot\}$. Note that the proof $\pi$ remains fixed across all iterations.
    }

    The verifier rejects, if more than $t/2$ iterations reject. Otherwise, she outputs $h_i$, the output of an arbitrary non-rejecting iteration.
    
\caption{$P_R$ -- the equivalent protocol with random labeled queries only.}
\end{transform}

Before proceeding with the proof, we describe what we mean by constructing a membership query set around $x$. Let $Q$ of size $q$ denote the membership query set generated by $P_M$. Then, consider the following process.
\begin{enumerate}
\item Sample a random index $j\sim [q].$
\item Take a random example $(w, f(w))$.
\item Construct a valid query set $Q$ \emph{conditioned} on the $j^{\text{th}}$ query being $w$. Here, being valid means that $Q$ has the same distribution of queries made by the $P_M$ verifier.
\end{enumerate}
Our query-to-sample reduction assumes that the last step of the above construction is feasible. Since the marginal distribution of all the queries is the same and since $j$ was chosen uniformly at random, the prover gains no advantage in detecting $j$. 

\medskip 
In the rest of this proof, $V$ denotes the verifier and $P$ denotes the prover, and we assume $\delta < \frac{1}{3}$. For each $i\in[t]$, let $f(Q_i) = \{(x, f(x)) : x\in Q_i\}$ denote the correct labeling of the $i$-th query set. We define $t_\mathsf{bad} = \abs{\{i \in [t] : \tilde{f}(Q_i) \neq f(Q_i)\}}$ to be the number of dishonest query sets provided by the prover.

\paragraph{Completeness:}

While interacting with an honest prover we have $t_\mathsf{bad} = 0$. Note that even though an honest prover provides a valid proof $\pi$ and faithful query sets, the verifier may still reject some iterations, but with probability no more than $\delta$ individually. Thus the expected number of non-rejecting executions for any $\delta < 1/3$ is at least $(1-\delta)t > 2t/3$. By Chernoff bound, since $t > 36\log(2/\delta)$ then the number of non-rejecting executions is within $\pm t/6$ of its expectation with probability at least $1-\delta$. Thus with the same probability, the number of non-rejecting iterations is at least $(1-\delta)t - t/6 > t/2$. Therefore, the verifier does not reject, and outputs a hypothesis that is guaranteed to be $\eps$-close to optimal, with probability at least $1-2\delta$.

\paragraph{Soundness:}
Let $R_1$ denote the event that verifier rejects due to rejecting a query set, and let $R_2$ denote the event that verifier rejects due to rejecting more than half of the iterations. Note that $R_1$ and $R_2$ are mutually exclusive, and therefore $\Pr[\ip{P, V}=\bot] = \Pr[R_1] + \Pr[R_2]$. Assuming the prover sends an invalid proof $\pi$ we show at least one of $R_1$ or $R_2$ happens with high probability. Let $q = q(\eps, \delta)$. We have two cases.

\paragraph{Case $t_\mathsf{bad} \ge q\cdot\ln(\frac{1}{\delta})$:} Then

$$
\Pr[\neg R_1]
\le \left(1- \frac{1}{q} \right)^{t_\mathsf{bad}}
\le  e^{-t_\mathsf{bad}/q}
\le e^{-\frac{\ln(1/\delta)q}{q}}
\le \delta.
$$

Hence the verifier rejects with high probability, i.e. $\Pr[\ip{P, V}=\bot] \geq 1 - \delta$.

\paragraph{Case $t_\mathsf{bad} < q\cdot\ln(\frac{1}{\delta})$:} We show in this case with overwhelming probability at least half the executions will reject, thus $P_R$ rejects.

Let $t_\mathsf{good} = t - t_\mathsf{bad}$, and notice that $t_\mathsf{good} \ge 199q\ln(1/\delta)$. We show among the $t_\mathsf{good}$ faithful query sets more than $t/2$ of them reject with overwhelming probability. Note that with probability no more than $\delta$ a faithful query set $Q_i$ fails to catch a wrong proof, rendering it ineffective. Define the random variable $X$ to be the number of faithful query sets for which $P_M$ rejects.

Each faithful query set rejects with probability at least $1-\delta$. Let $\alpha = 1 - \frac{100}{199(1-\delta)}$, $\beta=1-\frac{3\cdot 100}{2\cdot 199}$, and $\mu = \E[X]$. Notice that $\mu \ge (1-\delta)199q\ln(1/\delta)$. By Chernoff bound

\begin{align*}
\Pr[\neg R_2] &\le \Pr[X \le t/2] \\
&\le \Pr[X \le (1-\alpha)\mu] \\
&\le \Pr[X \le (1-\beta)\mu] \\
&\stackrel{(*)}{\le} \exp\left(-\frac{\beta^2}{2}\mu\right) \\
&\stackrel{(**)}{\le} \exp\left(-\frac{\beta^2}{2}(1-\delta)199q\ln(1/\delta)\right) \\
&\stackrel{(***)}{\le} \exp\left(-\frac{\beta^2}{3}199\ln(1/\delta)\right) \\
&= \delta^{199\beta^2/3} \le \delta.
\end{align*}

The inequality $(*)$ follows from Chernoff bounds, and $(**)$ follows because $\mu$ is at least $(1-\delta)199q\ln(1/\delta)$. Finally $(*\!*\!*)$ holds because $(1-\delta)\cdot q > 2/3$. Therefore $\Pr[\ip{V, P}=\bot] \ge \Pr[R_2] \ge 1-\delta$.

\end{proof}

\section{Sample-Efficient Interactive Goldreich-Levin Protocol}
\label{sec:learning-fourier-coeffs}

We view any Boolean function $f$ in its equivalent form as $f:\Field_2^n\to\{-1,1\}$ instead. Recall from Section \ref{sec:boolean-functions}, that any Boolean function $f:\F_2^n\to\{-1,1\}$ has the following Fourier expansion $f(x) = \sum_{\gamma\in\F_2^n}\hat{f}(\gamma)\chi_\gamma(x)$, where $\hat{f}(\gamma)$ is the Fourier coefficient of $f$ associated with $\gamma$. Since Fourier characters and $\F_2^n$ have a one-to-one correspondence in this expansion, we refer to characters or vectors in $\F_2^n$ interchangeably for the rest of this section. In other words, the domain $\F_2^n$ also acts as a means to index the Fourier coefficients of $f$. Finally, any $\gamma\in\F_2^n$ such that the absolute value of its Fourier coefficient $|\hat{f}(\gamma)|$ is ``large'', is called a \emph{heavy vector}. The terms ``large'' or ``heavy'' will obtain a more precise meaning in our detailed description.

In \cref{sec:alg-for-learning-fourier-values} we provide an algorithm that outputs the value of the top $t$ Fourier coefficients of a given function. We remark that we find this result of independent interest. Then using the idea of this algorithm, in \cref{sec:pac-verification-top-fourier-coefficients} we design an interactive protocol for learning the top $t$ characters associated with the top $t$ Fourier coefficients. The output of this protocol is a ``top'' set in the sense of the following definition.

\begin{definition}[$\eps$-Top Set]
   A set $\Gamma\subseteq\Field_2^n$ is called \emph{$\eps$-top} with respect to a Boolean function $f:\F_2^n\to\{-1,1\}$ if for any $\alpha\in \Field_2^n\setminus\Gamma$ and any $\beta\in\Gamma$ we have $|\hat{f}(\alpha)| \le |\hat{f}(\beta)| + \eps$.
\end{definition}

Finally, we apply our general query-to-sample reduction to get the following result:

\begin{theorem}[Formal version of \cref{thm:our-results-topfourier}]\label{thm:fourier-random}
    There exists an interactive protocol with verifier $V$ and prover $P$ where, given $\eps, \delta > 0$ and a number $t\in\N$, and verifier has random example access to a function $f:\F_2^n\to\{-1,1\}$, the verifier uses $\poly(t,\frac{1}{\eps}, \log\frac{1}{\delta})$ random examples (independent of $n$) to $f$ and outputs $\Lambda$, which is either a size-$t$ ``top'' subset of $\F_2^n$ or $\bot$ (reject). This protocol has the following interactive parameters

    \begin{equation*}
         \InteractiveParams{\random}{\poly(t,1/\eps, \log(1/\delta))}
         {n\cdot \poly(t,1/\eps, \log(1/\delta))}
         {3}
         {\poly(n,t,1/\eps,\log(1/\delta))}
         {\poly(n,t,1/\eps,\log(1/\delta))}
        \enspace,
    \end{equation*}

    and satisfies the following completeness and soundness conditions.

    \begin{itemize}
        \item \textbf{Completeness:} There exists an honest prover $P^*$ such that the output of interaction $\ip{P^*,V}=\Lambda$ satisfies $\Pr\left[\Lambda \text{ is } \frac{\eps}{3}\text{-top}\right] \geq 1-\delta-o(1)$,
    
        \item \textbf{Soundness:} For any (possibly unbounded) prover $P$, the output of interaction $\ip{P,V}=\Lambda$ satisfies $\Pr\left[(\Lambda \neq \bot) \land (\Lambda \text{ is not } 
 \eps\text{-top})\right] \leq \delta + o(1).$
    \end{itemize}
\end{theorem}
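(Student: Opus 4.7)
The plan is to proceed in two stages: first build a membership-query PAC-verifier for the top-$t$ Fourier characters with query complexity $\poly(t,1/\eps,\log(1/\delta))$ (independent of $n$), and then invoke the query-to-sample reduction (\cref{thm:membership-to-random-transform}) to replace membership queries with random examples from the uniform distribution $\D = \sU_n$.

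For the query-based protocol, I would follow the outline in \cref{sec:tech-overview-igl-juntas}. The verifier samples $s = O(\log(t/\eps))$ uniformly random vectors $r_1,\dots,r_s \in \F_2^n$, lets $S = \mathrm{span}(r_1,\dots,r_s)$, and considers the $2^s$ affine shifts of $S^\perp$, which partition $\F_2^n$. A union bound (over pairs of would-be heavy vectors) shows that with high probability each affine shift contains at most one $\Omega(\eps)$-heavy vector. The prover (running Goldreich-Levin) sends a candidate list $\Lambda$ of size $t$. For each claimed $a \in \Lambda$ the verifier (i) identifies the affine shift of $S^\perp$ containing $a$ (easy, since $S$ is given by a small basis), (ii) estimates $|\hat f(a)|$ to accuracy $\eps/6$ directly using $O(1/\eps^2)$ random queries via $\hat f(a) = \E_x[f(x)\chi_a(x)]$, and (iii) estimates the maximum $|\hat f(\cdot)|$ inside the same affine shift, using the $L_4$-Fourier norm identity from \cref{lem:fourth-power-sum}, which writes $\sum_{a \in \Gamma} \hat f(a)^4$ as an expectation over $x,y,z$ uniform in $\F_2^n$ and $u$ uniform in $S$. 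Since the shift contains at most one heavy vector, $(\sum_{a\in \Gamma}\hat f(a)^4)^{1/4}$ is an additive $\eps/6$-approximation of $\max_{a\in\Gamma}|\hat f(a)|$. The verifier rejects if the two estimates disagree by more than $\eps/3$, and also verifies that $\Lambda$ is internally consistent with being an $\eps/3$-top set.

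At this point I have a PAC-verifier with \textsf{membership} access using $q = \poly(t,1/\eps,\log(1/\delta))$ queries, all of which are generated non-adaptively from the verifier's randomness $(r_1,\dots,r_s)$ and the prover's message $\Lambda$. The next step is to apply \cref{thm:membership-to-random-transform}, for which I must check the two preconditions with $\D = \sU_n$. The marginal uniformity condition is straightforward for the pointwise coefficient estimates (queries are uniform by design) and for the $L_4$ estimator it requires inspecting the four query templates $x$, $y$, $z$, and $x+y+z+u$ with $u \sim S$: each of these has uniform marginal on $\F_2^n$, the last because adding an independent uniform vector to any vector yields a uniform vector. For the ``conditional-resampling'' (embeddability) condition, given a fixed random example $(w,f(w))$ to be planted at a uniformly chosen query index, the verifier must be able to complete the rest of the query set consistently. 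This is the content of the linear-query-pattern \cref{lem:linear-query-pattern}: for each of the four query slots, conditioning on the value being $w$ induces a simple distribution on the remaining $x,y,z,u$ that can be sampled in $\poly(n,s)$ time (e.g.\ for the $x+y+z+u$ slot, sample $x,y,z$ uniformly and $u \sim S$ uniformly and set the dummy as $w - x - y - z - u$ only when re-expressing the constraint; the detailed bookkeeping ensures the joint distribution of the entire query set conditioned on the planted query is correct).

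The main obstacle I anticipate is the second stage: verifying the embeddability precondition for the $L_4$-estimator queries, since the random subspace $S$ creates correlations between queries through the shared randomness $r_1,\dots,r_s$. A naive ``set $u = w - x - y - z$'' fails because $w - x - y - z$ need not lie in $S$. Resolving this requires checking that the marginal of $x+y+z+u$ is uniform even after conditioning on $(r_1,\dots,r_s)$, and then describing the conditional distribution of $(x,y,z,u,r_1,\dots,r_s)$ given that their combination equals $w$; this is exactly what \cref{lem:linear-query-pattern} will give us. Once both preconditions are verified, \cref{thm:membership-to-random-transform} produces a three-round protocol with $q \cdot \log(1/\delta)$ random examples, inheriting completeness $1-\delta-o(1)$ and soundness $\delta + o(1)$ (where the $o(1)$ absorbs the low-probability failure of the random partition), matching the parameters claimed in \cref{thm:fourier-random}.
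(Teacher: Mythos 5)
Your overall architecture is the same as the paper's: build a membership-query verifier first (the paper's \cref{thm:fourier-membership}, Protocol~\ref{alg:ip-for-top-t-fourier-coefficients}) and then apply the query-to-sample reduction of \cref{thm:membership-to-random-transform}, with soundness of the embedding resting on the linear query pattern and \cref{lem:linear-query-pattern}. Your treatment of the second stage is essentially correct and matches the paper: the $L_4$-estimator queries $x,y,z,x+y+z+w$ with $w$ a random combination of $r_1,\dots,r_s$ form a linear query pattern, each marginal is uniform, and embedding a planted example is done by solving for one random column rather than by the naive ``set $w = $ planted point minus the rest'', exactly as you anticipate.

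The gap is in your membership-query verifier's soundness check. You have the verifier, for each claimed $a\in\Lambda$, compare a direct estimate of $|\hat f(a)|$ against the $L_4$-based estimate of the maximum coefficient \emph{in the coset containing $a$}, plus an ``internal consistency'' check on $\Lambda$. This only inspects the at most $t$ cosets touched by the prover's list, and no check on $\Lambda$ alone can certify the $\eps$-top property, which compares elements of $\Lambda$ against characters \emph{outside} $\Lambda$. A malicious prover can omit the genuinely heavy characters entirely and send $t$ light characters, each of which really is (approximately) the heaviest element of its own coset; all of your checks then pass, yet $\Lambda$ is far from $\eps$-top because some untouched coset contains a character whose coefficient exceeds the smallest claimed one by much more than $\eps$. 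The paper's protocol avoids this by running the $\GLstar$ estimation (\cref{alg:gl-star}) over \emph{all} $2^s = \poly(t,1/\eps,1/\delta)$ cosets of $V_0$ (reusing the same query set across cosets, since only the sign factor $\chi_h(w)$ changes) and rejecting precisely when some coset $V_a$ with $V_a\cap\Lambda'_t=\emptyset$ has estimated maximum exceeding the smallest estimated coefficient in $\Lambda'_t$ by more than $2\eps/3$. That scan of the $\Lambda$-free cosets is the step your proposal is missing, and without it the soundness claim of \cref{thm:fourier-membership} (and hence of \cref{thm:fourier-random} after the transformation) does not go through.
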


\subsection{Algorithm for Computing the Top Fourier Coefficients}
\label{sec:alg-for-learning-fourier-values}

Fix a Boolean function $f:\F_2^n\to\{-1,1\}$. We follow a similar strategy as the Kushilevitz-Mansour \cite{KushilevitzM93} algorithm (based on Goldreich-Levin \cite{GoldreichL89}) for learning heavy Fourier coefficients of a Boolean function $f$, given membership query access to it. However, unlike the KM algorithm, where the bucketing algorithm iteratively splits with respect to a fixed enumeration over the input variables to $f$, we instead split over \textit{random linear functions}.\footnote{This splitting idea is also used by \cite{gopalan2011testing}, but for the case of testing sparse Boolean functions.} This helps us get a much stronger query complexity (independent of $n$.)

\paragraph{Algorithm Overview:} The goal is to find the $t$ largest Fourier coefficients of $f$, for a given $t\in\N$. Our key idea for this algorithm is that by partitioning $\Field_2^n$ into a few (but at least $t$) randomly constructed, low dimensional affine subspaces, we are likely to \emph{separate} all the heavy vectors (which are not too many by Parseval's theorem) into different subspaces. In other words, with high probability, each subspace in such a random partitioning contains at most one heavy vector. Therefore, a natural approach to accomplish our goal, would be to estimate the largest associated coefficient in from each subspace in the partition, and output the $t$ top ones. Moreover, the extra structure offered by affine nature of partition, enables us to develop tools from Fourier analysis for estimating the largest corresponding coefficient in it. It is worth stressing that while this algorithm only returns the Fourier coefficients, at the end of this, the vectors corresponding to these coefficients are still unknown.

\paragraph{Detailed Description:} Let $\Lambda_t = \{\gamma_1,\dots, \gamma_t\}$ be the set of vectors in $\F_2^n$ corresponding to the $t$ largest Fourier coefficients of $f$. We want to estimate the value of $|\hat{f}(\gamma_i)|$. To start with, we construct our random partition of $\F_2^n$ as follows. Pick $s$ random vectors $r_1,\dots,r_s$ from $\Field_2^n$, for $s$ to be defined later. These random vectors, assuming they are linearly independent, partition $\Field_2^n$ into $2^s$ affine subspaces (or \textit{cosets}), each of which is described by a vector in $\Field_2^s$. More precisely, we partition $\F_2^n$ into the following sets, each characterized by a vector $a\in\F_2^s$
\begin{align}\label{equ:cosets}
    V_a = \{\gamma\in\Field_2^n\ :\ \langle\gamma,r_i\rangle = a_i\ \text{for each }i\in[s]\}.
\end{align}

Amongst all these subsets of $\F_2^n$, an important observation here is that the subspace characterized by the all zeroes vector, $V_0$, is the subspace orthogonal to the span of the vectors $r_1, \dots, r_s$, while the rest are affine subspaces defined as its cosets. Indeed, for every $a\in\Field_2^s$, we see that each $V_a=V_0 + h_a$ where $h_a$ is any vector in $V_a$. Following this, we use $V_0+h$ to refer to an arbitrary coset of $V_0$.

Now suppose we had an algorithm, given a coset $V_0 + h$, could output $\max_{\gamma\in V_0 + h} |\hat{f}(\gamma)|$. Then, we attain our goal of computing the top coefficients easily: for suitably picked $s$, find $r_1, \dots, r_s$ such that each $V_a$ contains at most one of the vectors $\Lambda_t = \{\gamma_1,\dots, \gamma_t\}$ (i.e. vectors of $\Lambda_t$ are separated across the different cosets of $V_0$), and output the $t$ largest values from the coefficients estimated by this algorithm, over each coset of $V_0$. This motivates us to examine the probability that such a random partition separates $\Lambda_t$. Before that, it would be convenient for us to introduce the following notation.

\begin{definition}[$X$-Rare Set]
Let $X$ and $Y$ be any subsets of $\F_2^n$. Then $Y$ is called \emph{$X$-rare}  if $|X\cap Y| \le 1$. 

Moreover, for any Boolean function $f$, (abusing notation) we say that $Y$ is an $\eps$-rare set as a shorthand for a set that is $(\hat{f}^{>\eps})$-rare, i.e., $Y$ has at most one vector whose corresponding Fourier coefficient has absolute value at least $\eps$.
\end{definition}

\begin{claim}\label{claim:separate-whp}
    For any set $X\subseteq\Field_2^n$ and 
 any $0<\delta<1$, if we draw $s \ge 2\log_2(|X|) + \log_2(1/\delta)$ random vectors $r_1,\dots, r_s$ then all cosets of $V_0$ will be $X$-rare with probability at least $1-\delta$.
\end{claim}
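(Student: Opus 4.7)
The plan is to reduce the claim to a collision-style union bound, using the characterization of cosets of $V_0$ in terms of inner products with the random vectors $r_1, \dots, r_s$. Recall from \eqref{equ:cosets} that two vectors $\gamma, \gamma' \in \F_2^n$ lie in the same coset $V_a$ if and only if $\langle \gamma, r_i\rangle = \langle \gamma', r_i\rangle$ for every $i \in [s]$, equivalently, if and only if $\langle \gamma - \gamma', r_i\rangle = 0$ for every $i \in [s]$. Consequently, the event that some coset of $V_0$ fails to be $X$-rare is exactly the event that there exist two distinct vectors $x, y \in X$ with $\langle x - y, r_i\rangle = 0$ for all $i \in [s]$.

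With this reformulation, the next step is to bound the probability of this bad event for a fixed pair. For any fixed nonzero $z \in \F_2^n$ and a uniformly random $r \in \F_2^n$, the inner product $\langle z, r\rangle$ is a uniformly random bit, so $\Pr_r[\langle z, r\rangle = 0] = 1/2$. Since $r_1, \dots, r_s$ are drawn independently and since $x \neq y$ implies $z = x - y \neq 0$, we obtain
\begin{equation*}
\Pr_{r_1,\dots,r_s}\bigl[\langle x - y, r_i\rangle = 0 \text{ for all } i \in [s]\bigr] = 2^{-s}.
\end{equation*}

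Finally, I would apply a union bound over all $\binom{|X|}{2} \le |X|^2/2$ unordered pairs of distinct vectors in $X$ to conclude that the probability that \emph{any} coset of $V_0$ fails to be $X$-rare is at most $\tfrac{1}{2}|X|^2 \cdot 2^{-s}$. Plugging in the hypothesis $s \ge 2\log_2(|X|) + \log_2(1/\delta)$ gives $2^{-s} \le \delta/|X|^2$, so the failure probability is at most $\delta/2 \le \delta$, as required.

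There is no real obstacle here: the only thing to be careful about is the correspondence between ``two points of $X$ colliding in a coset'' and ``the difference lying in the orthogonal complement of $\mathrm{span}(r_1,\dots,r_s)$,'' which holds regardless of whether the $r_i$ happen to be linearly independent, so we do not need a separate case analysis for that event. The argument is genuinely a standard second-moment / pairwise-independence style hashing bound applied to the hash function $\gamma \mapsto (\langle \gamma, r_1\rangle, \dots, \langle \gamma, r_s\rangle)$ from $\F_2^n$ to $\F_2^s$.
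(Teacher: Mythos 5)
Your proposal is correct and follows essentially the same argument as the paper: bound the collision probability of a fixed pair of distinct vectors by $2^{-s}$, union bound over at most $|X|^2$ pairs, and plug in $s \ge 2\log_2(|X|) + \log_2(1/\delta)$. The only difference is that you spell out the $2^{-s}$ collision bound via the uniform-bit inner-product argument and note that linear independence of the $r_i$ is not needed, both of which are consistent with (and slightly more explicit than) the paper's proof.
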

\begin{proof}
For \emph{any} $\alpha, \beta\in\Field_2^n$ such that $\alpha\neq\beta$ the probability that they end up in the same coset of $V_0$ is at most $2^{-s}$. Thus, using the union bound, for any $X\subseteq\Field_2^n$ the probability that a coset of $V_0$ contains two vectors from $X$ is at most $\binom{|X|}{2} \cdot 2^{-s}$, and consequently at most $|X|^2\cdot 2^{-s}$. Hence choosing $s \ge 2\log_2(|X|) + \log_2(1/\delta)$ bounds this probability by $\delta$.
\end{proof}

\cref{claim:separate-whp} shows that we need not many random vectors to separate any set (and in particular $\Lambda_t$) with high probability. Therefore, it only remains to estimate the maximum corresponding coefficient in each coset of $V_0$ to finish our algorithm. The following lemma paves the path to achieving this by offering a way to estimate the sum of fourth power of the Fourier coefficients corresponding to the vectors in an affine subspace.

\begin{lemma}\label{lem:fourth-power-sum}
  Let $V + h$ be an affine subspace of $\F_2^n$. For any $f:\F_2^n\to\{-1, 1\}$ the following holds
    \begin{equation*}
    \sum_{s \in V+h} \fhat(s)^4 = 
    \E_{\substack{x, y, z \sim \F_2^n\\ w \sim V^\perp }}
            [\chi_h(w) \cdot f(x) \cdot f(y) \cdot f(z) \cdot
            f(x+y+z+w)].\\
    \end{equation*}
\end{lemma}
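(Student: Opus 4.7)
The plan is to expand the right-hand side using the Fourier expansion of $f$, collapse the expectations over $x,y,z$ using orthogonality of characters on $\F_2^n$, and finally use the standard fact that averaging a character over a subgroup picks out exactly those indices lying in its annihilator. Concretely, I would write
\[
f(x)f(y)f(z)f(x+y+z+w)=\sum_{\alpha,\beta,\gamma,\delta\in\F_2^n}\hat f(\alpha)\hat f(\beta)\hat f(\gamma)\hat f(\delta)\,\chi_\alpha(x)\chi_\beta(y)\chi_\gamma(z)\chi_\delta(x+y+z+w),
\]
and then use the multiplicative property $\chi_\delta(x+y+z+w)=\chi_\delta(x)\chi_\delta(y)\chi_\delta(z)\chi_\delta(w)$ to regroup the product on the right as $\chi_{\alpha+\delta}(x)\chi_{\beta+\delta}(y)\chi_{\gamma+\delta}(z)\chi_\delta(w)$.

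Next, I would take the expectation over uniformly random $x,y,z\sim\F_2^n$. Since $\E_{x\sim\F_2^n}[\chi_\eta(x)]=\mathbbm{1}[\eta=0]$, this kills every term except those with $\alpha=\beta=\gamma=\delta$, leaving
\[
\E_{x,y,z\sim\F_2^n}\bigl[f(x)f(y)f(z)f(x+y+z+w)\bigr]=\sum_{\delta\in\F_2^n}\hat f(\delta)^4\,\chi_\delta(w).
\]
Multiplying by $\chi_h(w)$ and taking the expectation over $w\sim V^\perp$ gives
\[
\E_{w\sim V^\perp}\!\Bigl[\chi_h(w)\sum_{\delta}\hat f(\delta)^4\chi_\delta(w)\Bigr]=\sum_{\delta\in\F_2^n}\hat f(\delta)^4\,\E_{w\sim V^\perp}[\chi_{\delta+h}(w)].
\]

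The final step uses the key fact that for any $\eta\in\F_2^n$, $\E_{w\sim V^\perp}[\chi_\eta(w)]$ equals $1$ if $\eta\in(V^\perp)^\perp=V$ and $0$ otherwise; this is immediate because $\chi_\eta$ restricted to the subspace $V^\perp$ is a group homomorphism into $\{-1,1\}$, and such a homomorphism is either trivial (mean $1$) or balanced (mean $0$). Thus only those $\delta$ with $\delta+h\in V$ (equivalently, $\delta\in V+h$, since $-h=h$ in $\F_2$) contribute, yielding exactly $\sum_{s\in V+h}\hat f(s)^4$ as required.

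There is essentially no obstacle: the proof is a routine Fourier-analytic identity. The only subtlety worth being explicit about is the annihilator computation $\E_{w\sim V^\perp}[\chi_\eta(w)]=\mathbbm{1}[\eta\in V]$, which justifies why the shift by $\chi_h(w)$ transforms the linear subspace $V$ into the affine subspace $V+h$ of summation indices. Since every step is an exact equality, no error terms or estimates arise, and the lemma follows directly.
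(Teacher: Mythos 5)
Your proposal is correct and follows essentially the same route as the paper's proof: Fourier-expand all four factors, use orthogonality over the independent uniform $x,y,z$ to collapse to the diagonal terms $\hat f(\delta)^4$, and then use $\E_{w\sim V^\perp}[\chi_{\delta+h}(w)]=\mathds{1}(\delta\in V+h)$ to restrict the sum to the affine subspace. The only cosmetic difference is your homomorphism-based justification of the annihilator identity versus the paper's basis argument, which are interchangeable.
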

\begin{proof}
Let $A$ denote the right-hand side of the equation. We start by substituting each term for its Fourier expansion

\begin{align*}
  &A = \E_{\substack{x, y, z \sim \F_2^n \\ w \in V^\perp }}
            [\chi_{h}(w) \cdot f(x) \cdot f(y) \cdot f(z) \cdot
            f(x+y+z+w)]\\
    &= \E_{\substack{x, y, z \sim \F_2^n \\ w \sim V^\perp \\ r:= x+y+z+w}}
    \left[
    \chi_{h}(w)
    \left(\sum_{s \in \F_2^n}\hat{f}(s)\chi_s(x)\right)
    \left(\sum_{t \in \F_2^n}\hat{f}(t)\chi_t(y)\right)
    \left(\sum_{u \in \F_2^n}\hat{f}(u)\chi_u(z)\right)
    \left(\sum_{v \in \F_2^n}\hat{f}(v)\chi_v(r)\right)
    \right]\\
    &= \E_{\substack{x, y, z \sim \F_2^n \\ w \sim V^\perp }}
    \left[
    \chi_{h}(w)
    \sum_{s, t, u, v \in \F_2^n}
            \hat{f}(s)\hat{f}(t)\hat{f}(u)\hat{f}(v)
            \chi_s(x)\chi_t(y)\chi_u(z)\chi_v(x+y+z+w)
    \right].
\end{align*}

By linearity of expectation

\begin{align*}
    A &= \sum_{s, t, u, v \in \F_2^n}
            \hat{f}(s)\hat{f}(t)\hat{f}(u)\hat{f}(v)
            \E_{\substack{x, y, z \sim \F_2^n \\ w \sim V^\perp }}
              \left[
                \chi_{h}(w)\chi_s(x)\chi_t(y)\chi_u(z)\chi_v(x+y+z+w)
              \right]\\
    &= \sum_{s, t, u, v \in \F_2^n}
            \hat{f}(s)\hat{f}(t)\hat{f}(u)\hat{f}(v)
            \E_{\substack{x, y, z \sim \F_2^n \\ w \sim V^\perp }}
              \left[
                \chi_{h}(w)\chi_s(x)\chi_t(y)\chi_u(z)\chi_v(x)\chi_v(y)\chi_v(z)\chi_v(w)
              \right].
\end{align*}

Since the random variables $x, y, z, w$ are all independent

\begin{align*}
    A &= \sum_{s, t, u, v \in \F_2^n}
            \hat{f}(s)\hat{f}(t)\hat{f}(u)\hat{f}(v)
            \mathop{\E}_{w\sim V^\perp}[\chi_h(w)\chi_v(w)]
            \mathop{\E}_{x\sim\F_2^n}[\chi_s(x)\chi_v(x)]
            \mathop{\E}_{y\sim\F_2^n}[\chi_t(y)\chi_v(y)]
            \mathop{\E}_{z\sim\F_2^n}[\chi_u(z)\chi_v(z)].
\end{align*}

The expectation $\E_{x\sim\F_2^n}[\chi_s(x)\chi_v(x)]$ is $1$ only when $s = v$ and is $0$ otherwise. This is also true for the expectations over $y$ and $z$. Therefore the terms in the above summation in which $s = t = u = v$ does not hold cancel to $0$. So we simplify the summation to the following

\begin{align*}
    A &= \sum_{s \in \F_2^n}
            \hat{f}(s)^4
            \E_{w\sim V^\perp}[\chi_h(w)\chi_s(w)]
    = \sum_{s \in \F_2^n}
            \hat{f}(s)^4
            \E_{w\sim V^\perp}[\chi_{h + s}(w)]
    = \sum_{s \in V + h}
            \hat{f}(s)^4.
\end{align*}

The last equality uses the fact that $\E_{w\sim V^\perp}[\chi_{h + s}(w)] = \mathds{1}(s\in V+h)$. To see why, consider the following two cases:

\begin{enumerate}[label=(\roman*)]
  \item $s\in V + h$: In this case $\ip{h+s,w}=0$ for all $w$ in $V^\perp$, yielding $\E_{w\sim V^\perp}[\chi_{h + s}(w)] = \E_{w\sim V^\perp}[(-1)^{\ip{h+s, w}=0}] = 1$.

  \item $s\notin V + h$: Let $\beta = \{b_1, \dots, b_k\}$ be a basis for $V^\perp$. From $s + h \notin V$ it follows that the set $M_{s + h} = \{b\in\beta : \ip{b,h+s} = 1\}$ is non-empty. Consequently
    \begin{equation*}\begin{aligned}
      \Pr_{w\sim V^\perp}[\ip{w, s + h} = 1] = \Pr_{w\sim \mathsf{span}(M_{s + h})}[\ip{w, s + h} = 1] = \frac{1}{2}.
    \end{aligned}\end{equation*}
    From this we conclude $\E_{w\sim V^\perp}[\chi_{h + s}(w)] = 0.\qedhere$
\end{enumerate}
\end{proof}

Next, we show if the set $V_0 + h$ contains at most one heavy vector, then the fourth root of sum of fourth power is close to the maximum corresponding coefficient in $V_0 + h$. Recall that for two real numbers $r_1, r_2\in \mathbb{R}$ we say they are $\eps$-close if $|r_1-r_2|\le\eps$.

\begin{claim}\label{clm:fourth-sum-good-estimate-for-max}
Let $\Gamma\subseteq \Field_2^n$ be $\eps$-rare for some $0 < \eps < 1$. Then $\sqrt[4]{\sum_{\gamma\in \Gamma}\hat{f}(\gamma)^4}$ is $\sqrt{\eps}$-close to $|\hat{f}(\gamma^\star)|$ where $\gamma^\star = \argmax_{\gamma\in \Gamma}|\hat{f}(\gamma)|$.
\end{claim}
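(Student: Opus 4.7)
The plan is to establish a two-sided bound on $S := \sum_{\gamma \in \Gamma} \hat{f}(\gamma)^4$ in terms of $M := |\hat{f}(\gamma^\star)|$, and then take fourth roots. The lower bound is immediate: since $\hat{f}(\gamma^\star)^4$ appears as one of the (nonnegative) summands, we have $S \ge M^4$, hence $S^{1/4} \ge M$.

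For the upper bound, I would split off the maximizing term and exploit rarity together with Parseval:
\begin{equation*}
S \;=\; M^4 \;+\; \sum_{\gamma \in \Gamma \setminus \{\gamma^\star\}} \hat{f}(\gamma)^4 .
\end{equation*}
Since $\Gamma$ is $\varepsilon$-rare, at most one element of $\Gamma$ lies in $\hat{f}^{>\varepsilon}$, and by definition of $\gamma^\star$ that element (if it exists) must be $\gamma^\star$ itself. Consequently, $|\hat{f}(\gamma)| \le \varepsilon$ for every $\gamma \in \Gamma \setminus \{\gamma^\star\}$, which lets me bound $\hat{f}(\gamma)^4 \le \varepsilon^2 \hat{f}(\gamma)^2$ on the residual sum. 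Parseval's identity (Fact~\ref{fact:parseval}) gives $\sum_{\gamma \in \mathbb{F}_2^n} \hat{f}(\gamma)^2 = 1$ since $f$ has codomain $\{-1,1\}$, so
\begin{equation*}
\sum_{\gamma \in \Gamma \setminus \{\gamma^\star\}} \hat{f}(\gamma)^4 \;\le\; \varepsilon^2 \sum_{\gamma \in \Gamma \setminus \{\gamma^\star\}} \hat{f}(\gamma)^2 \;\le\; \varepsilon^2 .
\end{equation*}
Combining, $M^4 \le S \le M^4 + \varepsilon^2$.

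To finish, I would take fourth roots and invoke the subadditivity $(a+b)^{1/4} \le a^{1/4} + b^{1/4}$ valid for $a,b \ge 0$ (since $x \mapsto x^{1/4}$ is concave with value $0$ at $0$), yielding
\begin{equation*}
M \;\le\; S^{1/4} \;\le\; (M^4 + \varepsilon^2)^{1/4} \;\le\; M + \sqrt{\varepsilon} ,
\end{equation*}
so $\big|\,S^{1/4} - M\,\big| \le \sqrt{\varepsilon}$, as required. There isn't a genuine obstacle here; the only subtle point worth handling carefully is the edge case where $\Gamma$ contains no $\varepsilon$-heavy vector at all (so even $\gamma^\star$ satisfies $|\hat{f}(\gamma^\star)| \le \varepsilon$). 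In that case the bound $|\hat{f}(\gamma)| \le \varepsilon$ applies to every $\gamma \in \Gamma$, the residual-sum estimate goes through verbatim, and the inequality $S \le M^4 + \varepsilon^2$ still holds, so the same conclusion follows.
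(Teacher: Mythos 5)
Your proposal is correct and follows essentially the same route as the paper: the trivial lower bound $S \ge M^4$, the upper bound $S \le M^4 + \eps^2$ obtained by bounding the non-maximal terms via $\hat{f}(\gamma)^4 \le \eps^2 \hat{f}(\gamma)^2$ (using rarity) together with Parseval, and then monotonicity plus subadditivity of the fourth root. Your explicit treatment of the edge case where $\Gamma$ contains no $\eps$-heavy vector is a nice touch that the paper leaves implicit, but it does not change the argument.
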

\begin{proof}
We have $\hat{f}(\gamma^\star)^4\le \sum_{\gamma\in \Gamma}\hat{f}(\gamma)^4$ as a trivial lower bound for the sum. On the other hand

\begin{align*}
\sum_{\gamma\in \Gamma}\hat{f}(\gamma)^4
= \hat{f}(\gamma^\star)^4 +  \sum_{\substack{\gamma\in \Gamma\\ \gamma\neq\gamma^\star}}\hat{f}(\gamma)^4
\stackrel{(*)}{\le} \hat{f}(\gamma^\star)^4 + \eps^2\sum_{\substack{\gamma\in \Gamma\\ \gamma\neq\gamma^\star}}\hat{f}(\gamma)^2
\stackrel{(**)}{\le} \hat{f}(\gamma^\star)^4 + \eps^2,\end{align*}

Here, in the inequality given by $(*)$, we use the fact that when $\gamma\ne\gamma^\star$ we have $\hat{f}(\gamma)\le\eps$, and in $(**)$, we use Parseval's identity (\cref{fact:parseval}). Putting the lower and the upper bounds together, combined with the fact that the function $\sqrt[4]{\cdot}$ is  monotonically increasing and sub-additive, we get

\begin{align*}
|\hat{f}(\gamma^\star)| \le \sqrt[4]{\sum\nolimits_{\gamma\in \Gamma}\hat{f}(\gamma)^4} \le \sqrt[4]{\hat{f}(\gamma^\star)^4 + \eps^2} \le \sqrt[4]{\hat{f}(\gamma^\star)^4} + \sqrt[4]{\eps^2} = |\hat{f}(\gamma^\star)| + \sqrt{\eps}.
\end{align*}

In other words, $|\hat{f}(\gamma^\star)|$ is $\sqrt{\eps}$-close to $\sqrt[4]{\sum_{\gamma\in \Gamma}\hat{f}(\gamma)^4}$.
\end{proof}

Since we do not have the exact value of $\sum_{V_0 + h}\hat{f}(\gamma)^4$, we next provide a way to estimate this value and thus, estimate the maximum coefficient in $V_0 + h$.

\begin{claim}\label{clm:estimate-of-fourth-power}
    For $a \in \{0,1\}^s$, let $V_a = V_0+h$ be an $\eps$-rare coset of $\Field_2^n$. There exists a randomized algorithm that, given $\eps, \delta > 0$ and query access to a function $f:\Field_2^n\to\{-1,1\}$, makes $O(\log(1/\delta)/\eps^4)$ queries to $f$ and outputs an estimate of $|\hat{f}(\gamma^\star)|$ up to error $\pm 2\sqrt{\eps}$ with probability at least $1-\delta$, where $\gamma^\star = \argmax_{\gamma\in V_0+h}|\hat{f}(\gamma)|$.
\end{claim}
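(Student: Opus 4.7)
The plan is to turn the identity from \cref{lem:fourth-power-sum} into a Monte-Carlo estimator for $S \coloneqq \sum_{\gamma \in V_0+h} \hat{f}(\gamma)^4$, and then combine the resulting estimate with the $L_4$-versus-$L_\infty$ comparison from \cref{clm:fourth-sum-good-estimate-for-max}. Concretely, by \cref{lem:fourth-power-sum},
\begin{equation*}
    S = \E_{\substack{x,y,z \sim \F_2^n \\ w \sim V_0^\perp}}\bigl[\chi_h(w)\, f(x)\, f(y)\, f(z)\, f(x+y+z+w)\bigr],
\end{equation*}
and the random variable inside the expectation takes values in $\{-1,+1\}$. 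Note that a sample from $V_0^\perp = \mathrm{span}(r_1,\dots,r_s)$ can be drawn efficiently by taking a uniformly random $\F_2$-linear combination of $r_1,\dots,r_s$, and each evaluation of the estimand uses exactly four membership queries to $f$ (the factor $\chi_h(w) = (-1)^{\langle h,w\rangle}$ is computed without any query to $f$).

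The algorithm therefore draws $T = \Theta(\log(1/\delta)/\eps^4)$ independent samples $(x^{(i)}, y^{(i)}, z^{(i)}, w^{(i)})$ as above, computes the empirical average $\tilde S$ of the four-fold product, and outputs $\sqrt[4]{\max(\tilde S, 0)}$. Since the summands are bounded in $[-1,1]$, the Hoeffding/Chernoff bound guarantees that with probability at least $1-\delta$,
\begin{equation*}
    |\tilde S - S| \le \eps^2.
\end{equation*}
The total number of queries is $4T = O(\log(1/\delta)/\eps^4)$, matching the claimed complexity.

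It remains to convert this additive error on $S$ into the advertised $\pm 2\sqrt{\eps}$ error on $|\hat{f}(\gamma^\star)|$. The elementary inequality $|\sqrt[4]{a} - \sqrt[4]{b}| \le \sqrt[4]{|a-b|}$ for $a,b \ge 0$ (which follows from sub-additivity of $\sqrt[4]{\cdot}$ on $[0,\infty)$) yields
\begin{equation*}
    \bigl|\sqrt[4]{\max(\tilde S,0)} - \sqrt[4]{S}\bigr| \le \sqrt[4]{\eps^2} = \sqrt{\eps},
\end{equation*}
where in the edge case $\tilde S < 0$ we use that $S \le \tilde S + \eps^2 \le \eps^2$, so both $\sqrt[4]{S}$ and our output $0$ are within $\sqrt{\eps}$ of each other. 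Because $V_0+h$ is $\eps$-rare, \cref{clm:fourth-sum-good-estimate-for-max} gives $\bigl|\sqrt[4]{S} - |\hat{f}(\gamma^\star)|\bigr| \le \sqrt{\eps}$, and the triangle inequality closes the gap to $2\sqrt{\eps}$.

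The only minor subtlety, and the one worth being careful about, is the possibility that the empirical average $\tilde S$ is (slightly) negative despite estimating a non-negative quantity; the clipping step $\max(\tilde S, 0)$ together with the case analysis above handles this cleanly. Everything else is a routine combination of \cref{lem:fourth-power-sum}, \cref{clm:fourth-sum-good-estimate-for-max}, and a Chernoff bound.
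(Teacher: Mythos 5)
Your proposal is correct and follows essentially the same route as the paper: estimate $\sum_{\gamma\in V_0+h}\hat{f}(\gamma)^4$ via the expectation identity of \cref{lem:fourth-power-sum} with $O(\log(1/\delta)/\eps^4)$ Chernoff-bounded samples (four queries each, with $w$ drawn as a random combination of $r_1,\dots,r_s$ and $\chi_h(w)$ computed query-free), then take fourth roots and invoke \cref{clm:fourth-sum-good-estimate-for-max}. Your explicit clipping of a possibly negative empirical average is a small extra care the paper leaves implicit, but it does not change the argument.
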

\begin{proof}

We describe the randomized process for estimating $|\hat{f}(\gamma^\star)|$. Denote by $\omega$ the sum $\sum_{\gamma\in V_0+h}\hat{f}(\gamma)^4$. Notice that the vector $h$ is found by finding a solution to the following in $\poly(n)$ time

\[
\begin{bmatrix}
    \text{--- } r_1^T \text{ ---} \\
    \vdots \\
    \text{--- } r_s^T \text{ ---}
\end{bmatrix}\cdot h = a.
\]

We start by estimating $\omega$ using Lemma~\ref{lem:fourth-power-sum}. To evaluate the expression $\chi_h(w) \cdot f(x) \cdot f(y) \cdot f(z) \cdot f(x+y+z+w)$, we construct the following ``small" query set. Pick $x, y, z$ uniformly at random, and sample $w\sim V_0^\perp$ by picking a uniformly random linear combination of $r_1,\dots, r_s$. Then we query $f$ on the four inputs $x, y, z, x+y+z+w$. The value $\chi_h(w)$ is computed without consulting $f$ (we remark that samples $(x, f(x)), (y, f(y)), (z, f(z))$ can even be drawn from a random example oracle; in any case, for $f(x + y + z + w)$ we need query access.)

Repeating this procedure for $O(\Chernoff(\eps^2,\delta)) = O(\log(1/\delta)/\eps^4)$ independently chosen query sets and empirically computing the average $\estimate{\omega}$ over these sets, a Chernoff bound guarantees that $\estimate{\omega}$ is $\eps^2$-close to $\omega$ with probability at least $1-\delta$. Therefore with the same probability it holds that $\sqrt[4]{\estimate{\omega}}$ is $\sqrt{\eps}$-close to $\sqrt[4]{\omega}$, and as a result of \cref{clm:fourth-sum-good-estimate-for-max} it is $(2\sqrt{\eps})$-close to $|\hat{f}(\gamma^\star)|$.
\end{proof}

Observe that now not only do we want cosets of $V_0$ to be $\Lambda_t$-rare, but also we want them to be $\eps$-rare (for some small $\eps$) so that our estimations of the largest coefficients have small error. We are ready to state the algorithm and the following theorem. 

\begin{theorem}\label{thm:fourier-alg}
    For parameters $\eps, \delta>0$ and number $t \in \N$, there exists a randomized algorithm that, given membership query access to a Boolean function $f \colon \F_2^n \to \{-1,1\}$, makes $\poly(\log t, \frac{1}{\eps}, \log\frac{1}{\delta})$ queries to $f$, and outputs $\sigma_1 \geq \dots \geq \sigma_t$ in $[0,1]$ such that
    \begin{equation*}
      \Pr\left[\mbox{$\abs{|\fhat(\gamma_i)| - \sigma_i} \leq \eps$ for all $i \in [t]$} \right] \geq 1-\delta-o(1),
    \end{equation*}
    where $\gamma_1,\dots,\gamma_t$ are the vectors in $\F_2^n$ corresponding to the highest $t$ Fourier coefficients of $f$ in descending order of magnitude.
\end{theorem}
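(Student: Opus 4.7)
The algorithm assembles the pieces built up in \cref{claim:separate-whp,lem:fourth-power-sum,clm:fourth-sum-good-estimate-for-max}: sample $s$ random vectors $r_1,\dots,r_s\in\F_2^n$ to partition $\F_2^n$ into $2^s$ affine cosets of $V_0=\mathrm{span}(r_1,\dots,r_s)^\perp$; estimate $\omega_a=\sum_{\gamma\in V_a}\hat{f}(\gamma)^4$ simultaneously for all cosets via \cref{lem:fourth-power-sum}; and output the top $t$ of the values $\sqrt[4]{\hat{\omega}_a}$. Set $\eps' := \eps^2/16$, so that by \cref{clm:fourth-sum-good-estimate-for-max} an $\eps'$-rare coset has $\sqrt[4]{\omega_a}$ within $\sqrt{\eps'}=\eps/4$ of $\mu_a:=\max_{\gamma\in V_a}|\hat{f}(\gamma)|$; by \cref{fact:parseval}, $|\hat{f}^{>\eps'}|\le 1/\eps'^2 = 256/\eps^4$. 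Choose $s = 2\log_2\max(t,\,256/\eps^4) + \log_2(3/\delta)$, apply \cref{claim:separate-whp} to both $X=\Lambda_t$ and $X=\hat{f}^{>\eps'}$, and union-bound: with probability at least $1-2\delta/3$ every coset is simultaneously $\Lambda_t$-rare and $\eps'$-rare. The stated $o(1)$ loss absorbs the probability $2^s/2^n=o(1)$ that the $r_i$ are linearly dependent.

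The crucial observation for getting query complexity polylogarithmic in $t$ is that a \emph{single} pool of samples from \cref{lem:fourth-power-sum} serves all $2^s$ cosets at once: the coset label $a$ enters only through the $\pm 1$ scalar $\chi_{h_a}(w)$, which costs no queries to $f$. Draw $m = O((s+\log(1/\delta))/\eps'^{4}) = O((\log t + \log(1/\eps) + \log(1/\delta))/\eps^{8})$ independent tuples $(x,y,z,w)$ with $x,y,z\sim\F_2^n$ and $w\sim V_0^\perp$, and query $f$ at the four points $x,y,z,x+y+z+w$; this uses $4m=\poly(\log t, 1/\eps, \log(1/\delta))$ queries in total. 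For each $a$ compute $\hat{\omega}_a$ as the empirical mean of $\chi_{h_a}(w)f(x)f(y)f(z)f(x+y+z+w)$ over the pool. Since each summand lies in $[-1,1]$, a Chernoff bound together with a union bound over the $2^s$ cosets gives $|\hat{\omega}_a-\omega_a|\le\eps'^{2}$ for every $a$ with probability at least $1-\delta/3$. Subadditivity of $\sqrt[4]{\cdot}$ then yields $|\sqrt[4]{\hat{\omega}_a}-\sqrt[4]{\omega_a}|\le\sqrt{\eps'}=\eps/4$, which combined with the estimate from the previous paragraph gives $|\sqrt[4]{\hat{\omega}_a}-\mu_a|\le\eps/2$ on the good event.

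Finally, sort $\{\sqrt[4]{\hat{\omega}_a}\}$ in decreasing order and return the top $t$ entries as $\sigma_1\ge\cdots\ge\sigma_t$. By $\Lambda_t$-rareness, for each $i\in[t]$ the vector $\gamma_i$ lies in a distinct coset $V_{a_i}$; if the argmax of $V_{a_i}$ were some $\gamma^\star\notin\Lambda_t$, it would satisfy $|\hat{f}(\gamma^\star)|\le|\hat{f}(\gamma_t)|\le|\hat{f}(\gamma_i)|$, forcing equality, so in all cases $\mu_{a_i}=|\hat{f}(\gamma_i)|$, while for $a\notin\{a_1,\dots,a_t\}$ the argmax is outside $\Lambda_t$ and hence $\mu_a\le|\hat{f}(\gamma_t)|$. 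Thus the $i$-th order statistic of $\{\mu_a\}$ equals $|\hat{f}(\gamma_i)|$ for every $i\in[t]$, and by the standard stability of order statistics under uniform perturbation (if $|x_a-y_a|\le\eta$ for every $a$ then the sorted sequences are pointwise $\eta$-close) we conclude $|\sigma_i-|\hat{f}(\gamma_i)||\le\eps/2<\eps$. The total failure probability is at most $2\delta/3+\delta/3+o(1)=\delta+o(1)$. The delicate step is the query-reuse argument in the middle paragraph: a naive analysis using a fresh Chernoff estimator per coset would cost $2^s\cdot\poly(1/\eps,\log(1/\delta))=\poly(t,1/\eps,\log(1/\delta))$ queries and miss the $\poly(\log t)$ target, so the reduction from $t$ to $\log t$ hinges entirely on recognising that the four queries in each sample of \cref{lem:fourth-power-sum} are coset-independent.
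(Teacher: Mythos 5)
Your proposal is correct and takes essentially the same route as the paper's proof: partition $\F_2^n$ into random cosets via \cref{claim:separate-whp}, estimate each coset's largest coefficient through the $L_4$-norm machinery of \cref{lem:fourth-power-sum,clm:fourth-sum-good-estimate-for-max,clm:estimate-of-fourth-power}, and output the $t$ largest estimates. You additionally make explicit two points the paper leaves implicit — that a single pool of $(x,y,z,x+y+z+w)$ samples is reused across all $2^s$ cosets (the coset only enters through $\chi_{h_a}(w)$), which is indeed what keeps the query count $\poly(\log t)$ rather than $\poly(t)$, and the order-statistics stability step for the top-$t$ selection — and your $\Theta(1/\eps^8)$ parameter bookkeeping is, if anything, more careful than the constants written in \cref{alg:gl-star}.
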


\begin{proof}
We start by describing the algorithm.

\begin{algorithm}[H]
    \SetAlgoLined
    \SetKwInOut{FunIn}{Function input}
    \SetKwInOut{ExpIn}{Explicit inputs}
    \FunIn{A function $f:\Field_2^n\to\{-1,1\}$, accessed through membership queries}
    \ExpIn{Parameters $t\in\N$, and $\eps,\delta>0$}
    
    Sample $s = 2\log_2(\frac{16}{\eps^4} + t) + \log_2(\frac{2}{\delta})$ uniformly random vectors $r_1,r_2,\dots,r_s$ from $\Field_2^n$. 

    For each coset $V_a$, where $a \in \{0,1\}^s$ (described in Equation \ref{equ:cosets}), estimate the maximum coefficient through $4\cdot\Chernoff(\frac{\eps^2}{2}, \frac{\delta}{2^{s+1}})$ membership queries using \cref{clm:estimate-of-fourth-power}, and call it $\estimate{\omega}_a$.
    
    Output $\sigma_1,\dots,\sigma_t$, the $t$ largest $\estimate{\omega}_a$ from all the $V_a$'s, in descending order.
\caption{$\GLstar(t, \eps, \delta)$ Finding the top $t$ Fourier coefficients}\label{alg:gl-star}
\end{algorithm}

Suppose our randomly chosen vectors $r_1,\dots, r_s$ are linearly independent. Let $\Lambda_t = \{\gamma_1,\dots,\gamma_t\}$ be the set of vectors corresponding to the top $t$ Fourier coefficients. More precisely for any $\gamma\in\F_2^n\setminus\Lambda_t$ we have $\hat{f}(\gamma_1) \ge\dots\ge\hat{f}(\gamma_t)\ge\hat{f}(\gamma)$. Notice that by Parseval's identity (\cref{fact:parseval}) $|\hat{f}^{>\eps^2/4}|\le\frac{16}{\eps^4}$, and therefore, $|\hat{f}^{>\eps^2/4}\cup\Lambda_t| \le \frac{16}{\eps^4} + t$, and thus by \cref{claim:separate-whp}, every coset of $V_0$ will be $(\eps^2/4)$-rare, \emph{as well as} $\Lambda_t$-rare\footnote{Some cosets may contain no element from either of $\hat{f}^{>\eps^2/4}$ or $\Lambda_t$. By definition, they are still rare.}, with probability at least $1-\delta/2$. This ensures that all $\Lambda_t$ coefficients are separated and can be estimated individually up to error $\eps$, which follows from \cref{clm:fourth-sum-good-estimate-for-max} and \cref{clm:estimate-of-fourth-power}. In addition, by union bound over the $2^s$ cosets, all of these estimates happen as intended with probability at least $1-\delta/2$.

The probability that $s$ uniformly random vectors in $\F_2^n$ are not linearly independent is at most $2^s/2^n = o(1)$. Put together, we conclude that for the highest $t$ Fourier coefficients $\gamma_1,\dots,\gamma_t$,
\[
\Pr\left[\abs{|\hat{f}(\gamma_i)| - \sigma_i} \le \eps \text{ for all } i\in [t]\right] \ge 1 - \delta - o(1).
\]

The total number of membership and uniformly random queries used is $O(\Chernoff(\frac{\eps^2}{2}, \frac{\delta}{2^{s+1}}))=\poly(\log t, \frac{1}{\eps}, \log\frac{1}{\delta})$, and the running time is $\poly(n, t, \frac{1}{\eps}, \log\frac{1}{\delta})$.
\end{proof}

\subsection{Interactive Protocol for Learning Top Fourier Characters}
\label{sec:pac-verification-top-fourier-coefficients}

In this section we provide an IP for learning the indices of heavy Fourier coefficients, i.e., vectors in $\F_2^n$ corresponding to the top coefficients of a given Boolean function. It is worth pointing out that this is a harder task than just outputting the largest $t$ Fourier coefficients, and is not achieved by \cref{alg:gl-star}. In particular, this is because our subspace partitioning system (bucketing system) from \cref{sec:alg-for-learning-fourier-values}, only helps us estimate the maximum coefficient for each coset, without giving us any information about which vector in the coset corresponds to this value. Our solution to this is to use the prover for learning the vectors corresponding to the top coefficients (in addition to the coefficients themselves). 

In this section we first propose an interactive protocol for learning the top $t$ vectors associated with the top $t$ coefficients, where the verifier has membership query access to the input function. The high level idea is that the prover provides a proof: a set of top $t$ vectors associated with largest coefficients. Then the verifier uses \cref{alg:gl-star} to check the correctness of this proof. We then transform this protocol to finally prove \cref{thm:fourier-random}.

\begin{theorem}\label{thm:fourier-membership}
    There exists an interactive protocol with verifier $V$ and prover $P$ where, given $\eps, \delta > 0$ and a number $t\in\N$ and membership access to a function $f:\F_2^n\to\{-1,1\}$, the verifier makes $\poly(t,\frac{1}{\eps}, \log\frac{1}{\delta})$ \emph{membership} queries (independent of $n$) to $f$ and outputs $\Lambda$, which is either a size-$t$ ``top'' subset of $\F_2^n$ or $\bot$ (reject). This protocol has the following interactive parameters

    \begin{equation*}
         \InteractiveParams{\membership}{\poly(t,1/\eps, \log(1/\delta))}{t\cdot n}{\proofonly}{\poly(n,t,1/\eps,\log(1/\delta))}{\poly(n,t,1/\eps,\log(1/\delta))}
        \enspace,
    \end{equation*}

    and satisfies the following completeness and soundness conditions.

    \begin{itemize}
        \item \textbf{Completeness:} There exists an honest prover $P^*$ such that the output of interaction $\ip{P^*,V}=\Lambda$ satisfies $\Pr\left[\Lambda \text{ is } \frac{\eps}{3}\text{-top}\right] \geq 1-\delta-o(1)$.
    
        \item \textbf{Soundness:} For any (possibly unbounded) prover $P$, the output of interaction $\ip{P, V}=\Lambda$ satisfies $\Pr\left[(\Lambda \neq \bot) \land (\Lambda \text{ is not } 
 \eps\text{-top})\right] \leq \delta + o(1).$
    \end{itemize}
\end{theorem}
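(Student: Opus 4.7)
The plan is to reduce the task of \emph{identifying} the top $t$ characters to the task of \emph{verifying} a prover-supplied candidate set $\Lambda = \{\beta_1,\dots,\beta_t\}$ against the machinery of \cref{sec:alg-for-learning-fourier-values}. An honest prover simply runs the Kushilevitz--Mansour algorithm to obtain $\Lambda_t^\star$ and sends it over; the verifier then uses the same random-coset partition of $\F_2^n$ as in $\GLstar$, estimates the largest Fourier coefficient of each coset via \cref{clm:estimate-of-fourth-power}, and cross-checks these against direct empirical estimates of the $|\hat{f}(\beta_i)|$'s. The obstacle that the non-interactive \cref{alg:gl-star} leaves open---returning only the \emph{magnitudes} of the top coefficients but not the associated characters---is thereby bypassed: the prover reveals the characters, and the verifier localises each $\beta_i$ to exactly one coset of the partition.

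Concretely, the protocol is: (i) the prover sends $\Lambda = \{\beta_1,\dots,\beta_t\}$; (ii) the verifier samples $s = 2\log_2\!\bigl(t+(40/\eps)^4\bigr) + \log_2(10/\delta) + O(1)$ vectors $r_1,\dots,r_s \sim \F_2^n$ and forms the cosets $\{V_a\}_{a\in \F_2^s}$ of \cref{equ:cosets}; (iii) for each coset $V_a$ the verifier applies \cref{clm:estimate-of-fourth-power} with rarity parameter $\eps_0'=(\eps/40)^2$ and failure budget $\delta/(10\cdot 2^s)$ to obtain an estimate $\estimate{\omega}_a$ of $\max_{\gamma\in V_a}|\hat{f}(\gamma)|$ accurate to $\pm\eps/20$; (iv) for each $i$ the verifier estimates $|\hat{f}(\beta_i)|$ to within $\pm\eps/20$ with confidence $1-\delta/(10t)$ by empirically averaging $f(x)\chi_{\beta_i}(x)$, yielding $\estimate{\mu}_i$, and sets $\estimate{m} = \min_i \estimate{\mu}_i$; (v) the verifier outputs $\Lambda$ iff all of: (I) the $\beta_i$'s lie in $t$ distinct cosets, (II) $|\estimate{\omega}_a - \estimate{\mu}_i| \le \eps/4$ whenever $\beta_i\in V_a$, and (III) $\estimate{\omega}_a \le \estimate{m} + \eps/4$ for every coset $V_a$ that contains no $\beta_i$; otherwise it outputs $\bot$.

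For completeness, by \cref{claim:separate-whp} and the choice of $s$, with probability at least $1-\delta-o(1)$ the cosets are simultaneously $\Lambda_t^\star$-rare and $\eps_0'$-rare, the $r_i$'s are linearly independent, and (by a union bound) all $\estimate{\omega}_a$ and $\estimate{\mu}_i$ estimates are accurate to $\pm\eps/20$. Under this event every $\gamma_i^\star$ is the unique $\eps_0'$-heavy vector in its coset, and since every other member of that coset lies outside $\Lambda_t^\star$ and hence has coefficient at most $|\hat{f}(\gamma_{t+1}^\star)| \le |\hat{f}(\gamma_i^\star)|$, the true maximum of the coset equals $|\hat{f}(\gamma_i^\star)|$; moreover every coset missing all of $\Lambda_t^\star$ has maximum coefficient at most $|\hat{f}(\gamma_{t+1}^\star)| \le m$. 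All three checks then pass and the verifier outputs $\Lambda_t^\star$, which is $0$-top and in particular $(\eps/3)$-top. For soundness, condition on the same estimation-accuracy and $\eps_0'$-rarity event. If $\Lambda$ is output but is not $\eps$-top, there is $\alpha\notin\Lambda$ with $|\hat{f}(\alpha)| > m+\eps \ge \eps > \sqrt{\eps_0'}$, so $\alpha$ is $\eps_0'$-heavy; let $C$ be its coset. If $C$ already contains some $\beta_i\in\Lambda$, the $\eps_0'$-rarity of $C$ forces $|\hat{f}(\beta_i)| \le \eps_0'$, hence $\estimate{\omega}_C - \estimate{\mu}_i \ge (|\hat{f}(\alpha)| - \eps/20) - (\eps_0' + \eps/20) > \eps/4$, violating check (II); otherwise $C$ contains no $\beta_i$ and $\estimate{\omega}_C - \estimate{m} \ge (|\hat{f}(\alpha)| - \eps/20) - (m + \eps/20) > \eps/4$, violating check (III). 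In either case the verifier rejects, a contradiction.

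The query budget is $2^s\cdot \poly(1/\eps,\log(1/\delta)) + t\cdot \poly(1/\eps,\log(1/\delta)) = \poly(t, 1/\eps, \log(1/\delta))$, independent of $n$; the communication is the $tn$ bits needed to send $\Lambda$, the protocol is proof-only, and both parties run in time $\poly(n, t, 1/\eps, \log(1/\delta))$. The main subtlety, and the step most likely to cause trouble in the write-up, is the sub-case of the soundness argument in which a heavy $\alpha\notin\Lambda$ collides with some $\beta_i\in\Lambda$ in the same coset: the crucial observation is that $\eps_0'$-rarity (and \emph{not} merely $\Lambda_t^\star$-separation) forces $|\hat{f}(\beta_i)|$ to be tiny, so that the $L_4$-norm of the coset is dominated by $|\hat{f}(\alpha)|$ and visibly disagrees with the direct estimate $\estimate{\mu}_i$; choosing the rarity parameter $\eps_0'$ quadratically smaller than $\eps$ (to match the $2\sqrt{\cdot}$ loss in \cref{clm:estimate-of-fourth-power}) is what makes all constants line up.
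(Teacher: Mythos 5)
Your protocol is in essence the paper's Protocol~\ref{alg:ip-for-top-t-fourier-coefficients}: the prover sends a candidate set, and the verifier audits it with the random-coset $L_4$ machinery of \cref{sec:alg-for-learning-fourier-values}. One genuine difference is in soundness: you handle the case where a heavy missing vector $\alpha$ collides with a claimed $\beta_i$ directly, using $\eps_0'$-rarity to force $|\hat{f}(\beta_i)|$ to be tiny and your check (II) to fire, whereas the paper avoids this case altogether by making all cosets $(\Lambda'_t\cup\hat{f}^{>\eps/6})$-rare, so a heavy vector outside the claimed set always lands in a coset disjoint from $\Lambda'_t$ and only the one-sided empty-coset check is needed. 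Both routes are sound. However, there are two gaps as written. First, your completeness assumes the honest prover sends the \emph{exact} top-$t$ set, and your verification of checks (II) and (III) uses this exactness (``the true maximum of the coset equals $|\hat{f}(\gamma_i^\star)|$''). A prover making $\poly(n,t,1/\eps,\log(1/\delta))$ queries cannot produce the exact top set in general, since it cannot resolve near-ties between the $t$-th and $(t+1)$-th coefficients; Kushilevitz--Mansour only yields an approximately top set. This is precisely why the theorem's completeness bar is ``$\eps/3$-top'' and why the paper proves acceptance under the hypothesis that the prover's set is $\eps/3$-top. Your checks do tolerate an $\eps'$-top honest set for $\eps'$ a small constant fraction of $\eps$ (and the rarity event should then be taken with respect to $\Lambda\cup\hat{f}^{>\eps_0'}$, which is legitimate since $\Lambda$ is sent before the verifier's coins), but that analysis is missing, and without it your completeness statement concerns a prover that is not efficiently realizable.

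Second, your query accounting is off in its $\delta$-dependence: applying \cref{clm:estimate-of-fourth-power} with fresh samples for each coset costs $2^s\cdot\poly(1/\eps,\log(1/\delta))$ queries, and your $s$ contains $\log_2(10/\delta)$, so the total is polynomial in $1/\delta$ rather than $\log(1/\delta)$, contradicting your final tally and the theorem's bound. The fix, used in \cref{alg:gl-star} and \cref{thm:fourier-alg}, is to reuse one batch of samples across all cosets: in \cref{lem:fourth-power-sum} only the weight $\chi_h(w)$ depends on the coset, so a single set of $\Chernoff(\eps^2,\delta/2^{s+1})$ query quadruples $(x,y,z,x+y+z+w)$ serves every $V_a$, giving verifier query complexity $\poly(\log t,1/\eps,\log(1/\delta))$ plus the $t$ direct coefficient estimates. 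With these two repairs your argument goes through and matches the stated parameters.
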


\begin{proof}
The interactive protocol is the following:

\begin{protocol}[H]
    \SetAlgoLined
    \SetKwInOut{FunIn}{Function input}
    \SetKwInOut{ExpIn}{Explicit inputs}
    \FunIn{$f:\Field_2^n\to\{-1,1\}$, accessed by verifier through membership queries}
    \ExpIn{Parameters $t\in\N$ and $\eps,\delta>0$}

    The prover sends a set $\Lambda'_t \subseteq \Field_2^n$ of size $t$ purportedly containing vectors corresponding to the top $t$ heaviest Fourier coefficients of $f$ (enumerated by the decreasing order).

    The verifier estimates all the coefficients in $\Lambda'_t$ using a total of $t\cdot\Chernoff(\frac{\eps}{6},\frac{\delta}{2t})$ random samples\footnote{We can empirically estimate $\hat{f}(\gamma)$ since $\hat{f}(\gamma)=\E_x[\chi_\gamma(x) f(x)]$.}. Let $\lambda\in\Lambda'_t$ be the vector that has the smallest estimated coefficient. Let $\estimate{c}_\lambda$ denote this estimate of $\hat{f}(\lambda)$.

    The verifier runs $\GLstar(t, \frac{\eps}{6}, \frac{\delta}{2})$ using query access to $f$ (as described in \cref{alg:gl-star}) and uses it to compute $\estimate{\omega}_a$ for all cosets $V_a = V_0 + h$ (see \cref{equ:cosets}), with respect to a random $V_0$ obtained during its run.

    The verifier outputs $\bot$ (reject) if there exists an $a\in\Field_2^s$ such that\footnote{For any vector $\gamma$, one can efficiently check if $\gamma\in V_a$ by looking at the inner product of $\gamma$ with each $r_1,\dots r_s$.} $V_a\cap\Lambda'_t=\emptyset$ and $|\estimate{c}_\lambda| + \frac{2\eps}{3} < \estimate{\omega}_a$. Otherwise the verifier outputs $\Lambda'_t$.
    
\caption{IP for learning top $t$ Fourier characters with error $\eps$ and confidence $\delta$}\label{alg:ip-for-top-t-fourier-coefficients}
\end{protocol}
\begin{remark}
It is worth pointing out that the verifier does not run $\GLstar$ as a black box, since it needs to know the vectors used to describe the random affine subspace partition given by the $V_a$'s. The verifier rather uses this algorithm to estimate the maximum coefficients for every coset of $V_0$.
\end{remark}

    First we show several desirable events happen with high probability, regardless of the set $\Lambda'_t$ that the prover sends, i.e., prover being honest or dishonest. By Chernoff bound estimating a single corresponding coefficient in $\Lambda'_t$ with $O(\Chernoff(\frac{\eps}{6}, \frac{\delta}{2t}))$ random examples up to error $\pm \eps/6$ will be correct with probability at least $1-\frac{\delta}{2t}$. Then by a union bound, all these coefficients are correctly estimated with probability at least $1-\delta/2$. Moreover, from the guarantees of \cref{alg:gl-star}, with probability at least $1-\delta/2$, all cosets of $V_0$ will be $(\Lambda'_t\cup \hat{f}^{>\eps/6})$-rare and all the estimates of $\estimate{\omega}_a$ are correct up to $\pm\eps/6$ (in fact, cosets of $V_0$ from \cref{alg:gl-star} with parameters $\GLstar(t, \frac{\eps}{6}, \frac{\delta}{2})$ are $(\Lambda'_t\cup \hat{f}^{>\eps^2/144})$-rare, but a weaker statement suffices for our purposes.) Put together, all of the aforementioned events happen with probability at least $1-\delta$. 
    
    Now, suppose that all the desirable events happened, we prove the interactive proof satisfies the completeness and soundness requirements.
    
    \paragraph{Completeness:}
    Suppose $\Lambda'_t$ is $\frac{\eps}{3}$-top. Fix any $\alpha \in \F_2^n\setminus\Lambda'_t$ and any $\beta \in \Lambda'_t$. Then we have $|\hat{f}(\alpha)| \le |\hat{f}(\beta)| + \eps/3$ by definition. Provided that all cosets of $V_0$ are $\Lambda'_t$-rare, for all cosets $V_a$ such that $V_a\cap\Lambda'_t = \emptyset$ we have $\estimate{\omega}_a \le |\estimate{c}_\lambda| + \frac{2\eps}{3}$ and the verifier outputs $\Lambda=\Lambda'_t$.
    
    \paragraph{Soundness:}
    Suppose $\Lambda'_t$ is not $\eps$-top. By definition there exists some $\alpha\in\F_2^n\setminus\Lambda'_t$ and some $\beta\in\Lambda'_t$ such that $|\hat{f}(\alpha)| > |\hat{f}(\beta)|+\eps$. Since each coset of $V_0$ is $(\Lambda'_t\cup \hat{f}^{>\eps/6})$-rare, and since $\alpha\in\hat{f}^{>\eps/6}$, there exists some coset $V_{a'}$ such that $\alpha\in V_{a'}$ and $V_{a'}\cap\Lambda'_t = \emptyset$. Without loss of generality, assume $\alpha$ is the vector with maximum corresponding coefficient inside $V_{a'}$. Since $\lambda\in\Lambda'_t$ is the vector such that $|\estimate{c}_\lambda|$ is estimated to be minimum among all vectors in $\Lambda'_t$, we have $|\estimate{c}_\lambda| \le |\hat{f}(\beta)| + \frac{\eps}{6}$ and as a result $|\estimate{c}_\lambda| < |\hat{f}(\alpha)| - \frac{5\eps}{6}$. Recall that $\estimate{\omega}_{a'}$ is $\frac{\eps}{6}$-close to $|\hat{f}(\alpha)|$ yielding $|\estimate{c}_\lambda| < \estimate{\omega}_{a'} - \frac{2\eps}{3}$, and thus the verifier rejects.
\end{proof}

Of additional interest is the fact that this protocol with verifier's membership queries is \textsf{proof only} (MA-like), and the query complexity of verifier is independent of $n$. We will now transform this protocol with membership queries into one where the verifier uses only random examples, thereby finally proving \cref{thm:fourier-random}.

\begin{proof}[\proofof{\cref{thm:fourier-random}}]
We prove this theorem by transforming \cref{alg:ip-for-top-t-fourier-coefficients} via our general membership to random transformation \cref{thm:membership-to-random-transform}. 

First, we remark that our IP with membership queries is \textsf{proof only}. Furthermore, all the queries made by the verifier are non-adaptive. Finally the query pattern is linear (\cref{def:linear-query-pattern}) and by \cref{lem:linear-query-pattern} it is possible to construct embedded query sets using fresh random examples. Thus all the necessary conditions of the membership to random transformation are satisfied and the transformation from \cref{thm:membership-to-random-transform} can be used in a black-box fashion, to get the desired interactive proof where the verifier only uses random examples. 
\end{proof}

\section{Interactive Proofs for Learning Boolean Circuits}
\label{sec:agnostic_verify_circuits}
In Section \ref{sec:aip_erm_unbounded_prover}, we show that if the honest prover is allowed to have unbounded computational power, then we can construct interactive proofs for agnostic learning $\Ppoly$, using just $O(1/\varepsilon)$ labeled examples, even in the distribution-free scenario. In this section, we study doubly-efficient interactive proofs for agnostic learning various Boolean circuit classes, starting with shallow-depth circuits.

Let $\cAC_d^0[2][\poly(n)]$ be the class of Boolean functions computable by $\poly$-sized depth-$d$ circuits, whose gate-set is $\{ \land, \lor, \neg, \oplus \}$ of unbounded fan-in. Similarly, we consider the class $\cAC_d^0[p][\poly(n)]$, whose gate-set includes $\mathsf{mod}_p$ gates of unbounded fan-in instead of $\oplus$ gates. 

Our main result for this section is an \textsf{IP} for agnostic learning $\cAC^0_d[2]$ using quasi-polynomially many samples, such that the verifier and the honest prover also have quasi-polynomial running time. These ideas can be easily extended to get similar interactive proofs for learning $\cAC^0_d[p]$, for any prime $p > 2$. 
\begin{theorem}[Formal statement of \cref{thm:our-results-ac0[2]-pac-verify}]
    \label{thm:agnostic_ip_ac0}
    Let $f : \{0,1\}^n \rightarrow \{0,1\}$ be any function such that $\opt(f,\cAC^0[2]) > 1/n^{\frac{\log(n)}{2}}$. Then, the class $\cAC^0[2][\poly(n)]$ is $(O(\log^{16d+8}(n)),1/10)$-agnostic verifiable over the uniform distribution with the following parameters.
    \begin{equation*}
        \InteractiveParams{\random}{\exp{(O(\log^{4d+2} (n)))}}{\exp{(O(\log^{4d+2} (n)))}}{\proofonly}{\exp{(O(\log^{4d+2} (n)))}}{\exp{(O(\log^{4d+2} (n)))}}
    \end{equation*}
\end{theorem}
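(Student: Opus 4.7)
The plan is to combine the agnostic membership-query learner of \cite{CIKK17_agnostic} for $\cAC^0[2]$ with an extension of the query-to-sample reduction of \cref{sec:query-to-sample-reduction} that is tailored to the query pattern of the Nisan--Wigderson reconstruction algorithm. In the protocol, the prover simulates the CIKK learner, whose internal weak learner is NW-reconstruction driven by a tolerant natural property against $\cAC^0[2]$-circuits; the verifier, equipped only with random examples, delegates the resulting membership queries to the prover and detects cheating by hiding one of its labeled random examples inside the query set. First, I would recall the structure of the CIKK learner in a form useful to us: given a non-negligible $\beta^\star = \opt(f,\cAC^0[2])$, the learner works with the amplified function $\mathsf{Amp}^f_k$ with $k = \Theta(1/(\ell^3 \beta^\star))$ and $\ell = \polylog(n)$, and runs NW-reconstruction on $\mathsf{Amp}^f_k$ using a $(1/\ell^3)$-tolerant natural property of $\cAC^0[2]$-circuits of size $2^{\ell^\gamma}$ as the distinguisher. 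By the hybrid argument, the reconstruction algorithm is non-adaptive and its queries form random subcube-membership queries on $\mathsf{Amp}^f_k$: each subcube $\sE_j$ is obtained by projecting the NW-seed $z$ onto $S_j\setminus S_i$, where $i$ is the random hybrid index and $S_1,\dots,S_L$ is the NW set design with pairwise overlap at most $\log L$. Output amplification then turns a $1/2+1/L$-approximator of $\mathsf{Amp}^f_k$ into a hypothesis for $f$ with error $O(\mathrm{poly}(\ell)\cdot \beta^\star)$.

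Second, I would extend the query-to-sample reduction to this query pattern. The main technical step is the \emph{NW marginal query distribution lemma}: for a uniformly random seed $z$ and hybrid index $i$, and for any query index $q$ that falls in subcube $\sE_j$, the marginal distribution of the $q$-th query is uniform on $\Bits^n$. Given this, the verifier hides a labeled example $(w,f(w))$ as follows: it picks a random query index $q$, identifies the subcube $\sE_j$ containing it, embeds $w$ into $z$ at the coordinates indexed by $S_j\setminus S_i$ (filling the remaining seed coordinates with fresh randomness), and generates the rest of the non-adaptive queries consistently. Because the marginals are identical, no prover can tell which query the verifier knows the label of, so the standard cheating-detection analysis of \cref{thm:membership-to-random-transform} goes through, giving soundness after $O(Q\log(1/\delta))$ independent repetitions where $Q$ is the total query count of the reconstruction. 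Completeness follows because the honest prover (running the real CIKK learner on its own membership queries to $f$) answers truthfully and the verifier then simulates the CIKK post-processing locally on the returned labels.

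Third, since the verifier does not know $\beta^\star$, I would run the protocol over a geometric grid of guesses $\beta_0 > \beta_1 > \dots > \beta_T$ with $T = O(\log n \cdot \log\log n)$ covering the range $[n^{-(\log n)/2},\,1/2]$ dictated by the non-negligibility hypothesis. For each guess $\beta_i$ the prover commits to a hypothesis $h_i$ via the above sub-protocol with the corresponding $k_i = \Theta(1/(\ell^3\beta_i))$; the verifier then selects a final hypothesis by an empirical-error / distance-estimator test in the spirit of \cref{thm:distance-estimator-implies-IP}, using $\mathrm{poly}(\log n,1/\eps)$ fresh random examples to estimate $\dist(f,h_i)$ for each $i$ and outputting the best one, while rejecting if the winning estimate exceeds $\alpha\cdot\widetilde{\beta}+\eps$ for the appropriate multiplicative factor $\alpha=O(\log^{16d+8}(n))$. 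The stated $\alpha$ is obtained by tracking $\ell=\Theta(\log^{4d+2}(n))$ through the CIKK amplification loss ($\ell^3$), the depth-dependent parameters of the natural property, and the loss from the guessing step; all running times remain $\exp(O(\log^{4d+2}(n)))$ since the verifier only needs to simulate the NW post-processing on prover-provided labels and run polynomially many $\Chernoff$-style estimations.

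The main obstacle will be the NW marginal query distribution lemma: because the set design $S_1,\dots,S_L$ is overlapping, embedding the random example into $z$ on the coordinates $S_j\setminus S_i$ partially fixes coordinates that also appear in the sibling queries indexed by $S_{j'}\setminus S_i$ for $j'\neq j$, and one must argue from the design property (overlap at most $\log L$) that the induced conditional marginals on every other query remain uniform on their respective subcubes. A secondary obstacle is to make the multi-guess selection step sound: since the CIKK guarantee holds only for guesses $\beta_i \ge \beta^\star$, a malicious prover could try to attach plausible-looking hypotheses to other guesses, and one must argue that the empirical rejection rule forces the surviving hypothesis to satisfy the claimed $\alpha\cdot\beta^\star+\eps$ bound regardless of which guess it came from. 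Once these two pieces are in place, stringing together the CIKK learner, the NW query-to-sample reduction, and the hypothesis-selection procedure yields the parameters claimed in the theorem.
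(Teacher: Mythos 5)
Your high-level plan is the same as the paper's: run the CIKK learner via NW-reconstruction with the tolerant natural property as distinguisher, delegate the membership queries to the prover with an embedded random example, handle the unknown $\beta^*$ by a grid of guesses, and select a final hypothesis by empirical error. However, the central technical step is both misstated and left unproven. You assert that the marginal distribution of the $q$-th query is uniform on $\Bits^n$; this is false. For a fixed query index $q$ inside a subcube $\sE_j$, the coordinates in $S_j \cap S_i$ are pinned to a fixed pattern (the position of $q$ in the lexicographic enumeration of the subcube), so the marginal is uniform only over the coset of dimension $n - \vert S_j \cap S_i\vert$ consistent with that pattern — this is exactly what \cref{lem:nw_query_marginal} proves, and it is paired with \cref{prop:uniform_dist_subset} to argue that a uniform example $w$, restricted to $S_j\setminus S_i$, is indistinguishable from the seed bits. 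A consequence you gloss over is that you cannot "pick a random query index $q$" and make $w$ \emph{be} the $q$-th query: once $w\vert_{S_j\setminus S_i}$ is written into the seed, $w$ appears at the position inside $\sE_j$ determined by $w$'s own overlap bits (the paper's \cref{algo:embed_query_iw01} chooses the subcube with probability proportional to its size, which is equivalent, and lets $w$ land where it lands). Proving the marginal lemma is the bulk of the work here — it requires an iterative conditioning argument over all the overlapping restrictions induced by the blocks $S_{u,v}$ \emph{and} the Hadamard blocks $b(S_u)$ of the design for $\mathsf{Amp}^f_k$ — and you explicitly defer it as "the main obstacle," so the proof attempt is incomplete precisely at the paper's key lemma.

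Two smaller points. First, your selection step proposes rejecting when the winning estimate exceeds $\alpha\cdot\widetilde{\beta}+\eps$, but the verifier has no trustworthy $\widetilde{\beta}$, and such a threshold is neither needed nor obviously sound. The paper's argument is different: a prover that cheats on more than roughly $q_{\max}\ln(M/\delta)$ query sets in some round is caught by the embedded examples w.h.p.; otherwise, in the round whose guess $\beta_r$ brackets $\beta^*$, most of the $t$ candidate hypotheses are still good, so the uniformly random pick from that round has error $O(\ell^4\beta^*)$ w.h.p., and simply outputting the candidate with the smallest estimated error (over $W_{\mathsf{test}}$ fresh samples) yields error $O(\ell^4\beta^*)$ without any absolute acceptance threshold — this resolves your "secondary obstacle," which you flag but do not settle. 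Second, a geometric grid covering $[n^{-(\log n)/2}, 1/2]$ needs $\Theta(\log^2 n)$ guesses (the paper uses $M=\log^2(n)/2$), not $O(\log n\cdot\log\log n)$; this is harmless for the quasi-polynomial complexity bounds but should be corrected.
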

In fact, we actually show that for every $c \geq 1$, we get a PAC-verification protocol for all $f$ such that $\mathsf{opt}_{\cAC^0[2]}(f) > 1/n^{\frac{\log^c(n)}{2}}$, where the verifier has random example access to $f$.

Finally, in \cref{thm:tolerant_natural_implies_aip}, we show that for any circuit class $\sC$ that contains $\cAC^0[2]$, the existence of certain kinds of constructive lower bounds against polynomial-sized $\sC$-circuits can be used to construct interactive proofs for \textit{improper} learning $\sC$ using sub-exponentially many samples, with a sub-exponential time honest prover (and verifier). All these results go through the framework of \cite{CIKK17_agnostic}, that use such constructive lower bounds against $\sC$ to get agnostic learners for $\sC$ that use membership queries. 

To begin with, we set up some preliminaries in Section \ref{sec:cikk_learn_prelims}, following which we sketch the agnostic learning framework for by \cite{CIKK17_agnostic} in Section \ref{sec:cikk_sketch} that forms the basis of the interactive proofs. Finally, we construct the interactive proofs for learning $\cAC^0[2]$ (and $\cAC^0[p]$, for prime $p > 2$), as well as for general circuit classes $\sC$ in Section \ref{sec:proof_aip_ac0}.

\subsection{Section Preliminaries}
\label{sec:cikk_learn_prelims}
The main premise of \cite{CIKK17_agnostic} is that proving ``constructive" \textit{average-case lower bounds} against a circuit class $\sC$ implies an agnostic learning algorithm for $\sC$ over the uniform distribution using membership queries, in a \textit{black-box} manner. This extends their previous work \cite{CIKK16} that shows a similar connection between worst-case lower bounds and standard learners for $\sC$ over the uniform distribution using membership queries.

\subsubsection{Nisan-Wigderson Generators}
\label{sec:nw_basics}

\begin{definition}[NW Set Design]
\label{def:nw_sets}
  For every $n, m, L \in \N$, define a sequence of sets $S_1, \dots, S_L \subseteq [m]$ as an $(n,L,d)$-design if
  \begin{enumerate}
  \item $\vert S_i \vert = n$, for all $1 \leq i \leq L$.
  \item For all $1 \leq i \neq j \leq L$, $\vert S_i \cap S_j \vert \leq d$.
  \end{enumerate}
\end{definition}

\begin{definition}[NW Generator \cite{NW94}]
  Let $f: \{0,1\}^n \rightarrow \{0,1\}$. For $m = O(n^2)$ and stretch $L = L(m)$, let $S_1, \dots, S_L \subseteq [m]$ be an $(n,L,\log L)$-design. Then, the NW generator $G^f : \{0,1\}^m \rightarrow \{0,1\}^{L(m)}$ is defined as $G^f(z) = f(z \vert_{S_1}), f(z \vert_{S_2}), \dots, f(z \vert_{S_L})$.
\end{definition}

\paragraph*{NW reconstruction algorithm \cite{IW01}}
The uniform NW reconstruction algorithm is an essential ingredient in the learning algorithms in Section \ref{sec:cikk_sketch}. To understand this, we first define the notion of a \textit{distinguisher circuit family}. For any $m$ and $L = L(m)$, let $\sG = \{ G_m : \{0,1\}^m \rightarrow \{0,1\}^L \}$ be a family of generators. For any $\varepsilon > 0$, we define $D_L^{G,\epsilon}$ as the set of all circuits $D$ on $L$ inputs such that
\begin{equation*}
\bigg\vert \Pr_{z \sim U_m}[D(G_m(z)) = 1] - \Pr_{y \sim U_L}[D(y) = 1] \bigg\vert \geq \epsilon.  
\end{equation*}

For any $n$-variate Boolean function $f$, the NW reconstruction algorithm gets a circuit that distinguishes the output of the NW generator $G^f$ from the uniform distribution over $L(m)$ bits and oracle access to $f$, and outputs a small-sized circuit that computes $f$ on a non-trivial fraction of strings of length $n$. We sketch it here to help us make further observations that will be useful later. The proof of correctness follows from a hybrid argument over the output distributions of $G^f$ and $U_L$ and can be found in \cite{NW94,IW01}.
\begin{lemma}
  \label{lem:iw01}
  Let $G^f : \{0,1\}^{m} \rightarrow \{0,1\}^{L}$ be the NW generator constructed based on $f:\{0,1\}^{n} \rightarrow \{0,1\}$ using an $(n,L,\log L)$-design $S_1, \dots, S_L$. Suppose $D$ is a distinguisher circuit in $D_L^{G^f, 1/L}$ for $G^f$. Then, there exists a randomized algorithm $\sA$ (or a weak learner) with oracle access to $f$ running in time $\poly (L)$, which takes input $D$ and outputs a circuit $C$ that agrees with $f$ on at least $\frac{1}{2}+ \Omega \left( \frac{1}{L} \right)$ fraction of the inputs. 
\end{lemma}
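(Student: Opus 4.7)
The plan is to prove this by a classical Nisan--Wigderson hybrid argument, converting the distinguisher $D$ into a next-bit predictor and then into an $f$-approximating circuit, with the bounded-overlap design property being what keeps the final circuit of size $\poly(L)$.

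First I would set up $L+1$ hybrid distributions $H_0, H_1, \dots, H_L$ on $\{0,1\}^L$, where $H_i$ samples a uniform seed $z \sim \{0,1\}^m$ and fresh uniform bits $y_{i+1}, \dots, y_L$ and outputs $(f(z|_{S_1}), \dots, f(z|_{S_i}), y_{i+1}, \dots, y_L)$. Then $H_L$ is distributed as $G^f$ on a uniform seed while $H_0$ is uniform on $\{0,1\}^L$, so the assumption $D \in D_L^{G^f, 1/L}$ together with a telescoping/averaging argument yields an index $i^* \in [L]$ for which $D$ distinguishes the adjacent hybrids $H_{i^*-1}$ and $H_{i^*}$ with advantage at least $1/L^2$. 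Because these two hybrids agree on all coordinates except the $i^*$-th (which is uniform in $H_{i^*-1}$ and equals $f(z|_{S_{i^*}})$ in $H_{i^*}$), Yao's distinguisher-to-predictor transformation then gives a $\poly(L)$-size randomized predictor $P$ that, given the prefix $(y_1, \dots, y_{i^*-1})$ of $H_{i^*}$ together with auxiliary uniform bits, correctly guesses $f(z|_{S_{i^*}})$ with probability at least $1/2 + \Omega(1/L^2)$.

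Next I would convert $P$ into an $f$-approximating circuit $C$ by identifying the input $x \in \{0,1\}^n$ of $C$ with the restriction $z|_{S_{i^*}}$. This is where the design property is critical: for each $j < i^*$, the value $f(z|_{S_j})$ depends on $x$ only through the coordinates in $S_j \cap S_{i^*}$, a set of size at most $\log L$, while the remaining $n - |S_j \cap S_{i^*}|$ input bits of $f$ come from coordinates of $z$ outside $S_{i^*}$. An averaging argument lets us fix those auxiliary coordinates of $z$ to a single assignment that preserves the prediction advantage. With this assignment hardwired, each $f(z|_{S_j})$ becomes a function of at most $\log L$ bits of $x$ and can therefore be represented by a lookup table of size $L$; alternatively, since $\sA$ has oracle access to $f$, it can simply query these values directly. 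Feeding the $i^*-1$ resulting bits into $P$ then yields a circuit $C$ of size $\poly(L)$ that agrees with $f$ on a $1/2 + \Omega(1/L)$ fraction of inputs.

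Finally, since the algorithm $\sA$ does not know $i^*$ in advance (and also does not know the correct sign to use in Yao's transformation), it enumerates all $2L$ candidate pairs, constructs the corresponding circuit for each, and empirically estimates its agreement with $f$ using $\poly(L)$ uniform random queries to the oracle, outputting the best candidate; correctness follows from a Chernoff bound. The main obstacles I expect are (i) carefully tracking the advantage across the hybrid step, Yao's conversion, and the auxiliary-bit-fixing step so that the initial $1/L$ distinguishing advantage propagates to the final agreement bound claimed in the lemma, and (ii) justifying that the substitution of $x$ for $z|_{S_{i^*}}$ preserves the correct marginal distribution of the prefix $(y_1, \dots, y_{i^*-1})$ fed into $P$ --- it is precisely the bounded overlap $|S_j \cap S_{i^*}| \leq \log L$ from the design that makes this substitution both distributionally valid and efficient, as without it the hardwired lookup tables would blow up to super-polynomial size.
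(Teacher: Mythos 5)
Your proposal follows essentially the same route as the paper's proof sketch: the standard Nisan--Wigderson/Impagliazzo--Wigderson hybrid argument turning the distinguisher into a next-bit predictor, using the $|S_j \cap S_{i}| \leq \log L$ design property so that the at most $2^{|S_j\cap S_i|}\le L$ relevant values of $f(z|_{S_j})$ can be obtained by oracle queries and hardwired, followed by constructing candidate circuits and selecting the best one by empirical estimation against $f$. The only differences are implementation-level (the paper samples the index $i$ and the auxiliary seed bits at random and repeats $\poly(L)$ times rather than enumerating indices and fixing auxiliary bits by averaging), so the two arguments are essentially identical.
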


\begin{proof}[Proof Sketch]
The weak learner follows in Algorithm \ref{algo:iw01_reconstruct}. The learner is split into the query construction phase and the learning phase.
\begin{algorithm}
\caption{Weak learner based on the uniform reconstruction for the NW generator.}
\label{algo:iw01_reconstruct}
    \nonl \textbf{Function input:} Query access to some function $f:\{0,1\}^n \rightarrow \{0,1\}$ \\
    \vspace{0.05in}
    \textbf{Explicit inputs:} Distinguisher circuit $D \in D_L^{G^f, 1/L}$ and $\varepsilon > 0$. \\
    \begin{itemize}
        \item \textbf{Query construction algorithm $\sA_{\mathsf{query}}$.}
        \begin{enumerate}
            \item Pick $i$ uniformly at random from $[L]$.
            \item For each $1 \leq j \leq m$, with $j \notin S_i$, set the $j^{\text{th}}$ input to $G^f$ to a random bit $z_j$ .
            \item For each $1 \leq k < i$, generate all the $2^{\vert S_k \cap S_i \vert} \leq L$ strings that are consistent with the partial assignment to $z \vert_{S_k}$ made in the step 2. Let $\sQ$ be these set of queries.
        \end{enumerate}
            
        \item \textbf{Learning algorithm $\sA_{\mathsf{learn}}$ given membership query set $\sQ$.}
        \begin{enumerate}
        \item Query $f$ at every string in $\sQ$ and these answers in a table $T$.
        \item For each $i \leq k \leq L$, set the $k^{\text{th}}$ input to $D$ to a random bit $y_k$.
        \item \textbf{Hypothesis construction:} Using $T$, $y_k$'s, $z_j$'s, and the encoding of $D$, construct a hypothesis circuit $C$ that does the following on input $x \in \{0,1\}^{n}$.
            \begin{enumerate}
                \item Set $z \vert_{S_i} = x$ and get the full input $z \in \{0,1\}^{m}$ to $G^f$.
                \item Compute $D(f(z \vert_{S_1}), \dots, f(z \vert_{S_{i-1}}), y_i, \dots, y_L)$, by obtaining the corresponding values for $f(z \vert_{S_k})$, $1 \leq k < i$, from $T$.
                \item If the output of $D$ is $1$, output $y_i$, else output $\overline{y_i}$.
            \end{enumerate}
        \end{enumerate}
        \item Repeat the above algorithm $\poly(L)$ times, and compute an empirical estimate of the agreement of the obtained circuit with $f$ in each run (by making $\poly(L)$ fresh uniformly random queries to $f$). Finally, output the best circuit in the list. Using a Chernoff bound, we see that with high probability, the final output hypothesis circuit agrees with $f$ on at least a $1/2 + \Omega (1/L)$ fraction of the inputs.
    \end{itemize}
\end{algorithm}
\end{proof}

\subsubsection{Tolerant Natural Property}
\label{sec:tol_nat_prop}
Next, we formally define a constructive average-case lower bound via the notion of a tolerant natural property.
\begin{definition}[Tolerant Natural Properties]
  \label{def:tolerant}
  A combinatorial property of the functions $\{\sR_n\}_{n \geq 0}$ is $\sD$-natural that is $\tau$-tolerant useful against $\sC$-circuits of size $u(n)$, if it satisfies the following:
  \begin{enumerate}
  \item \textbf{$\sD$-Constructivity}: Given a truth table of a function $f_n$, there exists an algorithm computable in $\sD$ that decides if $f_n \in \sR_n$.
  \item \textbf{Largeness}: A uniformly random function satisfies $\sR_n$ with probability at least $1/\poly(2^n)$.
  \item \textbf{usefulness against $\sC[u]$ with $\tau(n)$-tolerance}: For every $f_n \in \sF_n$ ($n$ is large enough), if $\mathsf{err}(f_n, \sC[u(n)]) \leq \tau(n)$, then $f_n \notin \sR_n$.
  \end{enumerate}
\end{definition}
The usefulness reflects the circuit size against which the lower bound holds and the tolerance reflects the quality of the average-case hardness. 

Based on the lower bound by Razborov and Smolensky \cite{Raz87,Smo87}, \cite{CIKK17_agnostic} construct the $(1/n^3)$-tolerant natural property useful against $\cAC^0[2]$.  
\begin{lemma}
    \label{lem:tolerant_ac0_property}
    There exists a $\mathsf{P}$-natural property $\{\sR_n\}_n$ with largeness $1/2$, that is $(1/n^3)$-tolerant useful against $\cAC^0[2]$ circuits of size $\mathsf{exp}(\Omega(n^{1/2d}))$ and depth $d$. 
\end{lemma}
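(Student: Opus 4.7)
The plan is to build $\sR_n$ on top of the Razborov--Smolensky polynomial approximation of $\cAC^0[2]$ circuits over $\mathbb{F}_3$. Setting $D = \Theta(n^{1/2})$ (so that depth-$d$ circuits of size $u(n) = \exp(\Omega(n^{1/(2d)}))$ are approximable by degree-$D$ $\mathbb{F}_3$-polynomials), I would define
\[
\sR_n = \bigl\{f : \{0,1\}^n \to \{0,1\} \,:\, \dist(f, p) > 2/n^3 \text{ for every degree-}D \text{ polynomial } p \text{ over } \mathbb{F}_3\bigr\}.
\]

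For $(1/n^3)$-\emph{tolerant usefulness}, I would invoke the Razborov--Smolensky theorem: every depth-$d$, size-$s$ $\cAC^0[2]$ circuit $C$ is approximated at every input by a probabilistic $\mathbb{F}_3$-polynomial of degree $O((\log(s/\eta))^d)$ to per-gate error $\eta$; choosing $\eta = 1/(4 s n^3)$ and $s = u(n) = \exp(\Omega(n^{1/(2d)}))$ yields overall degree $D = O(n^{1/2})$, and averaging over the internal randomness of the construction yields a deterministic degree-$D$ polynomial $p$ with $\Pr_x[C(x) \neq p(x)] \leq 1/(4n^3)$. Hence if $\mathsf{err}(f, \cAC^0[2][u(n)]) \leq 1/n^3$, then by the triangle inequality $\dist(f, p) \leq 1/n^3 + 1/(4n^3) < 2/n^3$, which places $f \notin \sR_n$.

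For \emph{largeness}, a counting argument suffices. There are at most $3^{\binom{n}{\leq D}} = 2^{o(2^n)}$ distinct degree-$D$ polynomials over $\mathbb{F}_3$, and by a Chernoff bound each one is within relative Hamming distance $2/n^3$ of only a $2^{-\Omega(2^n)}$ fraction of the $2^{2^n}$ Boolean functions. A union bound then gives that at least $1/2$ of all functions lie in $\sR_n$, which is much stronger than the required $1/\poly(2^n)$.

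The main obstacle I anticipate is $\mathsf{P}$-\emph{constructivity}: given the truth table of $f$ (of size $N = 2^n$), we must decide in time $\poly(N)$ whether $f$ is within relative distance $2/n^3$ of some degree-$D$ polynomial over $\mathbb{F}_3$. Since the relative minimum distance of the relevant polynomial code restricted to $\{0,1\}^n$ is roughly $2^{-D}$, which is much smaller than the threshold $2/n^3$, classical unique decoding does not apply directly. To circumvent this, I would reformulate the natural property using Razborov's $2^{n/2} \times 2^{n/2}$ communication matrix $M_f$ of $f$ over $\mathbb{F}_3$: a low-degree polynomial approximator for $f$ forces $M_f$ to have low approximate $\mathbb{F}_3$-rank, and the resulting rank-based condition can be checked in $\poly(N)$ time via linear algebra. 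The usefulness and largeness arguments above carry over with minor parameter adjustments, so the crux of the proof is verifying that this rank-based reformulation still captures approximability by $\cAC^0[2]$ circuits with the claimed $(1/n^3)$-tolerance.
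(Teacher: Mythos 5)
First, a point of reference: the paper does not prove this lemma at all --- it is imported verbatim from \cite{CIKK17_agnostic}, which constructs the property from the Razborov--Smolensky technique \cite{Raz87,Smo87} --- so your attempt has to be measured against that known construction rather than an in-paper argument. Within your proposal, the usefulness and largeness sketches are fine in outline (the degree bound $D = O(\sqrt{n})$ for size $\exp(\Omega(n^{1/2d}))$, depth-$d$ circuits, and the counting argument for largeness both work), but the step you yourself flag as the crux --- $\mathsf{P}$-constructivity --- is genuinely missing, and the proposed repair does not close it. Defining $\sR_n$ by ``$f$ is $(2/n^3)$-far from every degree-$D$ polynomial over $\F_3$'' requires deciding proximity to a code at a radius far beyond its unique-decoding radius, as you note; but passing to the $2^{n/2}\times 2^{n/2}$ matrix $M_f$ and asking that it be far from every low-rank matrix over $\F_3$ merely replaces one proximity problem by another. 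Deciding whether a matrix is within relative Hamming distance $\delta$ of \emph{some} matrix of rank at most $r$ is a matrix-rigidity-type question: Gaussian elimination computes exact rank, not approximate rank, and no $\poly(N)$ algorithm for the approximate version is known (such problems are in general NP-hard). So the claim that ``the resulting rank-based condition can be checked in $\poly(N)$ time via linear algebra'' is unjustified, and constructivity remains unproven.

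The construction the paper cites sidesteps this by never attempting to decide distance to low-degree polynomials (or to low-rank matrices) at all. Instead it naturalizes Smolensky's argument in the style of \cite{DBLP:journals/jcss/RazborovR97}: from the truth table of $f$ (in $\pm 1$ form) one writes down an explicit spanning set --- roughly, the pointwise products of $f$ with all monomials of degree about $n/2$, together with those monomials themselves --- and the property thresholds the \emph{exact} dimension of the span of this set over $\F_3$, which is computable by Gaussian elimination in $\poly(2^n)$ time. This measure is small whenever $f$ agrees with a degree-$O(\sqrt{n})$ polynomial on all but a small fraction of inputs (which, by your Razborov--Smolensky step, covers every $f$ that is $1/n^3$-close to $\cAC^0[2]$ circuits of size $\exp(\Omega(n^{1/2d}))$), because altering $f$ on a $\tau$-fraction of points changes the dimension by at most $\tau\cdot 2^n$ --- this graceful degradation is exactly where the tolerance comes from --- while a random function makes the dimension large with constant probability. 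If you want a self-contained proof, replacing your ``distance to the polynomial code'' predicate by such an exactly computable, perturbation-robust linear-algebraic proxy is the missing idea; as written, your argument establishes usefulness and largeness for a property that you cannot show is in $\mathsf{P}$.
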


\subsection{Agnostic Learning from Tolerant Natural Properties}
\label{sec:cikk_sketch}

\paragraph*{Agnostic learning algorithm for $\cAC^0[2]$}
Using the tolerant natural property useful against $\cAC^0[2]$ circuits from Lemma \ref{lem:tolerant_ac0_property}, \cite{CIKK17_agnostic} construct the following agnostic learner. 
\begin{theorem}[Agnostic Learner for $\cAC^0${$[2]$} (building on \cite{CIKK17_agnostic})]
  \label{thm:cikk_ac0[2]_learn}
  Let $f : \{0,1\}^n \rightarrow \{0,1\}$ be a Boolean function such that $\dist(f,\cAC^0_d[2][\poly(n)]) = \beta^*$, where $\beta^*$ is a non-negligible number in $\mathbb{R}$ and $d$ is a constant in $\N$.
  
  Then, there exists a randomised algorithm $\sB$, that, given as inputs any $\beta$, where $\beta^* \leq \beta \leq 1/2$ and membership query access to $f$, outputs a hypothesis with error at most $O(\log^{8d+4} (n/\beta)) \cdot \beta$ (over the uniform distribution) with probability at least $0.9$. Moreover, $\sB$ runs in time $2^{O(\log^{2d+1} (n/\beta))}$ and makes at most $2^{O(\log^{2d+1} (n/\beta^*))}$ membership queries to $f$. 

  In particular, given as input the true distance parameter $\beta^*$, $\sB$ outputs a hypothesis that has a $O(\log^{8d+4} (n/\beta^*))$-factor approximation gap with respect to $\beta^*$.
\end{theorem}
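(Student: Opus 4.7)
The plan is to adapt the CIKK17 agnostic learner by following the "natural properties to learning" pipeline: use the tolerant natural property $\{\sR_n\}$ from \cref{lem:tolerant_ac0_property} as a distinguisher against a suitable Nisan--Wigderson (NW) generator built from $f$, feed this distinguisher into the reconstruction algorithm of \cref{lem:iw01} to obtain a weak learner, and then decode a hypothesis with the claimed error. The parameter $\beta$ (the upper-bound guess on $\beta^* = \opt(f,\cAC^0_d[2][\poly(n)])$) is threaded through the NW seed length and through an error-amplification step so that the final accuracy scales polylogarithmically in $n/\beta$.

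To set up, fix $\ell = \Theta(\log^{2d+1}(n/\beta))$ and $L = 2^\ell$; construct an $(n',L,\log L)$-design of the size required by \cref{def:nw_sets}, where $n' = nk + k$ is the input length of the amplified function $\mathsf{Amp}^f_k(x_1,\ldots,x_k,b_1,\ldots,b_k) = \bigoplus_{j=1}^k b_j f(x_j)$ for a carefully chosen $k = \Theta(1/(\ell^3 \beta))$. The choice of $k$ is calibrated so that the functions $g_z : \{0,1\}^\ell \to \{0,1\}$ obtained by restricting $\mathsf{Amp}^f_k$ along coordinates prescribed by a random seed $z$ still lie within relative distance at most $1/\ell^3$ of $\cAC^0_d[2]$-circuits of size $2^{\ell^{1/2d}}$, with high probability over $z$. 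This uses the fact that distance from $\cAC^0[2]$ grows at most linearly in $k$ under the $\bigoplus$-amplification, and that restriction does not increase distance.

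Given this setup, the tolerant natural property $\sR_\ell$ serves as a distinguisher for $G^{\mathsf{Amp}^f_k}$: by the tolerance guarantee of \cref{lem:tolerant_ac0_property} applied at input length $\ell$, $\sR_\ell$ rejects almost every output of $G^{\mathsf{Amp}^f_k}(z)$, while by largeness it accepts a $\Omega(1/\poly(L))$ fraction of truly random strings in $\{0,1\}^L$. After standard boosting of the distinguishing advantage to $1/L$, invoke \cref{lem:iw01} to obtain in time $\poly(L) = 2^{O(\log^{2d+1}(n/\beta))}$ a circuit $C$ that agrees with $\mathsf{Amp}^f_k$ on a $(1/2 + 1/L)$-fraction of inputs. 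Standard Goldreich--Levin-style decoding applied to the $\mathsf{XOR}$-structure of $\mathsf{Amp}^f_k$ then converts $C$ into a hypothesis $h$ for $f$ with $\dist(f,h) \leq O(\ell^4) \cdot \beta = O(\log^{8d+4}(n/\beta)) \cdot \beta$, matching the claimed bound. The membership queries to $f$ are made entirely by the NW reconstruction subroutine and by the decoding step; a careful accounting (where the NW set design, in order to guarantee that restrictions of $\mathsf{Amp}^f_k$ are genuinely close to $\cAC^0[2]$, must be sized against the true $\beta^*$) yields the query bound $2^{O(\log^{2d+1}(n/\beta^*))}$.

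The main obstacle is balancing three competing constraints in the choice of $k$, $\ell$, and $L$: (i) the restricted functions $g_z$ must have distance from $\cAC^0_d[2][2^{\ell^{1/2d}}]$ at most the $1/\ell^3$ tolerance of $\sR_\ell$, which forces $k \lesssim 1/(\ell^3 \beta)$; (ii) $L$ must be large enough so that an $(1/2 + 1/L)$-approximator for $\mathsf{Amp}^f_k$ decodes back to a good approximator for $f$, which after the $\mathsf{XOR}$-decoding produces an error of order $1/k$; and (iii) the NW design must admit overlap at most $\log L = \ell$, forcing $n' \geq \Omega(\ell^2)$. Navigating these constraints is what yields the stated $\polylog(n/\beta)$-approximation factor, and doing so \emph{uniformly} in the guess $\beta$ (since the algorithm only knows $\beta$, not $\beta^*$) is the subtle point that later lets the verifier in \cref{thm:agnostic_ip_ac0} enumerate over geometrically-spaced guesses and pick the right hypothesis.
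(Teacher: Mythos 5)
Your proposal is correct and takes essentially the same route as the paper: instantiate the NW generator with $\mathsf{Amp}^f_k$ for $k = \Theta(1/(\ell^3\beta))$ and $\ell = \Theta(\log^{2d+1}(n/\beta))$, use the tolerant natural property of \cref{lem:tolerant_ac0_property} as the distinguisher fed into the reconstruction algorithm of \cref{lem:iw01}, and decode the resulting $(1/2+1/L)$-approximator of $\mathsf{Amp}^f_k$ into a hypothesis with error $O(\ell^4)\cdot\beta$, handling the unknown $\beta^*$ by working with any guess $\beta \geq \beta^*$. The only cosmetic differences are that the paper relies on the property's largeness $1/2$ directly (so no "boosting of the distinguishing advantage" is needed) and is explicit that the decoding combines Goldreich--Levin with the IJKW direct-product decoder rather than the XOR-decoding alone, while the query bound in terms of $\beta^*$ follows simply from $\beta \geq \beta^*$.
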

Our statement of the agnostic learner is slightly different from that of \cite{CIKK17_agnostic}, as they assume that the learner gets the correct value of $\beta^*$ as input, whereas we extend their analysis to remove this assumption by giving any $\beta \geq \beta^*$ as input. In fact, as we shall in the agnostic IP, doing this ensures that we can search for the right value of $\beta^*$ up to constant factors.

The algorithm is obtained by a ``play-to-lose" argument. The main idea is that with high probability on the seeds, $G^f : \{0,1\}^{\poly(n)} \rightarrow \{0,1\}^L$ produces a truth table of a function over inputs of length $\log L$, which is $(1-4\beta^*(n))$-close to some function in $\cAC^0[2]$. As long as $\beta^*(n)$ is less than $\tau(\log L)$/4, the tolerant natural property acts as a distinguisher to $G^f$; indeed, by definition, it rejects almost all of the truth tables output by $G^f$, whereas it accepts a dense fraction of the uniformly random strings from $\{0,1\}^L$. Thus, we can directly use the natural property as a distinguisher in the NW-reconstruction algorithm from Lemma \ref{lem:iw01} to get a circuit that computes $f$ on a $(1/2 + 1/L)$-fraction of the $n$-length inputs. 

To get a stronger learner, the NW generator is instantiated with the concatenated encoding of $f$ by the Direct Product and Hadamard codes, i.e., $\mathsf{Had} \circ \mathsf{DP} (f)$. The learner now runs the local list-decoding algorithms for Hadamard and Direct Product codes \cite{IJKW10} in sequence using the circuit output by the reconstruction algorithm (that is a weak learner), to get stronger guarantees on the error of the hypothesis.

In more detail, for any given $f \in \sF_n$ and $k \in \N$, define the amplified function $\mathsf{Amp}^f_k : \{0,1\}^{nk+k} \rightarrow \{0,1\}^{}$ as the Hadamard code encoding of the $k$-wise direct product of $f$. In other words, for inputs $x_1, \dots, x_k \in \{0,1\}^{n}$ and $y_1, \dots, y_k \in \{0,1\}^{k}$, 
\begin{equation*}
\mathsf{Amp}^f_k(x_1, \dots, x_k, y_1, \dots, y_k) = \sum_{i=1}^k f(x_i) \cdot y_i \pmod 2    
\end{equation*}
Let $G^{\mathsf{Amp}^f_k}$ be the NW-generator instantiated using $\mathsf{Amp}^f_k$ having seed length $m = \poly(n,k)$ and stretch $L(n)$. 

We next state the following result that establishes the complexity of any output of an NW-generator, when treated as a truth table of a function over $\ell = \log L$ inputs.
\begin{lemma}[Theorem 3.2 in \cite{CIKK16}]
    \label{lem:cikk_nw_output}
    For any string $z \in \{0,1\}^m$ and any $f \in \cAC^0[2][\poly(n)]$, let $g_z : \{0,1\}^\ell \rightarrow \{0,1\}$ be defined as $g_z(i) = [G^{\mathsf{Amp}^f_k}(z)]_i = \mathsf{Amp}^f_k(z \vert_{S_i})$, for any $i \in \{0,1\}^\ell$. Then, $g_z \in \cAC^0[2][\poly(n,k)]$. 
\end{lemma}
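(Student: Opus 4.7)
The plan is to express $g_z$ as the composition of a routing stage that extracts the substring $z \vert_{S_i} \in \{0,1\}^N$ (with $N = nk + k$) as a function of $i$, with the seed $z$ hardwired into the circuit, followed by the circuit computing $\mathsf{Amp}^f_k$ on those $N$ bits. The second stage is immediate: since $\mathsf{Amp}^f_k(x_1, \dots, x_k, y_1, \dots, y_k) = \bigoplus_{j \in [k]} f(x_j) \land y_j$ and $f \in \cAC^0[2][\poly(n)]$ has depth $d$, placing $k$ copies of the circuit for $f$ in parallel, AND-ing each output with the corresponding $y_j$, and XOR-ing the $k$ results yields an $\cAC^0[2]$ circuit of size $\poly(n, k)$ and depth $d + O(1)$.

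Next, I would realize the routing stage via the standard polynomial-based NW set design. Pick a prime power $q = \Theta(N)$, identify $[m]$ with $\F_q \times \F_q$ so that $m = q^2 = \poly(n, k)$, and index the $L$ sets by polynomials $\pi \in \F_q[X]$ of degree at most $d' = \lfloor \log L / \log q \rfloor \leq \log L$, via $S_\pi = \{(a, \pi(a)) : a \in \F_q\}$. The overlap bound $|S_\pi \cap S_{\pi'}| \leq d'$ follows since two distinct degree-$d'$ polynomials agree on at most $d'$ points, and the input $i \in \{0,1\}^\ell$ with $\ell = \log L = (d' + 1) \log q$ is interpreted naturally as the list of $d' + 1$ coefficients of $\pi_i$, each of $\log q$ bits.

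For each coordinate $p \in [N]$ attached to a fixed element $a_p \in \F_q$, the address $\psi_p(i) = (a_p, \pi_i(a_p)) \in [m]$ needs to be computed in $\cAC^0[2]$. Writing $\pi_i(a_p) = \sum_{j=0}^{d'} \pi_{i, j} \cdot a_p^j$, each product $\pi_{i,j} \cdot a_p^j$ is multiplication in $\F_q = \F_{2^{\log q}}$ of a variable by a fixed constant---an $\F_2$-linear map on $\log q$ bits, computable by a single layer of XOR gates---and the sum over $\F_q$ is bitwise XOR. Thus each $\psi_p$ is computed by a depth-$2$ XOR circuit of size $\poly(d', \log q) = \poly(\log L)$. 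The hardwired lookup bit $u_p(i) = z_{\psi_p(i)}$ is then expressed as the DNF $\bigvee_{j \in [m]\,:\,z_j = 1} \mathbf{1}[\psi_p(i) = j]$ of size $\poly(m) = \poly(n, k)$, each equality test being an AND of $O(\log m)$ literals over the output bits of $\psi_p$.

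Composing the $N$ parallel routing subcircuits $u_1, \dots, u_N$ with the $\mathsf{Amp}^f_k$ circuit from the first paragraph produces the desired $\cAC^0[2]$ circuit for $g_z$, of total size $\poly(n, k)$ and depth $d + O(1)$. The main obstacle I anticipate is parameter bookkeeping: the argument requires $\log L \leq \poly(n, k)$ so that the polynomial routing circuit stays polynomial-sized and the seed-to-coefficient identification is well defined. This is precisely the regime $L = 2^{\polylog(n)}$ adopted by the CIKK agnostic learner, in which the $(1/\ell^3)$-tolerant natural property of \cref{lem:tolerant_ac0_property} remains strong enough to distinguish $G^{\mathsf{Amp}^f_k}$ from uniform; outside that window the routing step would no longer fit into a $\poly(n,k)$-sized circuit and the conclusion as stated would break.
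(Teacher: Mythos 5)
Your argument is correct and is essentially the standard proof of this fact: the paper does not reprove \cref{lem:cikk_nw_output} but imports it from Theorem~3.2 of \cite{CIKK16}, whose proof is exactly your decomposition into a hardwired-seed routing circuit for $i \mapsto z\vert_{S_i}$ (the polynomial-based design, with constant-depth $\F_2$-linear evaluation of $\pi_i(a_p)$ followed by a DNF lookup into the fixed seed $z$) composed with $k$ parallel copies of the circuit for $f$, an AND layer, and a single unbounded fan-in XOR gate, giving size $\poly(n,k)$ and depth $d+O(1)$. The only inaccuracy is your closing caveat: the lemma needs nothing about the tolerant natural property or the specific window $L = 2^{\polylog(n)}$ --- it only requires the explicit polynomial-based design and $\log L \leq \poly(n,k)$, so that the degree parameter (and hence the coefficient-reading and evaluation circuitry) stays polynomial in $n$ and $k$; the choice $L = 2^{\polylog(n)}$ matters for the distinguishing step of the learner, not for membership of $g_z$ in $\cAC^0[2][\poly(n,k)]$.
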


\begin{proof}[Proof sketch of Theorem \ref{thm:cikk_ac0[2]_learn} (see \cite{CIKK17_agnostic} for detailed exposition)]
Let $\{\sR_{2^n}\}$ be a \textsf{P}-natural property from Lemma \ref{lem:tolerant_ac0_property} that is useful against $\cAC^0[2]$-circuits of depth $d$ and size $u(n) = \mathsf{exp}(O(n^{1/2d}))$ with tolerance $\tau(n) = 1/n^3$. For any $\beta$ such that $\beta^* \leq \beta \leq 1/2$, let $G^{\mathsf{Amp}^f_k}$ be the NW-generator instantiated using $\mathsf{Amp}^f_k$ having stretch $L = 2^{O(\log^{2d+1} (nk))}$, along with the amplification parameter being set to $k= \tau(\ell)/8\beta(n)$. Let $\ell = \log (L) = O(\log^{2d+1}(n/\beta))$.

We first outline the key intuition behind the learning algorithm. Firstly, using Lemma \ref{lem:cikk_nw_output}, it holds that for any $g$ in $\cAC^0[2][\poly(n)]$ and for every input seed $z$, $G^{\mathsf{Amp}_k^g}(z)$ is a function computable by $\cAC^0[2]$ circuits of size $\poly(n,k) \leq u(\ell) = 2^{O(\ell^{1/2d})}$, for the chosen value of $\ell = \log L$. Next, for the chosen value of $k$, with high probability over the choice of the seeds, the string output by $G^{\mathsf{Amp}_k^g}(z)$ is at a (relative) distance of at most $\tau(\ell)$ from $\cAC^0[2][u(\ell)]$.\footnote{In more detail, when $\dist(f,g) = \beta^*$, we can show that $\dist(\mathsf{Amp}^f_k,\mathsf{Amp}_k(g)) \leq 1/2 - (1-2\beta^*)^k/2$. Using standard inequalities, $1/2 \cdot (1-(1-2\beta^*)^k)) \leq 1/2(1-e^{-4\beta^*k}) \leq 2k\beta^*$. When $k=\tau(\ell)/4\beta$, observe that $\dist(\mathsf{Amp}^f_k,\mathsf{Amp}_k(g)) \leq \tau(\ell) \cdot \frac{\beta^*}{4\beta}$, that in turn is at most $\tau(\ell)/4$ for every $\beta \geq \beta^*$.} Thus, the $\tau(\ell)$-tolerant natural property $\sR_L$ useful against $\cAC^0[2][u(\ell)]$ can distinguish between outputs of $G^{\mathsf{Amp}^f_k}$ and $\sU_L$ (the uniform distibution over $\{0,1\}^L$) and in turn breaks the generator $G^{\mathsf{Amp}^f_k}$. As such, we can use $\sR_L$ as the distinguisher input to the NW reconstruction algorithm and obtain a hypothesis approximates $f$ with non-trivial advantage over a random guess.

The agnostic learner $\sB$ is described as follows. For any input function $f$, $\sB$ just runs the $\poly(L)$-time weak learner $\sA$ from Lemma \ref{lem:iw01} using query access to $\mathsf{Amp}^f_k$ and the distinguisher circuit $D$ that computes $\sR_L$, based on the NW-generator $G^{\mathsf{Amp}^f_k}$. The oracle access by $\sA$ to $\mathsf{Amp}^f_k$ is simulated using the oracle to the input $f$. $\sA$ outputs a circuit on $n$ inputs that agrees with $\mathsf{Amp}^f_k$ on at least $1/2 + \Omega(1/L)$ fraction. 

The uniform reconstruction algorithms that accompany hardness amplification techniques using the Hadamard code (i.e., the Goldreich-Levin theorem) and the Direct Product code \cite{IJKW10}, ensure that with high probability over its randomness, $\sB$ constructs a circuit $h$ on $n$ inputs that computes $f$ on at least $1-O(\ell/k)$-fraction of the inputs, in $\poly(n,L,k) \leq 2^{O(\log^{2d+1}(n/\beta))}$ time.\footnote{We can use standard error amplification techniques to make the learner output this hypothesis with probability at least $1-\delta$, where $\delta > 0$ is the confidence parameter~\cite{KV94_book}.} From Lemma \ref{lem:iw01}, we also see that the query complexity is at most $\poly(L) \leq 2^{O(\log^{2d+1}(n/\beta))}$. Finally, the output hypothesis $h$ computes $f$ on at least $1-O(\beta \ell/\tau(\ell))$ fraction of the inputs, i.e., $\mathsf{err}(h,f) \leq O(\log^{8d+4}(n/\beta))\cdot \beta$. 

\end{proof}

The following corollary follows from Theorem \ref{thm:cikk_ac0[2]_learn} when the input function is not too close to $\cAC^0[2]$, i.e., $\dist(f,\cAC^0[2]) \geq 1/n^{\frac{\log(n)}{2}}$.
\begin{corollary}
\label{cor:query_max_size}
Let $\beta^* \geq 1/n^{\frac{\log(n)}{2}}$ and let $f$ be a Boolean function such that $\dist(f,\cAC^0_d[2][\poly(n)]) = \beta^*$, for constant $d \in \N$. Then, for input $\beta$ such that $\beta^* \leq \beta \leq 1/2$, the agnostic learning algorithm $\sB$ uses membership queries to $f$ and outputs a hypothesis that has error at most $O(\log^{16d+8}(n)) \cdot \beta$ with probability at least $0.9$, having running time and query complexity at most $2^{O(\log^{4d+2}(n))}$.
\end{corollary}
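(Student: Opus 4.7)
The plan is to derive this corollary directly from \cref{thm:cikk_ac0[2]_learn} by substituting the assumed lower bound on $\beta^*$ into the three complexity expressions (error, running time, and query complexity) and simplifying. The only ingredient beyond that theorem is the arithmetic consequence of $\beta^* \geq 1/n^{\log(n)/2}$, namely $\log(1/\beta^*) \leq \tfrac{1}{2}\log^2(n)$.

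First I would observe that, since $\beta \geq \beta^*$ by hypothesis, we have $\log(n/\beta) \leq \log(n/\beta^*) = \log n + \log(1/\beta^*) \leq \log n + \tfrac{1}{2}\log^2(n) = O(\log^2 n)$. Plugging this bound into the error guarantee from \cref{thm:cikk_ac0[2]_learn} gives
\[
O(\log^{8d+4}(n/\beta))\cdot \beta \;\leq\; O\bigl((\log^2 n)^{8d+4}\bigr)\cdot \beta \;=\; O(\log^{16d+8}(n))\cdot \beta,
\]
which is the claimed error bound.

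Next I would apply the same substitution to the running time and query complexity. For the running time, $2^{O(\log^{2d+1}(n/\beta))} \leq 2^{O((\log^2 n)^{2d+1})} = 2^{O(\log^{4d+2}(n))}$. For the query complexity, the exponent involves $\log(n/\beta^*)$ rather than $\log(n/\beta)$, but the same argument using $\log(n/\beta^*) = O(\log^2 n)$ yields $2^{O(\log^{2d+1}(n/\beta^*))} \leq 2^{O(\log^{4d+2}(n))}$. The success probability of at least $0.9$ is inherited directly from \cref{thm:cikk_ac0[2]_learn}.

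There is no real obstacle here; the statement is purely a quantitative specialization. The one subtlety worth flagging is that the theorem's query complexity depends on the \emph{true} distance $\beta^*$ rather than the input guess $\beta$, so the non-negligibility assumption $\beta^* \geq 1/n^{\log(n)/2}$ is what actually controls the query complexity, while the corresponding running-time and error bounds are controlled by the weaker quantity $\log(n/\beta) \leq \log(n/\beta^*)$. Making this dependence explicit in the write-up avoids any confusion about why the same asymptotic bound $2^{O(\log^{4d+2} n)}$ governs both quantities.
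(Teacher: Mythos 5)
Your proposal is correct and takes essentially the same route as the paper: the corollary is obtained by specializing the parameters of \cref{thm:cikk_ac0[2]_learn} using $\log(n/\beta)\leq\log(n/\beta^*)=O(\log^2 n)$, exactly as you do. The paper's write-up is slightly more white-box in that it fixes $\ell=O(\log^{4d+2}(n))$ uniformly (based on the smallest admissible $\beta$) and re-checks the usefulness condition $\poly(n,k)\leq u(\ell)$, but for the corollary as stated your black-box substitution inherits this from the theorem, so there is no gap.
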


\begin{proof}
    The proof is obtained by setting the right parameters in Theorem \ref{thm:cikk_ac0[2]_learn}. In particular, since the smallest value of $\beta$ is $1/n^{\log(n)}$, setting $\ell$ to be $O(\log^{2d+1} (n \cdot n^{\log(n)})) = O(\log^{4d+2}(n))$, ensures that for every $\beta \geq 1/n^{\log(n)}$, when the value of $k$ is $\frac{1}{8 \ell^3 \beta}$, for any $g \in \cAC^0[2][\poly(n)]$, the output of $G^{\mathsf{Amp}_k^g}$ on every seed is computable by a circuit of size $\poly(n,k) = \poly(n/\beta) \leq u(\ell) = \mathsf{exp}(O(\ell^{1/2d}))$, where $u(\ell)$ is the usefulness parameter of the natural property $\{ \sR_n \}$ against $\cAC^0[2]$. 

    Thus, the running time and query complexities of the algorithm are at most $\poly(n,L,k) = 2^{O(\log^{4d+2}(n))}$, where $L = 2^{\ell}$. Moreover, for every $\beta$, the guarantees of the learner $\sB$ ensures that it outputs a hypothesis that has error at most $\ell^4 \beta = O(\log^{16d+8}(n)) \cdot \beta$, with probability at least $0.9$.
\end{proof}

\paragraph*{Agnostic learning of $\cAC^0[p]$, for every prime $p > 2$}
The learner follows from the general strategy as $\cAC^0[2]$, however \cite{CIKK17_agnostic} extend the class $\cAC^0[p]$ to include functions of the form $f : \{0,1\}^n \rightarrow \F_p$, e.g., by having $p$ output gates and the true output selects the field element (constant inputs from $\F_p$ to the circuit can also be coded this way). Following this, they show the existence of a $(0.15)$-tolerant natural property useful against $\cAC^0[p]$ circuits of depth $d$ and size $\exp(\Omega(n^{1/2d}))$ under this extended notion. Finally, they use this natural property to construct an agnostic learner for $\cAC^0[p]$, where the input distribution is still Boolean, i.e., the distribution is $(\sU_n, f(\sU_n))$ for $f : \{0,1\}^n \rightarrow \{0,1\}$.

\begin{lemma}[Agnostic Learners for $\cAC^0${$[p]$}]
  \label{lem:cikk_ac0[p]_prime_learn}
  For any prime $p > 2$, let $f : \{0,1\}^n \rightarrow \{0,1\}$ be such that $\dist(f,\cAC^0_d[p][\poly(n)]) = \beta^*$, where $\beta^* \geq 1/n^{\frac{\log(n)}{2}}$ and $d$ is a constant in $\N$.
  
  Then, there exists a randomised algorithm $\sB$, that, given as inputs any $\beta$ such that $\beta^* \leq \beta \leq 1/2$, and membership query access to $f$, outputs a hypothesis that has error at most $O(\log^{16pd+8p}(n)) \cdot \beta$ with probability at least $0.9$, that has running time and query complexity at most $2^{O(\log^{4pd+2p}(n))}$.

\end{lemma}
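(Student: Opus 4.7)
The plan is to follow exactly the play-to-lose framework used in the proof of Theorem \ref{thm:cikk_ac0[2]_learn} / Corollary \ref{cor:query_max_size}, but substituting the $(0.15)$-tolerant $\mathsf{P}$-natural property useful against $\cAC^0[p]$ circuits of depth $d$ and size $\mathsf{exp}(\Omega(n^{1/\Theta(pd)}))$ (guaranteed by \cite{CIKK17_agnostic} via Razborov--Smolensky) in place of the $(1/n^3)$-tolerant property used in the $\cAC^0[2]$ case. Since the extended model of \cite{CIKK17_agnostic} allows $\cAC^0[p]$ circuits with $\F_p$-valued gates and outputs, I will work throughout with an amplified function $\mathsf{Amp}^f_k:\{0,1\}^{nk+k}\to\F_p$ defined by $\mathsf{Amp}^f_k(x_1,\dots,x_k,y_1,\dots,y_k)=\sum_{i=1}^{k} f(x_i)\cdot y_i \pmod p$, so that the NW generator $G^{\mathsf{Amp}^f_k}$ outputs strings in $\F_p^L$, and the direct analogue of Lemma \ref{lem:cikk_nw_output} shows that on every seed $z$ the truth table $G^{\mathsf{Amp}^f_k}(z)$ is computable by $\cAC^0[p]$ circuits of size $\poly(n,k)$.

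Next I fix parameters. In order that the truth table on $\ell=\log L$ variables sits strictly inside the useful-size regime $\mathsf{exp}(\Omega(\ell^{1/\Theta(pd)}))$, I take $\ell=\Theta(\log^{2pd+p}(n/\beta))$, i.e.\ $L = 2^{\Theta(\log^{2pd+p}(n/\beta))}$. Because the tolerance of the natural property is now a \emph{constant} $0.15$ rather than a shrinking $1/\ell^3$, the amplification parameter can be chosen to be simply $k=\Theta(1/\beta)$. The standard Yao--XOR / Hadamard distance-amplification analysis, carried out over $\F_p$, then guarantees that whenever $\dist(f,g)\leq\beta$ for some $g\in\cAC^0[p][\poly(n)]$, one has $\dist(\mathsf{Amp}^f_k,\mathsf{Amp}^g_k)\leq 0.15/4$ uniformly for all $\beta\geq\beta^*$. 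Consequently the natural property $\sR_L$, applied to the $\F_p$-valued truth table $G^{\mathsf{Amp}^f_k}(z)$, rejects every seed's output while still accepting an $\Omega(1)$-dense subset of $\F_p^L$, so it is a valid distinguisher in $D_L^{G^{\mathsf{Amp}^f_k},\Omega(1)}$.

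Feeding this distinguisher into the NW reconstruction algorithm (Lemma \ref{lem:iw01}, extended to $\F_p$-valued NW outputs in the same standard way) yields in time $\poly(L)$ a circuit computing $\mathsf{Amp}^f_k$ on at least a $1/p+\Omega(1/L)$ fraction of its inputs. Composing this weak learner with the local list-decoding algorithms for the Hadamard code and the Direct Product code of \cite{IJKW10} (both of which apply over $\F_p$) converts it into a final hypothesis $h$ with $\dist(h,f)\leq O(\ell^4)\cdot\beta$. Using the assumption $\beta^*\geq 1/n^{\log(n)/2}$ to simplify $\log(n/\beta)=O(\log^2 n)$ uniformly in $\beta\in[\beta^*,1/2]$, this becomes error $O(\log^{16pd+8p}(n))\cdot\beta$, while the total running time and query complexity are $\poly(n,L,k)=2^{O(\log^{4pd+2p}(n))}$, matching the claimed bounds; the overall success probability is boosted to $0.9$ by the standard repetition argument used in the proof of Theorem \ref{thm:cikk_ac0[2]_learn}.

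The main obstacle is the one place where the argument is not pure bookkeeping: one must verify that the Hadamard and Direct Product hardness-amplification-and-list-decoding steps carry through cleanly over $\F_p$ rather than $\F_2$. Concretely, one must check that (a) the $\F_p$-inner product $\sum_i y_i f(x_i)$ gives the expected exponential-in-$k$ distance amplification from $\beta$ to roughly $1-1/p-e^{-\Omega(\beta k)}$, and (b) the weak advantage $1/p+\Omega(1/L)$ produced by the NW reconstructor is strong enough for the $\F_p$-list-decoder of \cite{IJKW10} to extract a $\bigl(1-O(\ell/k)\bigr)$-agreement hypothesis for $f$. Both of these are handled inside the black-box framework of \cite{CIKK17_agnostic}, so once they are invoked the remainder of the proof is essentially parameter tuning against the usefulness and tolerance bounds of the $\cAC^0[p]$ natural property.
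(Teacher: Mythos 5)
Your proposal is correct and follows essentially the same route as the paper: instantiate the CIKK play-to-lose framework with the $\F_p$-valued amplified function $\mathsf{Amp}^f_k$, use the $(0.15)$-tolerant natural property against $\cAC^0[p]$ as the distinguisher for $G^{\mathsf{Amp}^f_k}$, run the NW reconstruction followed by the Hadamard/Direct-Product list decoders of \cite{IJKW10}, and tune $L = 2^{O(\log^{4pd+2p}(n))}$ so the error and complexity bounds follow. The only (harmless) deviations are bookkeeping: you attribute the factor $p$ in the exponent to the usefulness regime of the $\cAC^0[p]$ property, whereas the paper attributes it to the overhead of the Goldreich--Levin reconstruction over $\F_p$ (the stated usefulness being $\exp(\Omega(n^{1/2d}))$), and with constant tolerance and $k=\Theta(1/\beta)$ your construction's error is in fact $O(\ell\beta)$ rather than the asserted $O(\ell^4\beta)$ --- both of which still sit within the claimed bounds.
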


The learning algorithm is the same as that of the case $\cAC^0[2]$ with some minor modifications. Firstly, the amplified function $\mathsf{Amp}^f_k : \{0,1\}^{nk} \times \F_p^k \rightarrow \F_p$ is different here and is given by $\mathsf{Amp}^f_k(x_1, \dots, x_k, b_1, \dots, b_k) = \sum_{i=1}^k f(x_i) \cdot b_i \pmod{p}$, where $x_1, \dots, x_k \in \{0,1\}^n$ and $b_1, \dots, b_k \in \F_p$. As can be seen, the output of the $\mathsf{Amp}^f_k$ is an element in $\F_p$, and thus combined with the set design (that is also constructible using polynomials over $\F_p$), the resulting truth table output by $G^{\mathsf{Amp}^f_k}$ would be a vector in $\F_p^L$. Since the tolerant natural property can now deal with such truth tables, the framework from Theorem \ref{thm:cikk_ac0[2]_learn} can now be applied here. In particular, the weak learner from Algorithm \ref{algo:iw01_reconstruct} is still used here, where oracle access to the non-Boolean function $\mathsf{Amp}^f_k$ is in fact simulated by making membership queries to the Boolean function $f$. 

Moreover, the stretch $L$ of the generator is set to $2^{O(\log^{4pd+2p}(n))}$, where $p$ comes up in the exponent because of the overhead of the Goldreich-Levin reconstruction in producing a binary string in $\{0,1\}^k$ as output (to represent the $f(x_1), \dots, f(x_k)$).

\paragraph*{Agnostic learning of a general circuit class $\sC$}
We end this section by stating the following general result for agnostically learning any circuit class $\sC$ that contains $\cAC^0[2]$, given any suitable strong tolerant natural property useful against $\sC$. 

\begin{lemma}[Tolerant natural properties imply agnostic learners]
  \label{lem:cikk_general_circuits}
    Let $\sR$ be a $(1/2-1/\poly(n))$-tolerant natural property useful against $\sC[\poly(n)]$, where $\sC$ is any circuit class that contains $\cAC^0[2]$.
    
    Then, for every $a \geq 1$ and $\gamma > 0$, there exists a randomised algorithm $\sB$, that given as input $\beta$ such that $ \frac{1}
    {n^{2a\gamma}} \leq \beta^* \leq \beta \leq 1/2$, and membership query access to $f$, such that $\dist(f,\sC[n^a]) = \beta^*$, outputs a hypothesis with error at most $O(n^\gamma/\log(n)) \cdot \beta$ over the uniform distribution, with probability at least 0.9. Moreover, $\sB$ runs in time $2^{O(n^\gamma)}$ and makes at most $2^{O(n^\gamma)}$ membership queries to $f$.

\end{lemma}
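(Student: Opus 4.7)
} My plan is to follow the blueprint of Theorem \ref{thm:cikk_ac0[2]_learn} essentially verbatim, replacing the specific Razborov--Smolensky--based tolerant natural property (\cref{lem:tolerant_ac0_property}) with the hypothesized $(1/2 - 1/\poly(n))$-tolerant natural property $\sR$ against $\sC[\poly(n)]$, and then choosing parameters so that the final error, query, and running-time bounds match the lemma's statement. Concretely, write the tolerant natural property as being useful against $\sC[\ell^{a'}]$ with tolerance $1/2 - 1/\ell^{b'}$ for some fixed constants $a',b'$ (so that ``functions $(1/2-1/\ell^{b'})$-close to $\sC[\ell^{a'}]$'' are rejected). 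Fix $\ell = \Theta(n^\gamma)$, set $L = 2^{\ell}$, pick a $(N,L,\log L)$ set design for $N = nk + k$, and build the NW generator $G^{\mathsf{Amp}^f_k} : \{0,1\}^m \to \{0,1\}^L$ on the amplified function from the proof of Theorem \ref{thm:cikk_ac0[2]_learn}. Pick the amplification parameter $k = \Theta(\log \ell / \beta)$ so that hardness amplification translates ``close to $\sC[n^a]$'' into ``$(1/2 - 1/\ell^{b'})$-close to $\sC[\poly(\ell)]$'' at the level of $\mathsf{Amp}^f_k$.

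Next I would prove the two subclaims that justify using $\sR$ as a distinguisher in the NW reconstruction. First, for every $g \in \sC[n^a]$ and every seed $z$, the truth table of the length-$\ell$ function $i \mapsto [G^{\mathsf{Amp}^g_k}(z)]_i = \mathsf{Amp}^g_k(z|_{S_i})$ is computable by a $\sC$-circuit of size $\poly(n,k) = \poly(\ell^{1/\gamma}, 1/\beta) \le \ell^{a'}$ (using $\beta \ge 1/n^{2a\gamma}$ and the fact that $\sC \supseteq \cAC^0[2]$ can simulate the set-design indexing and the $\mathsf{Amp}$ arithmetic). Second, if $g^\star \in \sC[n^a]$ attains $\dist(f, \sC[n^a]) = \beta^\star$, then by the standard hardness-amplification estimate $\dist(\mathsf{Amp}^f_k, \mathsf{Amp}^{g^\star}_k) \le \tfrac12 - \tfrac12(1-2\beta^\star)^k \le \tfrac12 - 1/\ell^{b'}$ by the choice of $k$. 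Together, these imply $G^{\mathsf{Amp}^f_k}(z) \notin \sR_L$ for almost every $z$, while by largeness $\sR_L$ accepts a random $L$-bit string with probability $\ge 1/\poly(L)$; hence the efficient algorithm deciding $\sR_L$ is a distinguisher in $D_L^{G^{\mathsf{Amp}^f_k}, 1/L}$.

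I would then feed this distinguisher into the uniform NW reconstruction (\cref{lem:iw01}) to extract, with membership queries to $\mathsf{Amp}^f_k$ (simulated by queries to $f$), a circuit that agrees with $\mathsf{Amp}^f_k$ on a $\tfrac12 + \Omega(1/L)$ fraction of inputs in time $\poly(L)$. Finally, invoking the local list-decoders for the Hadamard code (Goldreich--Levin) and the Direct Product code of \cite{IJKW10} --- exactly as in the proof sketch of Theorem \ref{thm:cikk_ac0[2]_learn} --- converts this weak hypothesis into a circuit $h$ agreeing with $f$ on a $1 - O(\ell/k)$ fraction of inputs. With $k = \Theta(\log \ell / \beta)$ and $\ell = \Theta(n^\gamma)$ this yields error $O(\ell\beta/\log\ell) = O(n^\gamma/\log n)\cdot \beta$, while the total running time and query count remain $\poly(n, L, k) = 2^{O(n^\gamma)}$, matching the statement.

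The main obstacle is the parameter balancing in the two subclaims: one must simultaneously ensure (i) that $\mathsf{Amp}^g_k$ viewed as a length-$L$ truth table still fits inside $\sC[\ell^{a'}]$ for the \emph{fixed} polynomial size against which $\sR$ is useful (which constrains $n$ and $k$ relative to $\ell$, and hence dictates the range $\beta^\star \ge 1/n^{2a\gamma}$), and (ii) that the amplified distance is pushed below the tolerance threshold $1/2 - 1/\ell^{b'}$ of $\sR$. The rest of the argument is routine in the sense that it reuses the black-box pieces already invoked in Theorem \ref{thm:cikk_ac0[2]_learn}: once the distinguisher is in place, the reconstruction and hardness-amplification steps do not care which class $\sC$ is being learned.
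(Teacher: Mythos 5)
There is a genuine gap at the step where you claim that the two subclaims ``together \ldots imply $G^{\mathsf{Amp}^f_k}(z) \notin \sR_L$ for almost every $z$.'' In the $\cAC^0[2]$ case this implication is obtained by an averaging argument: the expected fraction of indices $i$ on which $\mathsf{Amp}^f_k(z|_{S_i})$ disagrees with $\mathsf{Amp}^{g^\star}_k(z|_{S_i})$ equals $\dist(\mathsf{Amp}^f_k,\mathsf{Amp}^{g^\star}_k)$, and since that distance is tiny (at most $\tau(\ell)/4$ with $\tau(\ell)=1/\ell^3$), Markov's inequality gives that for most seeds the output truth table is $\tau(\ell)$-close to a $\sC$-computable function. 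In your setting the amplified distance is deliberately pushed up to $\tfrac12 - 1/\poly(\ell)$, and Markov is useless there: knowing only that the expected disagreement is $\tfrac12 - 2/\ell^{b'}$ gives no nontrivial bound on the probability (over $z$) that the per-seed disagreement stays below $\tfrac12 - 1/\ell^{b'}$; you would conclude rejection of $G$'s output only for an $O(1/\poly(\ell))$ fraction of seeds, which destroys the distinguishing advantage against the largeness of $\sR_L$. Nor can you appeal to concentration directly, because the coordinates $z|_{S_1},\dots,z|_{S_L}$ of the plain NW generator are correlated through the set-design intersections.

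This is exactly the point where the paper's proof departs from \cref{thm:cikk_ac0[2]_learn}: it replaces the plain generator by a modified one, $G'^{f}(z_1,z_2) = f(z_1|_{S_1}\oplus \mathsf{PI}(1,z_2)),\dots,f(z_1|_{S_L}\oplus \mathsf{PI}(L,z_2))$, where $\mathsf{PI}$ is a pairwise-independent generator. The pairwise independence of the shifted query points allows a Chebyshev-type concentration bound, showing that with high probability over the seed the output truth table has distance at most $\tfrac12 - 1/\poly(\ell)$ from $\sC[\poly(\ell)]$, which is what the $(1/2-1/\poly)$-tolerant property needs in order to reject it (one also has to re-check that the shifted truth table is still $\sC$-computable and that the reconstruction of \cref{lem:iw01} adapts to the modified generator, which is routine since $\mathsf{PI}$ is explicit and linear). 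Your parameter choices ($\ell=\Theta(n^\gamma)$, $k=\Theta(\log\ell/\beta)$, final error $O(\ell/k)\cdot$) and the rest of the pipeline are consistent with the paper, but without this modification of the generator the central ``distinguisher for almost every seed'' claim does not follow from the stated subclaims.
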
 

\begin{proof}[Proof Sketch]
The general outline is the same as before, however, the learner uses a variant of the NW-generator in order to use the stronger tolerance of $1/2-1/\poly(n)$ more fruitfully. Indeed, adding a pairwise-independent string generator to the NW-generator, helps us in using a stronger concentration bounds, and allows us to show that with high probability over the choice of the seed, the functions output by this modified generator have distance at most $1/2-1/\poly(\ell))$ from $\sC[n^a]$. 

In more detail, let $\mathsf{PI} : \Bits^\ell \times \Bits^{m'} \rightarrow \{0,1\}^n$, be a pairwise independent generator. Then the modified NW generator $G'^f : \Bits^m \times \Bits^{m'} \rightarrow \Bits^L$, is defined as $G'^{f}(z_1,z_2) = f(z \vert_{S_1} \oplus \mathsf{PI}(1,z_2)), \dots, f(z \vert_{S_L} \oplus \mathsf{PI}(L,z_2))$. Following this, the rest of the algorithm follows in a similar fashion as \cref{thm:cikk_ac0[2]_learn}. \end{proof}

\begin{remark}
While we state Theorem \ref{lem:cikk_general_circuits} for tolerance $\tau(n) = 1/2 - 1/\poly(n)$ and usefulness $\poly(n)$ as an example, \cite{CIKK17_agnostic} provide a more generalised form for any varying parameters of usefulness $u(n)$ and tolerance $1/2 - \tau'(n)$; the larger the value of $u(n)$, the smaller the running time and the smaller the value of $\tau'(n)$, the better the multiplicative factors in the hypothesis approximation. In the best case, with exponential usefulness $u$ and inverse exponential tolerance $\tau'$, the agnostic learner runs in $\poly(n)$ time and outputs a hypothesis with error $\Theta(\beta^*)$.
\end{remark}

\subsection{PAC-Verifiers for Circuit Classes}
\label{sec:proof_aip_ac0}

\begin{proof}[\proofof{\cref{thm:agnostic_ip_ac0}}]
For the agnostic interactive proof, we first state the following randomised query construction algorithm $\Hat{\sA}_{\mathsf{query}}$ that takes a string $w \in \{0,1\}^n$ as input and outputs a set of queries for $\mathsf{Amp}_k^f$, that has $w$ ``embedded" at a uniformly random query index. $\Hat{\sA}_{\mathsf{query}}$ builds on the query construction step in Algorithm \ref{algo:iw01_reconstruct}. Following this, we present the agnostic interactive proof for $\cAC^0[2]$ in Protocol \ref{algo:aip_ac0}, and finally prove its correctness.

\begin{algorithm}[!h]
\caption{Embedded NW-query construction algorithm $\Hat{\sA}_{\mathsf{query}}$ for $\mathsf{Amp}_k^{f}$, where $f$ is some $n$-variate Boolean function.}
\label{algo:embed_query_iw01}
    \vspace{0.1in}
    \nonl \textbf{Input:} A string $w \in \{0,1\}^n$. \\
    \vspace{0.1in}
    \textbf{The Algorithm:} \\
    \begin{minipage}[t]{15.75cm}        
    Let $S_1, \dots, S_L$ be an $(nk+k, L, \log L)$-NW set design over the universe $[m]$, where $m = O((nk)^2)$. Each $S_i$ is of size $nk+k$ that represents a query to $\mathsf{Amp}_k^f$, that in turn can be computed by making $k$ queries to $f$ on the first $nk$ bits. Thus, each $S_i$ can be partitioned into disjoint sets $S_{i1}, \dots, S_{ik}$ of size $n$ and the set $b(S_i)$ of size $k$. By the promise of the set design, for any $i \neq j$, and any $1 \leq k' \leq k, \vert S_{ik'} \cap S_j \vert \leq \log L$ (since $\vert S_i \cap S_j \vert \leq \log L$).

    \begin{enumerate}
        \item Pick $i$ uniformly at random from $[L]$.
        \item Let $Q = \sum_{u=1}^{i-1} \sum_{v=1}^k 2^{\vert S_{uv} \cap S_i \vert}$. Pick index $(u^*,v^*)$ with probability $\frac{2^{\vert S_{u^*v^*} \cap S_i \vert}}{Q}$, where $u^* \leq i-1$ and $v^* \leq k$.
        \item For each $1 \leq j \leq m$, with $j \notin S_i \cup S_{u^*v^*}$, pick a random bit $z_j$. For any $j \in S_{u^*v^*} \setminus S_i$, set $z_j$ to correspond with the respective bits from the input $w$. 
        \item For each $1 \leq u < i$ and $1 \leq v \leq k$, generate all the $2^{\vert S_{uv} \cap S_i \vert}$ strings that are consistent with the partial assignment to $z \vert_{S_{uv}}$ made in the Step 3. Call each such set of strings as $\hat{Q}_{uv}$.
        
        Let $\hat{Q} = \bigcup_{u,v} \hat{Q}_{uv} \subset \{0,1\}^n$ be the set of all queries to $f$.
        
        \item For each $1 \leq u < i$, generate all the strings in $b(S_u)$ that are consistent with the partial assignment to $z \vert_{b(S_u) \setminus S_i}$ in Step 3. Call these strings as $\hat{\mathcal{Q}}_{b(S_u)}$. 
        
        Let $\hat{P} = \bigcup_{u} \left( \hat{Q}_{u1} \times \dots \times \hat{Q}_{uk} \times \hat{\mathcal{Q}}_{b(S_u)} \right)$ be the set of all queries to $\mathsf{Amp}^f_k$.
    \end{enumerate}    
    Repeat these steps $\poly(L)$ many times using fresh randomness (similar to the final step of  Algorithm \ref{algo:iw01_reconstruct}). Finally, output the union $\sQ \subset \{0,1\}^n$ of all the $\hat{Q}$'s that are queries to $f$, as well as the union $\mathcal{P} \subset \{0,1\}^{nk+k}$ of all the $\hat{P}$'s that are queries to $\mathsf{Amp}^f_k$.
    \end{minipage}
\end{algorithm}

\begin{protocol}[!htp]
\caption{Agnostic interactive proof for $\cAC^0[2]$.}
\label{algo:aip_ac0}
    \nonl \textbf{Function input.} For function $f:\{0,1\}^n \rightarrow \{0,1\}$, the verifier has access to a random example oracle $\mathsf{Ex}(U_n,f)$ that returns samples from $(U_n, f(U_n))$. The prover has query access to $f$. \\
    \vspace{0.1in}
    \textbf{Explicit inputs.} $\delta > 0$ and an encoding (as a circuit) of the tolerant natural property $\sR$ that is useful against $\cAC^0[2]$-circuits of depth $d$ and size $u(n) = \mathsf{exp}(\Omega(n^{1/2d}))$ with tolerance $\tau(n) = 1/n^3$. \\
    \vspace{0.1in}
    \textbf{Promise on the functional input.} $\dist(f,\cAC^0[2]) = \beta^*$, for some $1/n^{\frac{\log(n)}{2}} < \beta^* < 1/2$. 
    \vspace{0.1in}
    \textbf{The Interactive Proof:}
    \begin{minipage}[t]{15.75cm}
    
    \begin{itemize}
        \item Let $M = \log^2(n)/2$. For each $r$, where $2 \leq r \leq M$, the interactive proof does the following. 
        \begin{enumerate}
            \item Set $\beta_r(n) = 1/2^r$. Let $G^{\mathsf{Amp}^f_k}$ be the NW-generator with stretch $L = 2^{O(\log^{4d+2} n)}$ and the amplification parameter $k = \tau(\ell)/4\beta_r$, where $\ell = \log (L)$. Note that $\tau(\ell) = 1/\ell^3 = 1/O(\log^{12d+6}(n))$. Finally, let $q_{\max}$ be the worst-case upper bound on the query complexity of the learning algorithm $\sB$ from Theorem \ref{thm:cikk_ac0[2]_learn}, for any value of $\beta_r$.

            \item The verifier draws fresh labeled samples $\{(x_{r1}, f(x_{r1})), \dots, (x_{rt}, f(x_{rt})) \}$ from the random example oracle distributed according to $(U_n, f(U_n))$, where $t$ is set to $40/\delta \cdot q_{\max} \cdot \ln(M/\delta)$. 

            \item For each $x_{rj}$ such that $j \in [t]$, the verifier does the following.
            \begin{itemize}
                \item Runs $\hat{\sA}_{\mathsf{query}}(x_{rj})$ to get a set of queries $\sQ_{rj} \subset \{0,1\}^n$ that has $x_{rj}$ embedded at a randomly chosen point and another set of queries $\mathcal{P}_{rj} \subset \{0,1\}^{nk+k}$.
                \item Sends $\sQ_{rj}$ to the Prover and receives the set $\Tilde{f}(\sQ_{rj})$ consisting answers $\{\Tilde{f}(w) \mid w \in \sQ_{rj}\}$. 
                \item Next, the verifier checks if $f(x_{rj}) = \Tilde{f}(x_{rj})$. If yes, the verifier runs the rest of the learner $\sB$ (or $\hat{\sB}$) using the natural property $\sR$, the query set $\mathcal{P}_{rj}$ and its answers on $\mathsf{Amp}^f_k$ computed using $\Tilde{f}(\sQ_{rj})$. Let $h^{rj} : \{0,1\}^n \rightarrow \{0,1\}$ be the output of the learner. If not, the verifier rejects and outputs `$\bot$'. 
            \end{itemize}

            \item The verifier picks a hypothesis $h^r$ uniformly at random from the set $\{h^{rj}\}_{j \in [t]}$ and adds it to the set of candidate hypothesis $H_{\mathsf{final}}$.
        \end{enumerate}

    \item The verifier picks $W_\mathsf{test} = O \left( \frac{1}{\ell^4 \beta_M^2} \cdot \ln \left( \frac{2M}{\delta} \right) \right) = n^{O(\log(n))} \log \left( \frac{2M}{\delta} \right)$ fresh labeled samples from the random example oracle. Next, for each candidate hypothesis circuit $h^r$ in $H_{\mathsf{final}}$, the verifier estimates the distance between $f$ and $h^r$ using these $W_{\mathsf{test}}$ samples, i.e., number of samples $x_j$, for $j \in [W_{\mathsf{test}}]$, such that $f(x_j) \neq h^r(x_j)$. 
    
    Finally, the verifier outputs the hypothesis $h^* : \{0,1\}^n \rightarrow \{0,1\}$ in $H_{\mathsf{final}}$ with the smallest estimated distance amongst all the candidate hypotheses.
    \end{itemize}
    \end{minipage}
\end{protocol}

\paragraph*{Completeness.} Firstly, observe that for every $\beta_r$ and every $j \in [t]$, the honest prover answers each query in each set $\sQ_{rj}$ correctly, i.e., $\Tilde{f}(\sQ_{rj}) = f(\sQ_{rj})$. In particular, for each $x_j$, the verifier gets the correct value $f(x_j)$ and thus, never rejects any query set. We first state the following lemma about the size of the query sets generated by the algorithm.

Now, suppose that for some $\mathcal{T}$, the true distance parameter $\beta^*$ lies in the interval between $\left( 1/2^{\mathcal{T}+1}, 1/2^{\mathcal{T}} \right]$. Then we have,
\begin{lemma}
    \label{lem:exists_good_hypothesis}
    With probability at least $1-\delta/2$, the hypothesis $h^{\mathcal{T}} \in H_{\mathsf{final}}$ has error at most $2 \ell^4 \beta^*$, where the probability is over the internal randomness of the verifier and samples taken from the random example oracle in the $\mathcal{T}^{\text{th}}$ round (i.e., when the generator is instantiated using distance parameter $1/2^{\mathcal{T}}$.
\end{lemma}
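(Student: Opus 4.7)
The plan relies on three ingredients: (i) the proximity of $\beta_{\mathcal{T}}$ to the true distance $\beta^*$, (ii) a distributional-equivalence claim for the embedded query construction, and (iii) the accuracy guarantee of the CIKK learner from Corollary~\ref{cor:query_max_size}, amplified to the desired confidence.

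First I would observe that by the choice of $\mathcal{T}$ we have $\beta^* \le \beta_{\mathcal{T}} = 1/2^{\mathcal{T}} \le 2\beta^*$. Hence $\beta_{\mathcal{T}}$ lies in the admissible input range $[\beta^*, 1/2]$ of the learner $\sB$, and any upper bound of the form $O(\ell^4)\cdot \beta_{\mathcal{T}}$ on the error of $\sB$'s output automatically gives an upper bound of $2\ell^4\beta^*$ in terms of $\beta^*$.

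Next, I would establish the key distributional claim: when $x_{\mathcal{T} j}\sim U_n$ is drawn uniformly and $\hat{\sA}_{\mathsf{query}}(x_{\mathcal{T} j})$ is executed, the joint distribution over the membership queries to $\mathsf{Amp}^f_k$ (and thus to $f$) is identical to that of the query-construction phase of $\sB$. The essential observation is that $\hat{\sA}_{\mathsf{query}}$ selects the pair $(u^*,v^*)$ with probability proportional to $2^{|S_{u^*v^*}\cap S_i|}$, which is precisely the marginal probability that a query in $\sA_{\mathsf{query}}$ lies in the bucket $S_{u^*v^*}$; once conditioned on this choice, fixing the coordinates of $z|_{S_{u^*v^*}\setminus S_i}$ to bits of $x_{\mathcal{T} j}$ produces a uniform string (since $x_{\mathcal{T} j}$ is uniform), while the remaining coordinates of $z$ are independently uniform by construction, so the induced seed $z$ is uniform and the downstream queries have the same distribution as those of $\sA_{\mathsf{query}}$. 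Because the honest prover returns $\tilde f(w)=f(w)$ on every $w\in\sQ_{\mathcal{T} j}$, the consistency check $f(x_{\mathcal{T} j})=\tilde f(x_{\mathcal{T} j})$ always succeeds and the subsequent simulation of $\sB$ proceeds on the true labels; consequently each $h^{\mathcal{T} j}$ is distributed exactly as the output of a direct invocation of $\sB(1^n,\beta_{\mathcal{T}})$ with membership-query access to $f$.

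Finally, I would apply Corollary~\ref{cor:query_max_size}, after boosting its success probability from $0.9$ to $1-\delta/2$ via $O(\log(1/\delta))$ internal repetitions (which only affects the sample and time complexity polylogarithmically), to conclude that each individual $h^{\mathcal{T} j}$ satisfies $\dist(f,h^{\mathcal{T} j})\le O(\log^{16d+8}(n))\cdot \beta_{\mathcal{T}} \le 2\ell^4\beta^*$ with probability at least $1-\delta/2$. Since $h^{\mathcal{T}}$ is chosen uniformly from the iid collection $\{h^{\mathcal{T} j}\}_{j\in[t]}$, its marginal law coincides with that of any single $h^{\mathcal{T} j}$, so the same bound holds for $h^{\mathcal{T}}$. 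The principal technical obstacle is the distributional-equivalence claim: one must verify that the weighted choice of $(u^*,v^*)$ together with the coordinate-embedding step faithfully reproduces the joint distribution over query sets produced by the standard NW-based sampler, despite the correlations introduced by the overlaps $S_{uv}\cap S_i$ across different $u,v$; all other steps are direct invocations of results already established in the paper.
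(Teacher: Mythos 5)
Your proposal is correct and follows the paper's route in its essential ingredients: the observation that $\beta^*\le\beta_{\mathcal{T}}\le 2\beta^*$, the claim that running the learner with the embedded query construction $\hat{\sA}_{\mathsf{query}}$ on a uniform example reproduces the query distribution of $\sA_{\mathsf{query}}$ (the paper states this tersely as ``$w$ itself is now being used as a source of randomness,'' and your more detailed justification — the seed restricted to $[m]\setminus S_i$ is uniform regardless of which block is filled by $w$ — is the right way to flesh it out), and the invocation of Corollary~\ref{cor:query_max_size} with the learner's confidence amplified beyond $0.9$ (the paper also does this implicitly, pushing each run to confidence $1-\delta/20$ via the standard amplification noted in Theorem~\ref{thm:cikk_ac0[2]_learn}). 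Where you differ is the final aggregation: you boost each individual run to confidence $1-\delta/2$ and conclude directly that the uniformly selected $h^{\mathcal{T}}$ inherits the same marginal guarantee, whereas the paper boosts each run only to $1-\delta/20$ and then applies a Hoeffding bound over the $t$ independent runs to show that at least a $(1-\delta/5)$-fraction of $\{h^{\mathcal{T}j}\}_{j\in[t]}$ are good, unioning the two failure events to get $\delta/4+\delta/5<\delta/2$. Both arguments are valid and suffice for the lemma as stated; your averaging argument is simpler, while the paper's ``most hypotheses are good'' formulation is the stronger statement that gets reused almost verbatim in the soundness analysis (where a cheating prover can corrupt a small fraction of the query sets), so the paper's choice is motivated by that later step rather than by necessity here.
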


\begin{proof}
    Let $\sB$ be the quasi-polynomial time agnostic learner using membership queries for $\cAC^0[2]$ over the uniform distribution from Theorem \ref{thm:cikk_ac0[2]_learn}. Moreover, let $\hat{\sB}$ be the learner obtained by sampling a uniformly random $w \in \{0,1\}^n$ and running $\hat{\sA}_{\mathsf{query}}(w)$ to generate the membership queries made to $\mathsf{Amp}^f_k$ in $\sB$, instead of $\sA_{\mathsf{query}}$. First, we observe that $\hat{\sB}$ also satisfies the guarantees of the learner $\sB$, because $w$ itself is now being used as a source of randomness in the query construction.
    
    From Corollary \ref{cor:query_max_size}, since every $\beta_r \geq 1/n^{\log(n)}$, we see that when $\ell = O(\log^{2d+1}(n \cdot n^{\log(n)}) = O(\log^{4d+2}(n)$ , for every $\beta_r \geq \beta^*$, $\hat{\sB}$ outputs a hypothesis whose error is at most $\ell^4 \beta_r$. In other words, for every $j \in [t]$, given the correct answers to the queries in the set $\sQ_{\mathcal{T}j}$, it outputs a hypothesis $h^{\mathcal{T}j}$ that errs on at most $\ell^4 \beta_\mathcal{T}$ fraction of the inputs, with probability at least $1-\delta/20$ over its internal randomness. 
    
    From our choice of $t$ that is larger than $O(1/\delta^2 \log(1/\delta))$, using a Hoeffding concentration bound, at least $(1-\delta/5)t$ of the hypotheses in $\{h_{\mathcal{T}j}\}_{j \in [t]}$ have error at most $\ell^4 \beta_{\mathcal{T}}$, with probability all but $\delta/4$. In other words, with probability at least $1-\delta/5$, a hypothesis that is chosen uniformly at random from the set $\{h_{\mathcal{T},j}\}_{j \in [t]}$ has error at most $\ell^4 \beta_\mathcal{T}$. Thus, the total failure probability that $h_\mathcal{T}$ is not a good hypothesis is at most $\delta/4 + \delta/5 < \delta/2$. Combined with the fact that $\beta_{\mathcal{T}} \leq 2\beta^*$, $h^{\mathcal{T}}$ has error at most $2 \ell^4 \beta^*$ and the claim follows.
\end{proof}
To summarise, using Lemma \ref{lem:exists_good_hypothesis}, we have established that with probability at least $1-\delta/2$, there exists a hypothesis in $H_{\mathsf{final}}$ that has error at most $2 \ell^4 \beta^*$.

Finally, we show that with probability at least $1-\delta/2$ over the choice of $W_{\mathsf{test}}$ fresh samples in the final step, either $h^\mathcal{T}$ or another hypothesis with error at most $4 \ell^4 \beta^*$ is output by the verifier. Overall, this shows that our agnostic interactive proof satisfies the completeness requirement, by showing that with probability at least $1-\delta$ over its internal randomness and the samples received during the interactive proof, the verifier outputs $h^*$ that has error at most $4 \ell^4 \beta^*$. 

For this, observe that the verifier outputs $h^*$ such that $\dist(h^*, f) > 4 \ell^4 \beta^*$, when either $h^{\mathcal{T}}$ has a large error estimate on the $W_{\mathsf{test}}$ samples, or there exists a hypothesis $h^r \in H_{\mathsf{final}}$ such that $\dist(f,h^r) > 4 \ell^4 \beta$ but still has a smaller error estimate on the $W_{\mathsf{test}}$ samples than $h^\mathcal{T}$.

In more detail, using Hoeffding's concentration bound, the estimated error between $h^\mathcal{T}$ and $f$ is larger than $3 \ell^4 \beta^*$, only with probability at most $\delta/2M$, since our choice of $W_{\mathsf{test}}$ ensures that the estimate is made over at least $O \left( \frac{1}{\ell^4 \beta_\mathcal{T}^2} \cdot \ln \left( \frac{2M}{\delta} \right) \right)$ many samples. Next, for any hypothesis $h^r$ and $f$ such that $\dist(f,h^r) > 4 \ell^4 \beta^*$, again using Hoeffding's bound, the estimated error of $h^r$ over the $W_{\mathsf{test}}$ samples is less than $3 \ell^4 \beta^*$, with probability at most $\delta/2M$. Since the number of such hypotheses is at most $M-1$, using a union bound, we see that the probability that the verifier outputs $h^*$ such that $\dist(h^*, f) > 4 \ell^4 \beta^*$ is at most $\delta/2M + (M-1) \cdot \delta/2M \leq \delta/2$. 

\paragraph*{Soundness.} Recall that the generator $G^{\mathsf{Amp}^f_k}$ is constructed using an $(nk+k, L, \log (L))$-NW set design $S_1, \dots, S_L$, over the universe $[m]$. Further, each $S_i$ can be partitioned into disjoint sets $S_{i1}, \dots, S_{ik}$ of size $n$ and $B_i$ of size $k$ (since the queries are made to $\mathsf{Amp}^f_k$). By the promise of the set design, for any $i \neq j$, and any $1 \leq k' \leq k, \vert b(S_i) \cap S_j \vert \leq \log(L)$ and moreover, $\vert S_{ik'} \cap S_j \vert \leq \log (L)$ (since $\vert S_i \cap S_j \vert \leq \log (L)$).

Consider a run of the query construction algorithm $\sA_{\mathsf{query}}$ from Algorithm \ref{algo:iw01_reconstruct} for $\mathsf{Amp}^f_k$. An important observation here is that for $i$ picked uniformly randomly from $[L]$, $\sA_{\mathsf{query}}$ generates random (possibly non-disjoint) subcubes $\sE_{1,1}, \dots, \sE_{1,k}, \dots, \sE_{(i-1),1}, \dots, \sE_{(i-1),k}$ and only makes subcube membership queries over them (i.e., queries each subcube completely in lexicographic order). Indeed, each $\sE_{u,v}$ is the subcube obtained by generating all strings of length $n$ that are consistent with the partial assignment obtained from the random bits set in Step 2. Clearly, $\sA_{\mathsf{query}}$ generates non-adaptive queries to $\mathsf{Amp}^f_k$, which in turn also results in non-adaptive queries to $f$. 

We first prove the following result that characterizes the marginal distribution of the $p^{\text{th}}$ query to $f$ generated by $\sA_{\mathsf{query}}$.

\begin{lemma}[The reconstruction query distribution lemma]
\label{lem:nw_query_marginal}
Suppose that $\sA_{\mathsf{query}}$ makes membership queries over subcubes $\sE_{1,1}, \dots, \sE_{(i-1),k}$ to $\mathsf{Amp}^f_k$, for some $i \leq L$, using an $(nk+k, L, \log(L))$-set design $(S_1, \dots, S_L)$. 

Let $\sQ$ be the query set for $f$ generated by $\sA_{\mathsf{query}}$ and $\sum_{u=1}^{i-1} \sum_{v=1}^k 2^{\vert S_{u,v} \cap S_i \vert}$ be its size. Then, for any query index $p \in \left[ \sum_{u=1}^{i-1} \sum_{v=1}^k 2^{\vert S_{u,v} \cap S_i \vert} \right]$, where $p \in \sE_{u,v}$ for some $u \leq i-1$ and $v \leq k$, and for any $w \in \{0,1\}^n$, the marginal distribution of $\sQ[p]$, the $p^{\text{th}}$-query, satisfies one of the two properties.
\begin{itemize}
    \item $\Pr_{\sA_{\mathsf{query}}}[\sQ[p] = w] = 0$, or
    \item $\Pr_{\sA_{\mathsf{query}}}[\sQ[p] = w] = \frac{1}{2^{n - \vert S_{u,v} \cap S_i \vert}}$.
\end{itemize}
\end{lemma}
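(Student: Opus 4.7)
The plan is to unpack the construction of $\sA_{\mathsf{query}}$ and pinpoint exactly which bits of $\sQ[p]$ are deterministic (fixed by the enumeration index $p$) versus genuinely random (inherited from the random seed $z$); the claimed marginal then falls out from the uniformity of $z\vert_{[m] \setminus S_i}$.

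First, I would fix the iteration index $i$ (given by the hypothesis of the lemma) and recall that $\sA_{\mathsf{query}}$ populates $z \in \{0,1\}^m$ by assigning independently uniform bits to every coordinate in $[m] \setminus S_i$, leaving $z\vert_{S_i}$ to be enumerated. For each $u < i$ and $v \le k$, the subcube $\sE_{u,v}$ is exactly the set of $n$-bit strings obtained by fixing $z\vert_{S_{u,v} \setminus S_i}$ to the random bits chosen in Step 2 and enumerating all $2^{\vert S_{u,v} \cap S_i \vert}$ possible assignments to $z\vert_{S_{u,v} \cap S_i}$. Consequently, every query index $p$ lying in $\sE_{u,v}$ corresponds bijectively to an enumeration string $a_p \in \{0,1\}^{\vert S_{u,v} \cap S_i \vert}$, and $\sQ[p]$ is the unique $n$-bit string $y$ with $y\vert_{S_{u,v} \cap S_i} = a_p$ and $y\vert_{S_{u,v} \setminus S_i} = z\vert_{S_{u,v} \setminus S_i}$.

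Second, I would case split on the target $w \in \{0,1\}^n$ with respect to the coordinates $S_{u,v} \cap S_i$. If $w\vert_{S_{u,v} \cap S_i} \neq a_p$, then $\sQ[p] \neq w$ deterministically, giving the first case with probability $0$. Otherwise, $\sQ[p] = w$ holds iff $z\vert_{S_{u,v} \setminus S_i} = w\vert_{S_{u,v} \setminus S_i}$; since $z\vert_{S_{u,v} \setminus S_i}$ is a subvector of the uniformly random $z\vert_{[m] \setminus S_i}$, its marginal is uniform on $\{0,1\}^{\vert S_{u,v} \setminus S_i \vert}$, producing the stated probability $1/2^{n - \vert S_{u,v} \cap S_i \vert}$.

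The main obstacle I anticipate is purely notational: establishing the bijection between each query index $p$ and a triple $(u, v, a_p)$, and keeping the roles of the seed randomness and the enumeration parameter cleanly separated across different $(u,v)$ pairs. Interestingly, the set-design property $\vert S_u \cap S_j \vert \le \log L$ plays no role in the marginal computation itself — the argument only uses the disjoint partition $S_u = S_{u,1} \sqcup \dots \sqcup S_{u,k} \sqcup B_u$ and the independence of the bits in $z\vert_{[m] \setminus S_i}$. Once the indexing is spelled out carefully, each case of the probability computation is immediate.
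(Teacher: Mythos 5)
Your proposal is correct, and it is genuinely more direct than the argument in the paper. You observe that, with $i$ fixed, the map from a query index $p$ to its subcube $\sE_{u,v}$ and enumeration string $a_p \in \{0,1\}^{\vert S_{u,v}\cap S_i\vert}$ is deterministic, so $\sQ[p]$ is the string whose bits on $S_{u,v}\cap S_i$ equal $a_p$ and whose bits on $S_{u,v}\setminus S_i$ equal $z\vert_{S_{u,v}\setminus S_i}$; since every coordinate of $z$ outside $S_i$ is an i.i.d.\ uniform bit, the marginal of $z\vert_{S_{u,v}\setminus S_i}$ is uniform on $\{0,1\}^{n-\vert S_{u,v}\cap S_i\vert}$, and the two cases of the lemma follow by comparing $w\vert_{S_{u,v}\cap S_i}$ with $a_p$. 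The paper instead runs an explicit chain-rule computation: it conditions successively on all the restrictions $\rho_1,\dots,\rho_N$ and $\sigma_1,\dots,\sigma_{i-1}$ generating the other subcubes, tracks the sets $\mathsf{cons}(r_1,\dots,r_{j-1})$ of seed-consistent completions forced by the overlaps in the set design, and verifies that the counting factors $2^{n-m_j}$ cancel to leave the same uniform marginal. For the lemma as stated (a single-query marginal) your shortcut is sound, and it makes transparent your correct side remark that the overlap bound $\vert S_u\cap S_j\vert\le\log L$ is irrelevant here; what the paper's heavier bookkeeping buys is an explicit handle on how the overlapping sets constrain the \emph{joint} query distribution, which is the object that actually matters in the soundness argument where the embedded construction $\hat{\sA}_{\mathsf{query}}$ must be indistinguishable from $\sA_{\mathsf{query}}$ (cf.\ the joint-distribution condition in \cref{def:embed2}), so the conditional structure developed in the paper's proof is reused there even though the marginal statement alone does not need it.
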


\textbf{Notation for the proof of Lemma \ref{lem:nw_query_marginal}:} For any variables $x_1, \dots, x_n$, let $\rho : [n] \rightarrow \{0,1,*\}$ be a \textit{restriction} over the variables obtained by setting $x_i = \rho(i), \forall i \in [n]$, where ``$\rho(i)=*$" denotes that $x_i$ is left free (i.e., unset). 

For any restriction $\rho$ and subcube $\sE \subseteq \{0,1\}^{n}$, we say that $\sE$ is generated by the restriction $\rho$, if it is the subcube given by the set of all $n$-length binary strings consistent with $\rho$ on the indices in $\rho^{-1}(\{0,1\})$. Note that the size of $\vert \sE \vert = 2^{\vert \rho^{-1} (*) \vert}$. For notational simplicity, for the rest of the proof we index the subcubes queried over by $\sA_{\mathsf{query}}$, with a \textit{single value} from $1 \leq j \leq N$ in a natural way, where $N = (i-1)k$. 

Let $z$ be the random bits used by $\sA_{\mathsf{query}}$ to generate the queries (i.e., the seed of the NW-generator). For any $1 \leq j \leq N$, define $\widehat{S_{j}}$ as the set $S_{j} \setminus S_i$. For any $1 \leq j \leq N$, the subcube $\sE_{j}$ is generated by the \textit{random restriction} $\rho_{j}: [n] \rightarrow \{0,1,*\}$, such that $\rho_{j}[a] = z_a$, if $a\in \widehat{S_{j}}$ and $\rho_{j}[a] = *$, if $a \in S_{j} \cap S_i$. In other words, the random bits in $\rho_{j}^{-1}(\{0,1\})$ generate a random subcube $\sE_{j}$. From the properties of the set design, we see that $\vert \rho_{j}^{-1}(*) \vert \leq \log (L)$ (in the degenerate case, $\vert \rho_{j}^{-1}(*) \vert = 0$ and the size of $\sE_{j}$ is just $1$). 

For any restriction $\rho$ and any string  $r \in \{0,1\}^{n- \vert \rho^{-1}(*) \vert}$, we abuse notation by using $\rho = r$ to denote the variables in $\rho^{-1}(\{0,1\})$ being set to $r$ in the natural way. For every $1 \leq j \leq N$, denote by $\Vec{\rho}^{(j)}$ as the tuple of restrictions $(\rho_j, \dots, \rho_{N})$. In particular, the tuple of all restrictions $(\rho_1, \dots, \rho_N)$ is given by $\Vec{\rho}^{(1)}$. For any tuple of strings $\Vec{r}^{(j)} = (r_j, \dots, r_{N})$, where each $r_a \in \{0,1\}^{n-\vert \rho_a^{-1}(*) \vert}$, we say that $\Vec{\rho}^{(j)} = \Vec{r}^{(j)}$, if $\rho_a = r_a$ for every $j \leq a \leq N$. 

In a similar vein, for every $1 \leq j \leq i-1$, let $\sigma_j : [k] \rightarrow \{0,1,*\}$ be random restriction generated by projecting $z$ over the set $B_j$ ($\sigma_j$ is set to `$*$' on every element in $B_j \setminus S_i$). It is worth recalling that the membership queries to $f$ are only made on the subcubes induced by $\rho_1, \dots, \rho_N$. For any tuple of strings $\Vec{u}^{(j)} = (u_j, \dots, u_{i-1})$, where each $u_a \in \{0,1\}^{k - \vert \sigma_a^{-1}(*) \vert}$, we say that $\Vec{\sigma}^{(j)} = \Vec{u}^{(j)}$, if $\sigma_a = r_a$ for every $j \leq a \leq i-1$. 

\begin{proof}[Proof of Lemma \ref{lem:nw_query_marginal}]
Let $\sE_{j^*} \in [N]$ be the subcube that contains the $p^{\text{th}}$ query and let $\sE_{j^*}$ be generated by the restriction $\rho_{j^*} : [n] \rightarrow \{0,1*\}$ (this is the same as $\sE_{uv}$ in the other scheme of indexing).

Recall that for each $1 \leq j \leq N$, $\sA_{\mathsf{query}}$ only makes subcube queries, by going over all possible assignments to $\rho_{j}^{-1}(*)$ in lexicographic order. Suppose that the variables in $\rho_{j^*}^{-1}(*)$ are fixed with the string $\alpha \in \{0,1\}^{\vert \rho_{j^*}^{-1}(*) \vert}$, corresponding to the location of $p$ within the lexicographic ordering of the elements in $\sE_{j^*}$. 

Now, if $w$ restricted to $\rho_{j^*}^{-1}(*)$ is not $\alpha$, then the $p^{\text{th}}$ query can never be $w$, i.e., $\Pr_{\sA_{\mathsf{query}}}[\sQ[p] = w] = 0$. Thus, for the rest of this proof, we assume that $w$ restricted to $\rho_{j^*}^{-1}(*)$ is $\alpha$, and refer to $w$ with the tuple $\left( w \vert_{\hat{S}_{j^*}}, \alpha \right)$, since $\rho_{j^*}^{-1}(*)$ contains the indices in $\hat{S_{j^*}} = S_{j^*} \cap S_i$. 

Let $\sQ^{-p}$ be the set of queries $\sQ \setminus \sQ[p]$. We have
\begin{equation*}
    \begin{split}    
    \Pr_{\sA_{\mathsf{query}}} [\sQ[p] = w] = \sum_{\mathcal{Y} \in \left( \{0,1\}^n \right)^{Q-1}} &\Pr_{z} [\sQ[p] = w \text{ and } \sQ^{-p} = \mathcal{Y}] \\
    = \sum_{\vec{r}^{(1)},\vec{u}^{(1)}} \Pr_{z} &\left[ w = \left( z \vert_{\hat{S}_{j^*}}, \alpha \right) \big\vert \vec{\rho}^{(1)} = \vec{r}^{(1)} \land \Vec{\sigma}^{(1)} = \vec{u}^{(1)} \right]
    \cdot \Pr_{z} \left[ \vec{\rho}^{(1)} = \vec{r}^{(1)} \land \Vec{\sigma}^{(1)} = \vec{u}^{(1)} \right] \\
    = \sum_{\vec{r}^{(1)},\vec{u}^{(1)}} \Pr_{z} &\left[ w = \left( z \vert_{\hat{S}_{j^*}}, \alpha \right) \big\vert \vec{\rho}^{(1)} = \vec{r}^{(1)} \land \Vec{\sigma}^{(1)} = \vec{u}^{(1)} \right] \\
    &\cdot \Pr_{z} \left[ \vec{\rho}^{(2)} = \vec{r}^{(2)} \land \Vec{\sigma}^{(1)} = \vec{u}^{(1)} \big\vert \rho_1 = r_1 \right]
    \cdot \Pr_{z} \left[ \rho_1 = r_1 \right] \\
    = \sum_{r_1 \in \{0,1\}^{n-m_1}} &\frac{1}{2^{n-m_1}} \cdot \sum_{\vec{r}^{(2)},\vec{u}^{(1)}} \Pr_{z} \left[ w = \left( z \vert_{\hat{S}_{j^*}}, \alpha \right) \big\vert \vec{\rho}^{(1)} = \vec{r}^{(1)} \land \Vec{\sigma}^{(1)} = \vec{u}^{(1)} \right] \\
    &\cdot \Pr_{z} \left[ \vec{\rho}^{(2)} = \vec{r}^{(2)} \land \Vec{\sigma}^{(1)} = \vec{u}^{(1)} \big\vert \rho_1 = r_1 \right] \\
    \end{split}
\end{equation*}
The second expression holds because the query algorithm only makes subcube queries, each of which is represented by a restrictions $\rho_1, \dots, \rho_N$, and is possibly dependent on $\sigma_1, \dots, \sigma_{i-1}$ by virtue of the set design. On the other hand, the final expression comes by noting that the restriction $\rho_1$ is the random string $z \vert_{\hat{S_1}}$ by definition, and thus, $m_1 = \vert S_1 \cap S_i \vert$.

Next, observe that by setting $\rho_1$ to $r_1$, in turn we fix $z \vert_{\hat{S_1}}$ to be $r_1$. Let $\mathsf{cons}(r_1)$ be the set of strings $r_2$ of length $n - \vert \hat{S_2} \vert$ such that when $\rho_1 = r_1$, $r_2$ is \textit{consistent} with the values in $z$ fixed previously by $r_1$ with respect to the underlying set design. In other words, the projection of $r_2$ onto the indices in $\widehat{S_1 \cap S_2} = (S_1 \cap S_2) \setminus S_i$ is equal to $z \vert_{\widehat{S_1 \cap S_2}} $. 

Let $m_2 = \vert S_2 \cap S_i \vert + \vert \widehat{S_1 \cap S_2} \vert$. We have
\begin{equation*}
    \begin{split}
        \sum_{r_1 \in \{0,1\}^{n-m_1}} \frac{1}{2^{n-m_1}} \cdot \sum_{\vec{r}^{(2)},\vec{u}^{(1)}} &\Pr_{z} \left[ w = \left( z \vert_{\hat{S}_{j^*}}, \alpha \right) \big\vert \vec{\rho}^{(1)} = \vec{r}^{(1)} \land \Vec{\sigma}^{(1)} = \vec{u}^{(1)} \right] \\
        &\cdot \Pr_{z} \left[ \vec{\rho}^{(2)} = \vec{r}^{(2)} \land \Vec{\sigma}^{(1)} = \vec{u}^{(1)} \big\vert \rho_1 = r_1 \right] \\
        = \sum_{r_1 \in \{0,1\}^{n-m_1}} \frac{1}{2^{n-m_1}} \cdot \sum_{\vec{r}^{(2)},\vec{u}^{(1)}} &\Pr_{z} \left[ w = \left( z \vert_{\hat{S}_{j^*}}, \alpha \right) \big\vert \vec{\rho}^{(1)} = \vec{r}^{(1)} \land \Vec{\sigma}^{(1)} = \vec{u}^{(1)} \right] \\
        &\cdot \Pr_{z} \left[ \vec{\rho}^{(3)} = \vec{r}^{(3)} \land \Vec{\sigma}^{(1)} = \vec{u}^{(1)} \big\vert \rho_1 = r_1 \land \rho_2 = r_2 \right] \cdot \Pr_{z} \left[ \rho_2 = r_2 \vert \rho_1 = r_1 \right] \\
    \end{split}
\end{equation*}

\begin{equation*}
\begin{split}
    = \sum_{r_1 \in \{0,1\}^{n-m_1}} \frac{1}{2^{n-m_1}} \cdot \sum_{r_2 \in \mathsf{cons}(r_1)} \frac{1}{2^{n-m_2}} \cdot \sum_{\vec{r}^{(3)}} &\Pr_{z} \left[ w = \left( z \vert_{\hat{S}_{j^*}}, \alpha \right) \big\vert \vec{\rho}^{(1)} = \vec{r}^{(1)} \land \Vec{\sigma}^{(1)} = \vec{u}^{(1)} \right] \\
    &\cdot \Pr_{z} \left[ \vec{\rho}^{(3)} = \vec{r}^{(3)} \land \Vec{\sigma}^{(1)} = \vec{u}^{(1)} \big\vert \rho_1 = r_1 \land \rho_2 = r_2 \right] \\
\end{split}
\end{equation*}
The last expression follows from the fact that for any $r_2 \in \mathsf{cons}(r_1)$, $\Pr_{z} \left[ \rho_2 = r_2 \vert \rho_1 = r_1 \right]$ is just the probability that the random bits in $z \vert_{\hat{S_2}} = r_2$, given that the $z \vert_{\hat{S_1} \cap \hat{S_2}}$ is already fixed using the corresponding bits from $r_1$. Given this, we see that $\Pr_{z} \left[ \rho_2 = r_2 \vert \rho_1 = r_1 \right] = 1/2^{n-m_2}$.\footnote{Note that since $S_1, \dots, S_k$ are all mutually disjoint, none of the $r_j$'s have any bits fixed by $r_1, \dots, r_{j-1}$. The same argument holds for $B_1$ with respect to $S_1, \dots, S_k$ too. We maintain full generality in the expressions above for ease of exposition. Indeed, for any $0 \leq a \leq i-2$ any string amongst $r_{ak+1}, \dots, r_{(a+1)k)}$ may have some bits fixed by any of the previous $r_j$'s or the $u_j$'s.}

We proceed in this iterative fashion, where for any $j \leq N$, we look at the effect of the restrictions $\rho_1, \dots, \rho_k, \sigma_1, \rho_{k+1}, \dots, \rho_{2k}, \sigma_2, \rho_{2k+1}, \dots, \rho_{\lfloor j/k \rfloor \cdot k}, \sigma_{\lfloor j/k \rfloor}, \rho_{\lfloor j/k \rfloor \cdot k + 1}, \dots, \rho_{j-1}$ on the $p^{\text{th}}$ query. The reason for this comes from the nature of the set design for  $\mathsf{Amp}^f_k$ - any set amongst $S_{ak+1} \dots, S_{(a+1)k}, B_a$ could have an overlap with any of the sets $S_1, \dots, S_{ak}$ or $B_1, \dots, B_{a-1}$.

Thus, we define for any $j$, $\mathsf{cons}(r_1, \dots, r_{j-1})$ as the set of strings $r_j$ such that the bits fixed in $z$ by $r_1, \dots, r_{j-1}$ and $u_1, \dots, u_{\lfloor j/k \rfloor}$ are consistent with the projection of $r_j$ onto the indices in $\hat{S}_j \cap \left( \left( \bigcup_{a=1}^{j-1} \hat{S}_a \right) \cup \left( \bigcup_{a=1}^{\lfloor j/k \rfloor} \hat{B}_a \right) \right)$. 

Moreover, define $m_j = \vert S_j \cap S_i \vert + \bigg\vert \left( \bigcup_{a=1}^{j-1} \hat{S}_a \right) \cap \hat{S}_j \bigg\vert + \bigg\vert \left( \bigcup_{a=1}^{\lfloor j/k \rfloor} \hat{B}_a \right) \cap \hat{S}_j \bigg\vert$, as the number of bits in $\hat{S_j}$ that overlap with the indices previously fixed in each $\hat{S}_a$. Similarly, we define $m'_j = \vert B_j \cap S_i \vert + \bigg\vert \left( \bigcup_{a=1}^{jk} \hat{S}_a \right) \cap \hat{B}_j \bigg\vert + \bigg\vert \left( \bigcup_{a=1}^{j-1} \hat{B}_a \right) \cap \hat{B}_j \bigg\vert$, as the number bits in $B_j$ that overlap with the indices previously fixed in each $\hat{B}_a$. Of course, in the scenario that all the bits in $z \vert_{\hat{S}_{j}}$ are fixed by $\rho_1, \dots, \rho_{j-1}$ and $\sigma_1, \dots, \sigma_{\lfloor j/k \rfloor}$, then $m_j = n$, and $\mathsf{cons}(r_1, \dots, r_{j-1})$ contains only one string. 

We repeat the above calculations $N-1$ times and finally, count over all mutually consistent ways $z$ can be fixed by $r_1, \dots, r_{N-1}$ and $u_1, \dots, u_{i-1}$. We have
\begin{equation*}
    \begin{split}
        \Pr_{\sA_{\mathsf{query}}} [\sQ[p] = w] = \prod_{a=1}^{N-1} \left( \frac{1}{2^{n-m_a}} \right) \cdot &\prod_{a=1}^{i-1} \left( \frac{1}{2^{k-m'_a}} \right) \sum_{\substack{r_1 \in \{0,1\}^{n-m_1} \\ r_2 \in \mathsf{cons}(r_1) \\ \cdots \\ r_N \in \mathsf{cons}(r_1, \dots, r_{N-1}, \\ u_1, \dots, u_{i-1})}} \Pr_{z} \left[ w = \left( z \vert_{\hat{S}_{j^*}}, \alpha \right) \big\vert \vec{\rho}^{(1)} = \vec{r}^{(1)} \land \Vec{\sigma}^{(1)} = \vec{u}^{(1)} \right] \\
        = \prod_{a=1}^{N-1} \left( \frac{1}{2^{n-m_a}} \right) \cdot &\prod_{a=1}^{i-1} \left( \frac{1}{2^{k-m'_a}} \right) \cdot \sum_{\substack{r_1 \in \{0,1\}^{n-m_1} \\ r_2 \in \mathsf{cons}(r_1) \\ \cdots \\ r_N \in \mathsf{cons}(r_1, \dots, r_{N-1}, u_1, \dots, u_{i-1})}} \frac{1}{2^{\vert \hat{S}_{j^*} \vert}} \\
    \end{split}
\end{equation*}
The last line follows from the fact that for each $r_{j^*} \in \{0,1\}^{n-m_{j^*}}$, the entirety of $z \vert_{S_j^* \setminus S_i}$ is fixed by $r_1, \dots, r_{j^*}$ and the probability that $w = z \vert_{S_{j^*} \setminus S_i}$ is uniform over $\{0,1\}^{\vert {S_{j^*} \setminus S_i} \vert}$.

Since for each $j$, the size of the set $\mathsf{cons}(r_1, \dots, r_j)$ is $2^{n-m_j}$, we have that the probability the $p^{\text{th}}$-query is the string $w$ is just $\frac{1}{2^{\vert S_{j^*} \setminus S_i \vert}}$, which is nothing but $\frac{1}{2^{n-\vert S_{j^*} \cap S_i \vert}}$.
\end{proof}

We have the following observation about the restrictions on strings sampled from the uniform distribution. 
\begin{proposition}
\label{prop:uniform_dist_subset}
    For any restriction $\rho : [n] \rightarrow \{0,1,*\}$ and any $w \in \{0,1\}^n$ sampled from $\sU_n$, the marginal distribution of $w \vert_{\rho}$ is the uniform distribution over strings of length $n-\vert \rho^{-1}(*) \vert$.
\end{proposition}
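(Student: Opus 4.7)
The plan is straightforward since this is essentially a routine fact about product distributions, recorded here as a short lemma to be combined with Lemma~\ref{lem:nw_query_marginal} in the soundness analysis. The key observation is that $\sU_n$ is the $n$-fold product of independent uniform Bernoulli variables, so any projection onto a fixed subset of coordinates is itself a product of independent uniform bits.

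Concretely, I would fix notation by letting $I = \rho^{-1}(\{0,1\}) \subseteq [n]$ denote the set of positions where $\rho$ is non-star, so that $\vert I \vert = n - \vert \rho^{-1}(*) \vert$, and interpret $w \vert_\rho$ as the tuple $(w_i)_{i \in I}$ obtained by restricting $w$ to the coordinates in $I$. Then for any target string $v \in \{0,1\}^{\vert I \vert}$, I would write
\[
\Pr_{w \sim \sU_n}\bigl[w \vert_\rho = v\bigr] = \Pr_{w \sim \sU_n}\bigl[\,\forall i \in I,\ w_i = v_i\,\bigr] = \prod_{i \in I} \Pr_{w_i \sim \sU_1}[w_i = v_i] = 2^{-\vert I \vert},
\]
where the middle equality uses that the coordinates of $w$ are mutually independent under $\sU_n$, and the last equality uses that each marginal is uniform on $\{0,1\}$.

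Since the probability $2^{-\vert I \vert}$ does not depend on the choice of $v$, the distribution of $w \vert_\rho$ is uniform on $\{0,1\}^{\vert I \vert} = \{0,1\}^{n - \vert \rho^{-1}(*) \vert}$, which is exactly the claim. There is no real obstacle in the argument; the proposition is essentially a bookkeeping statement that asserts the expected fact that conditioning on ``free'' coordinates being free yields a uniform marginal on them, and it will be invoked to match the marginal distribution computed in Lemma~\ref{lem:nw_query_marginal} with that of a genuine uniform random example embedded into the NW query set.
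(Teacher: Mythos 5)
Your proof is correct: the paper states this proposition as an immediate observation without proof, and your direct computation via independence of the coordinates of $w \sim \sU_n$ (projecting onto $I = \rho^{-1}(\{0,1\})$, which indeed has size $n - \vert \rho^{-1}(*)\vert$) is exactly the routine argument intended. Nothing is missing.
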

    
We now prove that Protocol \ref{algo:aip_ac0} satisfies the soundness requirements. To this end, consider round $r$ and uniform random example $x_{rj} \in \{0,1\}^n$, such that $\hat{\sA}_{\mathsf{query}}$ constructs query set $\sQ_{rj}$ of size $\vert \sQ_{rj} \vert$, that has $x_{rj}$ embedded at a uniformly random query index. The malicious prover can look at the query distribution of any $q^{\text{th}}$-query in $\sQ_{rj}$, where $q$ corresponds to a query in some subcube $\sE_{u,v}$ generated by the restriction $\rho_{u,v}$ and try to infer if this was the example embedded by the verifier in $\sQ_{rj}$. 

From Lemma \ref{lem:nw_query_marginal} and Proposition \ref{prop:uniform_dist_subset}, for every query index $q$ that corresponds to a query in the subcube $\sE_{u,v}$, the marginal distribution of the $q^{\text{th}}$-query made by $\sA_{\mathsf{query}}$ (the true query construction algorithm from Theorem \ref{thm:cikk_ac0[2]_learn}) is the same as that of the uniform distribution (i.e., the distribution over the samples) over $n$-length strings restricted on to $\rho_{u,v}$. This means that the malicious prover cannot distinguish whether the $q^{\text{th}}$-query came from a subcube induced by a restriction obtained from $\sA_{\mathsf{query}}$'s internal randomness, or came from a subcube induced by a restriction obtained by embedding $x_{rj}$. Thus, the prover has no advantage in picking a particular query with respect to which it cheats.\footnote{By cheating, we mean that the malicious prover answers correctly only on the query it believes to be as the embedded random example and lies on any other  query which the verifier does not check.} In other words, the best the malicious prover can do is pick a subcube $\sE_{u,v}$ at random with probability $\vert \sE_{u,v} \vert/\vert \sQ_{rj} \vert$. 

Furthermore, any point in $\sE_{u,v}$ has equal likelihood of being the embedded random example and thus, the malicious prover has no advantage over randomly guessing an element of $\sE_{u,v}$. Put together, the malicious prover gains no advantage over picking a query uniformly at random from $[\vert \sQ_{rj} \vert]$ and cheating with respect to it.

For any $r \leq M$, define $\mathcal{W}_r$ as the number of query sets $\sQ_{r,j}$ over which the prover responds dishonestly on at least one query in the set, i.e., let
\begin{equation*}
\mathcal{W}_r = \{j \in [t] \mid \exists w \in \sQ_{rj} \text{ such that } \Tilde{f}(w) \neq f(w) \}    
\end{equation*}
\noindent We finish our analysis by considering the following 2 cases based on the value of $\mathcal{W}_r$.
\begin{itemize}
    \item \textbf{Case 1:} There exists $r$ such that $\vert \mathcal{W}_r \vert > \ln(M/\delta) q_{\max}$. Then, for the $r^{\text{th}}$ round, we have
    \begin{equation*}
    \begin{split}
        \Pr \left[ \forall \sQ_{r,j} \in \mathcal{W}_r, \text{verifier accepts } \Tilde{f}(\sQ_{r,j}) \right] &= \Pi_{j \in \mathcal{W}_r} \left( 1 - \frac{1}{\vert \sQ_{rj} \vert} \right) \\
        &\leq \left( 1 - \frac{1}{q_{\max}} \right)^{\vert \mathcal{W}_r \vert} \leq e^{-\vert \mathcal{W}_r \vert/q_{\max}} \leq \delta/M \\
    \end{split}
    \end{equation*}
    The first inequality follows from Corollary \ref{cor:query_max_size} as $\vert \sQ_{rj} \vert \leq q_{\max}$, for all $r \leq M$ and $j \in [t]$. Using a union bound over the $M$ rounds, we see that with probability $1-\delta$, any prover gets caught by the verifier if it lies on at least $\ln(M/\delta) q_{\max}$ of the query sets in some round.
    
    \item \textbf{Case 2:} For every $r \leq M$, $\vert \mathcal{W}_r \vert \leq \ln(M/\delta) q_{\max}$. Again, let $\mathcal{T}$ be the value such that $\beta^*$ lies in the interval $\left( 1/2^{\mathcal{T}+1}, 1/2^{\mathcal{T}} \right]$. In particular, $\vert \mathcal{W}_{\mathcal{T}} \vert \leq \ln(M/\delta) q_{\max}$ and the fraction of query sets over which the prover cheats is at most $\vert \mathcal{W}_\mathcal{T} \vert / t \leq \delta/40$.

    For each query set $\sQ_{\mathcal{T}j}$ such that $j \in [t] \setminus \mathcal{W}_\mathcal{T}$, the prover answers correctly on every query in the set, and in turn the verifier computes a hypothesis $h^{\mathcal{T}j}$ that errs on at most $\ell^4 \beta_\mathcal{T}$ fraction of the inputs, with probability at least $1-\delta/20$ over its internal randomness. Let $\mathcal{Y}$ be the number of hypotheses in $\{h_{\mathcal{T}j}\}_{j \in [t] \setminus \mathcal{W}_\mathcal{T}}$ that have error at most $\ell^4 \beta_{\mathcal{T}}$ (i.e., are ``good" hypotheses). Again, since the number of faithful query sets (i.e., $t - \vert \mathcal{W}_\mathcal{T} \vert$) is larger than $O(1/\delta^2 \log(1/\delta))$, using a Hoeffding concentration bound, the event $\mathcal{Y} > (1-\delta/5)(t - \vert \mathcal{W}_\mathcal{T} \vert)$ occurs with probability all but $\delta/4$.
    
    In such a case, the probability that a uniformly randomly chosen hypothesis $h_{\mathcal{T},j^*}$ from the set $\{h_{\mathcal{T},j}\}_{j \in [t]}$ is good is given by
    \begin{equation*}
        \begin{split}
            \Pr_{\stackrel{j^* \sim [t]}{V}} [h_{\mathcal{T}j^*}\text{ is bad}] \leq &\Pr_{V} [h_{\mathcal{T}j^*} \text{ is bad} \mid j^* \in \mathcal{W}_\mathcal{T}] \cdot \Pr_{j^* \sim [t]}[j^* \in \mathcal{W}_\mathcal{T}] + \Pr_{V} [h_{\mathcal{T}j^*} \text{ is bad} \mid j^* \notin \mathcal{W}_\mathcal{T}] \\
            \leq &\Pr_{V} [h_{\mathcal{T}j^*} \text{ is bad} \mid j^* \in \mathcal{W}_\mathcal{T}] \cdot \Pr_{j^* \sim [t]}[j^* \in \mathcal{W}_\mathcal{T}] \\
            &+ \Pr_{V} [h_{\mathcal{T}j^*} \text{ is bad} \mid j^* \notin \mathcal{W}_\mathcal{T} \land \mathcal{Y} > (1-\delta/5)(t - \vert \mathcal{W}_r \vert)] \cdot \Pr_{V} [\mathcal{Y} > (1-\delta/5)(t - \vert \mathcal{W}_r \vert)] \\
            &+ \Pr_{V} [h_{\mathcal{T}j^*} \text{ is bad} \mid j^* \notin \mathcal{W}_\mathcal{T} \land \mathcal{Y} \leq (1-\delta/5)(t - \vert \mathcal{W}_r \vert)] \cdot \Pr_{V} [\mathcal{Y} \leq (1-\delta/5)(t - \vert \mathcal{W}_r \vert)] \\
            \leq &\delta/40 + \delta/5 + \delta/4 \\
            < &\delta/2
        \end{split}
    \end{equation*}
    The third inequality comes from the fact that if $\mathcal{Y} > (1-\delta/5)(t - \vert \mathcal{W}_\mathcal{T} \vert)$, then the number of bad hypotheses in $\{h_{\mathcal{T}j}\}_{j \in [t] \setminus \mathcal{W}_\mathcal{T}}$ is given by the random variable $t - \mathcal{W}_{\mathcal{T}} - \mathcal{Y} \leq (t - \mathcal{W}_{\mathcal{T}})\delta/5$, and that the probability of the event $\mathcal{Y} \leq (1-\delta/5)(t - \vert \mathcal{W}_r \vert)$ is at most $\delta/4$.
    
    Combined with the fact that $\beta_{\mathcal{T}} \leq 2\beta^*$,  with probability at least $1-\delta/2$, there exists a hypothesis in $H_{\mathsf{final}}$ that has error at most $2 \ell^4 \beta^*$. The rest of the analysis is identical to that of the completeness case, and we conclude that the verifier outputs a hypothesis that has error at most $4 \ell \beta^*$, with probability $1-\delta$ over its internal randomness.

    \paragraph*{Complexity of the interactive proof.} From Corollary \ref{cor:query_max_size}, we see that $q_{\max} = \poly(L) \leq 2^{O(\log^{4d+2} n)}$. Thus, the proof complexity and the query complexity of the PAC-verification protocol are both at most $n \cdot q_{\max} \cdot q_{\max} \cdot M \leq 2^{O(\log^{4d+2} n)}$. Similarly, the verifier running time is also given by $\poly(L) \cdot q_{\max} \cdot M + O(W_{\mathsf{test}} \cdot \poly(L)) \leq 2^{O(\log^{4d+2} n)}$. Finally, the honest prover just answers the queries to $f$ it is asked for, and thus runs in time at most $t \cdot q_{\max} \cdot M \leq 2^{O(\log^{4d+2} n)}$.
\end{itemize}
\end{proof}

\paragraph*{The case of $\cAC^0[p]$, for every prime $p > 2$} We construct an interactive proof for agnostic learning $\cAC^0[p]$, for any prime $p > 2$, follows an identical strategy as the case of $\cAC^0[2]$, by building off Theorem \ref{lem:cikk_ac0[p]_prime_learn}. We omit the proof details here.
\begin{theorem}
    \label{thm:aip_ac0_any_prime}
        For prime $p > 2$, let $f : \{0,1\}^n \rightarrow \{0,1\}$ be any function such that $\opt(f,\cAC^0[p]) > 1/n^{\frac{\log(n)}{2}}$. Then, $\cAC^0[p][\poly(n)]$ is $(O(\log^{16pd+8p}(n)),1/10)$-agnostic verifiable over the uniform distribution with the following parameters.
        \begin{equation*}
        \InteractiveParams{\random}{\exp{(O(\log^{4pd+2p} (n)))}}{\exp{(O(\log^{4pd+2p} (n)))}}{\proofonly}{\exp{(O(\log^{4pd+2p} (n)))}}{\exp{(O(\log^{4pd+2p} (n)))}}
        \end{equation*}
\end{theorem}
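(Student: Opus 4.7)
The plan is to mimic the protocol and analysis of \cref{thm:agnostic_ip_ac0} essentially verbatim, replacing the $\cAC^0[2]$ agnostic learner of \cref{thm:cikk_ac0[2]_learn} with the $\cAC^0[p]$ agnostic learner of \cref{lem:cikk_ac0[p]_prime_learn}, and verifying that the query-to-sample reduction still applies to the underlying NW-reconstruction. Concretely, I would instantiate \cref{algo:aip_ac0} with: (i) the amplified function for $\cAC^0[p]$, namely $\mathsf{Amp}^f_k(x_1,\dots,x_k,b_1,\dots,b_k)=\sum_{i=1}^k f(x_i)b_i \pmod{p}$ with $b_i\in\F_p$; (ii) an NW generator $G^{\mathsf{Amp}^f_k}$ whose stretch is the larger $L=2^{O(\log^{4pd+2p}(n))}$ required by the $\cAC^0[p]$ learner to absorb the Goldreich–Levin decoding overhead for $p$-ary outputs; and (iii) the $(0.15)$-tolerant natural property against $\cAC^0[p][u(n)]$ used by \cite{CIKK17_agnostic} as the distinguisher circuit $D$. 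The outer loop would still sweep $\beta_r=1/2^r$ for $r\le M=\log^2(n)/2$, collect one candidate hypothesis per round by running the verifier-side of the protocol, and then pick the best hypothesis via an empirical error test on $W_\mathsf{test}=n^{O(\log n)}\log(2M/\delta)$ fresh labeled samples.

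The first main step is to observe that the query pattern of the NW-reconstruction for $G^{\mathsf{Amp}^f_k}$ over $\F_p$ is structurally identical to the Boolean case: the algorithm $\sA_\mathsf{query}$ still picks a random index $i\in[L]$, fixes random bits $z_j$ outside $S_i$, and then enumerates all assignments consistent with the partial projection on each $S_u$ for $u<i$. The only change is that the values returned are elements of $\F_p$; however, each evaluation of $\mathsf{Amp}^f_k$ is still carried out by $k$ Boolean \emph{membership queries to $f$} on $n$-bit strings (plus the $\F_p$-valued $b_i$'s, which are part of the seed and need no oracle call). So the effective query set $\sQ\subseteq\Bits^n$ that the verifier must obtain from the prover is exactly the same type of object as in the $\cAC^0[2]$ case, indexed by the same subcube structure on $n$-variate inputs.

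The next step, and the main place where care is needed, is to re-run the reconstruction query distribution lemma (\cref{lem:nw_query_marginal}) in this setting. I would define the embedded query-construction algorithm $\hat{\sA}_\mathsf{query}$ analogously to \cref{algo:embed_query_iw01}: on input a random example $w\in\Bits^n$, pick a random subcube index $(u^\star,v^\star)$ with probability proportional to $2^{|S_{u^\star v^\star}\cap S_i|}$ and plant $w$ in the bits indexed by $S_{u^\star v^\star}\setminus S_i$. The proof of \cref{lem:nw_query_marginal} only used the set-design property $|S_i\cap S_j|\le\log L$ and the fact that the seed $z$ is uniform in $\Bits^m$; neither ingredient cares about the codomain of the NW generator, so the same telescoping conditional-probability calculation yields that the $q$-th marginal on a query in subcube $\sE_{u,v}$ is uniform on $\Bits^{n-|S_{u,v}\cap S_i|}$. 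Combined with \cref{prop:uniform_dist_subset} applied to the uniformly sampled $w$, a malicious prover cannot distinguish the embedded query from a freshly generated one, so the soundness analysis of \cref{thm:agnostic_ip_ac0} (Case 1 / Case 2 split on $|\mathcal{W}_r|$) transfers without change.

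The last step is the parameter accounting. Using \cref{lem:cikk_ac0[p]_prime_learn} in place of \cref{thm:cikk_ac0[2]_learn}, the worst-case query complexity of the underlying learner becomes $q_\mathsf{max}=2^{O(\log^{4pd+2p}(n))}$ and the hypothesis error, on the round where $\beta_r\le 2\beta^\star$, is at most $O(\log^{16pd+8p}(n))\cdot\beta^\star$. Plugging these into the same completeness argument (a good hypothesis exists in $H_\mathsf{final}$ with high probability) and into the same $W_\mathsf{test}$-based final selection argument (with $\beta_M\ge 1/n^{\log(n)/2}$, so $W_\mathsf{test}$ is quasi-polynomial), one obtains the claimed $(O(\log^{16pd+8p}(n)),1/10)$-PAC-verification guarantee and the claimed $\exp(O(\log^{4pd+2p}(n)))$ bounds on sample, communication, verifier-time, and prover-time. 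The only step I expect to require nontrivial checking is the marginal-distribution lemma in the $p$-ary setting: one must be sure that including $\F_p$-valued seed bits $b_i$ in $\mathsf{Amp}^f_k$ (which are \emph{not} queried but are part of $z$) does not shift the marginal of the Boolean queries fed to the prover; since these $b_i$'s are independent of the $n$-bit parts of the seed and enter only through the deterministic post-processing inside $\mathsf{Amp}^f_k$, the same conditioning argument goes through unchanged.
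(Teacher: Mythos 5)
Your proposal is correct and follows exactly the route the paper intends: the paper itself omits the details of \cref{thm:aip_ac0_any_prime}, stating only that it follows the identical strategy as \cref{thm:agnostic_ip_ac0} built on \cref{lem:cikk_ac0[p]_prime_learn}, which is precisely what you carry out. Your additional observations---that the queries to $f$ remain Boolean subcube queries, that \cref{lem:nw_query_marginal} depends only on the set-design overlap bound and the uniform seed rather than the codomain, and that the $\F_p$-valued seed coordinates do not affect the Boolean query marginals---are the right points to verify and are consistent with the paper's (implicit) argument.
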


\paragraph*{Tolerant Natural Properties Imply PAC-verification}
We finally show that for any typical circuit classes $\sC$, $(1/2-1/\poly(n))$-tolerant natural properties against $\sC$ imply PAC-verification protocols for $\sC$, that use sub-exponentially many random examples. This builds off the learner from \cref{lem:cikk_general_circuits}.

\begin{theorem}
    \label{thm:tolerant_natural_implies_aip}
    Let $\sR$ be a $(1/2-1/\poly(n))$-tolerant natural property useful against $\sC[\poly(n)]$, where $\sC$ is any circuit class that contains $\cAC^0[2]$.

    Then, for any $a \geq 1$ and $\gamma > 0$, such that $\dist(f,\sC[n^a]) = \beta^* \geq \frac{1}
    {n^{2a\gamma}}$, $\sC[\poly(n)]$ is $(O(n^\gamma), 1/10)$-agnostic verifiable over the uniform distribution with the following parameters.
    \begin{equation*}
    \InteractiveParams{\random}{2^{O(n^\gamma)}}{2^{O(n^\gamma)}}{\proofonly}{2^{O(n^\gamma)}}{2^{O(n^\gamma)}}
    \end{equation*}
\end{theorem}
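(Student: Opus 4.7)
The plan is to generalize the proof of Theorem~\ref{thm:agnostic_ip_ac0} almost verbatim, replacing the $\cAC^0[2]$-specific agnostic learner of Theorem~\ref{thm:cikk_ac0[2]_learn} with the generic agnostic learner of Lemma~\ref{lem:cikk_general_circuits}. The three pillars of that proof all transfer: (i) the iterated ``guess $\beta^*$'' loop over geometric scales of the unknown distance; (ii) the embedded NW-query construction of Algorithm~\ref{algo:embed_query_iw01}; and (iii) the marginal-distribution argument of Lemma~\ref{lem:nw_query_marginal} that underpins the query-to-sample reduction. The hypothesis class is more general but the architecture of the protocol is unchanged.

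Concretely, I would instantiate Protocol~\ref{algo:aip_ac0} with $M = O(a\gamma \log n)$ rounds, using guesses $\beta_r = 1/2^r$ covering the range $[n^{-2a\gamma},\, 1/2]$, so that some $\mathcal{T}$ satisfies $\beta_{\mathcal{T}} \in (\beta^*, 2\beta^*]$. For each $\beta_r$, the verifier invokes the learner from Lemma~\ref{lem:cikk_general_circuits} with parameters chosen so that the NW-stretch is $L = 2^{O(n^\gamma)}$; the worst-case query complexity $q_{\max}$ of the learner across all guesses is then also $2^{O(n^\gamma)}$. Drawing $t = \Theta(q_{\max}\log(M/\delta)/\delta)$ fresh random examples per round, the verifier embeds each example via $\hat{\sA}_{\mathsf{query}}$, forwards the resulting query sets to the prover, checks consistency on the embedded example, runs the remainder of the learner on the prover's answers, and adds a uniformly random candidate hypothesis to $H_{\mathsf{final}}$. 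A final empirical selection step, using $W_{\mathsf{test}} = n^{O(a\gamma)} \log(M/\delta)$ fresh samples, outputs the hypothesis in $H_{\mathsf{final}}$ with smallest estimated distance.

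For completeness and soundness, I would reuse the two-case analysis of Theorem~\ref{thm:agnostic_ip_ac0} essentially verbatim. Case~1 handles a prover that dishonestly answers more than $\ln(M/\delta)\cdot q_{\max}$ query sets in some round, which is caught with high probability by the embedded-example check together with a union bound over the $M$ rounds. Case~2 handles a prover that lies on few query sets in every round; here Lemma~\ref{lem:nw_query_marginal} implies that the prover gains no distinguishing advantage, so at round $\mathcal{T}$ a uniformly chosen candidate hypothesis is ``good'' (error $\leq O(n^\gamma/\log n)\cdot \beta_{\mathcal{T}}$) with high probability, and the test-sample selection step then returns a hypothesis with error $O(n^\gamma)\cdot \beta^*$. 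The claimed complexity bounds $2^{O(n^\gamma)}$ for proof length, verifier/prover time and sample complexity follow by multiplying the per-round cost by the $M = O(a\gamma \log n)$ rounds, which is absorbed into the exponent.

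The main obstacle is checking that the query-to-sample reduction still applies verbatim, because the learner in Lemma~\ref{lem:cikk_general_circuits} is built from the \emph{augmented} NW-generator $G'^f(z_1,z_2) = f(z_1|_{S_1} \oplus \mathsf{PI}(1,z_2)),\ldots, f(z_1|_{S_L} \oplus \mathsf{PI}(L,z_2))$, whereas Lemma~\ref{lem:nw_query_marginal} was stated for the bare NW-generator. I expect the marginal-distribution calculation to extend directly: for any query index $p$ lying in the $j$-th block of the hybrid, XORing with $\mathsf{PI}(j,z_2)$ for a freshly sampled $z_2$ acts as a uniform shift on each subcube and preserves the uniform marginal, so embedding an example $w$ only requires sampling $z_2$ first and then solving for the seed bits of $z_1$ on $S_j\setminus S_i$ that place $w$ at the chosen query location. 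Verifying this embeddability, and then re-running the conditional-sampling argument of Lemma~\ref{lem:nw_query_marginal} with $\mathsf{PI}$'s correlated shifts, is the one step that must be done carefully; once done, the remainder of the argument of Theorem~\ref{thm:agnostic_ip_ac0} transfers intact and yields the stated parameters.
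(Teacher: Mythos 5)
Your proposal matches the paper's proof: the paper also establishes this theorem by rerunning the protocol and two-case analysis of Theorem~\ref{thm:agnostic_ip_ac0} with the learner of Lemma~\ref{lem:cikk_general_circuits}, and the only point it singles out is exactly the one you flag — that the subcube queries are now shifted by the corresponding block of $\mathsf{PI}(u,z_2)$, and that the marginal/indistinguishability argument of Lemma~\ref{lem:nw_query_marginal} still applies to these shifted subcubes, so the query-to-sample reduction and soundness analysis carry over. Your treatment of the $\mathsf{PI}$-augmented generator (sample $z_2$, then solve for the seed bits of $z_1$ on $S_j \setminus S_i$ to embed $w$) is consistent with, and if anything slightly more explicit than, the paper's sketch.
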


\begin{proof}[Proof sketch]
    This interactive proof also follows a similar strategy as above. Worth highlighting here is that the query construction still generates all strings in a set of subcubes based on the NW-set design. However, each of the $2^{\vert \sE_{u,v} \vert}$ vectors in a subcube $\sE_{u,v}$, where $1 \leq u \leq i-1$ and $1 \leq v \leq k$, are added to the same output of $[\mathsf{PI}(u,z_2)]_v$ (i.e., the $v^{\text{th}}$ block of $\mathsf{PI}(u,z_2)$ of length $n$) to generate the actual set of queries to $\mathsf{Amp}^f_k$.

    The completeness analysis remains unchanged. On the other hand, even if the malicious prover looks at each query and finds the output of $\mathsf{PI}(u,z_2)$ that is added to it, by \cref{lem:nw_query_marginal}, the query marginal of each query is still indistinguishable from the distribution of the embedded random example projected onto the indices of the subcube corresponding to the query. Thus, the analysis extends to this case and shows that the interactive proof satisfies the soundness requirement.
\end{proof}


\section{PAC-Verification for Juntas}
\label{sec:pac-verify-juntas-formal}

In this section we will use our general agnostic verification techniques from \cref{sec:general-techniques} to obtain IPs for the class of juntas. We start by presenting an interactive protocol in \cref{sec:juntas-membership} for learning juntas with membership queries. Following that, in \cref{sec:juntas-random}, we prove the formal version of \cref{thm:our-results-juntas-informal}:

\begin{theorem}[Formal version of \cref{thm:our-results-juntas-informal}]\label{thm:formal-statement-junta-random}
     The class of $k$-juntas $\Junta_{n,k}$ is $(\eps,\delta)$-PAC Verifiable with respect to the uniform distribution with random examples with the following parameters:
    \begin{equation*}
        \InteractiveParams
        {\random}
        {2^k \cdot \poly(k, 1/\eps) \cdot \log^2(1/\delta)}
        {n\cdot\poly(2^k,1/\eps, \log(1/\delta))}
        {3}
        {\poly(n, 2^{k^2}, 1/\eps^k, 1/\delta^k)}
        {\poly(n^k, 2^k, 1/\eps, \log(1/\delta))}
        \enspace.
    \end{equation*}
\end{theorem}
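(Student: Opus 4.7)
The strategy is to first build a PAC-verifier for $k$-juntas that uses membership queries, by invoking the general tolerant-testing-to-PAC-verification transformation of \cref{thm:distance-estimator-implies-IP}, and then to convert it into a random-example-only verifier via the query-to-sample reduction of \cref{thm:membership-to-random-transform}. The honest prover will simply execute any agnostic $k$-junta learner (for instance, the one of \cite{HS23}) and forward its hypothesis, which accounts for the claimed prover running time of $\poly(n^k, 2^k, 1/\eps, \log(1/\delta))$.

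For the first step I would use a tolerant tester for $k$-juntas with query complexity $2^k \cdot \poly(k/\eps)$ as a black box, and invoke \cref{clm:distance-estimator-from-tolerant-test} to obtain an $(\eps, \delta)$-distance estimator $D$ for $k$-juntas with query complexity $2^k \cdot \poly(k/\eps) \cdot \log(1/\delta)$. Plugging $D$ into \cref{thm:distance-estimator-implies-IP} then yields an $(\eps, \delta)$-PAC-verifier for $\Junta_{n,k}$ in which the verifier makes this many membership queries to $f$, the prover sends only $O(nk + 2^k)$ bits (the $k$ relevant coordinates together with a truth table on them), and the honest prover merely runs a $k$-junta agnostic learner.

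Next, I would check that this membership-query verifier satisfies the two hypotheses of \cref{thm:membership-to-random-transform}. First, both the empirical error estimate and the tolerant tester query $f$ at points that are marginally uniform over $\{0,1\}^n$, so the uniform-marginal condition holds. Second, since the queries are drawn independently and uniformly, an ``embedded query set'' conditioned on a uniformly chosen query index being set to a given string $x$ is constructed simply by fixing that coordinate to $x$ and resampling the rest uniformly, which is trivially efficient. Applying \cref{thm:membership-to-random-transform} incurs a multiplicative $O(\log(1/\delta))$ blow-up in query complexity (on top of the $\log(1/\delta)$ already inside $D$), giving the target sample complexity $2^k \cdot \poly(k/\eps) \cdot \log^2(1/\delta)$. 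The remaining parameters (the $3$ rounds, communication of $n \cdot \poly(2^k, 1/\eps, \log(1/\delta))$, and verifier running time, where the $\poly(n, 2^{k^2}, 1/\eps^k, 1/\delta^k)$ factor absorbs the exhaustive enumeration over coordinate subsets inherent to the distance estimator) then follow directly from the parameters supplied by \cref{thm:distance-estimator-implies-IP,thm:membership-to-random-transform}.

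The main technical obstacle will be ensuring that the underlying tolerant tester for $k$-juntas can indeed be taken to issue queries whose individual marginals are uniform over $\{0,1\}^n$. Several classical junta testers rely on correlated query pairs (for example, pairs of inputs agreeing outside a random block of coordinates), which would violate the marginal-uniformity assumption needed by the black-box reduction. I would address this either by appealing to a tolerant tester whose queries are marginally uniform (such as one built from $k$-junta learners for uniform examples combined with exhaustive coordinate search, which is consistent with the $2^{k^2}/\eps^k$ dependence in the verifier's running time), or else by extending the query-to-sample reduction of \cref{sec:query-to-sample-reduction} to support the particular subcube-structured query distribution used by the junta tester, in the same spirit as the extension carried out for the Nisan--Wigderson reconstruction algorithm in \cref{sec:proof_aip_ac0}.
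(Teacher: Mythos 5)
Your high-level decomposition is exactly the paper's: use the \cite{DeMN19} tolerant tester, convert it to a distance estimator via \cref{clm:distance-estimator-from-tolerant-test}, plug it into \cref{thm:distance-estimator-implies-IP} to get a membership-query PAC-verifier (this is \cref{thm:our-results-junta-membership}), and then apply the query-to-sample reduction of \cref{thm:membership-to-random-transform}. However, the step you defer as ``the main technical obstacle'' is precisely the technical content of the paper's proof, and your proposal does not resolve it. The paper's argument is to open up the \cite{DeMN19} tester and show that all of its queries fall into two structured families: \emph{linear query patterns} (queries of the form $A\alpha_i + B\beta_i$ with $A$ a uniformly random matrix, arising from estimating $\Hastad_\eta f|_\rho(x)$ and $\E[f|_\rho]^3$) and \emph{NAE triples} (from the dictator test), plus plain uniform queries. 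For each family the paper proves an embedding lemma (\cref{lem:linear-query-pattern} and its NAE analogue): each query is marginally uniform \emph{and} one can efficiently resample the rest of the query set conditioned on a uniformly chosen query index being a given string. Only with these lemmas do the hypotheses of \cref{thm:membership-to-random-transform} hold; your claim that the tester's queries are ``drawn independently and uniformly'' so that conditional reconstruction is ``trivially'' resampling the rest is false for this tester, and without the pattern analysis the reduction cannot be invoked.

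Two further points. First, correlated queries do not by themselves ``violate the marginal-uniformity assumption'': the linear and NAE patterns are heavily correlated yet every individual query is uniform, and what the reduction needs is embeddability (conditional reconstruction), which is exactly what \cref{lem:linear-query-pattern} supplies. Second, your first proposed fix, replacing the tester by one whose queries are independent uniform samples, is problematic: such a tester would be a sample-based tolerant tester for juntas with complexity independent of $n$, at which point the verifier would need no prover for the estimation step, and no such object is provided (or needed) here; your second fix, extending the reduction to subcube-structured queries as in \cref{sec:proof_aip_ac0}, is closer in spirit but is also not what is required, since the existing framework of \cref{thm:membership-to-random-transform} already covers the \cite{DeMN19} query patterns once the linear/NAE structure is identified. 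So the missing idea is the concrete analysis of the tester's query pattern and the accompanying embedding lemmas; everything else in your plan matches the paper.
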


\subsection{PAC-Verification of Juntas with Membership Queries}\label{sec:juntas-membership}

Our basic plan for proving \cref{thm:our-results-junta-membership} will be to show the existence of a distance estimator for the class $\Junta_{n,k}$, and then apply our general IP from \cref{thm:distance-estimator-implies-IP}.

\begin{theorem}\label{thm:our-results-junta-membership}
     The class of $k$-juntas $\Junta_{n,k}$ is $(\eps,\delta)$-PAC Verifiable with respect to the uniform distribution with membership queries with the following parameters:

     \begin{equation*}
         \InteractiveParams
         {\membership}
         {2^k \cdot \poly(k, 1/\eps) \cdot \log(1/\delta)}
         {k \log(n) + 2^k}
         {\proofonly}
         {\poly(n, 2^{k^2}, 1/\eps^k, 1/\delta^k)}
         {\poly(n^k, 2^k, 1/\eps, \log(1/\delta))}
         \enspace.
     \end{equation*}
 \end{theorem}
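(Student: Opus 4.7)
The plan is to instantiate the general distance-estimator-to-IP transformation of \cref{thm:distance-estimator-implies-IP} with a suitable distance estimator for $\Junta_{n,k}$. Since we do not have one off the shelf, my first step is to build a distance estimator for $k$-juntas with query complexity $2^k \cdot \poly(k, 1/\eps) \cdot \log(1/\delta)$ from a tolerant junta tester. I would invoke the $(c_u, c_\ell)$-tolerant tester for $k$-juntas from the line of work on junta testing (e.g., \cite{DeMN19,CNY23,HS23}), which for any parameter gap of $\eps$ achieves query complexity $2^k \cdot \poly(k, 1/\eps)$. Combined with \cref{clm:distance-estimator-from-tolerant-test}, this yields an $(\eps, \delta)$-distance estimator for $\Junta_{n,k}$ with query complexity $2^k \cdot \poly(k, 1/\eps) \cdot \log(1/\delta)$, absorbing the additional $\log(1/\eps)$ factor from binary search into the $\poly(k, 1/\eps)$ term.

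With this estimator in hand, the second step is to plug it into \cref{thm:distance-estimator-implies-IP}. The honest prover simply runs any $k$-junta learner (for instance the agnostic learner of \cite{HS23}) to produce a hypothesis $h \in \Junta_{n,k}$ and sends it to the verifier. The verifier then (i) estimates $\dist(f,h)$ up to $\pm \eps/6$ using $O(\log(1/\delta)/\eps^2)$ random examples, (ii) runs the distance estimator on $f$ with parameters $(\eps/6, \delta/2)$, and (iii) accepts $h$ iff the two estimates are within $2\eps/3$, exactly as in \cref{alg:tolerant-test-to-IP-for-learning}. Completeness and soundness follow directly from that of the generic transformation.

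It remains to verify the parameter bounds. The communication complexity is dominated by the prover's message specifying $h$: $k \log(n)$ bits to name the $k$ relevant coordinates and $2^k$ bits for its truth table, matching $k \log(n) + 2^k$. The prover's running time is that of a $k$-junta learning algorithm, which runs in $\poly(n^k, 2^k, 1/\eps, \log(1/\delta))$. The verifier's running time is dominated by executing the distance estimator and the final comparison; tracking constants through the binary search of \cref{clm:distance-estimator-from-tolerant-test} and through the underlying tolerant tester for juntas yields the claimed $\poly(n, 2^{k^2}, 1/\eps^k, 1/\delta^k)$ bound. The protocol has $3$ messages (prover sends $h$, protocol is then proof-only with verifier running the estimator locally), so it is correctly labeled as $\proofonly$.

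The main obstacle I anticipate is simply confirming that the tolerant tester we cite has exactly the $2^k \cdot \poly(k,1/\eps)$ form stated, since this determines the verifier's query complexity directly; once that is pinned down, the rest of the argument is a mechanical application of the transformations from \cref{sec:general-techniques}. No new ideas specific to juntas are needed at this stage; the work for the random-sample version (\cref{thm:formal-statement-junta-random}) will instead require applying the query-to-sample reduction of \cref{sec:query-to-sample-reduction} and checking that the queries of the junta distance estimator have the required marginal and embeddability structure.
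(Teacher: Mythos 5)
Your proposal is correct and follows essentially the same route as the paper: invoke the $(c_u,c_\ell)$-tolerant junta tester of \cite{DeMN19} with query complexity $2^k\cdot\poly(k,1/|c_u-c_\ell|)$, convert it to an $(\eps,\delta)$-distance estimator via \cref{clm:distance-estimator-from-tolerant-test}, and plug that into \cref{thm:distance-estimator-implies-IP}, with the honest prover running the agnostic junta learner of \cite{HS23}. The only detail the paper adds is noting that the tester's restriction $c_u < 1/2$ is harmless because $\Junta_{n,k}$ is closed under complement, so every $f$ has distance at most $1/2$ from the class.
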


\begin{proof}[\proofof{\cref{thm:our-results-junta-membership}}]

\cite{DeMN19} show the existence of $(c_u, c_\ell)$-noise tolerant test with query complexity $2^k\cdot \poly(k, 1/\abs{c_u-c_\ell})$ for every $1/2 > c_u > c_\ell \ge 0$. From \cref{clm:distance-estimator-from-tolerant-test} it immediately follows that the class of $k$-juntas $\Junta_{n,k}$ has an $(\eps, \delta)$-distance estimator with query complexity $q(\eps,\delta) = 2^k\cdot\poly(k,1/\eps)\cdot\log(1/\delta)$. Indeed the constraint $c_u < 1/2$ is not an issue for transforming this test into a distance estimator since $\Junta_{n,k}$ is closed under complement and any function $f$ has distance at most $1/2$ from this class.

Now that we have an $(\eps, \delta)$-distance estimator for $k$-juntas, applying \cref{thm:distance-estimator-implies-IP} will give an IP for the class $\Junta_{n,k}$ with the same query complexity $q(\eps/6, \delta/2)$, with additional $\Chernoff(\eps/6, \delta/2)$ random samples. The verifier's running time is inherited from the distance estimator of \cite{DeMN19}, and the honest prover's running time is determined by running a learning algorithm as discussed in \cite{HS23}.
\end{proof}

\subsection{PAC-Verification of Juntas with Random Examples}\label{sec:juntas-random}
Our IP construction for agnostically verifying class of juntas uses the tolerant test for this class introduced in \cite{DeMN19}. Our goal is to transform this protocol into a protocol where verifier uses only random examples. Thus we need to argue that the queries of the tolerant test admit the property of ``constructing query set around a random query point'' (in the sense of \cref{thm:membership-to-random-transform}).

In this section we define two types of query patterns for which it is possible to construct membership query sets, and we further argue that all the queries of the junta's tolerant test are of these form. 

\begin{definition}[Linear Query Pattern]\label{def:linear-query-pattern}
Let $A\sim \Field_2^{n\times k}$ be a uniformly random matrix, and let $B\in \Field_2^{n\times t}$. Then, we say query set $Q=\{q_1,\dots, q_m\}$ is \emph{linear} if for each $i\in [m]$ there exists $\alpha_i \neq 0 \in \Field_2^k$ and $\beta_i\in\Field_2^t$ such that $q_i = A\alpha_i + B\beta_i$.
\end{definition}

It is easy to see that individual queries in a linear query pattern are uniformly distributed. Next, we want to show that we can hide a known sample inside a linear query pattern while preserving the linear dependence between the queries. Note that union of such query patterns also maintains this hiding property.

\begin{lemma}\label{lem:linear-query-pattern}
    For any linear query pattern, it is possible to construct a query set around a random query.
\end{lemma}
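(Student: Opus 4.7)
The plan is to show that one can sample the random matrix $A$ conditioned on the $j$-th query being a prescribed value $w$, and then reconstruct the whole query set deterministically from the resulting $A$. More concretely, I would proceed as follows.

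First, I would sample the index $j \sim [m]$ uniformly at random and fix the target equation $A\alpha_j + B\beta_j = w$, equivalently $A\alpha_j = w - B\beta_j \eqqcolon v$. The heart of the construction is then to sample $A$ uniformly from $\F_2^{n\times k}$ conditioned on $A\alpha_j = v$. Since $\alpha_j \neq 0$, I would extend $\alpha_j$ to a basis $\alpha_j, e_2, \dots, e_k$ of $\F_2^k$ (concretely, pick any $k-1$ vectors that together with $\alpha_j$ are linearly independent). In this basis, the matrix $A$ is determined by the $k$ vectors $A\alpha_j, Ae_2, \dots, Ae_k \in \F_2^n$. I set $A\alpha_j \coloneqq v$ and sample each of $Ae_2, \dots, Ae_k$ independently and uniformly from $\F_2^n$; applying the inverse change of basis recovers the matrix $A$ itself. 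Finally, I compute $q_i = A\alpha_i + B\beta_i$ for every $i \in [m]$ to produce the full embedded query set.

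The correctness argument has two pieces. For marginal uniformity, the map $A \mapsto A\alpha_i$ from $\F_2^{n\times k}$ to $\F_2^n$ is a surjective linear map whenever $\alpha_i \neq 0$ (the fibers all have size $2^{n(k-1)}$), so when $A$ is drawn uniformly, $A\alpha_i$ is uniform on $\F_2^n$, and hence so is $q_i = A\alpha_i + B\beta_i$ since $B\beta_i$ is a fixed offset. This confirms that linear query patterns satisfy the marginal-distribution prerequisite required by \cref{thm:membership-to-random-transform}. For the conditional embedding, the same surjectivity shows that for each fixed $v \in \F_2^n$, the coset $\{A : A\alpha_j = v\}$ has size $2^{n(k-1)}$, and the basis-extension sampling procedure above outputs a uniformly random element of this coset (independence of the coordinates in the extended basis translates directly to uniformity on the coset). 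Thus the joint distribution of the embedded query set agrees exactly with the distribution of a genuinely random query set conditioned on $q_j = w$.

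The only subtle point is that computing the basis extension of $\alpha_j$ and performing the inverse change-of-basis must be done efficiently; this is routine Gaussian elimination over $\F_2$, running in $\poly(n,k)$ time, so it fits the efficiency requirement implicit in ``constructing a query set.'' I do not anticipate a genuine obstacle here. If the definition of linear query pattern is ever extended to allow $\alpha_i = 0$ or to couple $A$ across several query sets, one would have to be careful about degenerate fibers, but under the stated definition (with $\alpha_i \neq 0$ and $A$ freshly random) this conditional-sampling argument goes through cleanly and immediately yields the lemma.
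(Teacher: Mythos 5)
Your proposal is correct and follows essentially the same route as the paper: both sample $A$ uniformly at random subject to the single linear constraint $A\alpha_j = w - B\beta_j$, exploiting $\alpha_j \neq 0$. The paper does this by fixing one column $a_\ell$ (for a nonzero coordinate $\ell$ of $\alpha_j$) in terms of the freely sampled remaining columns, which is just the concrete special case of your change-of-basis argument, so the two proofs coincide in substance.
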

\begin{proof}
    Let $Q = \{q_1,\dots,q_m\}$ be a linear query pattern such that $q_i = A\alpha_i + B\beta_i$ 
 for all $i$ where $\alpha_i\in\Field_2^k$, $\beta_i\in\Field_2^t$ are the corresponding values as in \cref{def:linear-query-pattern}. 
    We follow our general template (described in \cref{thm:membership-to-random-transform}) for constructing such a query set. Pick a uniformly random $j\sim[m]$ and a uniformly random vector $r\sim\Field_2^n$ which corresponds to the $j$-th query; that is $q_j = r$.
    
    Let $L \subseteq [k]$ be the coordinates of $\alpha_j$ that are non-zero. $L$ must be non-empty since $\alpha_j \neq 0$. Choose an arbitrary element $\ell \in L$. To construct $A$, sample all of its columns uniformly at random except the $\ell$-th column. Set the $\ell$-th column of $A$ as the following
    \[
    a_\ell = r + B\beta_j + \sum_{i\in L\setminus\{\ell\}} a_i.
    \]
    It is easy to check that $q_j = r = A \alpha_j + B\beta_j$. Notice that $a_\ell$ is a uniformly random vector. Hence $A$ is a uniformly random matrix and the query set $Q$ is valid.
\end{proof}

We define one more query pattern used for dictator testing (Chapter 7 of \cite{odonnell_analysis_2014}). The $i^{\text{th}}$ dictator is a Boolean function which simply outputs the $i^{\text{th}}$ bit and ignores the rest of the input. This is needed in our conversion from membership queries to random for the class of juntas. 

\begin{definition}[NAE Query Pattern]
    The query set consisting of three vectors $x, y, z \in \Bits^n$ is called NAE if for all $i\in[n]$ the bits $(x_i, y_i, z_i)$ are drawn uniformly at random from $\Bits^3\setminus\{(0,0,0), (1,1,1)\}$.
\end{definition}

\begin{lemma}
    It is possible to construct a membership query set for a NAE query pattern.
\end{lemma}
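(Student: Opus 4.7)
The plan is to exploit the fact that the NAE distribution is a product distribution across coordinates: the triples $(x_i, y_i, z_i)$ are independent across $i \in [n]$, and each is uniform over the six NAE triples in $\{0,1\}^3 \setminus \{(0,0,0),(1,1,1)\}$. So any conditional sampling can be done coordinatewise, which will make the construction efficient and clearly valid.

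First I would verify the marginal on each of the three vectors is the uniform distribution on $\{0,1\}^n$, which is what the framework of \cref{thm:membership-to-random-transform} requires. Coordinatewise, the marginal of $x_i$ is $\Pr[x_i = 0] = |\{(0,0,1),(0,1,0),(0,1,1)\}|/6 = 1/2$, and similarly for $y_i$ and $z_i$; by independence across coordinates, $x$, $y$, and $z$ are each uniform on $\{0,1\}^n$. This means hiding a fresh random example $w$ as one of the three vectors is consistent with the query pattern's individual marginal distribution.

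Next, given the random example $w$, I would pick a random index $j \sim [3]$ and set the $j$-th query equal to $w$. By symmetry it suffices to describe the case $j=1$, i.e.\ $x := w$. I then construct $y$ and $z$ coordinatewise by sampling, independently for each $i$, the pair $(y_i, z_i)$ from the conditional distribution of $(y_i, z_i)$ given $x_i = w_i$ under the uniform NAE distribution. Explicitly, if $w_i = 0$ then $(y_i, z_i)$ is drawn uniformly from $\{(0,1),(1,0),(1,1)\}$, and if $w_i = 1$ then $(y_i, z_i)$ is drawn uniformly from $\{(0,0),(0,1),(1,0)\}$. Since the joint distribution of a product distribution is reconstructed correctly by sampling one coordinate and then conditionally the remaining ones, the resulting triple $(w, y, z)$ is distributed exactly as a NAE query set conditioned on $x = w$.

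The main (and essentially only) obstacle to check is that this conditional resampling produces the genuine NAE distribution rather than some approximation; this reduces to the coordinatewise verification above together with independence across $i$, both of which are immediate from the definition. No estimation or probabilistic argument is needed: the construction runs in $O(n)$ time and is exact, so it directly satisfies the ``construct a membership query set conditioned on $x$'' hypothesis of \cref{thm:membership-to-random-transform}.
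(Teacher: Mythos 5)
Your proof is correct, and it takes a genuinely cleaner route than the paper at the one step that matters. Both arguments follow the same template: hide the random example $w$ as a uniformly chosen one of the three query points and then fill in the remaining two. The paper fills them in by an ad hoc rule: it draws the second vector as a fresh uniform string and sets each coordinate of the third to the complement of a uniformly chosen one of the other two. You instead sample, independently per coordinate, the other two bits from the exact conditional of the uniform NAE distribution given the hidden bit. Your version buys exactness: since the hidden point is marginally uniform and you use the true conditional, the joint law of the constructed triple is precisely the uniform-over-six-triples NAE distribution of the definition, which is exactly what the joint-distribution condition of embeddability (condition 3 in \cref{sec:app_embed}, as used by \cref{thm:membership-to-random-transform}) asks for, and the hidden index is uniform by construction. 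The paper's complement rule, by contrast, guarantees NAE support and uniform marginals for each query point, but a per-coordinate computation shows it gives the two triples in which the source vectors agree probability $1/4$ each and the remaining four probability $1/8$ each, i.e.\ a non-uniform joint over the six NAE triples; so your conditional-sampling construction is the one that literally reproduces the tester's query distribution, at the same $O(n)$ cost.
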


\begin{proof}
    First, notice that each of the three query points is uniformly distributed. To construct a membership query set, pick one of the $x, y$ or $z$ at random (w.l.o.g. assume it is $z$) and assign a uniformly random string to it; in our constructions this will be the point we will hide in the query set. Then assign a uniformly random string to $y$. 
    
    Then for each $i\in [n]$ assign $x_i = 1-y_i$ or $x_i = 1-z_i$ each with probability $1/2$. This set of values for $x, y, z$ will be a NAE query pattern where each query point is uniformly distributed.
\end{proof}

Next, we prove our theorem about agnostic verification of juntas with random examples only. 

\begin{proof}[\proofof{\cref{thm:formal-statement-junta-random}}]
Apply the general query-to-sample reduction of \cref{thm:membership-to-random-transform} to \cref{thm:our-results-junta-membership}. We show this is possible by showing the query pattern of the latter protocol admits the property of ``constructing membership query set''. In \cite{DeMN19}, the tolerant test for the class of juntas $\Junta_{n,k}$ is a straightforward use of \textsf{Maximum-correlation-junta} where, given $k\in\N$, $\eps$, and oracle access to $f:\{-1,1\}^n\to\{-1,1\}$, outputs a number $\widehat{\mathsf{Corr}}_{f,k}$ that is an estimation of the correlation of $f$ with the closest junta from $\Junta_{n, k}$. At some point in this algorithm, we need to use a dictator test on the function $g(x) = \operatorname{sgn}(\Hastad_\eta f|_\rho(x) - \E[f|_\rho]^3)$, where $\rho$ is some random restriction and
\[
\Hastad_\eta f(x) = \E_{y_1, y_2 \sim \{-1,1\}^n, y_3 \sim Z_\eta}[f(y_1)f(y_2)f(x \oplus y_1 \oplus y_2 \oplus y_3)].
\]
Here, $Z_\eta$ denotes the product distribution on $\{-1,1\}^n$ with the $i$-th bit having expectation $\eta_i$ for $\eta \in [-1,1]$. To evaluate $g(x)$ it is enough to estimate $\Hastad_\eta f|_\rho(x)$ and $\E[f|_\rho]^3$. The queries used for these estimations have linear query pattern (assuming we treat the domain as $\Field_2^n$). Moreover, the dictator test uses NAE query pattern. Other components of this algorithm also rely on queries chosen uniformly at random. Therefore, overall, the query pattern for this algorithm is the union of query patterns for which we know how to construct query sets and hide known samples in. Thus, applying the transformation from \cref{thm:membership-to-random-transform} to \cref{thm:our-results-junta-membership} we get a protocol that only uses the random example oracle. 
\end{proof}


\section{Degeneracy of PAC-Verification with Unbounded Provers}
\label{sec:aip_erm_unbounded_prover}
In this section we obtain a general framework for learning hypothesis classes with few random examples, given the honest prover is unbounded. First we show any hypothesis class $\H$ is PAC-verfiable with $O(\VC(\H))$ random examples in \cref{sec:fully-supervised-ip}. Then we show how to decrease the number of labeled samples if the verifier is allowed to draw unlabeled samples from the underlying distribution to obtain the following result.

\begin{theorem}\label{thm:aip-semi-supervised-erm}
    For any finite hypothesis class $\H_n$ of circuits, there exists an $(\eps,\delta)$-PAC-verification with verifier $V$ and prover $P$ with respect to an unknown distribution $\D$, where the verifier has random example access $\mathsf{Ex}(\D, f)$ as well as access to unlabeled samples from $\D$. Let $m = \UCsamples_\H(\frac{\eps}{2}, \frac{\delta}{2}) = O((\VC(\H_n) + \log(1/\delta))/\eps^2)$.  This IP has the following parameters
    \[
    \InteractiveParams{\random}
                      {\log(1/\delta)/\eps}
                      {\log(|\H|) + mn + \log(1/\delta)\cdot \poly(\log m, \log |\H|)}
                      {\log(1/\delta)\cdot \poly(\log m, \log |\H|)}
                      {\log(1/\delta)\cdot m\cdot n\cdot \poly(\log m, \log |\H|)}
                      {\log(1/\delta)\cdot \poly(m, |\H|)}.
    \]
    The interaction $\ip{P,V}$ outputs $h$, which is either $\bot$ (reject) or a hypothesis in $\H_n$ with the following guarantees
    \begin{itemize}
        \item \textbf{Completeness:} There exists an honest prover $P^*$ such that the output of interaction $\ip{P^*,V}=h$ satisfies $\Pr[\dist_\D(h, f) \le \dist_\D(\H, f) + \eps] \ge 1-\delta$.
        \item \textbf{Soundness:} For any (possibly unbounded) prover $P$ the output of interaction $\ip{P,V}=h$ satisfies $\Pr[(h\neq\bot) \land (\dist_\D(h, f) > \dist_\D(\H, f) + \eps)] \le \delta.$
    \end{itemize}
\end{theorem}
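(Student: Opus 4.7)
The plan is to reduce agnostic PAC-verification to an empirical risk minimization (ERM) task on a sample of size $m = \UCsamples_\H(\eps/2, \delta/2)$, delegate that ERM to the unbounded prover via the doubly-efficient IP of \cite{GoldwasserKR08}, and use only $k = O(\log(1/\delta)/\eps)$ labeled examples to audit the prover's honesty about the labels via a ``hiding'' argument. Concretely, the verifier first draws $m$ unlabeled samples $x_1,\dots,x_m$ from $\D$ and $k$ labeled samples $(w_1,f(w_1)),\dots,(w_k,f(w_k))$ from $\mathsf{Ex}(\D,f)$, then picks a uniformly random $k$-subset $J = \{j_1,\dots,j_k\} \subseteq [m]$, forms $\tilde S = (\tilde x_1,\dots,\tilde x_m)$ by setting $\tilde x_{j_t} = w_t$ and $\tilde x_i = x_i$ otherwise, and sends $\tilde S$ to the prover. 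The prover responds with a label vector $\ell_1,\dots,\ell_m \in \Bits$ and a claimed ERM hypothesis $h^\star \in \H$. The verifier (a)~checks that $\ell_{j_t} = f(w_t)$ for every $t \in [k]$, and (b)~invokes a GKR subprotocol~\cite{GoldwasserKR08} to verify that $h^\star$ minimizes $\tfrac{1}{m}\sum_i \mathbbm{1}[h(\tilde x_i)\neq \ell_i]$ over $\H$, where the ERM predicate is implemented as a log-depth tournament circuit of size $\poly(m,|\H|)$. If both checks pass, the verifier outputs $h^\star$; otherwise it rejects.

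For completeness, an honest prover sends $\ell_i = f(\tilde x_i)$ together with the true ERM $h^\star$, passing both checks with probability $1$; uniform convergence applied with parameters $(\eps/2,\delta/2)$ then yields $\dist_\D(h^\star,f)\le \tfrac{1}{m}\sum_i\mathbbm{1}[h^\star(\tilde x_i)\neq f(\tilde x_i)] + \eps/2 \le \dist_\D(\H,f) + \eps$ with probability at least $1-\delta/2$. For soundness, the crucial observation is that each $w_t$ is i.i.d.\ from $\D$ just like each $x_i$, so the tuple $\tilde S$ is distributed identically to a fresh i.i.d.\ sample of size $m$ from $\D$, independently of $J$; hence the lie set $B = \{i : \ell_i \neq f(\tilde x_i)\}$, which is a function of $\tilde S$ alone, is independent of $J$. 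If $|B| \ge \eps m/4$ then all $k$ hidden checks pass with probability at most $(1-\eps/4)^k \le \delta/3$. Otherwise $|B| \le \eps m/4$, and for every $h\in\H$ the empirical error on $(\tilde x_i,\ell_i)$ differs from the empirical error on $(\tilde x_i,f(\tilde x_i))$ by at most $\eps/4$; so the GKR-certified ERM $h^\star$ satisfies $\tfrac{1}{m}\sum_i\mathbbm{1}[h^\star(\tilde x_i)\neq f(\tilde x_i)] \le \min_{h\in\H}\tfrac{1}{m}\sum_i\mathbbm{1}[h(\tilde x_i)\neq f(\tilde x_i)] + \eps/2$, and uniform convergence again yields $\dist_\D(h^\star,f)\le \dist_\D(\H,f) + \eps$. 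In both cases the soundness condition fails with probability at most $\delta$.

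The main technical obstacle will be packaging the ERM predicate $h^\star = \argmin_{h\in\H}\sum_i\mathbbm{1}[h(\tilde x_i)\neq\ell_i]$ as a log-space uniform arithmetic circuit of depth $\poly(\log m, \log|\H|)$ and size $\poly(m,|\H|)$ so that the GKR protocol of \cite{GoldwasserKR08} can be applied with the advertised verifier time of $m n \cdot\poly(\log m,\log|\H|)$ and communication $\poly(\log m,\log|\H|)$; this reduces to a standard tournament over the $|\H|$ empirical sums, each formed from $m$ single-bit comparisons. The $\log(1/\delta)$ factors in verifier time, communication, and round complexity finally arise from sequentially amplifying the constant GKR soundness error.
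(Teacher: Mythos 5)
Your proposal is correct and follows essentially the same route as the paper's proof: hide a small number of labeled examples (at uniformly random positions) inside a larger unlabeled sample, have the prover return labels plus a (near-)empirically-optimal hypothesis, audit the hidden labels to force the lie fraction below $O(\eps)$, certify empirical optimality via GKR applied to a log-depth tournament ERM circuit, and conclude with uniform convergence that near-optimality under slightly corrupted labels still yields $\dist_\D(h,f)\le \dist_\D(\H,f)+\eps$. The only differences (exact-ERM predicate instead of the paper's $\eps/4$-approximate one, random $k$-subset embedding instead of a random permutation, slightly different lie-fraction thresholds) are immaterial.
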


For perspective, this construction is very powerful; we get an interactive proof for \textit{proper} agnostic learners in the \textit{distribution-free} setting, using a polynomial time verifier and $O(1/\varepsilon)$ many labeled examples. In more detail, our interactive proof outputs a good hypothesis over any unknown, but fixed marginal distribution $\sD$ over the input examples, using a random example oracle over $\sD$, and further, the hypothesis is from the same class as the target. The implication here is that, if we allow the honest prover to have unbounded computational power, then any target class $\sC$ having a bounded description length for each input length has an efficient proof system, for a very general agnostic learning task. 

As applications of this general framework, we consider hypothesis classes widely explored in learning theory, like $k$-juntas or functions computable by polynomial-sized circuit families, i.e., $\Ppoly$, and construct interactive proofs that can learn such classes using $O(1/\varepsilon)$ many random labeled examples using polynomial-time verifiers.

Regarding the notation for this section, the hypothesis (or the target) class $\H$ is given by a family $\{\H_n\}_{n\in\N}$, where each $\H_n$ can be thought of as a class of circuits defined over $n$ input bits.  Thus, for most natural circuit classes $\H_n$ (like $\cAC^0[p], \mathsf{TC}^0, \mathsf{NC}^1$, or $\Ppoly$), the size and (for $\Ppoly$) the depth of the circuits that compute the functions are bounded by $\log(|\H_n|)$. Furthermore, we consider the class of $k$-juntas, which can be implemented as small circuits. 

Let $f:\Bits^n\to\Bits$ be the input function. The \emph{empirical error} of a hypothesis $h$ w.r.t. a labeled sample set $S=\{(x_i, y_i)\}_{i=1}^m$ is defined to be $\err_S(h) = \Pr_{(x, y)\sim S}[h(x) \neq y]$. Note that the labels in a set of samples need not come from the input function $f$ (useful when a possibly misleading prover labels the samples). The \emph{true error} of a hypothesis $h$ w.r.t. the input function $f$ is $\dist_\D(h, f)=\Pr_{x\sim\D}[h(x)\neq f(x)]$. For simplicity of notation we define $\opt_S(\H) = \min_{h\in\H}\err_S(h)$ and $\dist_\D(\H, f) = \min_{h\in\H}\dist_\D(h, f)$.

\paragraph*{Empirical Risk Minimization (ERM):} An ERM algorithm finds a hypothesis with minimal empirical error. The main idea of ERM is that minimizing empirical error leads to minimizing true error. The learner does not know the underlying distribution, but it can measure the performance of a given hypothesis on random examples. Thus, instead of minimizing the true error, the learner tries to minimize the empirical error. It is well-known that for any hypothesis class that is not too complicated (i.e. has finite VC dimension), an ERM learner  over large enough samples outputs a hypothesis with small true error as well (with high probability).  

\noindent We start with the following definitions on representative samples from $\sD$.

\begin{definition}[$\eps$-Representative Samples \cite{shalev2014understanding}]
A set of samples $S$ is \emph{$\eps$-representative} w.r.t. hypothesis class $\H$, distribution $\D$, and function $f$ if for all $h\in \H$ it holds that $|\err_S(h)-\dist_\D(h, f)|\le\eps$.
\end{definition}

Learning via uniform convergence says that if $|S|$ is sufficiently large then a hypothesis $h$ with the minimum empirical error also has a small true error with high probability. In fact, it states something stronger: the empirical error of all hypotheses is close to their true error with high probability. We recall the following well-known fact from learning theory.

\begin{fact}[ERM for Hypothesis Classes of Finite VC Dimension \cite{shalev2014understanding}]\label{fact:uniform-convergence}
    For any $n$, let $\H = \H_n$ be any hypothesis class of finite VC-dimension, and let $S$ be a set of $\UCsamples_\H(\eps, \delta) = \frac{C}{\eps^2}\cdot\left(\VC(\H) + \log\frac{1}{\delta} \right)$ i.i.d. labeled samples from some unknown distribution $\mathsf{Ex}(\D, f)$, where $C$ is a universal constant. Then $S$ is $\frac{\eps}{2}$-representative with probability at least $1-\delta$. Moreover, an ERM algorithm minimizing $\err_S$ achieves agnostic $(\eps, \delta)$-PAC learning. 
\end{fact}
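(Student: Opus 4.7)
The plan is to establish the two parts in sequence: first the uniform convergence claim that $S$ is $\tfrac{\eps}{2}$-representative with probability at least $1-\delta$, and then derive the agnostic PAC guarantee for ERM as a short corollary. The sample complexity target $\UCsamples_\H(\eps,\delta)$ is $O((\VC(\H) + \log(1/\delta))/\eps^2)$, which is exactly what the classical Vapnik–Chervonenkis uniform convergence bound delivers, so the core work is to reproduce that bound.

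For the first part, I would bound the bad-event probability
\[
p \;\coloneqq\; \Pr_{S \sim \mathsf{Ex}(\D,f)^m}\!\left[\exists h \in \H \;:\; |\err_S(h) - \dist_\D(h,f)| > \tfrac{\eps}{2}\right].
\]
The standard route has three steps. (i) Symmetrization: introduce an independent ghost sample $S'$ of the same size and show $p \le 2 \Pr_{S,S'}[\exists h : |\err_S(h) - \err_{S'}(h)| > \tfrac{\eps}{4}]$; the factor $2$ comes from a standard Chebyshev/Chernoff argument applied to the deviation of $\err_{S'}(h)$ from $\dist_\D(h,f)$. (ii) Reduction to a finite class: conditioned on $S \cup S'$, the number of distinct labeling patterns realized by $\H$ on these $2m$ points is at most the growth function $\Pi_\H(2m)$, which by the Sauer–Shelah lemma is $\le (2em/\VC(\H))^{\VC(\H)}$. (iii) For each such pattern, a Hoeffding bound combined with a swapping/permutation argument shows that the two empirical errors on $S$ and $S'$ deviate by more than $\tfrac{\eps}{4}$ with probability at most $2\exp(-\eps^2 m / 32)$. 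A union bound then gives $p \le 4 \Pi_\H(2m) \exp(-\eps^2 m/32)$, and solving for when this is $\le \delta$ yields the required sample complexity.

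For the second part, condition on the high-probability event that $S$ is $\tfrac{\eps}{2}$-representative. Let $h_{\mathsf{ERM}} \in \arg\min_{h \in \H} \err_S(h)$ be the ERM output and $h^\star \in \arg\min_{h \in \H} \dist_\D(h,f)$. Chain three inequalities: by representativeness $\dist_\D(h_{\mathsf{ERM}},f) \le \err_S(h_{\mathsf{ERM}}) + \tfrac{\eps}{2}$; by the ERM property $\err_S(h_{\mathsf{ERM}}) \le \err_S(h^\star)$; and again by representativeness $\err_S(h^\star) \le \dist_\D(h^\star,f) + \tfrac{\eps}{2} = \dist_\D(\H,f) + \tfrac{\eps}{2}$. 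Summing gives $\dist_\D(h_{\mathsf{ERM}},f) \le \dist_\D(\H,f) + \eps$, which is the agnostic PAC guarantee.

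The only genuinely subtle piece is the symmetrization + Sauer–Shelah argument in the first part; the rest is a textbook chain of three inequalities. The expected main obstacle is tracking the constants carefully through symmetrization, the Hoeffding tail, and the growth-function bound so that the final sample complexity comes out as $\frac{C}{\eps^2}(\VC(\H) + \log(1/\delta))$ with an absolute constant $C$ rather than a $\log m$ overhead; standard ways to remove such a log include using Talagrand/chaining bounds or a tighter symmetrization that pulls out $\VC(\H)$ instead of $\log \Pi_\H(2m)$ directly.
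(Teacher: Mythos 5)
The paper does not prove this statement at all: it is imported verbatim as a known fact from \cite{shalev2014understanding}, so there is no internal proof to compare against. Your argument is the standard one that underlies that citation, and it is essentially correct: the double-sample symmetrization, Sauer--Shelah growth-function bound, and Hoeffding-plus-permutation step give uniform convergence, and your three-inequality chain (representativeness of $S$ applied to $h_{\mathsf{ERM}}$ and to $h^\star$, plus empirical optimality of $h_{\mathsf{ERM}}$) is exactly the textbook derivation of the agnostic guarantee; indeed the paper reuses that same chain explicitly in its own \cref{lem:LS-implies-PAC} and \cref{lem:labels-soundness}, just with $L_S^{\eps/2}$ in place of exact ERM. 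The one point to be careful about is that the symmetrization route as you wrote it yields $p \le 4\,\Pi_\H(2m)\exp(-\eps^2 m/32)$, which when solved for $m$ produces a bound of the form $O\!\left((\VC(\H)\log(m/\VC(\H)) + \log(1/\delta))/\eps^2\right)$, i.e.\ with a logarithmic overhead, whereas the fact as stated asserts the sharp $\frac{C}{\eps^2}(\VC(\H)+\log(1/\delta))$ sample complexity. You correctly flag this and name the standard remedy (chaining/Talagrand, or equivalently bounding the Rademacher complexity of a VC class by $O(\sqrt{\VC(\H)/m})$ via Dudley's integral), but you do not carry it out, so as written your proof establishes the fact only up to that log factor; since the paper only ever uses the fact as a black box, either version would suffice for its applications, but to match the stated constant you would need to invoke the chaining bound rather than raw Sauer--Shelah.
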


We will show for an ERM algorithm to be a successful agnostic PAC learner, an \textit{almost} empirically optimal hypothesis suffices, hence the following definition.

\begin{definition}\label{def:almost-empirically-optimal}
For a fixed $n$, let $\H=\H_n$ be a hypothesis class of finite size, and $S$ a finite set of labeled samples.  We define the language $L^\eps_S \subseteq \H$ as the set of empirically optimal hypotheses with respect to $S$ up to $\eps$ error. That is $L_S^\eps = \{h \in \H : \err_{S}(h) \le \opt_{S}(\H) + \eps \}$. We informally refer to elements of $L_S^\eps$ as almost empirically optimal.
\end{definition}

In \cref{sec:fully-supervised-ip}, we present the main idea of using ERM for distribution free PAC-verification of general circuit classes, where the verifier has only random example access. In \cref{sec:semi-supervised-ip} we further improve the query complexity of the verifier and apply the resulting generic protocol to polynomial sized circuits and juntas. 

\subsection{PAC-Verification for $\H_n$ Using $O(\VC(\H_n))$ Random Examples}\label{sec:fully-supervised-ip}
The high level idea is that verifier and prover agree on a sample set $S$ with i.i.d. samples. The prover provides an almost empirically optimal hypothesis $h$ with respect to the samples (\cref{def:almost-empirically-optimal}). Then it only remains for the verifier to check that $h$ is, in fact, almost empirically optimal; i.e., $h\in L_S^\eps$ for some error parameter $\eps$. Next we show that this statement reduces to an instance of Circuit Value Problem (CVP), and that we can apply an existing interactive proof for this statement. Alas, the prover is not efficient, and the resulting protocol is not doubly-efficient.

We start by observing that an almost empirically optimal hypothesis can be verified by a low-depth parallelised computation, in the following lemma.

\begin{lemma} [Reducing $h\in L_S^\eps$ to an Instance of CVP]\label{rem:erm-circuit}
Fix a hypothesis class $\H = \H_n$, a sample set $S = \{(x_i, y_i)\}_{i=1}^m\subseteq\Bits^n\times\Bits$, and a parameter $\eps > 0$. Then, there exists a log-space uniform circuit $C_\mathsf{ERM}^h:\Bits^{mn+m}\to\Bits$ of size $\poly(m, |\H|)$ and depth $\poly(\log(m), \log(|\H|))$ such that $h\in L_S^\eps$ if and only if $C_\mathsf{ERM}^h(x_1,y_1,\dots,x_m,y_m)=1$.
\end{lemma}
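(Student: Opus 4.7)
My plan is to construct $C_\mathsf{ERM}^h$ explicitly as a parallel, layered circuit that evaluates every hypothesis in $\H$ on every input sample, aggregates the error counts into integers, and performs the required comparison against $\opt_S(\H) + \eps$. Since $h$ is fixed and $\H = \H_n$ is known in advance, all hypothesis descriptions are hardwired into the circuit's gates, and only the $m$ labeled samples $(x_1,y_1),\ldots,(x_m,y_m)$ are presented as inputs.

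The first step is to build, for every $h' \in \H$ (including the distinguished $h$), a subcircuit $U_{h'}$ on $n$ input bits computing $h'$. For the natural hypothesis classes under consideration, each $h' \in \H_n$ is specified by at most $\log|\H|$ bits and can be realised by a circuit of size and depth bounded by $\poly(\log|\H|)$; in the worst case one can use a universal-circuit evaluator hardcoded with the description of $h'$. I then instantiate $m$ copies of $U_{h'}$, one per sample input $x_i$, and for each pair $(h',i)$ compute the bit $e_{h',i} = U_{h'}(x_i) \oplus y_i$ with a single $\mathsf{XOR}$ gate. This layer has size $O(m \cdot |\H| \cdot \poly(\log|\H|))$ and depth $\poly(\log|\H|)$.

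The second step is aggregation. For each $h' \in \H$, sum the $m$ bits $\{e_{h',i}\}_{i=1}^m$ using a balanced tree of binary adders to obtain the integer $N_{h'} := m \cdot \err_S(h')$ on $O(\log m)$ bits; this subcircuit has size $O(m \log m)$ and depth $\poly(\log m)$, giving $O(m|\H|\log m)$ gates in total across all hypotheses. Next, compute $N^\star := \min_{h' \in \H} N_{h'} = m \cdot \opt_S(\H)$ by a balanced tournament of comparators on $O(\log m)$-bit integers, of size $O(|\H|\log m)$ and depth $O(\log|\H| \cdot \log m)$. Finally, a single $O(\log m)$-bit comparator decides whether $N_h \le N^\star + \lfloor \eps m \rfloor$, and its output bit is the value of $C_\mathsf{ERM}^h$. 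The equivalence $C_\mathsf{ERM}^h(x_1,y_1,\ldots,x_m,y_m) = 1 \iff h \in L_S^\eps$ is then immediate from the definitions. Summing sizes and depths, the total size is $\poly(m,|\H|)$ and the total depth is $\poly(\log m, \log|\H|)$.

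The only non-routine aspect is log-space uniformity: given the index $(n, h)$, a log-space Turing machine must output the description of $C_\mathsf{ERM}^h$. This reduces to (i) enumerating $\H_n$ in some canonical order and, for each enumerated $h'$, emitting the description of the corresponding evaluator $U_{h'}$, and (ii) wiring the adder trees, the minimum tournament, and the final comparator in a standard regular pattern, all of which are textbook log-space constructions. The substantive hypothesis is therefore that the enumeration of $\H_n$ and the production of individual evaluators $U_{h'}$ can be carried out in log-space, which holds for the natural circuit classes (such as $\Ppoly[\poly(n)]$) that the paper applies the lemma to. This is the step I expect to require the most care to formalise cleanly, though it is standard once the class $\H_n$ is made concrete.
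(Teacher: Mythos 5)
Your construction is correct and follows essentially the same route as the paper's proof: evaluate every hypothesis in $\H$ on all $m$ samples in parallel, aggregate the (error/agreement) counts with adder trees, take the optimum by a comparator tournament, and compare against $h$'s count with slack $\eps m$, with uniformity argued from the regular wiring; your use of error counts and a minimum instead of agreement counts and a maximum is only a cosmetic difference. Your treatment of log-space uniformity is in fact slightly more explicit than the paper's, which simply appeals to the highly regular wiring pattern.
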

\begin{proof}
The inputs to $C_\mathsf{ERM}^h$ are the samples $x_i\in\Bits^n$ along with $y_i\in\Bits$. Give as input each $x_i$ to each circuit in $\H_n$. In other words, test all samples on all hypotheses in parallel. For any $h'\in\H$, define its agreement with samples to be $\mathsf{agr}(h')=|\{i\in[m] : h'(x_i) = y_i\}|$. Then, take the outputs of the circuit $h^\star \in \H$ that maximizes the agreement and compare this number to the number of accepting samples from $h$ and output $1$ if and only if $\mathsf{agr}(h) + \eps m \ge \mathsf{agr}(h^\star)$. Observe that the circuit outputs $1$ if and only if $h \in L^\eps_S$. Therefore, the statement $h \in L^\eps_S$ reduces to an instance of Circuit Value Problem, namely the value of $C_\mathsf{ERM}^h$ on input $S$.

The circuit as described above needs $\poly(m, |\H|)$ size as we run $m$ samples on each hypothesis in $\H$. Note that the rest of the circuit is not any larger. The depth of the circuit is bounded by $\poly(\log(m),\log(|\H|))$; each circuit in $\H$ can be implemented in depth $\log(|\H|)$, and we need no more than $\poly(\log m, \log |\H|)$ to compute the agreement and find the maximum.

We can argue the circuit is log-space uniform since each level of the circuit is tasked with the same computation; in other words, this circuit has a highly regular wiring pattern.
\end{proof}

Following this, we state the following doubly-efficient interactive proof that verifies the result of the computation of any language computable by a low-depth circuit by an untrusted prover. In our case, the largeness of the circuit $C_\mathsf{ERM}$ leads to inefficient prover time.

\begin{lemma}[GKR Protocol \cite{GoldwasserKR08}]\label{lem:GKR08}
    Let $L$ be a language that can be computed by a family of $O(\log(S(n)))$-space uniform Boolean circuits of size $S(n)$ and depth $d(n)$. Then $L$ has an interactive proof with perfect completeness, and soundness $1/2$, where the prover's running time is $\poly(S(n))$ and the verifier runs in time $(n + d(n))\cdot \polylog(S(n))$. Moreover, the communication complexity of the interactive proof is $d(n) \cdot \polylog(S(n))$.
\end{lemma}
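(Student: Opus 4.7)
The plan is to follow the standard GKR protocol construction, which proceeds by arithmetizing the circuit $C$ that decides $L$ and then reducing, layer by layer from the output down to the input, a claim about the value of a gate to a claim about the values at the next layer via the sum-check protocol. Fix a finite field $\mathbb{F}$ of size $\poly(S(n))$ (so that soundness errors of order $\poly(S(n))/|\mathbb{F}|$ are negligible and can be amplified by a constant factor of repetitions to reach $1/2$). Without loss of generality we may assume the circuit is layered, fan-in $2$, with $d(n)$ layers each of width at most $S(n)$; label the gates in layer $i$ by binary strings of length $s_i = O(\log S(n))$, and let $V_i : \{0,1\}^{s_i} \to \mathbb{F}$ be the function sending a gate index to its computed value on a fixed input $x$. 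Let $\tilde{V}_i$ be the multilinear extension of $V_i$ over $\mathbb{F}$, and let $\widetilde{\mathrm{add}}_i, \widetilde{\mathrm{mult}}_i : \mathbb{F}^{s_i + 2s_{i+1}} \to \mathbb{F}$ be the multilinear extensions of the wiring predicates describing which gates at layer $i+1$ feed into each gate at layer $i$.

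The core identity the protocol exploits is
\begin{equation*}
\tilde{V}_i(z) = \sum_{\omega_1,\omega_2 \in \{0,1\}^{s_{i+1}}} \widetilde{\mathrm{add}}_i(z,\omega_1,\omega_2)\bigl(\tilde{V}_{i+1}(\omega_1)+\tilde{V}_{i+1}(\omega_2)\bigr) + \widetilde{\mathrm{mult}}_i(z,\omega_1,\omega_2)\tilde{V}_{i+1}(\omega_1)\tilde{V}_{i+1}(\omega_2),
\end{equation*}
which holds for every $z \in \mathbb{F}^{s_i}$. Given a claim about $\tilde{V}_i(z)$ for a random $z$, the verifier and prover run the sum-check protocol on the right-hand side; this uses $2 s_{i+1} = O(\log S(n))$ rounds and reduces the claim to evaluating $\widetilde{\mathrm{add}}_i, \widetilde{\mathrm{mult}}_i$ and $\tilde{V}_{i+1}$ at random points chosen by the verifier. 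The wiring predicates are evaluated directly by the verifier in $\polylog(S(n))$ time, crucially using the $O(\log S(n))$-space uniformity assumption: since each wire of $C$ can be described in logarithmic space, the truth tables of $\mathrm{add}_i$ and $\mathrm{mult}_i$ are computable in small space, and hence the multilinear extensions admit $\polylog(S(n))$-time evaluation via the standard Lagrange-interpolation closed form for multilinear extensions of space-bounded predicates. Finally, the sum-check leaves two claims $\tilde{V}_{i+1}(\omega_1^\star), \tilde{V}_{i+1}(\omega_2^\star)$; these are collapsed into a single claim at a fresh random point using the standard line-restriction trick (the prover sends the restriction of $\tilde{V}_{i+1}$ to the line through $\omega_1^\star, \omega_2^\star$ and the verifier picks a random point on the line), so exactly one claim is carried into the next layer.

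Repeating this reduction across all $d(n)$ layers, starting from the (trivial) claim about the output gate $\tilde{V}_0$ and working down, produces a final claim of the form $\tilde{V}_d(\omega^\star) = v$ about the input layer. Since $\tilde{V}_d$ is the multilinear extension of the input $x$, the verifier can compute this value directly in time $O(n \cdot \polylog(S(n)))$ and accept iff the claim matches. Perfect completeness is immediate because an honest prover always supplies the correct polynomial in each sum-check round; soundness follows from the Schwartz–Zippel-type guarantee of sum-check plus a union bound over $O(d(n) \log S(n))$ rounds, each of which admits soundness error $O(\deg/|\mathbb{F}|) = \polylog(S(n))/|\mathbb{F}|$, giving total soundness error bounded below $1/2$ after a constant number of parallel repetitions. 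The running times and communication bound are read off directly: per layer the communication and verifier time are $\polylog(S(n))$, summing to $d(n)\polylog(S(n))$ for communication and $(n+d(n))\polylog(S(n))$ for the verifier, while the honest prover evaluates $C$ and computes the sum-check polynomials in $\poly(S(n))$ time.

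The main obstacle, and the technical heart of the argument, is verifying that $\widetilde{\mathrm{add}}_i$ and $\widetilde{\mathrm{mult}}_i$ are evaluable by the verifier in $\polylog(S(n))$ time; this is exactly where the $O(\log S(n))$-space uniformity hypothesis is used, and it is the step that distinguishes doubly-efficient interactive proofs from classical $\mathsf{IP}=\mathsf{PSPACE}$ style arguments where the verifier only needs to handle one predicate at the end. The rest of the proof is essentially bookkeeping over the layered sum-check reductions, but one has to carefully check that the line-restriction step preserves both soundness and the efficiency of the verifier, and that the chosen field size indeed suffices for a constant soundness after the union bound over all layers.
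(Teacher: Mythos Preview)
The paper does not prove this lemma; it is quoted as a black-box result from \cite{GoldwasserKR08} and used without further argument. Your sketch is a faithful outline of the standard GKR construction (layered arithmetization, per-layer sum-check on the wiring identity, line-restriction to collapse two claims into one, and direct evaluation of the input multilinear extension), so there is nothing to compare against here.
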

\begin{remark}
    The round complexity of the GKR protocol is $d(n)\cdot\polylog(S(n))$.
\end{remark}
\begin{remark}
    We can set the soundness of the GKR protocol to any $\delta > 0$ by repeating the interaction for $\log(1/\delta)$ times and rejecting if at least one of the iterations rejects.
\end{remark}
Denote by $\ip{P_\mathsf{GKR}, V_\mathsf{GKR}}[\delta]$ the interactive protocol of GKR with soundness error boosted to $\delta$. The output of this interaction on circuit $C^h_\mathsf{ERM}$ constructed based on \cref{rem:erm-circuit} for the statement $h\in L_S^\eps$ is denoted by $\ip{P_\mathsf{GKR}, V_\mathsf{GKR}}[\delta](C^h_\mathsf{ERM}, S)$; note that the hidden parameters like $\H$, $\eps$, etc. are captured by the construction of our circuit. We now describe the general PAC-verification proof system for any circuit class $\H$ using $O(\VC(\H))$ random examples. 

\begin{protocol}[H]
\setstretch{1.15}
\caption{A general $(\eps,\delta)$-PAC-Verification protocol for class $\sH$}\label{alg:supervised-protocol}

    \SetKwInOut{FunIn}{Function input}
    \SetKwInOut{ExpIn}{Explicit inputs}
    \FunIn{A function $f:\Bits^n \to \Bits$, where the verifier has access to the random example oracle $\mathsf{Ex}(\D, f)$}
    \ExpIn{Parameters $\eps, \delta > 0$}

    The verifier takes a set $S = \{(x_i, f(x_i))\}_{i=1}^m$ of $m = \UCsamples_\H(\frac{\eps}{2}, \delta) = O(( \VC(\H) + \log(1/\delta))/\eps^2)$ labeled samples from the random example oracle $\mathsf{Ex}(\sD,f)$ and sends them to prover.
    
    The prover sends $\tilde{h}\in\H$, which is purported to be empirically optimal up to error $\eps/2$. That is $\tilde{h} \in L_{S}^{\eps/2}$.

    Let $C^{\tilde{h}}_\mathsf{ERM}$ denote the circuit from \cref{rem:erm-circuit} for the statement $\tilde{h} \in L_{S}^{\eps/2}$. The verifier and the prover run the protocol $\ip{P_\mathsf{GKR}, V_\mathsf{GKR}}[\delta](C_\mathsf{ERM}^{\tilde{h}}, S)$.
    
    The verifier outputs $\tilde{h}$ iff the GKR protocol does not reject and outputs $1$.
\end{protocol}

\begin{lemma}\label{lem:LS-implies-PAC}
    Let $S = \{(x_i, f(x_i))\}_{i=1}^m$ be a set of samples of size $m = \mathsf{m}_\H(\frac{\eps}{2}, \delta) = O(( \VC(\H) + \log(1/\delta))/\eps^2)$ drawn i.i.d. from $\mathsf{Ex}(\D, f)$. Then for any $h\in L_{S}^{\eps/2}$, $\dist_\D(h, f) \le \dist_\D(\H, f) + \eps$ with probability at least $1-\delta$.
\end{lemma}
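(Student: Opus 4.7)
The statement is a standard uniform convergence consequence, so the plan is to directly invoke \cref{fact:uniform-convergence} as a black box and then chain together the resulting inequalities via a triangle-inequality-style argument on $\err_S$ and $\dist_\D(\cdot, f)$.

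First, I would note that $m = \UCsamples_\H(\eps/2, \delta)$, so by \cref{fact:uniform-convergence}, with probability at least $1-\delta$ the sample set $S$ is $\eps/4$-representative with respect to $\H$, $\D$, and $f$. Throughout the rest of the argument I condition on this event. Let $h^\star = \argmin_{h'\in\H} \dist_\D(h', f)$ so that $\dist_\D(h^\star, f) = \dist_\D(\H, f)$.

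Next I would establish the following three inequalities in sequence. (i) By $\eps/4$-representativeness applied to $h^\star$, we have $\err_S(h^\star) \le \dist_\D(h^\star, f) + \eps/4 = \dist_\D(\H, f) + \eps/4$, and hence $\opt_S(\H) \le \dist_\D(\H, f) + \eps/4$. (ii) Since $h \in L_S^{\eps/2}$, by definition $\err_S(h) \le \opt_S(\H) + \eps/2 \le \dist_\D(\H, f) + 3\eps/4$. (iii) Applying $\eps/4$-representativeness once more, this time to $h$, yields $\dist_\D(h, f) \le \err_S(h) + \eps/4 \le \dist_\D(\H, f) + \eps$, which is the desired conclusion.

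There is no real obstacle here: the only thing to be careful about is bookkeeping of the error parameter, namely that we invoke $\UCsamples_\H$ at the parameter $\eps/2$ (rather than $\eps$) precisely so that representativeness holds at the sharper scale $\eps/4$, leaving room to absorb both the $\eps/2$ slack in the definition of $L_S^{\eps/2}$ and the two applications of representativeness (once for $h^\star$ and once for $h$) within the final $\eps$ budget.
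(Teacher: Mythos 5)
Your proof is correct and follows essentially the same route as the paper: both invoke \cref{fact:uniform-convergence} at parameter $\eps/2$ to get $\eps/4$-representativeness with probability $1-\delta$, then chain two applications of representativeness (to $h$ and to the true optimum) with the $\eps/2$ slack from the definition of $L_S^{\eps/2}$; the paper merely writes the chain in the reverse direction, passing through the empirically optimal hypothesis explicitly rather than through $\opt_S(\H)$. No gaps.
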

\begin{proof}
    Let $h_{\mathsf{emp}}$ be an empirically optimal hypothesis w.r.t. samples $S$, and let $h_{\mathsf{true}}$ be an optimal hypothesis w.r.t. the underlying distribution $\D$. By \cref{fact:uniform-convergence} with probability at least $1-\delta$, the sample set $S$ is $\frac{\eps}{4}$-representative, and thus for any $h\in L_{S}^{\eps/2}$
    \[
    \dist_\D(h, f) \le \err_S(h) + \frac{\eps}{4} \le \err_S(h_\mathsf{emp}) + \frac{\eps}{2} + \frac{\eps}{4} \le \err_S(h_\mathsf{true}) +  \frac{\eps}{2} + \frac{\eps}{4} \le \dist_\D(h_\mathsf{true}, f) + \eps.
    \]

    The first and last inequalities follow from $S$ being $\frac{\eps}{4}$-representative. The second one follows from $h\in L^{\eps/2}_S$. The third one follows from $h_\mathsf{emp}$ being optimal with respect to $\err_S$.
\end{proof}

\begin{lemma}\label{lem:guarantees-of-supervised-learning}
    Let $h\in\H\cup\{\bot\text{ (reject)}\}$ denote the output of Protocol~\ref{alg:supervised-protocol}. Then this protocol satisfies the following conditions.
    \begin{itemize}
        \item \textbf{Completeness:} There exists an honest prover $P^*$ such that the output of interaction $\ip{P^*,V}=h$ satisfies $\Pr[\dist_\D(h, f) \le \dist_\D(\H, f) + \eps] \ge 1-\delta$.
        \item \textbf{Soundness:} For any (possibly unbounded) prover $P$, the output of interaction $\ip{P, V}=h$ satisfies $\Pr[(h\neq\bot) \land (\dist_\D(h, f) > \dist_\D(\H, f) + \eps)] \le \delta$.
    \end{itemize}
\end{lemma}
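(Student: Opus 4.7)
The plan is to verify the two required properties by separating the two sources of randomness in Protocol~\ref{alg:supervised-protocol}: the draw of the sample set $S$ from $\mathsf{Ex}(\D,f)$, and the randomness internal to the GKR sub-protocol. The key point is that Lemma~\ref{lem:LS-implies-PAC} translates the combinatorial statement ``$\tilde{h}\in L_S^{\eps/2}$'' into the desired PAC guarantee (with high probability over $S$), while Lemma~\ref{lem:GKR08}, together with Lemma~\ref{rem:erm-circuit}, lets us certify that combinatorial statement interactively with small soundness error. To get a clean total error $\delta$, I would invoke both lemmas with parameter $\delta/2$ and take a union bound.

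For completeness, let the honest prover $P^*$ use its unbounded power to compute $h_{\mathsf{emp}}\in\argmin_{h\in\H}\err_S(h)$ and set $\tilde{h}:=h_{\mathsf{emp}}$. Since $h_{\mathsf{emp}}$ is exactly empirically optimal, trivially $\tilde{h}\in L_S^{\eps/2}$, so by Lemma~\ref{rem:erm-circuit} the circuit $C^{\tilde{h}}_{\mathsf{ERM}}$ evaluates to $1$ on the input $S$. The perfect completeness of GKR (Lemma~\ref{lem:GKR08}) then guarantees that $V_{\mathsf{GKR}}$ never rejects, so the verifier's output is $\tilde{h}$ with certainty. Applying Lemma~\ref{lem:LS-implies-PAC} to the $m=\UCsamples_\H(\eps/2,\delta)$ i.i.d.\ labeled samples yields $\dist_\D(\tilde{h},f)\le \dist_\D(\H,f)+\eps$ with probability at least $1-\delta$ over the sample randomness, which is precisely the claimed completeness bound.

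For soundness, fix an arbitrary (possibly unbounded) prover $P$ and condition on its message $\tilde{h}\in\H$. There are two cases. If $\tilde{h}\in L_S^{\eps/2}$, then by Lemma~\ref{lem:LS-implies-PAC} the event $\dist_\D(\tilde{h},f)>\dist_\D(\H,f)+\eps$ can only occur when $S$ fails to be $(\eps/4)$-representative, which has probability at most $\delta/2$ (after using $\delta/2$ as the confidence parameter in the uniform-convergence bound). If instead $\tilde{h}\notin L_S^{\eps/2}$, then $C^{\tilde{h}}_{\mathsf{ERM}}(S)=0$ by Lemma~\ref{rem:erm-circuit}, so the verifier outputs $\tilde{h}\neq\bot$ only when the GKR protocol is fooled into accepting a false statement; this happens with probability at most $\delta/2$ by the boosted soundness of GKR. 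Taking a union bound and then a further union over the choice of $\tilde{h}$ (which is deterministic once $S$ and $P$'s internal randomness are fixed) gives the required overall failure probability at most $\delta$.

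I do not expect any real obstacle here; the only care needed is bookkeeping of confidence parameters so that the two $\delta/2$ contributions add up to $\delta$ rather than to $2\delta$, and noting that the verifier's two uses of randomness (sampling and GKR) are independent, which lets the union bound go through cleanly. All learning-theoretic content is absorbed by Lemma~\ref{lem:LS-implies-PAC}, and all interactive-proof content is absorbed by Lemma~\ref{lem:GKR08}, so the proof is essentially a composition argument.
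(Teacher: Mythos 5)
Your proof is correct and follows essentially the same route as the paper's: completeness via the honest prover sending an empirically optimal $\tilde{h}\in L_S^{\eps/2}$ combined with GKR's perfect completeness and Lemma~\ref{lem:LS-implies-PAC}, and soundness via the same case split on whether $\tilde{h}\in L_S^{\eps/2}$, charging the first case to GKR's soundness and the second to the sample set failing to be representative. Your explicit $\delta/2+\delta/2$ bookkeeping is in fact slightly more careful than the paper's presentation, which runs both sub-arguments with parameter $\delta$; otherwise the arguments coincide.
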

\begin{proof}
    We start with completeness. Assuming the prover is honest, he sends a hypothesis $\tilde h$ that is in $L^{\eps/2}_S$. Then by \cref{lem:LS-implies-PAC}, with probability at least $1-\delta$, we have $\dist_\D(\tilde{h}, f) \le \dist_\D(\H, f) + \eps$. Since $\tilde{h}\in L^{\eps/2}_S$ and since $\ip{P_\mathsf{GKR}, V_\mathsf{GKR}}[\delta]$ has perfect completeness, the verifier accepts the interaction and outputs $h=\tilde{h}$ with probability $1$. Therefore, with probability at least $1-\delta$, $\dist_\D(h, f) \le \dist_\D(\H, f) + \eps$.
    
    Now to prove soundness there are two scenarios. First, suppose that the prover sends $\tilde{h}$ such that $\tilde{h}\notin L^{\eps/2}_S$. Then by guarantee of $\ip{P_\mathsf{GKR}, V_\mathsf{GKR}}[\delta]$ the verifier accepts with probability no more than $\delta$. In the second scenario, the sample set $S$ is not $(\eps/4)$-representative, and the prover sends $\tilde{h}\in L^{\eps/2}_S$. This will be accepted by the verifier, but the probability of $S$ not being representative is at most $\delta$.
\end{proof}

\subsection{Sample-Efficient PAC-Verification for $\H_n$}\label{sec:semi-supervised-ip}

Now that we have a protocol for PAC-verifying general circuit classes, we obtain a protocol from it, where the verifier uses much fewer labeled samples, alongside \textit{unlabeled} samples. The idea is to delegate the task of labeling unlabeled samples to the prover, where prover is forced to send an \textit{almost} correct labeling of samples for otherwise the verifier catches him with overwhelming probability. Following this, we argue that an almost correct labeling is as good as a correct labeling, as far as the ERM algorithm is concerned.

For additional perspective, this setting of learning using unlabeled samples is referred to as \emph{semi-supervised learning} in literature. This is motivated by real-world scenarios where we are required to train models using a limited number of labeled samples, while unlabeled samples are available for \emph{free} or very low cost. The protocol in this section thus gives us a general interactive proof for verifying agnostic learning of finite\footnote{While the size of $\H_n$ could grow with $n$, for any fixed $n$ it is finite.} target classes in the semi-supervised setting, with the help of a prover that is computationally unbounded. While \cite{GoldwasserRSY21} (Claim 5.2) show that any class of finite \textsf{VC}-dimension has $(\eps,\delta)$-PAC-verification where the verifier uses only $O(\log(1/\delta)/\eps)$ labeled samples, the verifier runs an ERM algorithm, potentially in exponential time, while the honest prover is efficient in the number of samples used by the ERM algorithm. On the other hand, we focus on the scenario of understanding the necessity of doubly-efficient proof systems for PAC-verification - and obtain a highly efficient verifier (polynomial in the logarithm of \textsf{VC}-dimension, as well as the size of the hypothesis class), which delegates the task to an inefficient prover.

The protocol is as follows.

\begin{protocol}[H]
\setstretch{1.15}
\caption{Sample-efficient $(\eps, \delta)$-PAC-verification Protocol for class $\sH$}\label{pcl:semi-supervised-protocol}

    \SetKwInOut{FunIn}{Function input}
    \SetKwInOut{ExpIn}{Explicit inputs}
    \FunIn{A function $f:\Bits^n \to \Bits$, where the verifier has access to the random example oracle $\mathsf{Ex}(\D, f)$ and a source for unlabeled samples $\D$}
    \ExpIn{Parameters $\eps, \delta > 0$}
    
    Let $m = \UCsamples_\H(\frac{\eps}{2}, \frac{\delta}{2}) = O((\VC(\H) + \log\frac{2}{\delta})/\epsilon^2)$ and $q = \frac{8}{\epsilon}\cdot \log(\frac{2}{\delta})$. The verifier takes $q$ labeled samples $\{(x_i, f(x_i))\}_{i=1}^q$ from $\mathsf{Ex}(\D, f)$ and $m-q$ unlabeled samples $\{x_i\}_{i=q+1}^m$ from $\D$. Then, the verifier sends $\{x_{\pi(i)}\}_{i=1}^m$ to the prover, where $\pi$ is a uniformly random permutation.

    The prover sends $\tilde{S}=\{(x_{\pi(i)}, \tilde{f}(x_{\pi(i)}))\}_{i=1}^m$ along with a hypothesis $\tilde{h}\in\H$, where $\tilde{f}$ is prover's labeling strategy.\footnote{The prover only needs to send labels $\tilde{f}(x_{\pi(i)})$. For simplicity, we say the prover sends $x_{\pi(i)}$'s as well.}
    
    The verifier checks if the labels of $q$ known samples are correct, if not she rejects. More precisely, she asserts $f(x_i) = \tilde{f}(x_i)$ for all $i\in[q]$.

    Let $C^{\tilde{h}}_\mathsf{ERM}$ denote the circuit from \cref{rem:erm-circuit} for the statement $\tilde{h} \in L_{\tilde{S}}^{\eps/4}$. The verifier and the prover run $\ip{P_\mathsf{GKR}, V_\mathsf{GKR}}[\frac{\delta}{2}](C_\mathsf{ERM}^{\tilde{h}}, \tilde{S})$. 

    The verifier outputs $\tilde{h}$ iff the GKR protocol does not reject and outputs $1$.
\end{protocol}

\begin{lemma}\label{lem:labels-soundness}
    Let $S=\{(x_i, f(x_i))\}_{i=1}^m$ be the set of correct labeling of samples in Protocol~\ref{pcl:semi-supervised-protocol}, and let $\tilde{S} = \{(x_i, \tilde{f}(x_i))\}_{i=1}^m$ be the set of samples labeled by the prover. If the verifier does not reject the samples $\tilde S$ labeled by the prover, it holds with probability at least $1-\frac{\delta}{2}$ that $L^{\eps/4}_{\tilde{S}}\subseteq L^{\eps/2}_S$.
\end{lemma}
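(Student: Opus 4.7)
\textbf{Proof proposal for \cref{lem:labels-soundness}.}
The plan is to reduce the claim to two essentially independent sub-claims: a deterministic ``stability'' statement saying that nearby label-sets induce nearby ``almost optimal'' sets, and a probabilistic ``catch'' statement saying that a prover who mislabels too many samples gets caught by the verifier. Let $B = \{i \in [m] : \tilde{f}(x_i) \neq f(x_i)\}$ denote the indices on which the prover's labeling disagrees with the ground truth, and write $\beta = |B|/m$.

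The first step is the deterministic stability claim: if $\beta < \eps/8$, then $L^{\eps/4}_{\tilde{S}} \subseteq L^{\eps/2}_S$. The key observation here is that for every hypothesis $h \in \H$, flipping a single label changes the empirical error by exactly $1/m$, and so $|\err_{\tilde{S}}(h) - \err_S(h)| \le \beta$ for all $h$. Taking the min over $h$ on either side gives $|\opt_{\tilde{S}}(\H) - \opt_S(\H)| \le \beta$ as well. Hence, for any $h \in L^{\eps/4}_{\tilde S}$,
\begin{equation*}
\err_S(h) \le \err_{\tilde S}(h) + \beta \le \opt_{\tilde S}(\H) + \tfrac{\eps}{4} + \beta \le \opt_S(\H) + \tfrac{\eps}{4} + 2\beta < \opt_S(\H) + \tfrac{\eps}{2},
\end{equation*}
which places $h$ in $L^{\eps/2}_S$.

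The second step is the probabilistic catch claim: the probability that the verifier fails to reject while $\beta \ge \eps/8$ is at most $\delta/2$. The cleanest way to argue this is to reformulate the protocol equivalently as follows: the verifier draws $m$ i.i.d. labeled samples, then chooses a uniformly random subset $C \subseteq [m]$ of size $q$ to ``check'', and sends the unlabeled samples $(x_1,\dots,x_m)$ (in some canonical order) to the prover. Since the prover's view is independent of $C$ (in the original protocol this independence is provided by the uniformly random permutation $\pi$), the labeling $\tilde f$, and hence the set $B$, is determined before $C$ is revealed. Conditioning on any prover strategy with $|B| \ge \eps m/8$, the probability that $C$ misses $B$ entirely is
\begin{equation*}
\frac{\binom{m-|B|}{q}}{\binom{m}{q}} \le \left(1 - \tfrac{|B|}{m}\right)^q \le \left(1 - \tfrac{\eps}{8}\right)^q \le e^{-\eps q/8} = e^{-\log(2/\delta)} = \tfrac{\delta}{2},
\end{equation*}
by the choice $q = \frac{8}{\eps}\log(2/\delta)$.

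Combining the two steps finishes the proof: the ``bad'' event that the verifier does not reject yet $L^{\eps/4}_{\tilde S} \not\subseteq L^{\eps/2}_S$ implies $\beta \ge \eps/8$ by the contrapositive of step one, which by step two occurs with probability at most $\delta/2$. The main conceptual subtlety to handle carefully is the equivalence used in step two, namely that the uniformly random permutation $\pi$ makes the verifier's hidden check set $C$ independent of the prover's view and hence of the prover's labeling strategy; everything else is routine counting and Bernoulli-type estimation.
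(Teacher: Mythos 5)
Your proposal is correct and follows essentially the same route as the paper's proof: a case split on the mislabeled fraction, where a small fraction ($<\eps/8$) yields the inclusion $L^{\eps/4}_{\tilde S}\subseteq L^{\eps/2}_S$ via the same chain of inequalities (uniform $\pm\beta$ shift of empirical errors and of the empirical optimum), and a large fraction is caught by the $q$ checked samples except with probability $\delta/2$. Your probabilistic step is, if anything, slightly more careful than the paper's, which bounds the miss probability by $(1-p)^q$ without spelling out the without-replacement (hypergeometric) estimate or the exchangeability argument that makes the check set uniform and independent of the prover's view.
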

\begin{proof}
First, observe that prover has no way to distinguish between the samples that are drawn from the labeled example oracle $\mathsf{Ex}(\sD,f)$ and the unlabeled samples that are drawn from $\sD$, as the samples are all independent and identically distributed. 

Let $p = \frac{1}{m}\cdot |\{i\in[m] : f(x_i) \neq \tilde{f}(x_i)\}|$ be the fraction of mislabeled samples. If $p > \frac{\epsilon}{8}$ then the verifier will catch at least one of them with probability $1 - (1 - p)^q \geq 1 - \exp({-\frac{\epsilon}{8} \cdot \frac{8}{\epsilon}\cdot \log(\frac{2}{\delta})}) = 1 - \frac{\delta}{2}$. Thus, if the verifier does not reject the labeling $\tilde{S}$ provided by the prover, with probability at least $1-\frac{\delta}{2}$, for every hypothesis $h$ we have $|\err_S(h) - \err_{\tilde S}(h)| \le \frac{\eps}{8}$. $(\ast)$

Let $h_\mathsf{\widetilde{emp}}$ and $h_\mathsf{emp}$ be empirically optimal hypotheses with respect to $\tilde{S}$ and $S$ respectively. That is, $\err_{\tilde S}(h_\mathsf{\widetilde{emp}}) = \opt_{\tilde S}(\H)$ and $\err_{S}(h_\mathsf{emp}) = \opt_{S}(\H)$.  Then, using a similar argument as Lemma \ref{lem:LS-implies-PAC}, for any $h\in L^{\eps/4}_{\tilde{S}}$, we have 

\[
\err_S(h) \le \err_{\tilde S}(h) + \frac{\eps}{8} \le \err_{\tilde S}(h_\mathsf{\widetilde{emp}}) + \frac{\eps}{4} + \frac{\eps}{8} \le \err_{\tilde S}(h_\mathsf{emp}) + \frac{\eps}{4} + \frac{\eps}{8} \le \err_S(h_\mathsf{emp}) + \frac{\eps}{2}.
\]

The first and last inequalities hold because of $(\ast)$. The second one follows from $h\in L^{\eps/4}_{\tilde{S}}$. The third one follows from the fact that $h_\mathsf{\widetilde{emp}}$ is empirically optimal with respect to $\tilde S$.
\end{proof}

Intuitively, the above lemma states that cheating on a small fraction of samples does not give any advantage to the dishonest prover. On the other hand, a dishonest prover that cheats on a large fraction of samples is rejected with high probability.

\begin{proof}[\proofof{\cref{thm:aip-semi-supervised-erm}}]
\cref{lem:labels-soundness} shows with probability at least $1-\delta/2$, $L_{\tilde S}^{\eps/4}\subseteq L_S^{\eps/2}$. Therefore, the argument from \cref{lem:guarantees-of-supervised-learning} can be used verbatim to prove similar guarantees.

The multiplicative factor of $\log(1/\delta)$ in communication complexity, round complexity and running time is a result of repeating the GKR protocol for $O(\log(1/\delta))$ times to improve the soundness. The verifier sends $m$ unlabeled samples following which the prover sends a hypothesis and $m$ labels. This needs $\log(|\H|) + m(n+1)$ bits of communication. To run $\ip{P_\mathsf{GKR}, V_\mathsf{GKR}}[\frac{\delta}{2}]$, we need $\log(2/\delta)\cdot\poly(\log m, \log |\H|)$ communication complexity. The verifer's and prover's running time are a direct result of running GKR as well. Note that an honest prover finds a correct hypothesis by running ERM, in a similar way to our $C_\mathsf{ERM}$ circuit.
\end{proof}

We next show applications of \cref{thm:aip-semi-supervised-erm} to get interactive proofs for PAC-verifying specific classes of interest.

\paragraph{Polynomial-sized circuits:} To instantiate \cref{thm:aip-semi-supervised-erm} for $\Ppoly$, fix any polynomial $s = s(n)$. We will consider learning the class $\mathsf{SIZE}[s]$; polynomial sized circuits over $n$ inputs of size at most $s$. Note that there are at most $s^{O(s)}$ circuits of this size.

\begin{corollary}[Formal version of \cref{thm:our-results-df-pac-verify-ppoly}]\label{thm:aip-ppoly-erm}
    For any $c\in\N$, the class $\mathsf{SIZE}[n^c]$ is \textit{distribution-free} $(\eps,\delta)$-PAC-verifiable with verifier $V$ and prover $P$, where for an unknown, but fixed underlying distribution $\sD$, the verifier has random example access via $\mathsf{Ex}(\D, f)$, as well as access to unlabeled samples from $\D$. This interactive proof has the following parameters.
    \[
    \InteractiveParams{\random}
                      {\log(1/\delta)/\eps}
                      {\poly(n, 1/\eps, \log(1/\delta))}
                      {\poly(n, 1/\eps, \log(1/\delta))}
                      {\poly(n, 1/\eps, \log(1/\delta))}
                      { n^{\poly(n)}\poly(1/\eps,\log(1/\delta))}.
    \]
    The interaction $\ip{P, V}$ outputs $h$, which is either $\bot$ (reject) or a circuit of size at most $s$ with the following guarantees
    \begin{itemize}
        \item \textbf{Completeness:} There exists an honest prover $P^*$ such that the output of interaction $\ip{P^*,V}=h$ satisfies $\Pr[\dist_\D(h, f) \le \dist_\D(\mathsf{SIZE}[n^c], f) + \eps] \ge 1-\delta$.
        \item \textbf{Soundness:} For any (possibly unbounded) prover $P$, the output of interaction $\ip{P,V}=h$ satisfies $\Pr[(h\neq\bot) \land (\dist_\D(h, f) > \dist_\D(\mathsf{SIZE}[n^c], f) + \eps)] \le \delta.$
    \end{itemize}
\end{corollary}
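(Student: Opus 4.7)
The plan is to instantiate \cref{thm:aip-semi-supervised-erm} directly with the hypothesis class $\sH_n = \mathsf{SIZE}[n^c]$, that is, the set of Boolean functions on $n$ inputs computable by circuits of size at most $n^c$. The entire argument reduces to bounding the three quantities that control the parameters of \cref{thm:aip-semi-supervised-erm} --- namely $|\sH_n|$, $\VC(\sH_n)$, and the uniform-convergence sample complexity $m$ --- and then reading off the resulting interactive parameters.

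First I would bound $|\sH_n|$. A circuit on $n$ inputs of size $s = n^c$ can be specified by listing, for each of its $s$ gates, the gate type and the indices of its two inputs, which takes $O(s\log s) = O(n^c\log n)$ bits. Hence $|\sH_n| \le 2^{O(n^c\log n)}$, and in particular $\log(|\sH_n|) = \poly(n)$. Next, by a standard VC-dimension bound for Boolean circuits, $\VC(\sH_n) = O(s\log s) = \poly(n)$. Plugging this into the uniform-convergence bound from \cref{fact:uniform-convergence} yields
\[
m \;=\; \UCsamples_{\sH_n}(\eps/2,\delta/2) \;=\; O\!\left((\VC(\sH_n) + \log(1/\delta))/\eps^2\right) \;=\; \poly(n, 1/\eps, \log(1/\delta)).
\]

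Having fixed these bounds, I would simply substitute them into the parameter tuple provided by \cref{thm:aip-semi-supervised-erm}. The number of labeled random examples used by the verifier stays at $O(\log(1/\delta)/\eps)$, independent of $n$, $c$, and $|\sH_n|$. The communication complexity $\log(|\sH_n|) + mn + \log(1/\delta)\cdot\poly(\log m,\log|\sH_n|)$, the round complexity $\log(1/\delta)\cdot\poly(\log m,\log|\sH_n|)$, and the verifier running time $\log(1/\delta)\cdot m\cdot n\cdot\poly(\log m,\log|\sH_n|)$ each reduce to $\poly(n, 1/\eps, \log(1/\delta))$, using $\log|\sH_n| = \poly(n)$ and $m = \poly(n,1/\eps,\log(1/\delta))$. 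The prover running time $\log(1/\delta)\cdot\poly(m, |\sH_n|)$ evaluates to $n^{\poly(n)}\cdot\poly(1/\eps,\log(1/\delta))$, since $|\sH_n| = 2^{\poly(n)}$. The completeness and soundness guarantees, stated with distance measured against $\mathsf{SIZE}[n^c]$ rather than a generic $\sH$, are inherited verbatim from \cref{thm:aip-semi-supervised-erm}; note that the honest prover implementing \cref{pcl:semi-supervised-protocol} is precisely an exhaustive ERM searcher over $\sH_n$, which accounts for the $n^{\poly(n)}$ cost.

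There is essentially no obstacle beyond this bookkeeping. The only nonroutine point worth double-checking is the classical VC-dimension bound for size-$s$ circuit classes: it scales as $O(s\log s)$ rather than anything exponential, which is what keeps both $\VC(\sH_n)$ and $m$ polynomial in $n$ for any fixed constant $c$. Once this is verified, the generic sample-efficient protocol of \cref{pcl:semi-supervised-protocol} delivers the claimed distribution-free PAC-verifier for $\mathsf{SIZE}[n^c]$ with $O(\log(1/\delta)/\eps)$ labeled examples and an efficient verifier, at the (necessary) price of an unbounded honest prover.
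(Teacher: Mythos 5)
Your proposal is correct and follows essentially the same route as the paper: instantiate the generic semi-supervised ERM-based protocol of \cref{thm:aip-semi-supervised-erm} with $\sH_n = \mathsf{SIZE}[n^c]$, bound $\log|\sH_n|$ and $\VC(\sH_n)$ by $O(n^c\log n)$, and substitute into the parameter tuple. The only cosmetic difference is that you invoke the classical $O(s\log s)$ VC bound for size-$s$ circuits, whereas the paper simply uses $\VC(\sH) \le \log_2|\sH|$; both give the same bound and the rest of the bookkeeping is identical.
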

\begin{proof}
    Apply the general protocol discussed in \cref{thm:aip-semi-supervised-erm}. Observe that the VC dimension of any hypothesis class is bounded by the logarithm of its size. Thus, we have 
    $$\VC(\mathsf{SIZE}[n^c]) \le \log_2(|\mathsf{SIZE}[n^c]|) = O(n^c\log(n)).$$
    
    Here the number of samples is $m = \mathsf{m}_{\mathsf{SIZE}[n^c]}(\eps/2, \delta/2) = O((n^c\log(n) + \log(1/\delta))\cdot 1/\eps^2)$. The depth of the circuit $C_\mathsf{ERM}$ is bounded by $\poly(n^c, 1/\eps, \log(1/\delta))$ and its size is $n^{cn^c}\cdot\poly(m)$. Thus, by substituting these values in the complexity parameters from \cref{thm:aip-semi-supervised-erm}, we see that the verifier runs in $\poly(n,1/\eps,\log(1/\delta))$, and the honest prover runs in time $n^{\poly(n)}\poly(1/\eps,\log(1/\delta))$.
\end{proof}

\paragraph{Class of $k$-juntas:} Next, we instantiate \cref{thm:aip-semi-supervised-erm} for the class of $k$-juntas, $\Junta_{n,k}$. This \textsf{IP} unlike \cref{thm:our-results-junta-membership} and \cref{thm:formal-statement-junta-random} works for any underlying distribution. The verifier's running time is more efficient than known learning algorithms for $k$-juntas (e.g., \cite{MosselOS03}). 

\begin{corollary}\label{thm:aip-junta-erm}
    The class $\Junta_{n,k}$ is $(\eps,\delta)$-PAC-verifiable with verifier $V$ and prover $P$ in the distribution-free setting, where for an unknown, but fixed underlying distribution $\sD$, the verifier has random example access via $\mathsf{Ex}(\D, f)$ as well as access to unlabeled samples from $\D$. This protocol has the following parameters.
    \[
    \InteractiveParams{\random}
                      {\log(1/\delta)/\eps}
                      {n\cdot \poly(\log(n), 2^k, 1/\eps, \log(1/\delta))}
                      {\poly(\log(n), 2^k, \log(1/\eps), \log(1/\delta)))}
                      {n\cdot \poly(\log(n), 2^k, 1/\eps, \log(1/\delta))}
                      {\poly(n^k, 2^{2^k}, 1/\eps, \log(1/\delta))}.
    \]
    The interaction $\ip{P,V}$ outputs $h$, which is either $\bot$ (reject) or a $k$-junta with the following guarantees
    \begin{itemize}
        \item \textbf{Completeness:} There exists an honest prover $P^*$ such that the output of interaction $\ip{P^*,V}=h$ satisfies $\Pr[\dist_\D(h, f) \le \dist_\D(\Junta_{n,k}, f) + \eps] \ge 1-\delta$.
        \item \textbf{Soundness:} For any (possibly unbounded) prover $P$ the output of interaction $\ip{P,V}=h$ satisfies $\Pr[(h\neq\bot) \land (\dist_\D(h, f) > \dist_\D(\Junta_{n,k}, f) + \eps)] \le \delta.$
    \end{itemize}
\end{corollary}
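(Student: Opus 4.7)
The plan is to derive Corollary~\ref{thm:aip-junta-erm} as a direct instantiation of the general protocol from Theorem~\ref{thm:aip-semi-supervised-erm} with the hypothesis class set to $\H_n = \Junta_{n,k}$. The only work is to bound $|\Junta_{n,k}|$ and $\VC(\Junta_{n,k})$, then substitute into the complexity expressions of Theorem~\ref{thm:aip-semi-supervised-erm}.

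First, I would count juntas: a $k$-junta is specified by choosing which $k$ of the $n$ variables are relevant and then specifying a Boolean function on those $k$ bits, so
\[
|\Junta_{n,k}| \leq \binom{n}{k} \cdot 2^{2^k} \leq n^k \cdot 2^{2^k}.
\]
Consequently $\log |\Junta_{n,k}| = O(k \log n + 2^k)$, and since VC dimension is always at most the base-2 logarithm of the hypothesis-class size, we also have $\VC(\Junta_{n,k}) = O(k\log n + 2^k)$.

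Next I would plug these bounds into Theorem~\ref{thm:aip-semi-supervised-erm}. The uniform-convergence sample size becomes
\[
m = \UCsamples_{\Junta_{n,k}}(\eps/2,\delta/2) = O\!\left(\frac{k \log n + 2^k + \log(1/\delta)}{\eps^2}\right),
\]
while the verifier still only reads $q = O(\log(1/\delta)/\eps)$ \emph{labeled} samples (with the remaining $m-q$ drawn as unlabeled samples from $\D$). The $C_\mathsf{ERM}$ circuit produced by Lemma~\ref{rem:erm-circuit} has size $\poly(m, |\Junta_{n,k}|) = \poly(n^k, 2^{2^k}, 1/\eps, \log(1/\delta))$ and depth $\poly(\log m, \log|\Junta_{n,k}|) = \poly(\log n, 2^k, \log(1/\eps), \log(1/\delta))$. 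Substituting these into the communication, round, verifier-time, and prover-time formulas of Theorem~\ref{thm:aip-semi-supervised-erm} yields exactly the parameters claimed in the corollary; completeness and soundness transfer verbatim.

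There is no real technical obstacle — the entire argument is a parameter calculation on top of the general framework. The one point to be careful about is the dependence on $n$ in the communication and verifier-time lines: the $mn$ term (from sending $m$ unlabeled strings of length $n$) dominates the $\log|\Junta_{n,k}|$ term and gives the leading $n \cdot \poly(\log n, 2^k, 1/\eps, \log(1/\delta))$ factor, whereas the round complexity avoids this since it scales only with circuit depth. I would make these substitutions explicit in a short paragraph and conclude by noting that the honest prover simply runs any agnostic ERM procedure for $k$-juntas, which is unbounded-time but upper-bounded by $\poly(n^k, 2^{2^k}, 1/\eps, \log(1/\delta))$ as stated.
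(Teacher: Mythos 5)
Your proposal is correct and matches the paper's own proof essentially verbatim: the paper likewise instantiates \cref{thm:aip-semi-supervised-erm} with $\H=\Junta_{n,k}$, bounds $|\Junta_{n,k}|\le\binom{n}{k}2^{2^k}$ so that $\VC(\Junta_{n,k})\le k\log(ne)+2^k$, and reads off $m$, the verifier's time, and the prover's time from the general theorem's parameters. Your additional observation that the $mn$ term from shipping unlabeled samples dominates the communication and verifier-time bounds is consistent with, and if anything slightly more explicit than, the paper's calculation.
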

\begin{proof}
Apply the general protocol discussed in \cref{thm:aip-semi-supervised-erm}. First, note that $|\Junta_{n,k}|=\binom{n}{k}2^{2^k} \le (ne/k)^k 2^{2^k}$. Therefore, $\VC(\Junta_{n,k}) \le k\log(ne) + 2^{k}$ and we need $m = \mathsf{m}_{\Junta_{n,k}}(\eps/2, \delta/2) = O((k\log(n) + 2^k + \log(1/\delta))/\eps^2) = \poly(\log(n), 2^k, 1/\eps, \log(1/\delta))$ samples. By parameters of \cref{thm:aip-semi-supervised-erm} we see that the prover's running time is $\poly(n^k, 2^{2^k}, 1/\eps, \log(1/\delta))$, and the verifier's running time is $n\cdot \poly(\log(n), 2^k, 1/\eps, \log(1/\delta))$. 
\end{proof}

\bibliographystyle{alpha}
\bibliography{references}

\newcommand{\etalchar}[1]{$^{#1}$}
\begin{thebibliography}{GOS{\etalchar{+}}11}

\bibitem[AB09]{AB09}
Sanjeev Arora and Boaz Barak.
\newblock {\em Computational complexity: a modern approach}.
\newblock Cambridge University Press, 2009.

\bibitem[CG18]{CG18}
Alessandro Chiesa and Tom Gur.
\newblock Proofs of proximity for distribution testing.
\newblock In Anna~R. Karlin, editor, {\em 9th Innovations in Theoretical
  Computer Science Conference, {ITCS} 2018, January 11-14, 2018, Cambridge, MA,
  {USA}}, volume~94 of {\em LIPIcs}, pages 53:1--53:14. Schloss Dagstuhl -
  Leibniz-Zentrum f{\"{u}}r Informatik, 2018.

\bibitem[CHI{\etalchar{+}}23]{CHINS23}
Matthias~C Caro, Marcel Hinsche, Marios Ioannou, Alexander Nietner, and Ryan
  Sweke.
\newblock Classical verification of quantum learning.
\newblock {\em arXiv preprint arXiv:2306.04843}, 2023.

\bibitem[CIKK16]{CIKK16}
Marco~L Carmosino, Russell Impagliazzo, Valentine Kabanets, and Antonina
  Kolokolova.
\newblock Learning algorithms from natural proofs.
\newblock In {\em 31st Conference on Computational Complexity (CCC 2016)}.
  Schloss Dagstuhl-Leibniz-Zentrum fuer Informatik, 2016.

\bibitem[CIKK17]{CIKK17_agnostic}
Marco~L Carmosino, Russell Impagliazzo, Valentine Kabanets, and Antonina
  Kolokolova.
\newblock Agnostic learning from tolerant natural proofs.
\newblock In {\em Approximation, Randomization, and Combinatorial Optimization.
  Algorithms and Techniques (APPROX/RANDOM 2017)}. Schloss
  Dagstuhl-Leibniz-Zentrum fuer Informatik, 2017.

\bibitem[CK21]{CK21}
Ran Canetti and Ari Karchmer.
\newblock Covert learning: How to learn with an untrusted intermediary.
\newblock In {\em Theory of Cryptography: 19th International Conference, TCC
  2021, Raleigh, NC, USA, November 8--11, 2021, Proceedings, Part III 19},
  pages 1--31. Springer, 2021.

\bibitem[CNY23]{CNY23}
Thomas Chen, Shivam Nadimpalli, and Henry Yuen.
\newblock Testing and learning quantum juntas nearly optimally.
\newblock In {\em Proceedings of the 2023 Annual ACM-SIAM Symposium on Discrete
  Algorithms (SODA)}, pages 1163--1185. SIAM, 2023.

\bibitem[DMN19]{DeMN19}
Anindya De, Elchanan Mossel, and Joe Neeman.
\newblock Junta correlation is testable.
\newblock In David Zuckerman, editor, {\em Proceedings pf the 60th {IEEE}
  Annual Symposium on Foundations of Computer Science}, FOCS~19, pages
  1549--1563, 2019.

\bibitem[Fel09]{Fel09}
Vitaly Feldman.
\newblock On the power of membership queries in agnostic learning.
\newblock {\em The Journal of Machine Learning Research}, 10:163--182, 2009.

\bibitem[GGM86]{GGM86}
Oded Goldreich, Shafi Goldwasser, and Silvio Micali.
\newblock How to construct random functions.
\newblock {\em Journal of the ACM (JACM)}, 33(4):792--807, 1986.

\bibitem[GK23]{GK23}
Halley Goldberg and Valentine Kabanets.
\newblock Improved learning from kolmogorov complexity.
\newblock In {\em 38th Computational Complexity Conference (CCC 2023)}. Schloss
  Dagstuhl-Leibniz-Zentrum f{\"u}r Informatik, 2023.

\bibitem[GKR15]{GoldwasserKR08}
Shafi Goldwasser, Yael~Tauman Kalai, and Guy~N. Rothblum.
\newblock Delegating computation: Interactive proofs for muggles.
\newblock {\em J. ACM}, 62(4):1--64, 2015.

\bibitem[GL89]{GoldreichL89}
O.~Goldreich and L.~A. Levin.
\newblock A hard-core predicate for all one-way functions.
\newblock In {\em Proceedings of the Twenty-First Annual ACM Symposium on
  Theory of Computing}, STOC '89, page 25–32, New York, NY, USA, 1989.
  Association for Computing Machinery.

\bibitem[GOS{\etalchar{+}}11]{gopalan2011testing}
Parikshit Gopalan, Ryan O'Donnell, Rocco~A Servedio, Amir Shpilka, and Karl
  Wimmer.
\newblock Testing fourier dimensionality and sparsity.
\newblock {\em SIAM Journal on Computing}, 40(4):1075--1100, 2011.

\bibitem[GR22]{GR22_sample}
Guy Goldberg and Guy~N Rothblum.
\newblock Sample-based proofs of proximity.
\newblock In {\em 13th Innovations in Theoretical Computer Science Conference
  (ITCS 2022)}. Schloss Dagstuhl-Leibniz-Zentrum f{\"u}r Informatik, 2022.

\bibitem[GRSY21]{GoldwasserRSY21}
Shafi Goldwasser, Guy~N. Rothblum, Jonathan Shafer, and Amir Yehudayoff.
\newblock Interactive proofs for verifying machine learning.
\newblock In {\em 12th Innovations in Theoretical Computer Science Conference,
  {ITCS} 2021, January 6-8, 2021, Virtual Conference}, 2021.

\bibitem[HR22]{HR22}
Tal Herman and Guy~N Rothblum.
\newblock Verifying the unseen: interactive proofs for label-invariant
  distribution properties.
\newblock In {\em Proceedings of the 54th Annual ACM SIGACT Symposium on Theory
  of Computing}, pages 1208--1219, 2022.

\bibitem[HS23]{HS23}
Mohsen Heidari and Wojciech Szpankowski.
\newblock Agnostic pac learning of $ k $-juntas using $ l\_2 $-polynomial
  regression.
\newblock In {\em International Conference on Artificial Intelligence and
  Statistics}, pages 2922--2938. PMLR, 2023.

\bibitem[IJKW10]{IJKW10}
Russell Impagliazzo, Ragesh Jaiswal, Valentine Kabanets, and Avi Wigderson.
\newblock Uniform direct product theorems: Simplified, optimized, and
  derandomized.
\newblock {\em {SIAM} J. Comput.}, 39(4):1637--1665, 2010.

\bibitem[IW01]{IW01}
Russell Impagliazzo and Avi Wigderson.
\newblock Randomness vs time: Derandomization under a uniform assumption.
\newblock {\em Journal of Computer and System Sciences}, 63(4):672--688, 2001.

\bibitem[KM93]{KushilevitzM93}
Eyal Kushilevitz and Yishay Mansour.
\newblock Learning decision trees using the {F}ourier spectrum.
\newblock {\em SIAM J. Comput.}, 22(6):1331--1348, 1993.

\bibitem[KV94]{KV94_book}
Michael~J Kearns and Umesh Vazirani.
\newblock {\em An introduction to computational learning theory}.
\newblock MIT press, 1994.

\bibitem[LMN93]{LMN93}
Nathan Linial, Yishay Mansour, and Noam Nisan.
\newblock Constant depth circuits, fourier transform, and learnability.
\newblock {\em Journal of the ACM (JACM)}, 40(3):607--620, 1993.

\bibitem[MOS03]{MosselOS03}
Elchanan Mossel, Ryan O’Donnell, and Rocco~P. Servedio.
\newblock Learning juntas.
\newblock In {\em Proceedings of the Thirty-Fifth Annual ACM Symposium on
  Theory of Computing}, STOC~03, page 206–212, 2003.

\bibitem[MS23]{MS23}
Saachi Mutreja and Jonathan Shafer.
\newblock Pac verification of statistical algorithms.
\newblock In {\em The Thirty Sixth Annual Conference on Learning Theory}, pages
  5021--5043. PMLR, 2023.

\bibitem[NW94]{NW94}
Noam Nisan and Avi Wigderson.
\newblock Hardness vs randomness.
\newblock {\em Journal of computer and System Sciences}, 49(2):149--167, 1994.

\bibitem[O'D14]{odonnell_analysis_2014}
Ryan O'Donnell.
\newblock {\em Analysis of boolean functions}.
\newblock Cambridge University Press, 2014.

\bibitem[PRR06]{parnas2006tolerant}
Michal Parnas, Dana Ron, and Ronitt Rubinfeld.
\newblock Tolerant property testing and distance approximation.
\newblock {\em Journal of Computer and System Sciences}, 72(6):1012--1042,
  2006.

\bibitem[Raz87]{Raz87}
Alexander~A. Razborov.
\newblock Lower bounds on the size of constant-depth networks over a complete
  basis with logical addition.
\newblock {\em Mathematicheskie Zametki}, 41(4):598--607, 1987.

\bibitem[RR97]{DBLP:journals/jcss/RazborovR97}
Alexander~A. Razborov and Steven Rudich.
\newblock Natural proofs.
\newblock {\em J. Comput. Syst. Sci.}, 55(1):24--35, 1997.

\bibitem[RVW13]{RVW13}
Guy~N. Rothblum, Salil~P. Vadhan, and Avi Wigderson.
\newblock Interactive proofs of proximity: delegating computation in sublinear
  time.
\newblock In Dan Boneh, Tim Roughgarden, and Joan Feigenbaum, editors, {\em
  Symposium on Theory of Computing Conference, STOC'13, Palo Alto, CA, USA,
  June 1-4, 2013}, pages 793--802. {ACM}, 2013.

\bibitem[Smo87]{Smo87}
Roman Smolensky.
\newblock Algebraic methods in the theory of lower bounds for {B}oolean circuit
  complexity.
\newblock In {\em Symposium on Theory of Computing \emph{(STOC)}}, pages
  77--82, 1987.

\bibitem[SSBD14]{shalev2014understanding}
Shai Shalev-Shwartz and Shai Ben-David.
\newblock {\em Understanding machine learning: From theory to algorithms}.
\newblock Cambridge university press, 2014.

\bibitem[Val84]{Valiant1984}
Leslie~G. Valiant.
\newblock A theory of the learnable.
\newblock {\em Commun. {ACM}}, 27(11):1134--1142, 1984.

\end{thebibliography}

\appendix

\section{Embeddability}
\label{sec:app_embed}
In this section we present a general definition of the notion of \textit{embeddability} of a random query in a query sets obtained from membership samples, described in \cref{sec:query-to-sample-reduction}. Recall, when transforming a protocol with membership queries into a protocol that works with  random examples, we try to hide a \textit{single} random example in \textit{non-adaptively} generated query sets for the following definition. 

\begin{definition}[Embeddable non-adaptive query set]
    \label{def:embed2}
    Let $\D$ be a distribution over $\Bits^n$, and let $V_{\mathsf{query}}$ be an \textit{algorithm} that given randomness $r$ returns a sequence of $Q$ queries, $y_1,\dots,y_Q \in \Bits^n$.

    We say that a random query from $\D$ can be embedded into the distribution defined by $V_{\mathsf{query}}$, if there exists an algorithm $\Hat{V}_{\mathsf{query}}$ that takes as input $w \sim \D$ and randomness $r'$, and outputs a query set $\hat{y}_1, \dots, \hat{y}_Q \in \Bits^n$ such that the following hold.

    \begin{enumerate}
    \item There exists $i \in [Q]$ such that $\hat{y}_i = w$, i.e., $w$ is in the output of $\Hat{V}_{\mathsf{query}}$.
        
    \item For every $\hat{y}_1, \dots, \hat{y}_Q$ that is generated by $V_{\mathsf{query}}$ with non-zero probability it holds that
    \begin{equation*}
        \Pr_{r'} \left[\hat{y}_i = w \mid \Hat{V}_{\mathsf{query}}(w;r') \text{ outputs } \hat{y}_1, \dots, \hat{y}_Q\right] = \frac{1}{Q}.
    \end{equation*}
    In other words, the query index into which $w$ is embedded has a uniform distribution, where the probability is over the randomness $r'$ and the sample access to $\D$.
    
    \item For all $\hat{y}_1,\dots, \hat{y}_Q$ it holds that 
    \begin{equation*}
        \Pr_{r'}\left[\Hat{V}_{\mathsf{query}}(w;r')\text{ outputs } \hat{y}_1,\dots, \hat{y}_Q \right] = \Pr_{r}\left[V_{\mathsf{query}}(r) \text{ outputs } \hat{y}_1,\dots, \hat{y}_Q \right].
    \end{equation*}
    In other words, the joint distribution of the queries generated by $\Hat{V}_{\mathsf{query}}$ and $V_{\mathsf{query}}$ coincide.
    \end{enumerate}
    \end{definition}

    In particular, the third condition implies that any (possibly unbounded) prover $P$ can gain no information about the location of the embedded query $w$ by looking at the joint query distributions of the output of $\Hat{V}_{\mathsf{query}}$, and the second condition implies that $P$ can guess the location of $w$ with probability no more than $1/Q$.

    \begin{proposition}[Embeddability is closed under independent unions\footnote{This argument can be easily extended to any finite number of unions.}]\label{prop:embed-closed-under-unions}
        Let $V_{\mathsf{query}}$ and $U_{\mathsf{query}}$ be two Embeddable query generation algorithms. Then the query set generated by the algorithm $T_{\mathsf{query}}(r_1,r_2) = V_{\mathsf{query}}(r_1) \cup U_{\mathsf{query}}(r_2)$ that outputs the concatenation of the independent outputs of $V_\mathsf{query}$ and $U_\mathsf{query}$ is Embeddable.
    \end{proposition}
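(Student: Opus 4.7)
The plan is to reduce the proof to flipping a biased coin that decides whether to embed $w$ into the $V$-part or the $U$-part of the output, and then invoking the embedder guaranteed by the embeddability of $V_{\mathsf{query}}$ or $U_{\mathsf{query}}$ on the chosen side while running the other algorithm with fresh independent randomness. Let $Q_1$ and $Q_2$ denote the query complexities of $V_{\mathsf{query}}$ and $U_{\mathsf{query}}$, and let $\hat{V}_{\mathsf{query}}$ and $\hat{U}_{\mathsf{query}}$ be the embedder algorithms guaranteed by \cref{def:embed2}. Concretely, I would define $\hat{T}_{\mathsf{query}}(w;r')$ with $r'=(c,r'_1,r'_2)$, where $c\in\{V,U\}$ is drawn independently with $\Pr[c=V]=\tfrac{Q_1}{Q_1+Q_2}$; if $c=V$ output $(\hat{V}_{\mathsf{query}}(w;r'_1),U_{\mathsf{query}}(r'_2))$, and otherwise output $(V_{\mathsf{query}}(r'_1),\hat{U}_{\mathsf{query}}(w;r'_2))$.

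Condition (1) of \cref{def:embed2} is immediate, since $w$ appears in whichever of the two concatenated parts was selected by $c$. For condition (3), I would verify that the joint output distribution matches that of $T_{\mathsf{query}}$: for any fixed output $(\hat{y}_V,\hat{y}_U)$, using condition (3) for each of $\hat{V}_{\mathsf{query}}$ and $\hat{U}_{\mathsf{query}}$ and marginalizing over the coin $c$ gives
\[
\Pr_{w,r'}\bigl[\hat{T}_{\mathsf{query}}(w;r')=(\hat{y}_V,\hat{y}_U)\bigr]=\Pr_{r_1}[V_{\mathsf{query}}(r_1)=\hat{y}_V]\cdot\Pr_{r_2}[U_{\mathsf{query}}(r_2)=\hat{y}_U],
\]
which equals the distribution of $T_{\mathsf{query}}(r_1,r_2)$ by the asserted independence.

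Condition (2) follows from a single Bayes-rule computation. Conditioned on a fixed output $(\hat{y}_V,\hat{y}_U)$, the posterior probability that the coin $c$ equals $V$ is
\[
\Pr[c=V\mid\text{output}]=\frac{p_V\cdot\Pr_{r_1}[V(r_1)=\hat{y}_V]\cdot\Pr_{r_2}[U(r_2)=\hat{y}_U]}{\Pr_{w,r'}[\hat{T}(w;r')=(\hat{y}_V,\hat{y}_U)]}=p_V=\frac{Q_1}{Q_1+Q_2},
\]
using condition (3) of $\hat{V}$ and $\hat{U}$ in the numerator and the identity just derived in the denominator. Combining this with condition (2) of $\hat{V}_{\mathsf{query}}$, which guarantees that $w$ is placed at any specific index $i\in[Q_1]$ with conditional probability $1/Q_1$, yields that $w$ lands at any fixed index $i\in[Q_1]$ with probability $\tfrac{Q_1}{Q_1+Q_2}\cdot\tfrac{1}{Q_1}=\tfrac{1}{Q_1+Q_2}$; symmetric analysis handles indices in the $U$-part. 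The extension to a finite union is then obtained by iterating the construction or, more cleanly, by choosing the embedding side with probability proportional to $Q_i$ over all $i$ in one shot.

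The main obstacle is notational rather than mathematical: one must carefully interpret the joint-distribution identity in condition (3) of \cref{def:embed2} as holding when $w$ is drawn from $\D$ (otherwise the statement is vacuous because $\hat{V}$ is constrained to include $w$ among its queries), and then thread that identity through the Bayes computation so that the mixture over $c$ collapses precisely to the product distribution of the two independent query-generation algorithms. The specific choice of coin bias $p_V=Q_1/(Q_1+Q_2)$ is exactly what makes the resulting embedded-index distribution uniform over $[Q_1+Q_2]$.
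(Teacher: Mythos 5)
Your proposal is correct and follows essentially the same route as the paper: the paper's proof defines exactly this $\hat{T}_{\mathsf{query}}$, choosing with probability $Q_1/(Q_1+Q_2)$ to output $\hat{V}_{\mathsf{query}}(w;r'_1)\cup U_{\mathsf{query}}(r'_2)$ and otherwise $V_{\mathsf{query}}(r'_1)\cup \hat{U}_{\mathsf{query}}(w;r'_2)$, and then simply notes that the three conditions of the embeddability definition are easily verified. Your Bayes-rule verification of condition (2) and the product-distribution check for condition (3) are precisely the details the paper leaves implicit, so there is no substantive difference in approach.
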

    \begin{proof}
       Let $Q_1$ and $Q_2$ be the number of queries outputted by $V_\mathsf{query}$ and $U_\mathsf{query}$ respectively. The algorithm $\hat{T}_\mathsf{query}$ works as follows: on input $(w; r_1', r_2')$ with probability $Q_1/(Q_1+Q_2)$ outputs $\hat{V}_{\mathsf{query}}(w;r'_1) \cup U_{\mathsf{query}}(r'_2)$ and with probability $Q_2/(Q_1+Q_2)$ outputs $V_{\mathsf{query}}(r'_1) \cup \hat{U}_{\mathsf{query}}(w;r'_2)$. It is easily observed that $\hat{T}_\mathsf{query}$ satisfies the three items in \cref{def:embed2}.
    \end{proof}
    
    For more context, we sketch how the transformations from our PAC-verification protocols fall into this framework.

    \begin{enumerate}
        \item \textbf{Heavy Fourier coefficients (\cref{thm:fourier-random})} The transformation is presented in \cref{thm:membership-to-random-transform}. Item 2 of the definition is trivially satisfied by picking $i \in [Q]$ uniformly at random and embedding a random example $w$ (drawn from the uniform distribution) in $\hat{y}_i$. For Item 3, first observe from \cref{clm:estimate-of-fourth-power} that all the ``small" query sets are independent of each other. By \cref{prop:embed-closed-under-unions} it suffices to embed $w$ into a small query set as the overall query set is an independent union of embeddable query sets. In turn, Item 3 holds for the small query set since it has a linear query pattern (see \cref{def:linear-query-pattern}), from \cref{lem:linear-query-pattern}, coupled with the fact that the marginal distribution of each query and underlying distribution over $w$ are both uniform distributions.
               
        \item \textbf{$\cAC^0[2]$ (\cref{thm:agnostic_ip_ac0})} The transformation is presented in the proof of \cref{thm:agnostic_ip_ac0}. Observe from \cref{algo:embed_query_iw01}, that a random example $w$ (drawn from the uniform distribution) is placed by sampling a subcube to embed in, with probability given by its density. Item 2 follows by coupling this with the fact that since $w$ is a uniformly random string, the probability of it being any particular element within a subcube query is also uniform. 
        
        Item 3 follows, since the restriction of $w$ (based on its embedded subcube), that is also a uniformly random substring, is used to fix the seed $z$ by the embedded NW-query construction algorithm from \cref{algo:embed_query_iw01} to generate all its subcube queries.
    \end{enumerate}

     We leave the study of potentially embedding multiple random examples in query sets that are adaptive (based on past prover messages and/or verifier queries), and their applicability in query to sample transformations for PAC-verification for future work.
     
\end{document}